\begin{document}

\title{Subset Feedback Vertex Set in Chordal and Split Graphs}
\titlerunning{Subset Feedback Vertex Set in Chordal and Split Graphs}

\author{%
  Geevarghese Philip\inst{1}%
  \and%
  Varun Rajan\inst{2}%
  \and%
  Saket Saurabh\inst{3,4}%
  \and%
  Prafullkumar Tale\inst{5}%
  }

  \institute{%
    Chennai Mathematical Institute, Chennai, India.
    \email{gphilip@cmi.ac.in}%
    \and%
    Chennai Mathematical Institute, Chennai, India.
    \email{varunrajan09@gmail.com}%
    \and%
    The Institute of Mathematical Sciences, HBNI, Chennai, India.
    \email{saket@imsc.res.in}%
    \and%
    Department of Informatics, University of Bergen, Bergen,
    Norway.
    \and%
    The Institute of Mathematical Sciences, HBNI, Chennai, India.
    \email{pptale@imsc.res.in}}

\authorrunning{Philip, Rajan, Saurabh and Tale}


\maketitle

\begin{abstract}
  In the \textsc{Subset Feedback Vertex Set (Subset-FVS)} problem
  the input is a graph $G$, a subset \(T\) of vertices of \(G\)
  called the ``terminal'' vertices, and an integer $k$. The task
  is to determine whether there exists a subset of vertices of
  cardinality at most $k$ which together intersect all cycles
  which pass through the terminals. \textsc{Subset-FVS}
  generalizes several well studied problems including
  \textsc{Feedback Vertex Set} and \textsc{Multiway Cut}. This
  problem is known to be \NP-Complete even in split graphs. Cygan
  et al. proved that \textsc{Subset-FVS} is fixed parameter
  tractable (\FPT) in general graphs when parameterized by $k$
  [SIAM J. Discrete Math (2013)]. In split graphs a simple
  observation reduces the problem to an equivalent instance of the
  $3$-\textsc{Hitting Set} problem with same solution size. This
  directly implies, for \textsc{Subset-FVS} \emph{restricted to
    split graphs}, (i) an \FPT algorithm which solves the problem
  in $\OhStar(2.076^k)$ time \footnote{The \(\OhStar()\) notation
    hides polynomial factors.}
  [Wahlstr\"om, Ph.D. Thesis], and (ii) a kernel of size
  $\mathcal{O}(k^3)$. We improve both these results for
  \textsc{Subset-FVS} on split graphs; we derive (i) a kernel of
  size $\mathcal{O}(k^2)$ which is the best possible unless
  $\NP \subseteq \coNP/{\sf poly}$, and (ii) an algorithm which
  solves the problem in time $\mathcal{O}^*(2^k)$. Our algorithm,
  in fact, solves \textsc{Subset-FVS} on the more general class of
  \emph{chordal graphs}, also in $\mathcal{O}^*(2^k)$ time.
 \end{abstract}


\section{Introduction}
In a \emph{covering} or \emph{transversal problem} we are given a
universe of elements $U$, a family $\cal F$ ($\cal F$ could be
given implicitly), and an integer $k$, and the objective is to
check whether there exists a subset of $U$ of size at most $k$
which intersects all the elements of $\cal F$. A number of natural
problems on graphs are of this form.  For instance, consider the
classical \textsc{Feedback Vertex Set (FVS)} problem. Here, given
a graph $G$ and a positive integer $k$, the objective is to decide
whether there exists a vertex subset $X$ (a \emph{feedback vertex
  set} of \(G\)) of size at most $k$ which intersects all cycles,
that is, for which the graph $G- X$ is a forest. Other examples
include {\sc Odd Cycle Transversal}, {\sc Directed Feedback Vertex
  Set} and {\sc Vertex Cover (VC)}.  These problems have been
particularly well studied in parameterized
complexity~\cite{cygan2013subset,DBLP:journals/talg/ChitnisCHM15,kawarabayashi2012fixed,lokshtanov2018linear,wahlstrom2014half,DBLP:journals/talg/Thomasse10}.

Recently, a natural generalization of covering problems has
attracted a lot of attention from the point of view of
parameterized complexity. In this generalization, apart from $U$,
$\cal F$ and $k$, we are also given a subset $T$ of $U$ and the
objective is to decide whether there is a subset of $U$ of size at
most $k$ that intersects all those sets in $\cal F$ \emph{which
  contain an element in $T$}. This leads to the \emph{subset
  variant} of classic covering problems; typical examples include
{\sc Subset Feedback Vertex Set (Subset-FVS)}, {\sc Subset
  Directed Feedback Vertex Set} and {\sc Subset Odd Cycle
  Transversal}. These three problems have received considerable
attention and they have all been shown to be {\em fixed-parameter
  tractable} ({\FPT}) with \(k\) as the
parameter~\cite{cygan2013subset,DBLP:journals/talg/ChitnisCHM15,kawarabayashi2012fixed}.

%
%
%
%

In this paper we study the {\sc Subset-FVS} problem when the input
is a \emph{split} graph or, more generally, a \emph{chordal}
graph.
The {\sc Subset-FVS} problem was introduced by Even et
al.\cite{even20008}, and generalizes several well-studied problems
like \textsc{FVS, VC}, and \textsc{Multiway Cut}
\cite{fomin2014enumerating}.
The question whether the \textsc{Subset-FVS} problem is fixed
parameter tractable (\FPT) when parameterized by the solution size
was posed independently by Kawarabayashi and the third author in
$2009$. Cygan et al. \cite{cygan2013subset} and Kawarabayashi and
Kobayashi \cite{kawarabayashi2012fixed} independently answered
this question positively in $2011$. Wahlstr{\"o}m
\cite{wahlstrom2014half} gave the first parameterized algorithm
where the dependence on $k$ is $2^{\Oh(k)}$. Lokshtanov et
al. \cite{lokshtanov2018linear} presented a different \FPT\
algorithm which has linear dependence on the input size.  On the
flip side,
Fomin et al. presented a parameter preserving reduction from
\textsc{VC} to \textsc{Subset-FVS}~\cite[Theorem
$2.1$]{fomin2014enumerating}, thus ruling out the existence of
an 
algorithm with \emph{sub-exponential} dependence on $k$ under the
Exponential-Time Hypothesis. Most recently, Hols and
Kratsch~\cite{hols2018randomized} used matroid-based tools to show
that \textsc{Subset-FVS} has a randomized polynomial kernelization
with $\mathcal{O}(k^9)$ vertices. 


All the results that we described above hold for arbitrary input
graphs.  The \textsc{Subset-FVS} problem has also been studied
with the input restricted to various families of graphs; in
particular, to chordal graphs and split graphs. Recall
that\footnote{See \autoref{sec:prelims} for formal definitions.}
(i) a graph is \emph{chordal} if it does not contain induced
cycles of length four or larger, (ii) a split graph is one whose
vertex set can be partitioned into a clique and an independent
set, and (iii) every split graph is chordal. The problem remains
\NP-Complete even on split
graphs~\cite{fomin2014enumerating}.  
Golovach et al. designed the first exact exponential time
algorithm for \textsc{Subset-FVS} on chordal graphs in their
pioneering work~\cite{golovach2014subset}; this algorithm runs in
\(\OhStar(1.6708^{n})\) time on a chordal graph with \(n\)
vertices. Chitnis et al. improved this bound to
\(\OhStar(1.6181^{n})\)~\cite{chitnis2013faster}.

In this article we study \textsc{Subset-FVS} on chordal and split
graphs from the point of view of parameterized complexity. For a
given set of vertices $T$, a \emph{$T$-cycle} is a cycle which
contains at least one vertex from $T$. Formally, we study the
following problem on chordal graphs:

\defparproblem{\SFVSC}{A chordal graph $G=(V,E)$, a set of
  \emph{terminal vertices} $T\subseteq{}V$, and an integer
  $k$}{$k$}{Does there exist a set $S\subseteq{}V$ of at most $k$
  vertices of \(G\) such that the subgraph \(G[V\setminus{}S]\)
  contains no $T$-cycle?}

\smallskip
\noindent 
When the input graph in \SFVSC is a \emph{split} graph then we
call it the \textbf{\SFVSS} problem.

It is a simple observation (see \autoref{lem:triangles}) that in
order to intersect every \(T\)-cycle in a \emph{chordal} graph it
is sufficient---and necessary---to intersect all
$T$-\emph{triangles} in the graph. This yields a
parameter-preserving reduction from \SubsetFVSSplit to
\textsc{3-Hitting Set (3-HS)}.
This, in turn, implies the existence of a polynomial kernel for
\SubsetFVSSplit, of size
$\mathcal{O}(k^3)$~\cite{DBLP:journals/jcss/Abu-Khzam10}, and an
\FPT{} algorithm which solves the problem in time
\(\OhStar(2.076^{k})\)~\cite{wahlstrom2007algorithms}.  Note that
when we formulate \SFVSS in terms of \textsc{$3$-HS} in this
manner, we lose a lot of \emph{structural information} about the
input graph. It is natural to suspect that this lost information
could have been exploited to obtain better algorithms and smaller
kernels for the original problem.  This was most recently
vindicated by the work of Le et al.~\cite{le2018subquadratic} who
designed kernels with a sub-quadratic number of vertices for
several implicit \textsc{$3$-HS} problems on graphs, improving on
long-standing quadratic upper bounds in each case. Our work is in
the same spirit as that of Le et al.: we obtain improved results
for two implicit \textsc{3-Hitting Set} problems---namely:
intersecting all \(T\)-triangles in chordal (respectively, split)
graphs---by a careful analysis of structural properties of the
input graph.

\vspace{0.2cm}

\noindent \textbf{Our results and methods:} Our main result is a
quadratic-size kernel for \SFVSS, with a linear-sized ``clique
side''; more
precisely: 

\begin{restatable}{theorem}{splitkernel}\label{thm:split-kernel}
  There is a polynomial-time algorithm which, given an instance
  $(G; T; k) $ of \SFVSS, returns an instance \((G';T';k')\) of
  \SFVSS such that (i) \((G;T;k)\) is a \YES instance if and only
  if \((G';T';k')\) is a \YES instance, and (ii)
  \(|V(G')|=\Oh(k^{2})\), \(|E(G')|=\Oh(k^{2})\), and
  \(k'\leq{}k\). Moreover, the split graph \(G'\) has a split
  partition \((K',I')\) with \(|K'|\leq{}10k\).
\end{restatable}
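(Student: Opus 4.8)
The plan is to first apply \autoref{lem:triangles}, which lets us replace ``hit every $T$-cycle'' by ``hit every $T$-triangle'', and then to run a short chain of polynomial-time reduction rules that shrink both sides of a fixed split partition $(K,I)$ of $G$ (with $K$ a clique and $I$ independent). The whole argument rests on one structural observation: since $I$ is independent, every triangle---hence every $T$-triangle---uses at least two clique vertices and at most one vertex of $I$. Write $A=K\cap T$, $B=K\setminus T$, $C=I\cap T$, $D=I\setminus T$, and dispose of the trivial cases $k<0$ and $k=0$ directly.

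\textbf{Step 1 (forcing and cleaning).} Two easy facts drive the first rules. If a clique terminal $a\in A$ is \emph{not} put into the solution, then no three surviving clique vertices may form a triangle, so at most two vertices of $K$ survive; and if a terminal $w\in C$ is not chosen, then at most one of its clique neighbours survives. Hence: (i) if $|A|\ge k+3$ we output a trivial \NO instance; (ii) if $A\neq\emptyset$ and $|K|\ge k+3$, every vertex of $A$ lies in every solution, so we delete $A$ and decrease $k$ accordingly, or output \NO if $|A|>k$; (iii) if some $w\in C$ has $|N(w)\cap K|\ge k+2$, it lies in every solution, so we delete it and decrease $k$. We also delete any vertex lying in no $T$-triangle, and --- once $A=\emptyset$ --- delete every vertex of $D$, since then no $T$-triangle meets $I\setminus T$. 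Each rule is easily seen to preserve the answer and the split structure and never increases $k$. After exhausting these rules we are in one of two regimes: \textbf{(R1)} $|K|\le k+2$, or \textbf{(R2)} $A=\emptyset$ and every terminal of $I$ has at most $k+1$ clique neighbours.

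\textbf{Step 2 (bounding the clique).} In regime (R1) the clique already has size $\Oh(k)$. In regime (R2) the $T$-triangles are exactly the sets $\{x,y,w\}$ with $w\in C$ and $x,y\in N(w)\cap K$, so a clique vertex is relevant only through its adjacencies to terminals of $I$; a crown/expansion-type reduction between $K$ and the surviving part of $C$ then deletes irrelevant clique vertices until $|K|=\Oh(k)$. Tracking how much budget each forcing rule consumes yields the explicit bound $|K'|\le 10k$.

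\textbf{Step 3 (bounding the independent set).} With $K$ fixed and $|K|=\Oh(k)$, classify $I$ by clique-degree. If $N(w)\cap K=\{x,y\}$ then $w$ lies in at most one $T$-triangle, namely $\{x,y,w\}$; an exchange argument shows an optimal solution never needs such a $w$ (replace it by $x$ or $y$), so all vertices of $I$ with the same $2$-element clique neighbourhood become equivalent and all but one can be deleted, leaving $\Oh\!\left(\binom{|K|}{2}\right)=\Oh(k^{2})$ of them. A vertex $w$ with $|N(w)\cap K|\ge 3$ forces at least $|N(w)\cap K|-1\ge 2$ of its clique neighbours into any solution avoiding $w$; feeding this into an expansion-lemma argument on the bipartite incidence graph between $K$ and the high-clique-degree vertices of $I$ lets us delete all but $\Oh(k^{2})$ of them too. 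Combining the bounds gives $|V(G')|=\Oh(k^{2})$; since each surviving $I$-vertex has degree $\Oh(k)$ and $|K'|=\Oh(k)$ we also get $|E(G')|=\Oh(k^{2})$, the pair $(K',I')$ is still a split partition, and $k'\le k$ throughout.

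The main difficulty is Step 2 in regime (R2) together with the high-clique-degree part of Step 3: naively marking the independent set keeps up to $\Theta(k)$ vertices per pair \emph{and per triple} of clique vertices and only gives a cubic bound, so obtaining the quadratic bound requires the crown/expansion machinery plus the observation that one is only ever forced to retain one independent vertex per relevant \emph{pair} of clique vertices; nailing down the constant $10$ then reduces to a careful accounting of the forcing rules in Step 1 and Step 2.
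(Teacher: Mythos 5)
Your outline gets several easy pieces right (forcing clique terminals when \(|K|\geq k+3\), forcing an independent terminal with \(\geq k+2\) clique neighbours, and deduplicating degree-two independent vertices per pair of clique neighbours), but the two steps that carry all the difficulty are asserted rather than proved, and one stated conclusion does not follow from your own bounds. First, the edge bound: you derive \(|E(G')|=\Oh(k^{2})\) from ``each surviving \(I\)-vertex has degree \(\Oh(k)\) and \(|K'|=\Oh(k)\)'', but \(\Oh(k^{2})\) independent vertices of degree up to \(k+1\) give only \(|E(K',I')|=\Oh(k^{3})\). To get a quadratic edge count one needs the degree bound on the \emph{clique} side, i.e.\ \(|N(v)\cap I|\leq k\) for every \(v\in K\); combined with \(|K|\leq 10k\) and the absence of isolated vertices this yields both \(|I|=\Oh(k^{2})\) and \(|E|=\Oh(k^{2})\) at once. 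This is exactly what the paper's Reduction Rules~\ref{rr:max-mat} and~\ref{rr:degK} achieve, and it is where the real work lies: when \(B(v)\) has no large matching but \(v\) still has more than \(k\) independent neighbours, one must delete an \emph{irrelevant edge} \(vw\), which requires the refined expansion statement (\autoref{lem:extra-expansion-vertex}) guaranteeing an unsaturated vertex \(w\in Y\). Your proposal contains no mechanism for this, and your alternative route for the high-clique-degree independent vertices (``feeding this into an expansion-lemma argument \dots lets us delete all but \(\Oh(k^{2})\) of them'') is not an argument: a high-degree terminal can genuinely be needed in an optimal solution, so it cannot be discarded by the ``never needed, keep one representative'' exchange that works in the degree-two case, and you do not say what the expansion is between, what \(t\) is, or what the resulting safe reduction does.

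Second, your Step~2 (``a crown/expansion-type reduction between \(K\) and the surviving part of \(C\) deletes irrelevant clique vertices until \(|K|=\Oh(k)\); tracking budgets yields \(|K'|\leq 10k\)'') is a placeholder for the most intricate part of the paper's proof. The paper first computes a \(3\)-approximate solution \(\tilde S\) with \(|\tilde S|\leq 3k\), partitions \(K\) into \(K_{\tilde S},K_{0},K_{1}\) and \(I\) into \(I_{\tilde S},I_{0},I_{1}\), and applies the Expansion Lemma with \(t=2\) twice: once between \(I_{\tilde S}\) and \(K_{0}\), and once between \(\tilde S\) and \(K_{1}\) in an auxiliary bipartite graph \(B\) that keeps a clique--clique edge only when it is witnessed by a triangle through \(I_{1}\). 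Notably, the resulting rules do not ``delete irrelevant clique vertices'': they delete the expansion centres \(X\subseteq\tilde S\) (forcing them into the solution and decreasing \(k\) by \(|X|\)), and the bound \(|K|\leq 2k+2k+6k=10k\) falls out of \(|K_{\tilde S}|\leq 2k\), \(|K_{0}|<2|I_{\tilde S}|\), \(|K_{1}|<2|\tilde S|\). None of this structure is recoverable from your sketch, so as written the proposal does not establish either the \(10k\) clique bound or the quadratic size bound.
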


\noindent 
Our kernelization algorithm for \SubsetFVSSplit involves
non-trivial applications of the Expansion Lemma, a combinatorial
tool which was central to the design of the quadratic kernel for
{\sc Undirected FVS}~\cite{DBLP:journals/talg/Thomasse10}.  
Given an input graph $(G,T,k)$ and a split partition $(K,I)$ of
$V(G)$, where $K$ is a clique and $I$ is an independent set, we
first reduce the input to an instance \((G; T; k)\) where the
terminal set \(T\) is exactly the independent set \(I\) from a
split partition \((K,I)\) of \(G\). Then we show that if a
(non-terminal) vertex \(v\in{}K\) has at least \(k+1\) neighbours
in \(I\) then we can either include \(v\) in a solution, or safely
delete an edge incident with \(v\); we use the Expansion Lemma to
identify such an {\em irrelevant edge} incident to $v\in K$. This
leads to an instance where each \(v\in{}K\) has at most \(k\)
neighbours in \(I\). We now apply the Expansion Lemma to this
instance to bound the number of vertices in \(K\) by \(10k\); this
gives the bound of \(\Oh(k^{2})\) on \(|I|\).

We complement this upper bound with a matching lower bound on the
bit size of any kernel for this problem:
\begin{restatable}{theorem}{lowerbound}\label{thm:lowerbound}
  For any $\epsilon > 0$, the \SFVSS problem parameterized by the
  solution size does not admit a polynomial kernel of size
  $\Oh(k^{2 - \epsilon})$ bits, unless $\NP \subseteq \coNP/poly$.
\end{restatable}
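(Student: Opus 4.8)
The plan is to prove the kernel lower bound by a \emph{cross-composition} into \SFVSS, which by the standard framework of Bodlaender, Jansen and Kratsch rules out polynomial kernels with a subquadratic number of \emph{bits} unless $\NP \subseteq \coNP/\mathsf{poly}$. Concretely, I would give an OR-cross-composition from an \NP-hard source problem---the natural candidate is the \NP-complete version of \SFVSS itself, or equivalently \textsc{3-Hitting Set}, restricted so that all $t$ input instances share the same ``budget'' parameter (a polynomial equivalence relation lets us assume all instances have the same number of universe elements $n$ and the same target size $k_0$). Given $t$ such instances $I_1,\dots,I_t$, I would build a single split-graph instance of \SFVSS whose answer is \YES iff some $I_j$ is a \YES instance, and whose parameter $k'$ is bounded by $\mathrm{poly}(\max_j |I_j|) \cdot t^{1/2 + o(1)}$. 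Feeding this into the ``weak cross-composition'' / ``cross-composition of bounded cost'' machinery (the variant that yields polynomial-size lower bounds rather than merely ruling out \emph{any} polynomial kernel) then gives the stated $\Oh(k^{2-\epsilon})$-bit lower bound.

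The key construction step is to embed $t$ instances into one split graph while keeping the parameter at roughly $\sqrt{t}$. The standard trick is a \emph{selection gadget}: arrange the $t$ instances in a $\sqrt t \times \sqrt t$ grid, and use two groups of roughly $\log t$ or $\sqrt t$ ``coordinate-selector'' vertices so that choosing which instance is ``active'' costs only $O(\sqrt t)$ in the budget, while within the active instance we pay the original $k_0$. Since \SFVSS reduces to/from hitting all $T$-triangles (by \autoref{lem:triangles}), I would realize each source instance as a collection of $T$-triangles sitting on a common clique $K$ together with independent-set terminal vertices; the selector vertices are placed in $K$ as well, and the triangles of $I_j$ are ``switched on'' precisely when the selectors corresponding to $j$'s grid coordinates are \emph{not} taken into the solution. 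One must check that (a) the resulting graph is genuinely split---all the heavy structure lives on the clique side, terminals on the independent side---and (b) an optimum solution of size $\le k_0 + O(\sqrt t)$ must spend its ``selector budget'' consistently, i.e. it cannot cheat by half-selecting coordinates and thereby covering triangles of several instances for free.

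The main obstacle is exactly this consistency/soundness argument for the selection gadget: I need the triangle structure to be arranged so that covering all $T$-triangles of instance $I_j$ with only $k_0$ non-selector vertices is possible \emph{only} when the selectors ``point at'' $I_j$, and that partially covering selectors of two different instances never helps. The usual way to enforce this is to add, for each grid row and each grid column, a large bundle of parallel ``enforcer'' $T$-triangles (padding triangles) that force the solution to take all-but-one selector in that row and all-but-one in that column; padding sizes are chosen as $\mathrm{poly}(n)$ so they dominate $k_0$ and thus cannot be covered by ``wasting'' budget inside the instances. Balancing the padding against the budget---so that the total parameter is $k_0 + O(\sqrt t \cdot \mathrm{poly}(n))$, which is still $t^{1/2+o(1)}\cdot\mathrm{poly}(n)$ and hence acceptable for the polynomial-lower-bound version of cross-composition---is the delicate calculation. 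Once the gadget is correct, verifying the split-graph property and wrapping everything in the cross-composition theorem is routine; I would cite the relevant lower-bound framework and \autoref{lem:triangles}, and note that the bound is tight against \autoref{thm:split-kernel}.
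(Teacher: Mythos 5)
There is a genuine gap here, and it sits exactly where you locate ``the main obstacle.'' Your plan is to build the $\Oh(k^{2-\epsilon})$ lower bound from scratch via a degree-$2$ weak cross-composition with a $\sqrt{t}\times\sqrt{t}$ selector grid. But the selector gadget as sketched cannot work for a triangle-hitting problem: for the triangles of instance $I_{(i,j)}$ to be hit ``for free'' by the chosen selectors precisely when $(i,j)$ is \emph{not} the selected cell, every such triangle must contain \emph{both} the row-selector $r_i$ and the column-selector $c_j$ (containing only one of them leaves all instances in the selected row, or the selected column, alive). That leaves a single free vertex per triangle to encode the actual content of a $3$-element hyperedge, which is not enough. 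This is precisely the obstruction that forced Dell and van Melkebeek to replace naive selector grids with their Packing Lemma (affine-plane-based packings of cliques pairwise intersecting in at most one vertex) in their \textsc{Vertex Cover} / $d$-\textsc{Hitting Set} lower bounds. So the step you describe as ``routine'' once the gadget is correct is in fact the entire proof, and the gadget you propose fails; the consistency issue you flag is not a detail to be checked but a construction that is missing.

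The paper sidesteps all of this. It takes the already-known result of Dell and van Melkebeek that \textsc{Vertex Cover} admits no polynomial compression of bitsize $\Oh(k^{2-\epsilon})$ unless $\NP \subseteq \coNP/\mathsf{poly}$, and composes it with the parameter-preserving reduction of Fomin et al.\ from \textsc{Vertex Cover} to \SFVSS (each edge of $G$ becomes an independent-set terminal adjacent to its two endpoints, which form a clique; the budget $k$ is unchanged). Any $\Oh(k^{2-\delta})$-bit compression for \SFVSS would then yield one for \textsc{Vertex Cover}, a contradiction. If you want to salvage your approach you would essentially have to re-derive the Dell--van Melkebeek packing machinery inside your gadget; the transfer argument gets the same theorem in a few lines and is the intended proof.
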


Our third result is an improved \FPT\ algorithm for
\SubsetFVSChordal which, in addition, runs in time \emph{linear}
in the size of the input graph: 
\begin{restatable}{theorem}{chordalfpt}\label{thm:chordalfpt}  
\SubsetFVSChordal admits an algorithm with running time $\mathcal{O}(2^k (n + m))$. Here $n, m$ are the  number of vertices and  the edges of the input graph $G$, respectively.
%
\end{restatable}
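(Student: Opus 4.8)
We want an FPT algorithm for SUBSET-FVS on chordal graphs running in $\mathcal{O}(2^k(n+m))$ time. The plan is to reduce to the task of hitting all $T$-triangles (via Lemma on triangles, which says a chordal graph has no $T$-cycle iff it has no $T$-triangle), and then solve this implicit $3$-HITTING SET problem by a bounded search tree of depth $k$ with branching factor $2$. Let me think through how to get branching factor $2$ rather than the naive $3$.

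Let me set up the structure. Given a chordal graph $G$, terminals $T$, and budget $k$. First compute a clique tree (perfect elimination ordering) of $G$ in linear time. A $T$-triangle is a triangle with at least one terminal vertex. The key structural observation: if we root things via a PEO $v_1, \dots, v_n$, then for the first vertex $v_i$ (in the ordering) that lies on a $T$-triangle, its later-neighbors form a clique, and... hmm, actually I should think about what gives branching factor $2$.

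The cleaner route: treat it as hitting $T$-triangles where at least one of the three vertices is a terminal. Find any $T$-triangle $\{a, b, c\}$ with, say, $a \in T$. Branch into two cases: (1) $a \in S$, or (2) $a \notin S$. In case (2), $a$ survives, so in $G - a$... no wait, that doesn't bound anything. The standard trick for getting $2^k$ on such problems: when $a \notin S$, every triangle through $a$ must still be hit by one of its other two vertices. So consider the graph $G_a = G[N(a)]$ induced on the neighborhood of $a$ — since $G$ is chordal, $N(a)$ induces a chordal graph, and $T$-triangles through $a$ correspond exactly to edges of $G[N(a)]$, i.e., we need a vertex cover of $G[N(a)]$. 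But a vertex cover is still a branching-$2$ problem, not directly resolved.

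The actual mechanism achieving $\mathcal{O}(2^k(n+m))$: I expect the algorithm maintains, as an invariant, that $T = I$ is an independent set (using the preprocessing reduction mentioned in the paper for the split case, generalized to chordal graphs: contract/handle triangles inside the clique side), so that every $T$-triangle has the form $t$–$u$–$v$ with $t \in T$ and $uv$ an edge with $u, v \notin T$. Then pick such a triangle and branch on $\{u \in S\}$ vs.\ $\{v \in S\}$ — two branches, and in each branch the budget drops by one, and crucially we never need the third option $t \in S$ because... we can always route a hitting set away from terminals when $T$ is independent (a vertex of $T$ in an optimal solution can be swapped for a clique neighbor). Establishing this exchange argument cleanly, so that the $2$-way branching is exhaustive, is the main obstacle. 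Once that is in place, the search tree has $\le 2^k$ leaves, and at each node we need to (re)find a $T$-triangle and update data structures in $\mathcal{O}(n+m)$ amortized time using the chordal structure (PEO / clique tree), giving the claimed $\mathcal{O}(2^k(n+m))$ bound; a bit of care is needed to make the per-node work genuinely linear rather than $\mathcal{O}(k(n+m))$, by maintaining incremental data structures along each root-to-leaf path instead of recomputing from scratch.
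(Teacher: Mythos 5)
There is a genuine gap, and it sits exactly where you flagged ``the main obstacle'': the exchange argument you need is false. You propose to branch only on the two non-terminal vertices $u,v$ of a $T$-triangle $\{t,u,v\}$, justified by the claim that when $T$ is independent, a terminal in an optimal solution can always be swapped for a clique neighbour. Consider the windmill (friendship) graph: a single terminal $t$ joined to $m$ disjoint edges $u_1v_1,\dots,u_mv_m$. This graph is chordal, $T=\{t\}$ is independent, and the unique solution of size $1$ is $\{t\}$ itself; any solution avoiding $t$ needs $m$ vertices. So the two-way branch $\{u\in S\}$ vs.\ $\{v\in S\}$ is not exhaustive, and no local exchange can repair it. The paper's corresponding rule (branching on the two non-simplicial vertices of a simplicial clique of size \emph{three}) is sound only because the simplicial vertex there lies in a unique $T$-triangle, which is precisely the property your general triangle $\{t,u,v\}$ lacks. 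Your preliminary step of forcing $T$ to be an independent set is also specific to the split case; the paper does not (and cannot cheaply) enforce it for general chordal graphs.

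The second omission is that your sketch never engages with the case that actually makes this theorem hard. The paper disposes of simplicial cliques of size three via a $(1,1)$ branch and of cliques of size at least five containing a terminal via a $(1,2,2)$ branch; both resolve to base $2$. What remains is the situation where every simplicial clique has size exactly four, where the paper explicitly notes that simple branching rules fail to stay within $\OhStar(2^k)$. Handling it requires a sequence of structural reductions (every simplicial vertex is a terminal, each simplicial clique has a unique terminal, the notion of a \emph{selective solution}) followed by a seven-way branching rule built from a deepest leaf of the clique tree together with two sibling leaf cliques, whose branching vector $(3,4,4,5,4,1,3,4)$ (or its variant) resolves to $2$. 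None of this machinery, nor any substitute for it, appears in your proposal, so even setting aside the incorrect exchange claim, the argument does not reach the claimed running time.
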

\noindent 
We obtain this improvement by designing a branching strategy based
on a careful analysis of the \emph{clique-tree} structure of the
input chordal graph.  


\vspace{0.2cm}

\noindent \textbf{Organization of the rest of the paper:} We state
various definitions and preliminary results in
\autoref{sec:prelims}. We prove the quadratic kernel upper bound
for \SFVSS
in \autoref{sec:kernelbound} and the corresponding lower bound
in \autoref{sec:lowerbound}. We derive the \(\OhStar(2^{k})\)
algorithm for \SFVSC in \autoref{sec:fpt-algorithm}, and conclude
in \autoref{sec:conclusion}.


\section{Preliminaries}\label{sec:prelims} 

\subsection{Graphs} All our graphs are finite, undirected, and
simple. We mostly conform to the graph-theoretic notation and
terminology from the book of Diestel~\cite{diestel2017graph}. We
describe the few differences from Diestel and some notation and
terms of our own, and reproduce some definitions from Diestel for
ease of reference. Let \(G\) be a graph. For a vertex $v$ in
$V(G)$ we use $N_G(v)$ to denote the \emph{open neighbourhood}
$\{u \in V(G) \mid vu \in E(G)\}$ of \(v\), and \(N_{G}[v]\) to
denote its closed neighbourhood \(N(v)\cup\{v\}\). We drop the
subscript $G$ when there is no ambiguity.  The \emph{length} of a
path or cycle is the number of \emph{edges} in the path (or
cycle).  An edge $uv$ is a \emph{bridge}
if it is not contained in any cycle of \(G\). An edge \(e\) in
\(G\) is a \emph{chord} of a cycle \(C\) if (i) both the
endvertices of \(e\) are in \(C\), and (ii) edge \(e\) is not in
\(C\). An \emph{induced cycle} is a cycle which has no chord. A
vertex \(v\) of degree exactly one in a tree \(T\) is a
\emph{leaf} of the tree, \emph{unless} \(v\) is the designated
\emph{root} vertex (if one exists) of \(T\). A \emph{matching}
\(M\) in \(G\) is a set of edges no two of which share an
endvertex. The endvertices of the edges in \(M\) are
\emph{saturated} by \(M\). \(M\) is \emph{between} vertex sets
\(X,Y\) if \(X\cap{}Y=\emptyset\) and each edge in \(M\) has one
end each in \(X\) and \(Y\), respectively. A \emph{maximum
  matching} of \(G\) is a matching of the largest size in \(G\).

Let \(S\subseteq{}V(G)\) and \(F\subseteq{}E(G)\) be a vertex
subset and an edge subset of \(G\), respectively. We use (i)
$G[S]$ to denote the subgraph of \(G\) \emph{induced} by \(S\),
(ii) $G - S$ to denote the graph \(G[V\setminus{}S]\), and (ii)
$G - F$ to denote the graph \((V(G),(E(G)\setminus{}F))\). A
\emph{triangle} is a cycle of length three. Set \(S\) is a
\emph{feedback vertex set} (FVS) of \(G\) if \(G-S\) is a forest.
A path \(P\) (or cycle \(C\)) \emph{passes through} \(S\) if \(P\)
(or \(C\)) contains a vertex from \(S\). Let \(T\subseteq{}V(G)\)
be a specified set of vertices called \emph{terminal} vertices (or
\emph{terminals}). A \emph{\(T\)-cycle} (\(T\)-triangle) is a
cycle (triangle) which passes through \(T\). Graph \(G\) is a
\emph{\(T\)-forest} if it contains no \(T\)-cycle. Vertex set
\(S\) is a \emph{subset feedback vertex set} (subset-FVS) of \(G\)
with respect to terminal set \(T\) if the graph \(G-S\) is a
\(T\)-forest. Note that \(S\) may contain vertices from \(T\), and
that \(G-S\) need not be a forest. Set \(S\) is a \emph{subset
  triangle hitting set} (subset-THS) of \(G\) with respect to
terminal set \(T\) if \(G-S\) contains no \(T\)-triangle. More
generally, we say that a vertex \(v\) \emph{hits} a cycle \(C\) if
\(C\) contains \(v\). Vertex set \(S\) \emph{hits} a set
\(\mathcal{C}\) of cycles if for each cycle \(C\in\mathcal{C}\)
there is a vertex \(v\in{}S\) which hits \(C\).  We elide the
phrase ``with respect to \(T\)'' when there is no ambiguity.

\(K_{n}\) is the complete graph on \(n\) vertices. A subset
\(S\subseteq{}V(G)\) of vertices of graph \(G\) is a \emph{clique}
if its vertices are all pairwise adjacent, and is an
\emph{independent set} if they are all pairwise non-adjacent. A
clique \(C\) in \(G\) is a \emph{maximal clique} if \(C\) is not a
\emph{proper} subset of some clique in \(G\). A vertex \(v\) of
\(G\) is a \emph{simplicial vertex} (or is simplicial) in \(G\) if
\(N[v]\) is a clique.  In this case we say that \(N[v]\) is a
\emph{simplicial clique} in \(G\) and that \(v\) is a simplicial
vertex \emph{of} \(N[v]\).
\begin{fact}[\cite{blair1993introduction},
  Lemma~3]\label{fac:simplicial-vertex-characterization}
  Vertex \(v\) is simplicial in graph \(G\) if and only \(v\)
  belongs to precisely one maximal clique of \(G\), namely the set
  \(N[v]\).
\end{fact}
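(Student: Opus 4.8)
The plan is to prove both directions of the biconditional directly from the two definitions in play: $v$ is \emph{simplicial} means precisely that $N[v]$ is a clique, and a clique is \emph{maximal} when it is not a proper subset of any clique. The crux is a single containment observation, from which both maximality and uniqueness follow.

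For the forward direction I would assume $v$ is simplicial, so that $N[v]$ is a clique containing $v$, and establish two things: that $N[v]$ is maximal, and that it is the \emph{only} maximal clique containing $v$. The key lemma underlying both is the claim that \emph{every} clique $C$ with $v\in{}C$ satisfies $C\subseteq{}N[v]$: indeed, each vertex $w\in{}C\setminus\{v\}$ is adjacent to $v$ (since $C$ is a clique), so $w\in{}N(v)\subseteq{}N[v]$, and $v\in{}N[v]$ trivially. Applying this with $C=N[v]$ itself, suppose some vertex $u$ could be added to $N[v]$ to form a strictly larger clique; then $N[v]\cup\{u\}$ is a clique containing $v$, so by the claim $u\in{}N[v]$, a contradiction---hence $N[v]$ is maximal. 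For uniqueness, let $C$ be any maximal clique with $v\in{}C$; the claim gives $C\subseteq{}N[v]$, and since $N[v]$ is itself a clique, maximality of $C$ forces $C=N[v]$. Thus $N[v]$ is the unique maximal clique through $v$.

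The reverse direction is essentially immediate: the hypothesis asserts that $N[v]$ is a (in fact the unique) maximal clique of $G$, and in particular that $N[v]$ is a clique. But ``$N[v]$ is a clique'' is exactly the definition of $v$ being simplicial, so we are done.

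As for the main obstacle: there is really no hard part here. The entire argument rests on the elementary containment ``any clique through $v$ lies inside $N[v]$,'' and the only point demanding a little care is keeping the two assertions in the forward direction---\emph{maximality} of $N[v]$ and \emph{uniqueness} of the maximal clique through $v$---cleanly separated, since both are consequences of that one containment. Everything else is definitional bookkeeping.
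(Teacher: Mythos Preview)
Your proof is correct. The paper does not supply its own proof of this statement: it is recorded as a \emph{Fact} with a citation to Blair and Peyton~\cite{blair1993introduction}, Lemma~3, and is used without further justification. Your argument---reducing both maximality and uniqueness to the single containment ``every clique through \(v\) lies in \(N[v]\)''---is the standard elementary proof and would serve perfectly well as a self-contained justification.
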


\subsection{Chordal Graphs and Clique Trees} A graph \(G\) is
\emph{chordal} (or \emph{triangulated}) if every induced cycle in
\(G\) is a triangle; equivalently, if every cycle of length at
least four has a chord.  If \(G\) is a chordal graph
then~\cite{golumbic2004algorithmic}: (i) every induced subgraph of
\(G\) is chordal; (ii) \(G\) has a simplicial vertex, and if \(G\)
is not a complete graph then \(G\) has two non-adjacent simplicial
vertices. Whether a graph \(H\) is chordal or not can be found in
time \(\Oh(|V(H)|+|E(H)|)\), and if \(H\) is chordal then a
simplicial vertex of \(H\) can be found within the same time
bound~\cite{golumbic2004algorithmic}.

The number of maximal cliques in a chordal graph $G$ is at most
$|V(G)|$ and they can all be enumerated in time
$\Oh(|V(G)|+|E(G)|)$ (\cite[Theorem
4.17]{golumbic2004algorithmic}). Let \(\mathcal{C}(G)\) be the set
of maximal cliques of a chordal graph \(G\). A \emph{clique
  tree}\footnote{We refer readers to monograph of J. Blair and
  B. Peyton for an introduction to clique trees
  \cite{blair1993introduction}.} of \(G\) is a graph
\(\mathcal{T}_{G}\) with the following properties:
\begin{enumerate}
\item The vertex set of \(\mathcal{T}_{G}\) is the set
  \(\mathcal{C}(G)\).
\item \(\mathcal{T}_{G}\) is a tree.
\item (Clique Intersection Property) Let \(C_{1},C_{2}\) be any two maximal cliques of \(G\), and
  let \(C'=C_{1}\cap{}C_{2}\). If \(C\) is a maximal clique of
  \(G\) which lies on the path from \(C_{1}\) to \(C_{2}\) on
  \(\mathcal{T}_{G}\), then it is the case that
  \(C'\subseteq{}C\).
\end{enumerate}

\begin{fact}\label{fac:clique-tree-properties}
  ~
  \begin{enumerate}
  \item A connected graph \(G\) is chordal if and only if it has a clique
    tree~\cite[Theorem~3.1]{blair1993introduction}.
  \item\label{fac:clique-tree-in-polytime} A clique tree
    \(\mathcal{T}_{G}\) of a chordal graph \(G\) can be computed
    in \(\Oh(|V(G)|+|E(G)|)\)
    time~\cite[Theorem~12]{galinier1995chordal}.
  \item\label{fac:subtree} Let \(G\) be a connected chordal graph
    and \(\mathcal{T}_{G}\) a clique tree of \(G\).  For each
    vertex \(v\) of \(G\), the set of all nodes of
    \(\mathcal{T}_{G}\) which contain \(v\) form a connected
    subgraph (a subtree) of
    \(\mathcal{T}_{G}\)~\cite[Theorem~3.2]{blair1993introduction}.
  \end{enumerate}
\end{fact}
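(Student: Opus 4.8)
The plan is to solve \SubsetFVSChordal by a bounded search tree, after first turning it into the problem of hitting all $T$-triangles. First I would invoke the observation (\autoref{lem:triangles}) that in a chordal graph a vertex set meets every $T$-cycle if and only if it meets every $T$-triangle, so it suffices to compute a minimum subset triangle hitting set. Next I would record the structural fact that in a chordal graph every triangle is contained in a maximal clique; hence for a maximal clique $C$ the graph $G[C\setminus S]$ is $T$-triangle-free exactly when either $C\setminus S$ contains no terminal, or $|C\setminus S|\le 2$. Thus $S$ is a solution if and only if \emph{every} maximal clique $C$ satisfies one of these two conditions. I would then compute a clique tree $\mathcal{T}_{G}$ in $\Oh(n+m)$ time (\autoref{fac:clique-tree-properties}) and root it. The algorithm is a search tree of depth at most $k$ with (at most) two children per node, giving at most $2^{k}$ leaves, each performing $\Oh(n+m)$ work; the base case, in which $G$ is a single clique, is solved directly by deleting all but two vertices (or all terminals), using no branching.

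Before branching I would clean the instance with reduction rules: delete any vertex lying in no $T$-triangle, since it is irrelevant and never needs to enter $S$; by \autoref{fac:simplicial-vertex-characterization} the typical such vertices are simplicial vertices whose unique maximal clique is small or terminal-free. After cleaning I would pick a deepest leaf clique $C$ of $\mathcal{T}_{G}$, with separator $M=C\cap C_{p}$ to its parent $C_{p}$ and private part $D=C\setminus M$; by the subtree property (\autoref{fac:clique-tree-properties}) every vertex of $D$ is simplicial and appears only in $C$, so its triangles all lie inside $C$, and one private vertex is interchangeable with any other. If $C$ carries a $T$-triangle (so $|C|\ge 3$ and $C$ has a terminal) I branch on the two ways its clique-condition can hold: branch (a), delete all terminals of $C$, at cost $|T\cap C|\ge 1$, after which $D$ becomes irrelevant and the leaf is pruned; branch (b), keep some terminal of $C$, which forces $|C\setminus S|\le 2$ and hence the deletion of at least $|C|-2\ge 1$ further vertices. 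Both branches drop the budget by at least one, and exhaustiveness of (a) and (b) gives correctness.

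The $\Oh(n+m)$ per-node cost would come from maintaining $\mathcal{T}_{G}$ and the list of ``dirty'' cliques incrementally rather than recomputing from scratch. The hard part will be proving that the branching number is genuinely at most $2$. Branch (a) is a single recursive call, but branch (b) must reduce $C$ to two survivors, and a naive choice of which two to keep reintroduces up to $|M|$ sub-branches through the separator vertices, which would push the bound back towards $3^{k}$. The crux of the analysis is therefore to tame these survivor/separator choices: using the simpliciality of $D$ and the clique-intersection property, I would argue that the surviving vertices can be fixed canonically (keeping private vertices as survivors whenever possible, so that the fate of a separator vertex is decided only when it is forced), collapsing branch (b) to a single call and pinning the branching vector at $(1,1)$ — or, more carefully, at a vector whose branching number does not exceed $2$. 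A secondary obstacle is the linear-time bookkeeping: each search node must locate a deepest dirty leaf, apply the reductions, and update the clique tree in $\Oh(n+m)$ amortized time, so that the overall running time stays $\Oh(2^{k}(n+m))$.
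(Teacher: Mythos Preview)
Your proposal does not address the stated result. The statement you were asked to prove is \autoref{fac:clique-tree-properties}, a collection of three classical facts about clique trees of chordal graphs. The paper does not prove these; it simply cites them from the literature (\cite{blair1993introduction,galinier1995chordal}). There is no ``paper's own proof'' to compare against, and your write-up makes no attempt to establish any of the three items.

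What you have actually sketched is an attack on \autoref{thm:chordalfpt}, the $\Oh(2^{k}(n+m))$ algorithm for \SFVSC. Treating it as such, your plan has a genuine gap precisely where you flag it. After the paper's reductions, a simplicial leaf clique $C$ of size four has exactly one private vertex (the simplicial terminal $t$) and three separator vertices $x,y,z\subseteq C_{p}$ (\autoref{claim:x-y-z-in-Cp}). Your branch~(b) must keep at most two vertices of $C$; ``keeping private vertices as survivors whenever possible'' fixes only $t$, and you must still choose which one of $\{x,y,z\}$ survives. That is three sub-branches of cost $2$ each, on top of branch~(a) of cost $1$, giving a vector $(1,2,2,2)$ with branching number $(1+\sqrt{13})/2\approx 2.30$, not $2$. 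Your claim that the survivor choice can be made canonically is unsupported: all three separator vertices propagate into $C_{p}$ and beyond, and nothing local to $C$ distinguishes them. The paper resolves exactly this case not by a canonical choice inside $C$, but by locating two further leaf cliques $C_{x},C_{y}$ adjacent to $C_{p}$ (\autoref{claim:three-nbrs}) and branching jointly over $C_{\ell},C_{x},C_{y}$ via a seven-way rule (\autoref{br:mega-branching-rule}) whose vector is $(3,4,4,5,1,3,4)$ (or $(3,4,4,4,1,3,4)$), which does evaluate to at most $2$. Your outline is missing this joint-clique idea, and without it the $2^{k}$ bound does not follow.
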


We also need
\begin{observation}\label{obs:leaves-are-simplicial}
  Let \(G\) be a connected graph with at least two vertices and
  let \(\mathcal{T}_{G}\) be a clique tree of \(G\). If \(C\) is a
  leaf node of \(\mathcal{T}_{G}\) then \(C\) is a simplicial
  clique in \(G\).
\end{observation}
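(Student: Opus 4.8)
The plan is to find, for an arbitrary leaf node $C$ of $\mathcal{T}_G$, a vertex $v\in C$ that belongs to no maximal clique of $G$ other than $C$. By Fact~\ref{fac:simplicial-vertex-characterization}, such a vertex $v$ is simplicial in $G$ and satisfies $N[v]=C$; hence $C=N[v]$ is a simplicial clique in $G$, which is precisely what the statement asserts.

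To produce such a vertex, first note that since $G$ is connected with at least two vertices, $\mathcal{T}_G$ has at least two nodes, and so the leaf $C$ has a unique neighbour $C'$ in $\mathcal{T}_G$. Now $C$ and $C'$ are distinct maximal cliques of $G$; since $C$ is maximal and $C\neq C'$, the clique $C$ is not contained in $C'$, so $C\setminus C'\neq\emptyset$, and I fix some $v\in C\setminus C'$.

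Next I would invoke the subtree property, Fact~\ref{fac:clique-tree-properties}(\ref{fac:subtree}): the set $S_v$ of nodes of $\mathcal{T}_G$ that contain $v$ induces a subtree of $\mathcal{T}_G$. By the choice of $v$ we have $C\in S_v$ but $C'\notin S_v$. If $S_v$ contained a node $D\neq C$, then, being connected, $S_v$ would contain the whole path of $\mathcal{T}_G$ from $C$ to $D$; but that path begins with the edge $CC'$ because $C'$ is the only neighbour of $C$, forcing $C'\in S_v$ --- a contradiction. Hence $S_v=\{C\}$, i.e.\ $C$ is the only maximal clique of $G$ containing $v$, and the conclusion follows from Fact~\ref{fac:simplicial-vertex-characterization} as above. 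The argument is a short structural deduction with no real obstacle; the only points that need care are checking that a leaf node actually has a neighbour --- guaranteed by the hypothesis that $G$ has at least two vertices, which excludes a single-node clique tree (which would have no leaves) --- and that the maximality of $C$ rules out $C\subseteq C'$.
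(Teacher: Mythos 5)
Your proof is correct and takes essentially the same route as the paper's: pick \(v\in C\setminus C'\) (nonempty by maximality of \(C\)) and use the tree structure to conclude that \(C\) is the only maximal clique containing \(v\), hence \(C=N[v]\) is simplicial. The paper runs the key step through the Clique Intersection Property and shows \(N[v]\subseteq C\) directly by contradiction, whereas you run it through the subtree property (item~\ref{fac:subtree} of Fact~\ref{fac:clique-tree-properties}) and then invoke Fact~\ref{fac:simplicial-vertex-characterization}; the two are interchangeable here. One small inaccuracy worth fixing: a connected graph with at least two vertices need \emph{not} have a clique tree with at least two nodes (\(K_2\), or any complete graph, has a one-node clique tree), but this is harmless since the hypothesis that \(C\) is a leaf already gives it a unique neighbour by definition.
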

\begin{proof}
  Let \(C\) be a leaf node of \(\mathcal{T}_{G}\) and let \(C'\)
  be the unique neighbour of \(C\) in \(\mathcal{T}_{G}\). Since
  \(C\) is a maximal clique we have that
  \(C\setminus{}C'\neq\emptyset\). Pick a vertex
  \(v\in{}C\setminus{}C'\). Since \(C\) is a clique we have that
  \(C\subseteq{}N[v]\).

  If there is a vertex \(u\in{}(N[v]\setminus{}C)\) then let
  \(C''\) be a maximal clique which contains the set
  \(\{u,v\}\). Then \(C''\) is distinct from \(C\) and \(C'\). Now
  (i) \(v\in(C\cap{}C'')\), (ii) \(v\notin{}C'\), and (iii) \(C'\)
  is a maximal clique which lies on the path from \(C\) to \(C''\)
  in \(\mathcal{T}_{G}\). This contradicts the Clique Intersection
  Property of \(\mathcal{T}_{G}\). Hence we get that
  \(N[v]\subseteq{}C\) as well. Thus \(C=N[v]\) is a simplicial
  clique in \(G\).
\end{proof}

\subsection{Split Graphs} A graph \(G\) is a \emph{split graph}
if its vertex set can be partitioned into a clique and an independent set in \(G\). Such a partition is called a \emph{split partition} of \(G\). We use \((K,I)\) to denote the split partitions of the graphs.
We refer to vertex sets \(K\) and \(I\) as the \emph{clique side} and \emph{independent side}, respectively, of graph \(G\).
We say that an edge \(uv\) in \(G[K]\) is \emph{highlighted} if there is a vertex \(x\) in \(I\) such that the vertices \(\{x,u,v\}\) induce a triangle in \(G\).
Every split graph is a chordal graph as
well~\cite{hammer1977split}. Given a graph \(G\) as its adjacency
list we can (i) check if \(G\) is a split graph, and if it is,
(ii) find a partition \(V(G)=K\uplus{}I\) into a clique \(K\) and
independent set \(I\), both in time
\(\Oh(|V(G)|+|E(G)|)\)~\cite{HammerSimeone1981}.

If a chordal graph \(G\) contains a cycle then it contains an
induced cycle, and every induced cycle in \(G\) is---by
definition---a triangle. Conversely, if \(G\) contains a triangle
then it trivially contains a cycle.  Thus a chordal graph contains
a cycle if and only if it contains a triangle. This carries over
to subset feedback vertex sets in chordal (and therefore split)
graphs in a natural way.

\begin{lemma}\label{lem:triangles}
  Let \(G\) be a chordal graph and let \(T\subseteq{}V(G)\) be a
  specified set of terminal vertices. A vertex subset
  \(S\subseteq{}V(G)\) is a subset-FVS of \(G\) with respect to
  \(T\) if and only if the graph \(G-S\) contains no
  \(T\)-triangles.
\end{lemma}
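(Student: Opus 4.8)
The plan is to prove the two directions of the biconditional separately, with the forward direction being essentially trivial and the reverse direction being the substantive one.

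\medskip

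\noindent\textbf{Forward direction.} Suppose \(S\) is a subset-FVS of \(G\) with respect to \(T\), so \(G-S\) contains no \(T\)-cycle. Since every triangle is a cycle, and a \(T\)-triangle is in particular a \(T\)-cycle, \(G-S\) contains no \(T\)-triangle. This direction needs no chordality.

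\medskip

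\noindent\textbf{Reverse direction.} Suppose \(G-S\) contains no \(T\)-triangle; I want to show \(G-S\) contains no \(T\)-cycle. Suppose for contradiction that \(G-S\) contains a \(T\)-cycle, and among all \(T\)-cycles in \(G-S\) pick one, call it \(C\), of \emph{shortest length}. Let \(t \in V(C) \cap T\) be a terminal on \(C\). If \(C\) has length \(3\) then \(C\) is itself a \(T\)-triangle in \(G-S\), a contradiction; so assume \(C\) has length at least \(4\). Now use that \(G-S\) is an induced subgraph of a chordal graph, hence chordal (stated in the excerpt among the basic properties of chordal graphs). Therefore the cycle \(C\), having length at least four, has a chord \(e = uv\) in \(G-S\), with \(u,v \in V(C)\) and \(e \notin C\). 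This chord splits \(C\) into two cycles \(C_1\) and \(C_2\), each using the edge \(e\) together with one of the two arcs of \(C\) between \(u\) and \(v\); each \(C_i\) is strictly shorter than \(C\) (each omits at least one edge of \(C\) and at least one of them omits a vertex). The terminal \(t\) lies on \(C\), hence on at least one of the arcs, hence on at least one of \(C_1, C_2\) — say \(t \in V(C_1)\). Then \(C_1\) is a \(T\)-cycle in \(G-S\) strictly shorter than \(C\), contradicting the minimality of \(C\).

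\medskip

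\noindent The one point needing slight care is the case analysis when the terminal \(t\) happens to be one of the endpoints \(u\) or \(v\) of the chord: in that case \(t\) lies on \emph{both} \(C_1\) and \(C_2\), and we may take either one. In all cases at least one of \(C_1, C_2\) is a shorter \(T\)-cycle, which is the contradiction we need. I do not expect any real obstacle here; the ``hard'' part is merely noticing that the right induction parameter is the \emph{length} of the cycle and that chordality of the induced subgraph \(G-S\) is exactly what lets the induction step go through.
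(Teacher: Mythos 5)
Your proof is correct and follows essentially the same argument as the paper's: the forward direction is immediate from the definitions, and the reverse direction takes a shortest \(T\)-cycle in the chordal induced subgraph \(G-S\), uses a chord to split it into two strictly shorter cycles covering all its vertices, and observes that at least one of these is a shorter \(T\)-cycle. Your extra remark about the terminal possibly being an endpoint of the chord is a fine (if unnecessary) piece of care that the paper handles implicitly via the fact that the two shorter cycles together cover the vertex set of \(C\).
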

\begin{proof}
  If \(S\) is a subset-FVS of \(G\) with respect to \(T\)
  then---by definition---the graph \(G-S\) contains no
  \(T\)-triangles. For the reverse direction, let \(S\) be a
  subset-THS of \(G\) with respect to \(T\), and let
  \(H=G-S\). Since \(H\) is an induced subgraph of \(G\) we get
  that \(H\) is chordal. Assume for the sake of contradiction that
  there is a \(T\)-cycle in \(H\). Let \(C\) be a \(T\)-cycle in
  \(H\) of the smallest length \(\ell\). Then \(\ell\geq{}4\) and
  we get that \(C\) has a chord. Now this chord is part of two
  cycles \(C',C''\) such that (i) each of \(\{C',C''\}\) has
  length strictly smaller than \(C\), and (ii) the union of the
  vertex sets of \(C'\) and \(C''\) is the vertex set of
  \(C\). Thus at least one of \(\{C',C''\}\), say \(C'\), is a
  \(T\)-cycle \(H\) of length strictly smaller than \(\ell\), a
  contradiction.
\end{proof}

We use \((G;T;k)\) to denote an instance of \SFVSC or \SFVSS where \(G\) is the input graph, \(T\) is the specified set of terminals, and \(k\) is the parameter. 

\begin{corollary}\label{cor:hit-only-triangles}
  An instance \((G;T;k)\) of \SFVSC (or of \SFVSS) is a \YES
  instance if and only if there is a vertex subset
  \(S\subseteq{}V(G)\) of size at most \(k\) such that \(S\) is a
  \(T\)-THS of \(G\).
\end{corollary}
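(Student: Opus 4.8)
The plan is to simply unfold the definitions and appeal to \autoref{lem:triangles}. By definition, $(G;T;k)$ is a \YES instance of \SFVSC (respectively \SFVSS) exactly when there is a vertex set $S\subseteq V(G)$ with $|S|\leq k$ such that $G-S = G[V\setminus S]$ contains no $T$-cycle; equivalently, exactly when $G$ has a subset-FVS of size at most $k$ with respect to $T$. The first step is to record this translation from the problem statement into the terminology of \autoref{sec:prelims}.

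Next I would invoke \autoref{lem:triangles}, which states that for a chordal graph $G$ a set $S\subseteq V(G)$ is a subset-FVS of $G$ with respect to $T$ if and only if $G-S$ contains no $T$-triangle, i.e.\ if and only if $S$ is a $T$-THS of $G$. Since the input graph in both \SFVSC and \SFVSS is chordal (recall every split graph is chordal), this equivalence applies to $G$ and can be substituted into the characterization of \YES instances obtained in the first step. Combining the two yields: $(G;T;k)$ is a \YES instance if and only if some $S$ with $|S|\leq k$ is a $T$-THS of $G$, which is precisely the claimed statement, and the argument is symmetric in both directions.

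There is essentially no real obstacle here: the only point requiring a moment's care is to confirm that the phrase ``$G[V\setminus S]$ contains no $T$-cycle'' appearing in the definition of the decision problem coincides verbatim with ``$S$ is a subset-FVS of $G$ with respect to $T$'' as defined in the preliminaries (it does, by definition of a $T$-forest), after which the corollary is an immediate consequence of \autoref{lem:triangles}.
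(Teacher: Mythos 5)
Your proposal is correct and matches the paper's intent exactly: the paper states this as an immediate corollary of \autoref{lem:triangles} without writing out a proof, and your argument (unfold the definition of a \YES instance as the existence of a small subset-FVS, then substitute the equivalence from \autoref{lem:triangles}, noting that split graphs are chordal) is precisely the omitted reasoning. No issues.
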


\begin{lemma}\label{lem:bridge-deletion}
  If \(G\) is a chordal graph and \(uv\) is a bridge in \(G\) then
  the graph \(G-\{uv\}\) is also chordal. If \(H\) is a
  \emph{split} graph with at least three vertices on the clique
  side and \(uv\) is an edge with exactly one end in the clique
  side of \(H\) then the graph \(H-\{uv\}\) is also split. If
  \(uv\) is a \emph{bridge} in such a split graph \(H\) then the
  graph \(H-\{uv\}\) is also split.
\end{lemma}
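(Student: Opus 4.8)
The plan is to treat the three assertions separately, using throughout the elementary fact that deleting an edge from a graph cannot create a new cycle, so that every cycle of $G - \{uv\}$ is already a cycle of $G$.

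For the first assertion I would take an arbitrary cycle $C$ of $G - \{uv\}$ of length at least four, note that it is a cycle of the chordal graph $G$ and hence has a chord $e$ in $G$, and then argue that $e \neq uv$. Indeed, a chord $xy$ of $C$ splits $C$ into two internally disjoint $x$--$y$ paths, each of length at least two since the chord itself is not an edge of $C$, and appending the chord to either of these paths exhibits a cycle through the chord; since $uv$ is a bridge it lies on no cycle, so it cannot be a chord of $C$. Therefore $e$ is still present in $G - \{uv\}$, so every cycle of $G - \{uv\}$ of length at least four has a chord, and $G - \{uv\}$ is chordal.

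For the second assertion I would simply verify that the given split partition $(K, I)$ of $H$ remains a split partition of $H - \{uv\}$. Deleting the edge $uv$, which by hypothesis has one end in $K$ and one end in $I$, removes no edge with both ends in $K$ and creates no edge with both ends in $I$, so $K$ stays a clique and $I$ stays an independent set in $H - \{uv\}$. (The hypothesis $|K| \geq 3$ is not needed here; it appears only because it is needed for the third assertion.)

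For the third assertion I would first observe that every bridge of $H$ is an edge with exactly one end in $K$: since $I$ is an independent set, every edge of $H$ has at least one end in $K$, and an edge with both ends in $K$ lies in a triangle --- take any third vertex of $K$, which exists since $|K| \geq 3$, and use that $K$ is a clique --- and hence lies on a cycle, so such an edge cannot be a bridge. The third assertion then follows immediately from the second. I do not anticipate a genuine difficulty anywhere; the one place that needs a little care is the claim used in the first part that a bridge cannot be a chord, that is, that any chord of a cycle of length at least four lies on a strictly shorter cycle.
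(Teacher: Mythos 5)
Your proposal is correct and follows essentially the same route as the paper's proof: both hinge on the observation that a chord of a cycle always lies on a strictly shorter cycle (so a bridge can never be a chord), and on the fact that an edge with both ends in the clique side lies in a triangle once $|K|\geq 3$, forcing any bridge to cross between $K$ and $I$. The only cosmetic difference is that the paper phrases the chordality argument as a contradiction via an induced cycle of $G-\{uv\}$, whereas you directly exhibit a surviving chord for every long cycle.
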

\begin{proof}
  Let \(G\) be a chordal graph. If an edge \(xy\) is a chord of
  some cycle \(C\) then \(xy\) lies in a cycle \(C'\) whose
  vertex set is a strict subset of the vertex set of \(C\). So we
  get, since \(uv\) is a bridge in \(G\), that \(uv\) is not a
  chord of any cycle in \(G\).

  Suppose \(G-\{uv\}\) is not chordal. Then it contains an induced
  cycle \(C\) of length four or more. Since edge \(uv\) is not
  present in graph \(G-\{uv\}\) it is not present in cycle \(C\)
  either. Thus cycle \(C\) is present in graph \(G\). Since \(uv\)
  is not a chord of cycle \(C\) we get that \(C\) is an induced
  cycle of length at least four in \(G\), which contradicts the
  chordality of \(G\). 
  
  Now let \(H\) be a split graph with at least three vertices on
  its clique side, and let \((K,I)\) be a split partition of
  \(H\). Since every edge in \(H[K]\) is part of a cycle, any
  bridge \(uv\) of \(H\) has one end in each of \(K, I\). The sets
  \(K\) and \(I\) remain a clique and an independent set,
  respectively, once we delete any such edge \(uv\). Hence
  \(H-\{uv\}\) is a split graph.
\end{proof}

Since chordal graphs and split graphs are (respectively) closed
under taking induced subgraphs~\cite{golumbic2004algorithmic} we
get

\begin{corollary}\label{cor:bridge-vertex-deletion-ok}
  Let \(H\) be a graph obtained from a graph \(G\) by repeatedly
  deleting vertices and/or bridges of the remaining graph at each
  stage.
  \begin{enumerate}
  \item If \(G\) is a chordal graph then so is \(H\).
  \item If \(G\) is a split graph and at least three of the clique
    vertices of \(G\) remain in \(H\) then \(H\) is a split graph.
  \end{enumerate}
\end{corollary}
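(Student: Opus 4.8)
The plan is to prove both parts by induction on the number of deletion operations, using \autoref{lem:bridge-deletion} to handle bridge deletions and the closure of chordal and split graphs under induced subgraphs to handle vertex deletions. I would write the process as a sequence $G = G_{0}, G_{1}, \dots, G_{t} = H$ in which each $G_{i+1}$ is obtained from $G_{i}$ by deleting either a single vertex or a single bridge of $G_{i}$. The only real content lies in verifying that the relevant graph class is preserved at each individual step.

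For part (1), the base case $G_{0}=G$ is chordal by hypothesis. For the inductive step, assume $G_{i}$ is chordal. If $G_{i+1}$ is obtained by deleting a vertex, then $G_{i+1}$ is an induced subgraph of $G_{i}$ and hence chordal. If $G_{i+1}$ is obtained by deleting a bridge $uv$ of $G_{i}$, then $G_{i+1}=G_{i}-\{uv\}$ is chordal by the first assertion of \autoref{lem:bridge-deletion}. In either case $G_{i+1}$ is chordal, so $H=G_{t}$ is chordal.

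For part (2), I would fix one split partition $(K,I)$ of $G$ and, for each $i$, set $K_{i}=K\cap V(G_{i})$ and $I_{i}=I\cap V(G_{i})$. Since the operations only remove vertices and edges, the vertex sets form a decreasing chain $V(G_{0})\supseteq\cdots\supseteq V(G_{t})=V(H)$, and consequently $K_{i}\supseteq K\cap V(H)$ for every $i$; the hypothesis that at least three clique vertices of $G$ survive in $H$ thus yields $|K_{i}|\ge 3$ throughout. I would then show by induction that $(K_{i},I_{i})$ is a split partition of $G_{i}$ for every $i$. As $K_{i}$ is a subset of the clique $K$ it is itself a clique, and $I_{i}\subseteq I$ is independent, so the claim is automatic for a vertex deletion $G_{i+1}=G_{i}-\{w\}$, where the two sides are just $K_{i}\setminus\{w\}$ and $I_{i}\setminus\{w\}$. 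For a bridge deletion I would invoke the third assertion of \autoref{lem:bridge-deletion}: since $G_{i}$ is split with $|K_{i}|\ge 3$, any bridge $uv$ has exactly one endpoint in each of $K_{i}$ and $I_{i}$, and deleting it leaves $(K_{i},I_{i})$ a valid split partition of $G_{i+1}$. Hence $H=G_{t}$ is split.

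The step I expect to need the most care is maintaining the condition ``at least three clique vertices remain'' at every intermediate stage, not merely at the end, since \autoref{lem:bridge-deletion} requires this precisely at the moment of each bridge deletion. The resolution is the monotonicity observation above: because deletions never reintroduce vertices, a clique vertex present in $H$ is present in every $G_{i}$, so the single global assumption $|K\cap V(H)|\ge 3$ automatically supplies the local precondition $|K_{i}|\ge 3$ needed to apply the lemma at each step.
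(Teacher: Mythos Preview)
Your proof is correct and follows the same approach the paper intends: the paper states the corollary immediately after \autoref{lem:bridge-deletion} with only the remark that chordal and split graphs are closed under taking induced subgraphs, leaving the induction implicit. Your write-up simply spells out that induction, and in particular your monotonicity observation---that $|K\cap V(H)|\ge 3$ forces $|K_{i}|\ge 3$ at every intermediate stage---is exactly the point one needs to justify invoking \autoref{lem:bridge-deletion} at each bridge-deletion step, which the paper does not make explicit.
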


Let $(G; T; k)$ be an instance of \SFVSC. A subset
\(S\subseteq{}V(G)\) of vertices of \(G\) is a \emph{solution} of
this instance if \(S\) is a \(T\)-FVS (equivalently, a \(T\)-THS)
of \(G\).

\subsection{Parameterized Algorithms and Kernelization}

We give a quick overview of the main concepts from parameterized
complexity relevant to our work; we refer interested readers to
\cite{CFKLMPPS15} for a detailed exposition on the subject. An
instance of a \emph{parameterized problem} is of the form
\((I;k)\) where \(I\) is an instance of a (classical) decision
problem---whose answer is one of \YES/\NO---and \(k\in\mathbb{N}\)
is the \emph{parameter}; \((I;k)\) is defined to be a \YES
(respectively, \NO) instance if \(I\) is a \YES (respectively,
\NO) instance. A parameterized problem is said to be \emph{fixed
  parameter tractable} (\FPT) if there exists an algorithm
$\mathcal{A}$, a computable function $f$, and a constant $c$ such
that, given any instance $(I; k)$ of the parameterized problem,
the algorithm $\mathcal{A}$ correctly decides whether $(I; k)$ is
an \yes\ instance or not in time $\Oh(f(k)|I|^c)$. The algorithm
$\mathcal{A}$ is an \FPT\ algorithm; if $c=1$ holds then
\(\mathcal{A}\) is a \emph{linear} \FPT\ algorithm. Instances
\((I;k)\) and \((I';k')\) of a parameterized problem are
\emph{equivalent} if \((I;k)\) is a \YES instance if and only if
\((I';k')\) is a \YES instance.  A \emph{kernel} for a
parameterized problem is an algorithm $\mathcal{B}$ that, given an
instance $(I,k)$ of the problem, works in time $\Oh((|I| + k)^c)$
and returns an equivalent instance $(I', k')$ of the same
problem. We require that $k'$ is upper bounded by some computable
function of $k$. If there exists a computable function $g$ such
that size of an output obtained by algorithm $\mathcal{B}$ for
$(I, k)$ is at most $g(k)$, we say that problem admits a kernel of
size $g(k)$.

Let \(H\) be a triangle on the vertex set \(\{x,y,z\}\). Then
\(I_{\YES}=(H;\{x\};1)\) and \(I_{\NO}=(H;\{x\};0)\) are
constant-size trivial \YES and \NO instances, respectively, of
both \SFVSS and \SFVSC.
Our algorithms make use of \emph{reduction rules} which transform
one instance of a problem to another instance of the same problem.
We use \((G;T;k)\) to represent the instance given as input to
each reduction rule, and \((G';T';k')\) to represent the
(modified) instance output by the rule. We say that a reduction
rule is \emph{safe} if for every input instance \((G;T;k)\) the
rule outputs an \emph{equivalent} instance \((G';T';k')\).  We
update \(G\gets{}G',T\gets{}T',k\gets{}k'\) to get the input
instance \((G;T;k)\) for further processing.  We say that an
instance \((G;T;k)\) is \emph{reduced} with respect to a reduction
rule \texttt{RR} if none of the conditions of rule \texttt{RR}
apply to \((G;T;k)\). Equivalently: instance \((G;T;k)\) is
\emph{reduced} with respect to reduction rule \texttt{RR} if, when
given the instance \((G;T;k)\) as input, rule \texttt{RR} produces
as output an instance \((G';T';k')\) which is identical to
\((G;T;k)\).  We say that a reduction rule \texttt{RR}
\emph{applies} to an instance \((G;T;k)\) if \((G;T;k)\) is
\emph{not} reduced with respect to \texttt{RR}.

\subsection{Expansion Lemmas}

Let \(t\) be a positive integer and $G$ a bipartite graph with
vertex bipartition $(P, Q)$. A set of edges $M \subseteq E(G)$ is
called a \emph{$t$-expansion of $P$ into $Q$} if (i) every vertex
of $P$ is incident with exactly $t$ edges of $M$, and (ii) the
number of vertices in \(Q\) which are incident with at least one
edge in $M$ is exactly $t|P|$. We say that \(M\) \emph{saturates}
the endvertices of its edges. Note that the set \(Q\) may contain
vertices which are \emph{not} saturated by \(M\). We need the
following generalizations of Hall's Matching Theorem known as
\emph{expansion lemmas}:

\begin{lemma}[\cite{CFKLMPPS15}
  Lemma~2.18] \label{lem:expansion-lemma} Let $t$ be a positive
  integer and $G$ be a bipartite graph with vertex bipartition
  $(P,Q)$ such that $|Q| \geq t |P|$ and there are no isolated
  vertices in $Q$.  Then there exist nonempty vertex sets
  $X \subseteq P$ and $Y \subseteq Q$ such that (i) $X$ has a
  $t$-expansion into $Y$, and (ii) no vertex in $Y$ has a
  neighbour outside $X$. Furthermore two such sets $X$ and $Y$ can
  be found in time polynomial in the size of $G$.
\end{lemma}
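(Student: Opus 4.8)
The plan is to reduce the Expansion Lemma to the classical (defect form of the) Hall Matching Theorem by a standard vertex blow-up, and then to secure the extra property~(ii) by induction on $|P|$. Throughout I write $N_G(A)=\bigcup_{v\in A}N_G(v)$ for a vertex set $A$, and I assume $P\neq\emptyset$ (otherwise the hypothesis that $Q$ has no isolated vertices forces $Q=\emptyset$ and there is nothing to do).

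First I would build an auxiliary bipartite graph $G'$ with parts $(P',Q)$, where $P'$ consists of $t$ pairwise non-adjacent copies of each vertex of $P$, each copy inheriting in $Q$ the neighbourhood of the vertex it copies; thus $|P'|=t|P|\le|Q|$. If $G'$ admits a matching $M'$ saturating $P'$, then $M'$ projects to a set $M$ of edges of $G$ in which every vertex of $P$ is incident with exactly $t$ edges and which saturates a set $Y\subseteq Q$ with $|Y|=t|P|$; so $M$ is a $t$-expansion of $X:=P$ into $Y$, and since every $Q$-vertex has all its neighbours in $P=X$, property~(ii) holds trivially. Hence the only real work is the case where $G'$ has no $P'$-saturating matching.

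In that case, by the defect version of Hall's theorem there is a nonempty $Z\subseteq P'$ with $|N_{G'}(Z)|<|Z|$. Because all $t$ copies of a vertex $v\in P$ have the same neighbourhood in $Q$, replacing $Z$ by the union of all copies of every vertex having a copy in $Z$ leaves $N_{G'}(Z)$ unchanged and does not decrease $|Z|$; so we may assume $Z$ is exactly the set of all copies of some $\emptyset\neq A\subseteq P$, and, writing $B:=N_G(A)\subseteq Q$, we get $|B|<t|A|$. Note $A\neq P$: if $A=P$ then, as $Q$ has no isolated vertices, $B=Q$, contradicting $|B|<t|P|\le|Q|$. I then recurse on $G_1:=G[(P\setminus A)\cup(Q\setminus B)]$, checking that the hypotheses survive: $P\setminus A\neq\emptyset$; a vertex of $Q\setminus B$ whose $G$-neighbours all lay in $A$ would itself lie in $B$, so $Q\setminus B$ has no isolated vertex in $G_1$; and $|Q\setminus B|=|Q|-|B|>|Q|-t|A|\ge t|P|-t|A|=t|P\setminus A|$. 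Since $|P\setminus A|<|P|$, induction yields $X\subseteq P\setminus A$ and $Y\subseteq Q\setminus B$ with a $t$-expansion of $X$ into $Y$ and with no vertex of $Y$ having a $G_1$-neighbour outside $X$. The same $X,Y$ work in $G$: the expansion is unaffected, and any $G$-neighbour of a $y\in Y$ lies in $P\setminus A$ (else $y\in N_G(A)=B$), hence in $X$ by the inductive guarantee. For the running time, $t|P|\le|Q|$ gives $|V(G')|=\Oh(|V(G)|)$, a maximum matching (and, if needed, a deficient set via alternating BFS) in $G'$ is computed in polynomial time, and the recursion depth is at most $|P|$, so the whole procedure runs in time polynomial in the size of $G$.

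The main obstacle is the two ``descent'' subtleties: (a) legitimately upgrading the abstract deficient set $Z\subseteq P'$ to a copy-closed one, so that it genuinely arises from a subset $A$ of $P$ (together with the small but essential observation that $A\neq P$, which is what makes the recursion shrink and terminate); and (b) ensuring that property~(ii) is not destroyed when we lift $X,Y$ from $G_1$ back to $G$ — this hinges precisely on taking $B=N_G(A)$ so that no vertex of $Y$ can have a hidden neighbour inside the deleted part $A$ of $P$.
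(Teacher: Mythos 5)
The paper does not prove this statement at all: it is imported verbatim as Lemma~2.18 of \cite{CFKLMPPS15}, so there is no in-paper argument to compare against. Your proof is correct and self-contained, and it is essentially the standard textbook argument: blow each vertex of $P$ up into $t$ copies, apply the defect form of Hall's theorem, and when a deficient set appears, close it under copies to obtain $A\subseteq P$ with $|N_G(A)|<t|A|$, discard $A$ together with its \emph{entire} neighbourhood $N_G(A)$, and recurse. The two delicate points -- that $A\neq P$ (so the recursion makes progress and the base case $|P|=1$ is forced into the matching branch) and that deleting all of $N_G(A)$ rather than just a deficient witness is what preserves condition~(ii) when lifting $X,Y$ back to $G$ -- are both handled correctly, and the polynomial running-time bound is justified by $t\le|Q|$ and recursion depth at most $|P|$.
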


\begin{lemma}[\cite{fomin2016hitting}] \label{lem:expansion-matching}
  Let $t$ be a positive integer and $G$ be a bipartite graph with
  vertex bipartition $(P,Q)$ such that $|Q| > \ell t$ , where
  $\ell$ is the size of a maximum matching in $G$, and there are
  no isolated vertices in $Q$. Then there exist nonempty vertex
  sets $X \subseteq P$ and $Y \subseteq Q$ such that (i) $X$ has a
  $t$-expansion into $Y$, and (ii) no vertex in $Y$ has a
  neighbour outside $X$. Furthermore two such sets $X$ and $Y$ can
  be found in time polynomial in the size of $G$.
\end{lemma}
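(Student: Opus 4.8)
The plan is to reduce the statement to a crown-type decomposition of an auxiliary bipartite graph obtained by duplicating the vertices of \(P\). Concretely, I would build a bipartite graph \(G^{\ast}\) with parts \(P^{\ast}\) and \(Q\), where \(P^{\ast}\) contains \(t\) copies \(p^{1},\dots,p^{t}\) of each \(p\in P\), and each copy \(p^{i}\) is made adjacent to exactly \(N_{G}(p)\). Since all \(t\) copies of a vertex share the same neighbourhood, a \(t\)-expansion of a set \(X\subseteq P\) into \(Y\) in \(G\) corresponds precisely to a matching of \(G^{\ast}\) that saturates the set \(X^{\ast}\) of all copies of all vertices of \(X\) and sends them to distinct vertices of \(Y\); and the requirement that no vertex of \(Y\) have a neighbour outside \(X\) translates into \(N_{G^{\ast}}(Y)\subseteq X^{\ast}\). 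Thus it suffices to find nonempty \(X^{\ast}\subseteq P^{\ast}\) (closed under taking copies) and \(Y\subseteq Q\) with these two properties.

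First I would bound the size of a maximum matching of \(G^{\ast}\). The key inequality is that any matching of \(G^{\ast}\) has at most \(\ell t\) edges. To see this, project each matching edge \(p^{i}q\) back to the edge \(pq\) of \(G\); since each \(Q\)-vertex and each copy is matched at most once, the resulting subgraph \(F\subseteq G\) is simple, has every \(Q\)-vertex of degree at most \(1\) and every \(P\)-vertex of degree at most \(t\), and its number of edges equals the size of the matching. Taking a minimum vertex cover \(C\) of \(G\) (so \(|C|=\ell\) by K\"onig's theorem) and charging each edge of \(F\) to a cover vertex, the \(P\)-vertices of \(C\) each pay for at most \(t\) edges and the \(Q\)-vertices of \(C\) for at most one, giving \(|E(F)|\le t|C|=\ell t\). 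Since by hypothesis \(|Q|>\ell t\), any maximum matching \(M\) of \(G^{\ast}\) leaves a nonempty set \(U\subseteq Q\) of unsaturated vertices.

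Next I would run the standard K\"onig alternating-path construction from \(U\). Let \(Z\) be the set of vertices of \(G^{\ast}\) reachable from \(U\) by \(M\)-alternating paths (which necessarily start with a non-matching edge, as the vertices of \(U\) are unmatched), and set \(Y=Z\cap Q\) and \(X^{\ast}=Z\cap P^{\ast}\). Maximality of \(M\) forbids an augmenting path, so every vertex of \(X^{\ast}\) is saturated and its \(M\)-partner again lies in \(Y\); hence \(M\) restricts to a perfect matching between \(X^{\ast}\) and the saturated part of \(Y\). Because every matched \(Q\)-vertex of \(Z\) is entered along its matching edge and left along non-matching edges, a short case analysis shows both that \(N_{G^{\ast}}(Y)\subseteq X^{\ast}\) and that \(X^{\ast}\) is closed under taking copies (if one copy of \(p\) is reachable then, since all copies share \(p\)'s neighbourhood, so are the rest). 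Setting \(X=\{p\in P : p^{1}\in X^{\ast}\}\) then yields \(|X^{\ast}|=t|X|\), so the restricted matching is a \(t\)-expansion of \(X\) into \(Y\), while \(N_{G^{\ast}}(Y)\subseteq X^{\ast}\) gives \(N_{G}(Y)\subseteq X\), establishing (ii); nonemptiness of \(Y\) follows from \(U\neq\emptyset\), and nonemptiness of \(X\) from the fact that \(Q\) has no isolated vertices, so each vertex of \(U\) has a neighbour in \(X^{\ast}\). All steps---computing a maximum matching and a reachability search---run in time polynomial in the size of \(G\).

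The part I expect to require the most care is the alternating-path analysis of the last paragraph: verifying simultaneously that \(N_{G^{\ast}}(Y)\subseteq X^{\ast}\) and that \(X^{\ast}\) is copy-closed hinges on the precise parity of edges traversed along alternating paths from the unsaturated side. It is also here that the sharper matching-size bound \(\ell t\) (rather than the cruder \(t|P|\) bound underlying \autoref{lem:expansion-lemma}) is exactly what is needed: it is precisely the hypothesis \(|Q|>\ell t\) that guarantees \(U\neq\emptyset\), which is the engine driving the whole construction.
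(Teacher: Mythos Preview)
The paper does not prove this lemma; it is stated with a citation to~\cite{fomin2016hitting} and used as a black box, so there is no ``paper's own proof'' to compare against. That said, your argument is correct and essentially reconstructs the standard proof: the \(t\)-fold duplication of \(P\) reduces \(t\)-expansions to ordinary matchings, your charging argument via a minimum vertex cover of \(G\) correctly bounds the maximum matching of \(G^{\ast}\) by \(\ell t\), and the K\"onig alternating-path layer construction from the unsaturated \(Q\)-vertices then yields the desired \(X^{\ast},Y\). Your verification that \(X^{\ast}\) is closed under copies is the one place where the write-up could be tightened: the clean way to say it is that if \(p^{i}\in X^{\ast}\) is reached from some \(y\in Y\) along a non-matching edge, then for every other copy \(p^{j}\) either the edge \(yp^{j}\) is also non-matching (so \(p^{j}\) is reachable from \(y\)) or \(p^{j}\) is the \(M\)-partner of \(y\), in which case \(p^{j}\) was already in \(X^{\ast}\) before \(y\) was reached. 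With that spelled out, the proof is complete.
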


We need sets $X, Y$ of \autoref{lem:expansion-matching} with an
additional property:

\begin{lemma}\label{lem:extra-expansion-vertex} If the premises of
  \autoref{lem:expansion-matching} are satisfied then we can find,
  in polynomial time, sets $X, Y$ of the kind described in
  \autoref{lem:expansion-matching} and a vertex $w \in Y $ such
  that there exists a \(t\)-expansion \(M\) from \(X\) into \(Y\)
  which does not saturate \(w\).
\end{lemma}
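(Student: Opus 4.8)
The plan is to invoke \autoref{lem:expansion-matching} as a black box and then \emph{iteratively enlarge} the sets it returns until $Y$ acquires one vertex that the expansion can afford to miss. I maintain a triple $(X,Y,M)$ with $\emptyset \ne X \subseteq P$, $Y \subseteq Q$, $M$ a $t$-expansion of $X$ all of whose saturated $Q$-vertices lie in $Y$, and $N_G(Y) \subseteq X$ (so $(X,Y)$ is ``of the kind described in \autoref{lem:expansion-matching}''). The starting triple is the one produced by applying \autoref{lem:expansion-matching} to $G$ itself, whose premises hold by hypothesis. Observe that $M$ has exactly $t|X|$ edges and saturates exactly $t|X|$ vertices of $Q$, each of which therefore meets exactly one edge of $M$; consequently, picking one $M$-edge at each vertex of $X$ yields a matching of $G$ of size $|X|$, and $|Y| \ge t|X| \ge 1$.

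Given the current triple I branch. If $|Y| > t|X|$, then $M$ leaves some $w \in Y$ unsaturated; I output $(X, Y, w)$, which has all the required properties ($w \in Y$, $N_G(w) \subseteq X$, and $M$ is a $t$-expansion of $X$ into $Y$ missing $w$). Otherwise $|Y| = t|X|$, so $M$ saturates all of $Y$; set $Q_2 := \{\, q \in Q \setminus Y : N_G(q) \subseteq X \,\}$. If $Q_2 \ne \emptyset$, pick $w \in Q_2$ and output $(X,\, Y \cup \{w\},\, w)$: since $N_G(Y) \subseteq X$ and $N_G(w) \subseteq X$ the pair $(X, Y \cup \{w\})$ is still of the required kind, and $M$ is a $t$-expansion of $X$ into $Y \cup \{w\}$ not saturating $w$.

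The remaining case is $|Y| = t|X|$ with $Q_2 = \emptyset$, where I enlarge the triple. Let $G'' := G - (X \cup Y)$, with bipartition $(P \setminus X,\, Q \setminus Y)$. Because $Q_2 = \emptyset$, every vertex of $Q \setminus Y$ retains a neighbour in $P \setminus X$, so $G''$ has no isolated $Q$-vertices. If $\ell''$ is the size of a maximum matching of $G''$, then joining such a matching to the size-$|X|$ matching extracted above from $M$ yields a matching of $G$ of size $|X| + \ell''$, so $\ell'' \le \ell - |X|$ and hence $|Q \setminus Y| = |Q| - t|X| > t\ell - t|X| \ge t\ell''$. Therefore \autoref{lem:expansion-matching} applies to $G''$ and returns $(X'', Y'', M'')$ with $\emptyset \ne X'' \subseteq P \setminus X$, $Y'' \subseteq Q \setminus Y$, $M''$ a $t$-expansion of $X''$ into $Y''$, and $N_{G''}(Y'') \subseteq X''$. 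I then replace the triple by $(X \cup X'',\, Y \cup Y'',\, M \cup M'')$; since the two parts use disjoint vertices of $P$ and of $Q$, this is again a valid triple for $G$ (with $M \cup M''$ a $t$-expansion of $X \cup X''$ saturating only vertices of $Y \cup Y''$, and $N_G(Y \cup Y'') \subseteq X \cup X''$ because $N_G(y) \setminus X = N_{G''}(y) \subseteq X''$ for $y \in Y''$), and crucially $X \subsetneq X \cup X''$. As $X$ strictly grows inside the fixed finite set $P$, the loop halts after at most $|P|$ enlargements; each step --- a call to \autoref{lem:expansion-matching}, a matching computation, and forming $Q_2$ --- is polynomial, so the procedure runs in polynomial time.

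I expect the enlargement step to be the main obstacle: one has to verify carefully that $G''$ satisfies \emph{both} hypotheses of \autoref{lem:expansion-matching} --- the absence of isolated $Q$-vertices (where $Q_2 = \emptyset$ enters) and the strict inequality $|Q \setminus Y| > t\ell''$ (where the propagation of the matching slack via $\ell'' \le \ell - |X|$ enters) --- and that re-gluing the two expansions preserves every invariant on $G$. The rest is routine bookkeeping.
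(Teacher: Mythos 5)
Your proposal is correct and is essentially the paper's own argument: both iteratively invoke \autoref{lem:expansion-matching}, stop when either the returned expansion has slack ($|Y|>t|X|$) or some vertex of $Q$ outside $Y$ has all its neighbours in the accumulated $X$, and otherwise recurse on the residual graph after verifying its premises via the same two observations (extracting a matching of size $|X|$ from the expansion to get $\ell''\le\ell-|X|$, and using the emptiness of your $Q_2$ to rule out isolated $Q$-vertices). The only difference is presentational --- you keep all sets relative to the original graph while the paper shrinks $G$ at each step --- so no further comment is needed.
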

\begin{proof}
  Let $X, Y$ be sets of the kind guaranteed to exist by
  \autoref{lem:expansion-matching}, and let \(M\) be a
  \(t\)-expansion from \(X\) into \(Y\). If $|Y| > t|X|$ then
  there must exist a vertex which is not saturated by \(M\). We
  can find such a vertex \(w\in{}Y\) by, for instance, deleting
  each vertex \(y\in{}Y\) in turn and testing if the resulting
  graph has \(t\)-expansion from \(X\) into
  \(Y\setminus{}y\).
  Thus it is enough to show that we can find, in polynomial time,
  such a pair \(X,Y\) for which $|Y| > t|X|$ holds. We give a
  proof by algorithm. We start by setting \(X=Y=\emptyset\). It
  holds vacuously that (i) there is a \(t\)-expansion from \(X\)
  into \(Y\), and (ii) no vertex in \(Y\) has a neighbour outside
  \(X\), and trivially that \(|Y|=t|X|\).
  \begin{enumerate}
  \item Find sets $X'\subseteq P $ and $Y'\subseteq Q $ as
    guaranteed to exist by \autoref{lem:expansion-matching}. Let
    \(M'\) be a \(t\)-expansion from \(X'\) into \(Y'\). If
    \(|Y'|>t|X'|\) then return
    \(((X\cup{}X)',(Y\cup{}Y'))\). Otherwise, if there is a vertex
    \(w\in{}Q\setminus{}Y'\) which has no neighbour in
    \(P\setminus{}X'\) then return
    \(((X\cup{}X'),(Y\cup{}Y'\cup\{w\}))\).
  \item At this point we have \(|X'|<|P|\) and
    \(|Y'|=t|X'|\). From above we get that there is
    \(t\)-expansion, say \(M\), from \(X\) into \(Y\).  Since
    \(X\cap{}X'=\emptyset=Y\cap{}Y'\) we get that \(M\cup{}M'\) is
    a \(t\)-expansion from \(X\cup{}X'\) into \(Y\cup{}Y'\). Set
    $\hat{X}\gets{}X\cup X', \hat{Y}\gets{}Y\cup Y'$. Then (i)
    there is a \(t\)-expansion from \(\hat{X}\) into \(\hat{Y}\),
    (ii) no vertex in \(\hat{Y}\) has a neighbour outside
    \(\hat{X}\), and (iii) \(|\hat{Y}|=t|\hat{X}|\).

  \item Let \(\hat{P}=(P-\hat{X}),\hat{Q}=(Q-\hat{Y})\). Consider the
    subgraph \(\hat{G}=G[\hat{P}\cup{}\hat{Q}]\) and its vertex
    bipartition \((\hat{P},\hat{Q})\). Since \(t\geq{}1\) and the
    vertices in \(\hat{X}\cup{}\hat{Y}\) are saturated by a
    \(t\)-expansion from \(\hat{X}\) into \(\hat{Y}\) we get that
    the subgraph \(G[\hat{X}\cup{}\hat{Y}]\) contains a
    \emph{matching} of size \(|\hat{X}|\). Since the subgraph
    \(\hat{G}\) of \(G\) contains none of the vertices saturated
    by this matching we get that the size \(\hat{\ell}\) of a
    maximum matching in \(\hat{G}\) satisfies
    \(\hat{\ell}\leq{}\ell-|\hat{X}|\).

    Since every vertex in \(\hat{Q}\) has at least one neighbour
    in \(\hat{P}\) (otherwise we would have returned in step (1))
    we get that there are no isolated vertices in the set
    \(\hat{Q}\) in graph \(\hat{G}\). Since
    $|\hat{Y}| = t|\hat{X}|$ and \(|Q|>\ell{}t\) we have that
    \(|\hat{Q}|=|Q|-t|\hat{X}|>\ell{}t-t|\hat{X}|=t(\ell{}-|\hat{X}|)\geq{}t\hat{\ell}\). Thus
    graph \(\hat{G}\) and its vertex bipartition
    \((\hat{P},\hat{Q})\) satisfy the premises of
    \autoref{lem:expansion-matching}. Set
    \(G\gets\hat{G},P\gets\hat{P},Q\gets\hat{Q},X\gets\hat{X},Y\gets\hat{Y}\)
    and go to step (1).
  \end{enumerate}

  \noindent\textbf{Correctness.} Note that before step (1) is
  executed it is always the case that (i) there is a
  \(t\)-expansion from \(X\) into \(Y\), (ii) no vertex in \(Y\)
  has a neighbour outside \(X\), and (iii) \(|Y|=t|X|\). So we get
  that \emph{if} the algorithm terminates (which it does only at
  step (1)) it returns a correct pair of vertex subsets.

  The graph \(G\) from the premise of
  \autoref{lem:expansion-matching} has a vertex bipartition
  \((P,Q)\) with \((|P|>0,|Q|>0,|Q|>\ell{}t)\), and the sets
  \(\hat{X},\hat{Y}\) in steps (2) and (3) satisfy
  \(0<|\hat{X}|<|P|\) and \(|\hat{Y}|=t|\hat{X}|\).  So the sets
  \(\hat{P},\hat{Q}\) of step (3) satisfy
  \(|\hat{P}|>0,|\hat{Q}|>0,|\hat{Q}|>\hat{\ell{}}t\). Thus the graph
  \(\hat{G}\) computed in step (3) has strictly fewer vertices
  than the graph \(G\) passed in to the previous step (1). Since
  we update \(G\gets\hat{G}\) before looping back to step (1), we
  get that the algorithm terminates in polnomially many steps.
\end{proof}

\section{Kernel Bounds for {\SubsetFVSSplit}}
\label{sec:kernelbound}

In this section we show that \SFVSS has a quadratic-size kernel
with a linear number of vertices on the clique side. 

\splitkernel*


Our algorithm works as follows. We first reduce the input to an
instance \((G; T; k)\) where the terminal set \(T\) is exactly the
independent set \(I\) from a split partition \((K,I)\) of
\(G\). Then we show that if a (non-terminal) vertex \(v\in{}K\)
has at least \(k+1\) neighbours in \(I\) then we can include \(v\)
in a solution or safely delete one edge incident with \(v\); this
leads to an instance where each \(v\in{}K\) has at most \(k\)
neighbours in \(I\). We apply the expansion lemma
(\autoref{lem:expansion-matching}) to this instance to bound the
number of vertices in \(K\) by \(10k\); this gives the bound of
\(\Oh(k^{2})\) on the number of vertices in \(I\).

We now describe the reduction rules. Recall that we use
\((G;T;k)\) and \((G';T';k')\) to represent the input and output
instances of a reduction rule, respectively.  We always apply the
\emph{first} rule---in the order in which they are described
below---which applies to an instance.  Thus we apply a rule to an
instance \emph{only if} the instance is reduced with respect to
all previously specified reduction rules.

Recall that a split graph may have more than one split
partition. To keep our presentation short we need to be able to
refer to one split partition which ``survives'' throughout the
application of these rules. Towards this we fix an arbitrary split
partition \((K^{\star},I^{\star})\) of the original input
graph. Whenever we say ``the split partition \((K,I)\) of graph
\(G\)'' we mean the ordered pair
\(((K^{\star}\cap{}V(G)),(I^{\star}\cap{}V(G)))\). The only ways
in which our reduction rules modify the graph are: (i) delete a
vertex, or (ii) delete an edge of the form
\(uv\;;\;u\in{}K^{\star},v\in{}I^{\star}\). So
\(((K^{\star}\cap{}V(G)),(I^{\star}\cap{}V(G)))\) remains a split
partition of the ``current'' graph \(G\) at each stage during the
algorithm.

Our first reduction rule deals with some easy instances.
\begin{reductionrule}\label{rr:split-kernel-yes-no}
  Recall that \((K,I)\) is the split partition of graph
  \(G\). Apply the first condition which matches \((G;T;k)\):
  \begin{enumerate}
  \item If \(T=\emptyset\) then output \(I_{\YES}\) and stop.
  \item If \(k<0\), or if \(k=0\) \emph{and} there is a
    \(T\)-triangle in \(G\), then output \(I_{\NO}\) and stop.
  \item If there is no \(T\)-triangle in \(G\) then output
    \(I_{\YES}\) and stop.
  \item If \(|K|\leq{}k+1\) then output \(I_{\YES}\) and stop.
  \item If \(|K|=k+2\) and there is an edge \(uv\) in \(G[K]\)
    which is \emph{not} highlighted then output \(I_{\YES}\) and
    stop.
  \end{enumerate}
\end{reductionrule}

\begin{observation}\label{obs:clique-side-at-least-3}
  If \((G;T;k)\) is reduced with respect to
  \autoref{rr:split-kernel-yes-no} then the clique side \(K\) of
  \(G\) has size at least three.
\end{observation}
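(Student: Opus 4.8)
The plan is to observe that a reduced instance must fail \emph{every} one of the five conditions in \autoref{rr:split-kernel-yes-no}, and to extract from this two facts: that \(G\) contains a \(T\)-triangle, and that \(k\geq{}1\). Together with the failure of the fourth condition these immediately give \(|K|\geq{}3\).

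First I would record that, since the instance is reduced with respect to \autoref{rr:split-kernel-yes-no}, its third condition does not apply; hence \(G\) contains at least one \(T\)-triangle. Next, since the second condition does not apply, we have \(k\geq{}0\), and moreover it is \emph{not} the case that both \(k=0\) and \(G\) has a \(T\)-triangle. As we have just argued that \(G\) does have a \(T\)-triangle, the only remaining possibility is \(k\neq{}0\), i.e. \(k\geq{}1\).

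Finally, since the fourth condition does not apply either, we must have \(|K|\geq{}k+2\). Combining this with \(k\geq{}1\) yields \(|K|\geq{}3\), which is exactly the claim.

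The only point that needs a little care — and the place where a careless reading could slip — is the interaction between the second and third conditions: neither alone rules out \(k=0\), but together they force \(k\geq{}1\). The fifth condition plays no role in this argument (it will matter later, when bounding \(|I|\)). I expect no genuine obstacle here; the statement is a direct logical consequence of the rule's definition, so the ``proof'' is just this short chain of implications.
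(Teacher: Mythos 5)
Your proof is correct and follows essentially the same route as the paper's: the paper likewise notes that the second and third parts of the rule together force \(k\geq{}1\), and that the fourth part then gives \(|K|\geq{}k+2\geq{}3\). You have simply spelled out the interaction between conditions (2) and (3) more explicitly, which is exactly the right place to be careful.
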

\begin{proof}
  The second and third parts of the rule ensure that
  \(k\geq{}1\). The fourth part now implies \(|K|\geq{}3\).
\end{proof}

\begin{lemma}\label{lem:split-kernel-yes-no-rr-safe}
  \autoref{rr:split-kernel-yes-no} is safe.
\end{lemma}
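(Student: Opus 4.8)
The plan is to check, for each of the five conditions in \autoref{rr:split-kernel-yes-no}, that when the rule fires on that condition the instance it outputs ($I_{\YES}$ or $I_{\NO}$) is equivalent to the input instance $(G;T;k)$. Throughout I use \autoref{cor:hit-only-triangles}---so $(G;T;k)$ is a \YES instance exactly when some vertex set of size at most $k$ meets every $T$-triangle of $G$---together with the elementary observation that in a split graph every triangle has at least two vertices on the clique side $K$ (if two of its vertices were in $I$ they would be adjacent, contradicting that $I$ is independent). Recall also that the rule applies a condition only when no earlier condition matched.

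Conditions 1--3 are immediate. If $T=\emptyset$ (condition 1) then $G$ has no $T$-triangle, so $\emptyset$ is a solution; we may assume $k\ge 0$ (an instance with $k<0$ is vacuously \NO, and would in any case be caught by condition 2), so $(G;T;k)$ is a \YES instance, matching $I_{\YES}$. For condition 2: if $k<0$ no vertex set has size at most $k$, and if $k=0$ with a $T$-triangle present the only candidate set $\emptyset$ fails to hit it; either way $(G;T;k)$ is a \NO instance, matching $I_{\NO}$. For condition 3 the absence of a $T$-triangle again makes $\emptyset$ a solution, and since condition 2 did not fire we have $k\ge 0$, so $(G;T;k)$ is a \YES instance, matching $I_{\YES}$.

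The content is in conditions 4 and 5, where in both cases the idea is to delete all but (at most) two vertices of $K$ and observe that the remaining graph is triangle-free. For condition 4, if $K=\emptyset$ then $G$ has no edges and $\emptyset$ is a solution; otherwise fix any $w\in K$ and put $S=K\setminus\{w\}$, so $|S|=|K|-1\le k$. The graph $G-S=G[\{w\}\cup I]$ has clique side $\{w\}$, and a triangle in it would need at least two vertices in $\{w\}$ --- impossible. Hence $G-S$ has no triangle, so $S$ is a solution and $(G;T;k)$ is a \YES instance. For condition 5 we are handed an edge $uv$ of $G[K]$ that is not highlighted, i.e., no vertex of $I$ is adjacent to both $u$ and $v$; set $S=K\setminus\{u,v\}$, so $|S|=|K|-2=k$. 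In $G-S=G[\{u,v\}\cup I]$ the clique side is $\{u,v\}$, so the only possible triangle would be $\{x,u,v\}$ for some $x\in I$ adjacent to both $u$ and $v$; but such an $x$ would make $uv$ highlighted, a contradiction. Thus $G-S$ has no triangle, $S$ is a solution, and $(G;T;k)$ is a \YES instance, matching $I_{\YES}$.

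In every case the output instance is equivalent to $(G;T;k)$, so the rule is safe. There is no real obstacle here: the only slightly delicate point is condition 5, where the ``not highlighted'' hypothesis is exactly what is needed to rule out the lone surviving triangle $\{x,u,v\}$; the rest is routine case analysis over \autoref{cor:hit-only-triangles} and the ``two vertices on the clique side'' observation.
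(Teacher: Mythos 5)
Your proof is correct and follows essentially the same route as the paper's: conditions 1--3 are handled by the same trivial observations, and for conditions 4 and 5 you delete all but one (respectively, the two endpoints of the non-highlighted edge) of the clique-side vertices and note that the remaining split graph cannot contain a ($T$-)triangle, which is exactly the paper's argument. The only cosmetic difference is that you make explicit the observation that every triangle in a split graph uses at least two clique-side vertices, and you phrase the solution in case 4 as $K\setminus\{w\}$ rather than an arbitrary $k$-sized subset of $K$, which cleanly sidesteps the degenerate case $|K|<k$.
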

\begin{proof}
  We analyze each part separately.
  \begin{enumerate}
  \item If \(T=\emptyset\) then there are no \(T\)-cycles in
    \(G\). So \((G;T;k)\) is (vacuously) a \YES instance, as is
    \(I_{\YES}\).
  \item If \(k<0\) then---since there does not exist a vertex
    subset \(S\) of negative size---\((G;T;k)\) is a \NO
    instance. If the second part of this condition holds then
    \((G;T;k)\) is clearly a \NO instance. Thus in both cases
    \((G;T;k)\) is a \NO instance, as is \(I_{\NO}\).
  \item This condition applies to \((G;T;k)\) only if the previous
    one does \emph{not} apply. Thus \(k\geq{}0\) and so
    \((G;T;k)\) is a \YES instance, as is \(I_{\YES}\).
  \item Let \(S\subseteq{}K\) be an arbitrary \(k\)-sized set of
    vertices on the clique side. Deleting \(S\) from \(G\) gives a
    graph \(H\) with at most one vertex on the clique side. Since
    there are no edges among vertices on the independent side in
    \(H\) we get that \(H\) has no triangles; \(S\) is a solution
    of \((G;T;k)\) of size at most \(k\). Thus \((G;T;k)\) is a
    \YES instance, as is \(I_{\YES}\).
  \item Let \(S=K\setminus{}\{u,v\}\). Then \(|S|=k\). Since
    vertices \(\{u,v\}\) have no common neighbour on the
    independent side of \(G\), we get that \(G-S\) contains no
    triangles. Thus \((G;T;k)\) is a \YES instance, as is
    \(I_{\YES}\).
  \end{enumerate}
\end{proof}


Each remaining rule deletes a vertex or an edge from the graph. We
use the next two observations in our proofs of safeness.

\begin{observation}\label{obs:deleting-vertices-bridges-fwd-safe}
  Let \((G;T;k)\) be a \YES instance of \SFVSS which is reduced
  with respect to \autoref{rr:split-kernel-yes-no}.
  \begin{enumerate}
  \item Let \(G'\) be a graph obtained from \(G\) by deleting a
    vertex \(v\in{}V(G)\), and let \(T'=T\setminus\{v\}\). Then
    \((G';T';k'=k)\) is a \YES instance. If \((G;T;k)\) has a
    solution \(S\) of size at most \(k\) with \(v\in{}S\) then
    \((G';T';k'=k-1)\) is a \YES instance.
  \item Let \(G'\) be a graph obtained from \(G\) by deleting an
    edge which has exactly one of its endvertices in the clique
    side of \(G\). Then \((G';T'=T;k'=k)\) is a \YES instance.
  \end{enumerate}
\end{observation}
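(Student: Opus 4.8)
The plan is to work throughout with the triangle-hitting reformulation of \autoref{cor:hit-only-triangles} and to use the single elementary fact that deleting a vertex or an edge can only destroy triangles and never create a new one: every $T$-triangle that survives in the smaller graph was already a $T$-triangle of $G$. Both parts then become one-line verifications that a solution of $(G;T;k)$ restricts to a solution of the deleted instance.

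For part~(1) I would take a solution $S$ of $(G;T;k)$ --- by \autoref{cor:hit-only-triangles} a $T$-THS of $G$ with $|S|\le k$ --- and set $G'=G-v$, $T'=T\setminus\{v\}$, $S'=S\setminus\{v\}$. Any $T'$-triangle $C$ of $G'$ avoids $v$ and meets $T'\subseteq T$, hence is a $T$-triangle of $G$ avoiding $v$; since $S$ hits $C$ and $v\notin C$, the hitting vertex lies in $S'$. So $S'$ is a $T'$-THS of $G'$ with $|S'|\le k$, and $G'$, being an induced subgraph of the split graph $G$, is again split; \autoref{cor:hit-only-triangles} then gives that $(G';T';k)$ is a \YES instance. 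If in addition $v\in S$ then $|S'|\le k-1$, and the same $S'$ shows that $(G';T';k-1)$ is a \YES instance.

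For part~(2) I would write the deleted edge as $uv$ with $u$ on the clique side and $v$ on the independent side of $G$, and put $G'=G-\{uv\}$. First I would check that $G'$ is still a split graph: the hypothesis that $(G;T;k)$ is reduced with respect to \autoref{rr:split-kernel-yes-no} forces $|K|\ge 3$ by \autoref{obs:clique-side-at-least-3}, so \autoref{lem:bridge-deletion} applies and $G-\{uv\}$ is split. Then, since $G'$ is a subgraph of $G$, every $T$-triangle of $G'$ is a $T$-triangle of $G$ and hence is hit by the solution $S$ of $(G;T;k)$; thus $S$ is a $T$-THS of $G'$ of size at most $k$, and \autoref{cor:hit-only-triangles} yields that $(G';T;k)$ is a \YES instance.

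I do not expect any genuine difficulty here --- this is the ``forward'', easy half of safeness for the vertex- and edge-deletions performed by the later reduction rules. The only points needing care are bookkeeping ones: in part~(1) one must shrink the terminal set to $T\setminus\{v\}$ so that it remains a subset of the vertex set, and in part~(2) one must confirm that the resulting graph is still a split graph (which is exactly why the reducedness hypothesis, via \autoref{obs:clique-side-at-least-3} and \autoref{lem:bridge-deletion}, is invoked), so that $(G';T;k)$ is a legitimate instance of \SFVSS.
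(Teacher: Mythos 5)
Your proof is correct and follows essentially the same route as the paper's: both parts rest on the facts that deleting a vertex or an edge cannot create a new $T$-triangle ($T$-cycle) and that the resulting graph is still split (via \autoref{obs:clique-side-at-least-3} and \autoref{lem:bridge-deletion} for the edge case). The only cosmetic difference is that you handle part~(1) uniformly with $S'=S\setminus\{v\}$ where the paper splits into the cases $v\in S$ and $v\notin S$, and you phrase part~(2) in terms of $T$-triangles rather than $T$-cycles; both variants are equally valid.
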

\begin{proof}
  First we consider the case \(G'=G-\{v\}\). From
  \autoref{cor:bridge-vertex-deletion-ok} we get that graph
  \(G'\) is a split graph. Since
  \(T'=T\setminus\{v\}\subseteq{}V(G')\) we get that both
  \((G';T';k)\) and \((G';T';k-1)\) are instances of \SFVSS.

  Suppose \((G;T;k)\) is a \YES instance, and let \(S\) be a
  solution of \((G;T;k)\) of size at most \(k\). Then the graph
  \(G-S\) is a split graph with no \(T\)-triangles.  We consider
  two cases:
  \begin{enumerate}
  \item If \(v\in{}S\) then
    \(G'-(S\setminus{}\{v\})=(G-\{v\})-(S\setminus{}\{v\})=G-S\). Hence
    \(S\setminus\{v\}\) is a \(T\)-THS of the split graph \(G'\),
    of size at most \(|S|-1=k-1\). Thus
    \((G';T'=T\setminus\{v\};k'=k-1)\) is a \YES instance, and so
    is \((G';T'=T\setminus\{v\};k'=k)\) .
  \item If \(v\notin{}S\) then the graph
    \(G'-S=(G-\{v\})-S=G-(S\cup{}\{v\})\) is obtained by deleting
    vertex \(v\) from \(G-S\). Thus \(G'-S\) is a split
    graph. Since deleting a vertex cannot create a new
    \(T\)-triangle, we get that \(G'-S\) has no
    \(T\)-triangles. Thus \(S\) is an \(T\)-THS of \(G'\) of size
    at most \(k\), and \((G';T';k'=k)\) is a \YES instance.
  \end{enumerate}
  Now we consider the case \(G'=G-\{xy\}\) where edge \(xy\) has
  one end, say \(x\), in the clique side \(K\) of \(G\) and the
  other end \(y\) in the independent side \(I\).  From
  \autoref{obs:clique-side-at-least-3} we get
  \(|K|\geq{}3\), and then from \autoref{lem:bridge-deletion} we
  get that \(G'\) is a split graph. Once again, since
  \(T'=T\setminus\{v\}\subseteq{}V(G')\) we get that both
  \((G';T';k)\) and \((G';T';k-1)\) are instances of \SFVSS.

  Suppose \((G;T;k)\) is a \YES instance, and let
  \(S\subseteq{}V(G)\) be a solution of \((G;T;k)\) of size at
  most \(k\). Then graph \(G-S\) has no \(T\)-cycle. Since
  deleting an edge cannot introduce a new cycle, we get that the
  graph \(G'-S=(G-S)-\{e\}\) has no \(T\)-cycle either. Thus \(S\)
  is a solution of \((G';T;k)\) of size at most \(k\), and
  \((G';T;k)\) is a \YES instance.
\end{proof}

\begin{observation}\label{obs:adding-solution-vertex-safe}
  Let \((G';T';k')\) be a \YES instance of \SFVSS and let \(S'\)
  be a \(T'\)-THS of \(G'\) of size at most \(k'\). Let \(G\) be a
  \emph{split} graph which can be constructed from graph \(G'\) by
  adding a vertex \(v\) and zero or more edges each incident with
  the new vertex \(v\). Then both \((G;T_{1}=T';k=k'+1)\) and
  \((G;T_{2}=T'\cup\{v\};k=k'+1)\) are \YES instances of \SFVSS,
  and the set \(S'\cup\{v\}\) is a solution of size at most \(k\)
  for both these instances.
\end{observation}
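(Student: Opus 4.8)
The plan is to verify directly that adding the vertex \(v\) to the solution \(S'\) yields a solution of the enlarged instance, in either of the two ways the terminal set can be extended. First I would note that \(G\) is a split graph by hypothesis, and that \(T_{1}=T'\subseteq V(G)\) and \(T_{2}=T'\cup\{v\}\subseteq V(G)\), so both \((G;T_{1};k'+1)\) and \((G;T_{2};k'+1)\) are legitimate instances of \SFVSS. Set \(S=S'\cup\{v\}\); since \(v\notin V(G')\supseteq S'\) we have \(|S|=|S'|+1\leq k'+1=k\), so the size bound is immediate in both cases.

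Next I would analyse the graph \(G-S\). By construction \(G\) is obtained from \(G'\) by adding the single vertex \(v\) together with some edges all incident with \(v\); deleting \(v\) from \(G\) therefore recovers exactly \(G'\) (as an induced subgraph, on the vertex set \(V(G')\)). Hence \(G-S=(G-\{v\})-S'=G'-S'\). Since \(S'\) is a \(T'\)-THS of \(G'\), the graph \(G'-S'\) contains no \(T'\)-triangle; equivalently, by \autoref{lem:triangles}, no \(T'\)-cycle. So \(G-S\) contains no \(T_{1}\)-triangle, which shows \(S\) is a \(T_{1}\)-THS of \(G\) and hence, by \autoref{cor:hit-only-triangles}, a solution of \((G;T_{1};k)\). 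For the instance with terminal set \(T_{2}=T'\cup\{v\}\), observe that any \(T_{2}\)-triangle of \(G-S\) that is not already a \(T'\)-triangle must pass through \(v\); but \(v\in S\), so \(v\notin V(G-S)\), and no such triangle exists in \(G-S\). Thus \(G-S\) has no \(T_{2}\)-triangle either, and \(S\) is likewise a solution of \((G;T_{2};k)\).

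There is no real obstacle here: the only points needing care are the bookkeeping that \(v\) is genuinely a new vertex (so sizes add), that deleting \(v\) undoes the construction exactly, and that passing from triangles back to cycles is licensed by \autoref{lem:triangles} since \(G-S\) is an induced subgraph of a split graph and hence chordal. With these observations in place, both conclusions of the statement follow immediately, with \(S'\cup\{v\}\) serving as the common solution of size at most \(k'+1\).
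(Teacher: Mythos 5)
Your proposal is correct and follows essentially the same route as the paper's proof: set \(S=S'\cup\{v\}\), observe \(|S|\leq k'+1\) and \(G-S=G'-S'\), and note that since \(v\notin V(G-S)\) no triangle of \(G-S\) contains \(v\), so \(S\) hits all \(T_{1}\)- and \(T_{2}\)-triangles. The extra appeals to \autoref{lem:triangles} and \autoref{cor:hit-only-triangles} are harmless but not needed, since the statement is phrased directly in terms of triangle hitting sets.
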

\begin{proof}
  Since \(G\) is a split graph both \((G;T_{1};k)\) and
  \((G;T_{2};k)\) are instances of \SFVSS. Since \(S'\) is a
  \(T'\)-FVS of \(G'\) of size at most \(k'\) we have that graph
  \(G'-S'\) has no \(T'\)-triangle. Let \(S=S'\cup\{v\}\). Then
  \(|S|=|S'|+1\leq{}(k'+1)\) and \(G-S=G'-S'\). Thus graph \(G-S\)
  has no \(T'\)-triangle. Since \(v\notin{}V(G-S)\) we get that
  \(G-S\) has no triangle which contains vertex \(v\). Thus
  \(G-S\) has no \(T'\cup\{v\}\)-triangle either. Hence both
  \((G;T_{1};k)\) and \((G;T_{2};k)\) are \YES instances of
  \SFVSS and the set \(S'\cup\{v\}\) is a solution of size at most
  \(k\) for both these instances.
\end{proof}
\begin{reductionrule}\label{rr:delete-isolates}
  If there is
  a vertex \(v\) of degree zero in \(G\) then delete \(v\) from
  \(G\) to get graph \(G'\).  Set
  \(T'\gets{}T\setminus\{v\},k'\gets{}k\). The reduced instance is
  \((G';T';k')\).
\end{reductionrule}

Since adding or deleting vertices of degree zero does not create
or destroy cycles of any kind, we have 
\begin{observation}\label{obs:delete-isolates-rr-safe}
  \autoref{rr:delete-isolates} is safe.
\end{observation}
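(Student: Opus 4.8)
The plan is to establish both directions of the equivalence between the input instance $(G;T;k)$ and the output instance $(G';T';k')$, where $G'=G-\{v\}$, $T'=T\setminus\{v\}$, and $k'=k$. First I would dispose of the routine preliminaries: $G'$ is an induced subgraph of the split graph $G$ and hence is again a split graph, and $T'=T\setminus\{v\}\subseteq V(G')$, so $(G';T';k')$ is a legitimate instance of \SFVSS. By \autoref{cor:hit-only-triangles} it suffices to argue about hitting sets for $T$-triangles rather than for $T$-cycles. The single fact driving the whole argument is that a vertex of degree zero lies in no triangle of the graph, and that deleting vertices never raises the degree of a surviving vertex; consequently $v$ remains isolated in $G-S$ for every vertex set $S$ with $v\notin S$.

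For the forward direction, suppose $(G;T;k)$ is a \YES instance. Under the usual convention that the instance is reduced with respect to \autoref{rr:split-kernel-yes-no}, the statement that $(G';T';k)$ is then a \YES instance is precisely \autoref{obs:deleting-vertices-bridges-fwd-safe}(1). (Directly: if $S$ is a $T$-THS of $G$ of size at most $k$, then $S\setminus\{v\}\subseteq V(G')$ has size at most $k$, and $G'-(S\setminus\{v\})$ equals $G-S$ when $v\in S$, and equals $G-S$ with the isolated vertex $v$ removed when $v\notin S$; in both cases it contains no $T$-triangle, hence no $T'$-triangle since $T'\subseteq T$.)

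For the reverse direction, let $S'$ be a $T'$-THS of $G'$ of size at most $k'=k$. Since $v\notin V(G')$ we have $v\notin S'$, so $S'\subseteq V(G)$ and $G-S'$ is exactly $G'-S'$ together with the extra isolated vertex $v$. Adjoining an isolated vertex creates no new triangle, so every triangle of $G-S'$ already lives in $G'-S'$ and in particular avoids $v$; were such a triangle a $T$-triangle, the terminal it passes through would be some vertex other than $v$, i.e.\ a vertex of $T'=T\setminus\{v\}$, making it a $T'$-triangle of $G'-S'$ and contradicting the choice of $S'$. Hence $S'$ is a $T$-THS of $G$ of size at most $k$, and $(G;T;k)$ is a \YES instance.

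There is no genuine obstacle here: the only points that require attention are the bookkeeping of the terminal set (the replacement $T\mapsto T\setminus\{v\}$ is exactly what makes the reverse direction go through) and the verification that the output graph is still split, both of which are immediate. The degenerate case in which $G$ consists of the single vertex $v$ is handled as well, with $G'$ being the empty graph.
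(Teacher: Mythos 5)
Your proof is correct and rests on the same single fact the paper uses — that an isolated vertex lies in no cycle (triangle) and remains isolated after further vertex deletions — the only difference being that the paper dispatches the observation in one sentence while you spell out both directions and the terminal-set bookkeeping explicitly. Nothing is missing and nothing is superfluous beyond that extra detail.
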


\begin{reductionrule}\label{rr:only-T-neighbours-split}
  If there is a \emph{non-terminal} vertex \(v\) in \(G\) which is
  \emph{not} adjacent to a terminal vertex, then delete \(v\) from
  \(G\) to get graph \(G'\).  Set \(T'\gets{}T,k'\gets{}k\). The
  reduced instance is \((G';T';k')\).
\end{reductionrule}

\begin{lemma}\label{lem:only-T-neighbours-rr-safe-split}
  \autoref{rr:only-T-neighbours-split} is safe.
\end{lemma}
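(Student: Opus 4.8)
The plan is to prove safeness by first establishing the single structural fact that the vertex \(v\) deleted by \autoref{rr:only-T-neighbours-split} lies on \emph{no} \(T\)-triangle of \(G\), and then reading off both directions of the equivalence from this fact together with \autoref{cor:hit-only-triangles}.

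For the structural fact: since \(v\) is non-terminal, any triangle \(\{v,x,y\}\) of \(G\) is a \(T\)-triangle only if \(x\in T\) or \(y\in T\), and in either case \(v\) would be adjacent to a terminal, contradicting the hypothesis of the rule. (This argument does not even use the split structure.) Since \(v\notin T\) we also have \(T'=T\setminus\{v\}=T\), and \(G'=G-v\) is an induced subgraph of the split graph \(G\) and hence itself split, so \((G';T';k'=k)\) is a genuine instance of \SFVSS.

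For the forward direction I would invoke \autoref{obs:deleting-vertices-bridges-fwd-safe}(1)---applicable because the instance handed to \autoref{rr:only-T-neighbours-split} is already reduced with respect to \autoref{rr:split-kernel-yes-no}---to conclude directly that if \((G;T;k)\) is a \YES instance then so is \((G';\,T\setminus\{v\};\,k)=(G';T';k')\). (Equivalently and by hand: if \(S\) is a \(T\)-THS of \(G\) with \(|S|\le k\), then \(S\setminus\{v\}\) is a \(T\)-THS of \(G'\) of size at most \(k\), since \(G'-(S\setminus\{v\})=G-(S\cup\{v\})\) is an induced subgraph of \(G-S\).) For the reverse direction, let \(S'\) be a \(T\)-THS of \(G'\) with \(|S'|\le k\); I claim \(S'\) is already a \(T\)-THS of \(G\). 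Indeed \(G'-S'=G-(S'\cup\{v\})\) has no \(T\)-triangle, so a \(T\)-triangle in \(G-S'\) would have to pass through \(v\)---impossible by the structural fact. Hence \((G;T;k)\) is a \YES instance, and combining the two directions shows the rule is safe.

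I do not anticipate a real obstacle; the only points requiring care are the bookkeeping that \(v\notin T\) (so that the terminal set and parameter are genuinely unchanged, and all instances involved are valid instances of \SFVSS) and the routine passage from \(T\)-cycles to \(T\)-triangles via \autoref{cor:hit-only-triangles}.
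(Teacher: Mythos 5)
Your proposal is correct, but the reverse direction takes a genuinely different route from the paper's. You first isolate the structural fact that \(v\) lies on no \(T\)-triangle (immediate from \(v\) being non-terminal with no terminal neighbour), and then channel the whole equivalence through the triangle characterization of \autoref{lem:triangles} and \autoref{cor:hit-only-triangles}: a \(T\)-triangle of \(G-S'\) avoiding \(v\) would survive in \(G'-S'\), so there are none, and chordality converts ``no \(T\)-triangle'' back into ``no \(T\)-cycle.'' The paper instead argues directly on \(T\)-cycles: it observes every \(T\)-cycle of \(G-S'\) must pass through \(v\), takes a \emph{shortest} such cycle \(C\), notes \(|C|\geq 4\) because \(v\) is not adjacent to the terminal on \(C\), and then uses a chord of \(C\) to produce either a shorter \(T\)-cycle through \(v\) or a \(T\)-cycle avoiding \(v\), a contradiction either way. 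The paper's argument essentially re-derives, in situ, the chord-splitting step already contained in the proof of \autoref{lem:triangles}; your version is more modular and shorter because it reuses that lemma rather than repeating its mechanism. The forward direction and the bookkeeping (\(v\notin T\) so \(T'=T\), \(G'\) split, applicability of \autoref{obs:deleting-vertices-bridges-fwd-safe} because the instance is reduced with respect to \autoref{rr:split-kernel-yes-no}) match the paper. Both proofs are sound; yours trades a self-contained cycle argument for a cleaner reduction to the triangle-hitting formulation.
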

\begin{proof}
  Let \((G;T;k)\) be an instance given as input to
  \autoref{rr:only-T-neighbours-split} and let \((G';T';k')\) be
  the corresponding instance output by the rule.  Then
  \(G'=G-\{v\}\) where vertex \(v\) is as defined by the rule, and
  \(T'=T,k'=k\). From
  \autoref{obs:deleting-vertices-bridges-fwd-safe} we get that if
  \((G;T;k)\) is a \YES instance then so is \((G';T';k')\).



  Now suppose \((G';T';k')\) is a \YES instance, and let
  \(S'\subseteq{}V(G')\) be a solution of \((G';T';k')\) of size
  at most \(k'\). We claim that \(S'\) is a solution of
  \((G;T;k)\) as well. Suppose not; then graph \(G-S'\) has a
  \(T\)-cycle. Since \(T'=T\) and \(G'-S'=G-(S'\cup\{v\})\) does
  not contain any \(T'\)-cycle we get that every \(T\)-cycle in
  \(G-S'\) must contain vertex \(v\).

  Let \(C\) be a \emph{shortest} \(T\)-cycle of \(G-S'\), and let
  \(t\) be a terminal vertex in \(C\). Since---by
  assumption---vertices \(v\) and \(t\) are not adjacent we get
  that \(C\) contains at least four vertices. Since \(C\) is a
  cycle of length at least four in the chordal graph \(G-S'\) we
  get that \(C\) has a chord. This chord is part of two cycles
  \(C',C''\) such that (i) each of \(\{C',C''\}\) has length
  strictly smaller than \(C\), and (ii) the union of the vertex
  sets of \(C'\) and \(C''\) is the vertex set of \(C\). At least
  one of the two cycles \(C',C''\) contains the terminal \(t\);
  assume that \(t\) is in \(C'\). If vertex \(v\) is also in
  \(C'\) then \(C'\) is a \(T\)-cycle of \(G-S'\) which is
  \emph{shorter} than \(C\); this contradicts our assumption about
  \(C\).  If \(v\) is \emph{not} in \(C'\) then \(C'\) is a
  \(T\)-cycle---and hence \(T'\)-cycle---in
  \(G-(S'\cup\{v\})=G'-S'\), which contradicts our assumption that
  \(S'\) is a solution of \((G';T';k')\).

  Thus \(S'\) is a solution of \((G;T;k)\) of size at most
  \(k'=k\), and \((G;T;k)\) is a \YES instance.
\end{proof}

\begin{reductionrule}\label{rr:delete-bridges-split}
  If there is a \emph{bridge} \(e\) in \(G\) then delete edge
  \(e\) (\emph{not} its endvertices) to get graph \(G'\).  Set
  \(T'\gets{}T,k'\gets{}k\). The reduced instance is
  \((G';T';k')\).
\end{reductionrule}

\begin{lemma}\label{lem:no-bridges-rr-safe-split}
  \autoref{rr:delete-bridges-split} is safe.
\end{lemma}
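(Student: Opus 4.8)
The plan is to establish that $(G;T;k)$ and $(G';T';k')$ are equivalent, where $e$ is the deleted bridge, $G'=G-\{e\}$, $T'=T$, and $k'=k$.

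First I would check that $(G';T';k')$ is a well-defined instance of \SFVSS, i.e.\ that $G'$ is still a split graph. \autoref{rr:delete-bridges-split} is applied only to an instance reduced with respect to \autoref{rr:split-kernel-yes-no}, so by \autoref{obs:clique-side-at-least-3} the clique side $K$ has at least three vertices. The bridge $e$ cannot have both ends in the independent set $I$, and it cannot have both ends in $K$ either: otherwise, taking any third vertex of $K$ (which exists since $|K|\ge 3$) we would obtain a triangle through $e$, contradicting that $e$ is a bridge. Hence $e$ has exactly one end in $K$, and \autoref{lem:bridge-deletion} then shows that $G'=G-\{e\}$ is split. Since only an edge is deleted we have $V(G')=V(G)$, so $T'=T\subseteq V(G')$ and $(G';T';k')$ is a legitimate instance.

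The core observation for the equivalence is that, by definition, a bridge lies in no cycle of $G$; therefore $G$ and $G'$ have exactly the same set of cycles. Consequently, for every $S\subseteq V(G)=V(G')$ the graphs $G-S$ and $G'-S$ have the same cycles, and in particular the same $T$-cycles (recall $T'=T$). Thus a vertex set $S$ of size at most $k$ is a $T$-FVS of $G$ if and only if it is a $T$-FVS of $G'$, which yields both directions of the equivalence at once. For the forward direction one could alternatively invoke the second part of \autoref{obs:deleting-vertices-bridges-fwd-safe}, since $e$ has exactly one endpoint in the clique side.

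I do not anticipate a genuine obstacle. The only point that needs care is confirming that $G'$ remains a split graph, and this is precisely where the assumption that $(G;T;k)$ is reduced with respect to \autoref{rr:split-kernel-yes-no} — hence $|K|\ge 3$ — enters the argument.
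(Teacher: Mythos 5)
Your proposal is correct and follows essentially the same route as the paper: the key fact in both is that a bridge lies in no cycle, so $G-S$ and $G'-S$ have exactly the same $T$-cycles for every $S$ (the paper phrases this asymmetrically, using \autoref{obs:deleting-vertices-bridges-fwd-safe} for the forward direction and the "bridge stays a bridge after vertex deletions" observation for the reverse, while you state it once and get both directions). Your preliminary check that the bridge has exactly one end in $K$ and that $G'$ stays split just re-derives what \autoref{lem:bridge-deletion} already establishes, so nothing is missing.
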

\begin{proof}
  Let \((G;T;k)\) be an instance given as input to
  \autoref{rr:delete-bridges-split} and let \((G';T';k')\) be the
  corresponding instance output by the rule.  Then \(G'=G-\{e\}\)
  where \(e\) is a bridge in \(G\), \(T'=T,k'=k\).

  From \autoref{obs:deleting-vertices-bridges-fwd-safe} we get
  that if \((G;T;k)\) is a \YES instance then so is
  \((G';T';k')\).
  
  Now suppose \((G';T';k')\) is a \YES instance, and let
  \(S'\subseteq{}V(G')\) be a solution of \((G';T';k')\) of size
  at most \(k'\).  Observe that since (i) \(e\) is a bridge in
  \(G\), and (ii) deleting vertices does not introduce new cycles,
  edge \(e\), if it exists in graph \(G-S'\), is a bridge in
  \(G-S'\) as well. So \(e\) cannot be in \emph{any} cycle in
  \(G-S'\). Hence if graph \(G-S'\) has a \(T\)-cycle \(C\) then
  \(C\) does not contain edge \(e\), which implies that \(C\) is
  present in the graph \(G'-S'=(G-S')-\{e\}\) as well. But this
  contradicts our assumption that \(S'\) is a solution of
  \((G';T';k')\). Thus there cannot be a \(T\)-cycle in \(G-S'\).
  So \(S'\) is a solution of \((G;T;k)\) of size at most \(k'=k\),
  and \((G;T;k)\) is a \YES instance.
\end{proof}

\begin{lemma}\label{lem:neat-split-instance}
  Let \((G;T;k)\) be an instance of \SFVSS which is reduced with
  respect to
  \Autoref{rr:split-kernel-yes-no,rr:delete-isolates,rr:only-T-neighbours-split,rr:delete-bridges-split}.
  Then
   \begin{enumerate}
   \item Each vertex in \(G\) has degree at least two.
   \item Every vertex in \(G\) is part of some \(T\)-triangle.
   \item If \((G;T;k)\) is a \YES instance then every terminal
     vertex on the clique side of \(G\) is present in \emph{every}
     solution of \((G;T;k)\) of size at most \(k\).
   \end{enumerate}
\end{lemma}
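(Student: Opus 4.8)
The plan is to establish the three claims one at a time, each as a short consequence of the four reduction rules together with the facts that \(|K|\geq 3\) (\autoref{obs:clique-side-at-least-3}) and that a solution is precisely a vertex set whose removal leaves no \(T\)-triangle (\autoref{cor:hit-only-triangles}). For claim~(1), I would argue as follows: being reduced with respect to \autoref{rr:delete-isolates} rules out degree-zero vertices, and a hypothetical degree-one vertex \(v\) would have its unique incident edge lie in no cycle (since \(v\) lies in no cycle), hence that edge would be a bridge, so \autoref{rr:delete-bridges-split} would apply --- a contradiction. Thus every vertex has degree at least two.

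For claim~(2) I would fix the split partition \((K,I)\) and run a short case analysis on an arbitrary vertex \(v\). If \(v\in T\): when \(v\in K\) it has \(\geq 2\) neighbours inside the clique (as \(|K|\geq 3\)); when \(v\in I\) it has \(\geq 2\) neighbours by claim~(1), all lying in \(K\) and hence pairwise adjacent; in either case \(v\) together with two such neighbours forms a triangle containing \(v\in T\). If \(v\notin T\): by \autoref{rr:only-T-neighbours-split}, \(v\) has a terminal neighbour \(t\), and I would treat the subcases ``\(v\in I\)'' (then \(t\in K\) since \(N(v)\subseteq K\), and by claim~(1) \(v\) has a second neighbour \(w\in K\setminus\{t\}\), so \(\{v,t,w\}\) is a triangle through \(t\)), ``\(v\in K,\ t\in K\)'' (some third clique vertex is adjacent to both, as \(|K|\geq 3\)), and ``\(v\in K,\ t\in I\)'' (by claim~(1) \(t\) has a neighbour \(w\in K\setminus\{v\}\), and \(\{v,t,w\}\) works). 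Each subcase produces a \(T\)-triangle through \(v\).

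For claim~(3) I would fix \(t\in T\cap K\) and a solution \(S\) with \(|S|\leq k\), assume for contradiction that \(t\notin S\), and derive a contradiction. Being reduced with respect to the fourth part of \autoref{rr:split-kernel-yes-no} gives \(|K|\geq k+2\), so \(|K\setminus S|\geq 2\) with \(t\in K\setminus S\). If \(|K\setminus S|\geq 3\), then \(t\) and any two other vertices of \(K\setminus S\) form a triangle in \(G-S\); it is a \(T\)-triangle, contradicting that \(S\) is a solution. If \(|K\setminus S|=2\), then \(|K|-2=|S\cap K|\leq|S|\leq k\) forces \(|K|=k+2\), whence \(|S\cap K|=k=|S|\) and therefore \(S\subseteq K\); since \(|K|=k+2\), being reduced with respect to the fifth part of \autoref{rr:split-kernel-yes-no} makes every edge of \(G[K]\) highlighted, so writing \(K\setminus S=\{t,u\}\) there is some \(x\in I\) with \(\{x,t,u\}\) a triangle in \(G\); as \(S\subseteq K\) and \(x\in I\), none of \(x,t,u\) lies in \(S\), so \(\{x,t,u\}\) is a \(T\)-triangle surviving in \(G-S\), again a contradiction. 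Hence \(t\in S\).

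I expect claim~(3), and specifically the boundary case \(|K|=k+2\) with every clique edge highlighted, to be the main obstacle: the point to get right is that \(|K\setminus S|=2\) can happen only when the solution sits entirely on the clique side, which is exactly what lets the ``highlighting'' vertex of \(I\) survive in \(G-S\) and certify a forbidden \(T\)-triangle. Claims~(1) and~(2) I expect to be routine once the reduction rules and the bound \(|K|\geq 3\) are in hand.
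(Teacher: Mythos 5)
Your proposal is correct and follows essentially the same route as the paper's proof: claim~(1) via the isolated-vertex and bridge rules, claim~(2) via a case analysis using \(|K|\geq 3\) and the terminal-neighbour rule (you merely organize the cases by ``terminal vs.\ non-terminal'' rather than by ``clique side vs.\ independent side''), and claim~(3) via the same counting argument that pins down the boundary case \(|K\setminus S|=2\), forces \(S\subseteq K\) and \(|K|=k+2\), and then invokes part~(5) of \autoref{rr:split-kernel-yes-no} to produce a surviving highlighted \(T\)-triangle. No gaps.
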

\begin{proof}
  We prove each claim in turn. Let \((K,I)\) be the split
  partition of \(G\).
  \begin{enumerate}[listparindent=1.5em]
  \item Since \((G;T;k)\) is reduced with respect to
    \autoref{rr:delete-isolates} we get that every vertex in \(G\)
    has degree at least one. If vertex \(v\) has degree exactly
    one then the only edge incident on \(v\) is a bridge, which
    cannot exist since \((G;T;k)\) is reduced with respect to
    \autoref{rr:delete-bridges-split}. Thus every
    vertex in \(G\) has degree at least two.
  \item From \autoref{obs:clique-side-at-least-3} we get
    \(|K|\geq{}3\). Consider a vertex \(v\in{}K\). Since
    \((G;T;k)\) is reduced with respect to
    \autoref{rr:only-T-neighbours-split} we get that \(v\) is
    adjacent to at least one terminal vertex \(t\). If \(t\in{}K\)
    then \(v\) is part of a \(T\)-triangle which contains
    \(t\). If \(t\in{}I\) then let \(u\in{}K\;;\;u\neq{}v\) be
    another neighbour of \(t\) in \(K\). Such a neighbour exists
    because \(t\) has degree at least two and every neighbour of
    \(t\) is in \(K\). Since \(uv\) is an edge in \(G[K]\) we get
    that \(\{t,u,v\}\) is a \(T\)-triangle which contains vertex
    \(v\).

    Now suppose \(v\) is a vertex in \(I\). Then \(v\) has at
    least two neighbours \(x,y\in{}K\) for which \(xy\) is an
    edge. If \(v\) is a terminal then it belongs to the
    \(T\)-triangle \(\{v,x,y\}\). If \(v\) is \emph{not} a
    terminal then---since \((G;T;k)\) is reduced with respect to
    \autoref{rr:only-T-neighbours-split}---we get that \(v\) is
    adjacent to at least one terminal vertex, which has to be in
    \(K\). Set \(x\) to be such a terminal neighbour of
    \(v\). Then \(v\) belongs to the \(T\)-triangle \(\{v,x,y\}\).
  \item Suppose not. Let \(S\) be a solution of \((G;T;k)\) of
    size at most \(k\), and let \(t\in{}(K\cap{}T)\setminus{}S\)
    be a terminal vertex on the clique side of \(G\) which is not
    in \(S\). If there are two other vertices \(x,y\) on the
    clique side which are also not in \(S\) then \(\{t,x,y\}\) is
    a \(T\)-triangle in \(G-S\), a contradiction. So we have that
    \(|K\setminus{}S|\leq{}2\). Now since \((G;T;k)\) is reduced
    with respect to \autoref{rr:split-kernel-yes-no} we
    have---part (4) of the rule---that \(|K|\geq{}k+2=|S|+2\),
    from which we get \(|K\setminus{}S|\geq{}2\). Thus
    \(|K\setminus{}S|=2\). Substituting these in the identity
    \(|K|=|K\setminus{}S|+|K\cap{}S|\) we get
    \(|S|\leq{}|K\cap{}S|\) which implies \(|S|=|K\cap{}S|\) and
    \(S=K\cap{}S\).

    Thus we get that \(K\) is of the form \(K=S\cup\{t,x\}\) for
    some vertex \(x\).  Now from part (5) of
    \autoref{rr:split-kernel-yes-no} we get that vertices \(t\)
    and \(x\) have a common neighbour, say \(y\), in set \(I\). So
    the \(T\)-triangle \(\{t,x,y\}\) is present in graph \(G-S\),
    a contradiction. Hence \(t\) must be in \(S\).
\end{enumerate}
\end{proof}

It is thus safe to pick a terminal vertex from the clique side
into the solution.

\begin{reductionrule}\label{rr:pick-clique-terminals}
  If there is a terminal vertex \(t\) on the clique side then
  delete \(t\) to get graph \(G'\). Set
  \(T'\gets{}T\setminus\{t\},k'\gets{}k-1\). The reduced instance
  is \(G';T';k')\).
\end{reductionrule}

\begin{lemma}\label{lem:pick-clique-terminals-rr-safe}
  \autoref{rr:pick-clique-terminals} is safe.
\end{lemma}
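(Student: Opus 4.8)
The plan is to prove the two directions of the equivalence separately, using only facts already established for the earlier reduction rules. Throughout I use that \autoref{rr:pick-clique-terminals} is applied only to instances that are already reduced with respect to \Autoref{rr:split-kernel-yes-no,rr:delete-isolates,rr:only-T-neighbours-split,rr:delete-bridges-split}, so \autoref{lem:neat-split-instance} is available for $(G;T;k)$, and \autoref{obs:clique-side-at-least-3} gives \(|K|\geq 3\).

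For the forward direction I would argue as follows. Suppose $(G;T;k)$ is a \YES instance and let $S$ be any solution of size at most $k$. By part~(3) of \autoref{lem:neat-split-instance}, the clique-side terminal $t$ lies in \emph{every} such solution, hence $t\in S$. Now I invoke part~(1) of \autoref{obs:deleting-vertices-bridges-fwd-safe} with $v=t$: since $(G;T;k)$ has a solution of size at most $k$ containing $t$, the instance $(G-\{t\};\,T\setminus\{t\};\,k-1)$ is a \YES instance --- and this is exactly the output $(G';T';k')$ of the rule.

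For the reverse direction, suppose $(G';T';k')$ is a \YES instance and let $S'$ be a solution of size at most $k'=k-1$. The original graph $G$ is obtained from $G'=G-\{t\}$ by adding back the vertex $t$ together with the edges incident with it; since $t$ lies on the clique side, a split partition of $G'$ extends to one of $G$, so $G$ is split and \autoref{obs:adding-solution-vertex-safe} applies with $v=t$. It yields that $S'\cup\{t\}$ is a solution of size at most $k'+1=k$ for the instance $(G;\,T'\cup\{t\};\,k)$, and since $t\in T$ we have $T'\cup\{t\}=(T\setminus\{t\})\cup\{t\}=T$. Hence $(G;T;k)$ is a \YES instance, completing the equivalence.

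I do not expect a real obstacle: the statement is essentially a bookkeeping combination of \autoref{lem:neat-split-instance}, \autoref{obs:deleting-vertices-bridges-fwd-safe}, and \autoref{obs:adding-solution-vertex-safe}. The one subtlety worth flagging is that the forward direction genuinely needs \autoref{lem:neat-split-instance}(3) --- that $t$ is forced into every $k$-sized solution --- and that lemma's hypotheses (being reduced with respect to the first four rules) hold precisely because of the ``apply the first applicable rule'' convention; without $t$ being forced, decrementing the budget by one when deleting $t$ would not be sound.
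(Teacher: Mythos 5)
Your proposal is correct and follows exactly the paper's argument: the forward direction via \autoref{lem:neat-split-instance}(3) combined with \autoref{obs:deleting-vertices-bridges-fwd-safe}, and the reverse direction via \autoref{obs:adding-solution-vertex-safe}. The additional bookkeeping you supply (checking that \(T'\cup\{t\}=T\) and that the split partition extends) is implicit in the paper's shorter write-up but does not change the approach.
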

\begin{proof}
  Suppose \((G;T;k)\) is a \YES instance. 
  From \autoref{lem:neat-split-instance} we get that vertex \(t\)
  is present in \emph{every} solution of \((G;T;k)\) of size at
  most \(k\), and so from
  \autoref{obs:deleting-vertices-bridges-fwd-safe} we get that
  \((G';T';k')\) is a \YES instance.




  If \((G';T';k')\) is a \YES instance then we get from
  \autoref{obs:adding-solution-vertex-safe} that \((G;T;k)\) is a
  \YES instance as well.
\end{proof}

\begin{observation}\label{obs:T-equals-I}
  Let \((G;T;k)\) be reduced with respect to
  \Autoref{rr:split-kernel-yes-no,rr:delete-isolates,rr:only-T-neighbours-split,rr:pick-clique-terminals}. Let
  \((K,I)\) be the split partition of \(G\). Then \(T=I\) and
  every vertex in \(K\) has a neighbour in \(I\).
\end{observation}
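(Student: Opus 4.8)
The plan is to read both conclusions directly off the two relevant reduction rules in the list, using only what ``reduced with respect to a rule'' means together with the fact that \((K,I)\) is a split partition of \(G\).

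First I would establish \(T\subseteq I\). Since \((G;T;k)\) is reduced with respect to \autoref{rr:pick-clique-terminals}, that rule leaves the instance unchanged; by the statement of the rule this says precisely that there is no terminal vertex on the clique side, i.e.\ \(T\cap K=\emptyset\). As \(V(G)=K\uplus I\), this immediately gives \(T\subseteq I\).

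For the reverse inclusion I would argue by contradiction. Suppose some \(v\in I\) is not a terminal; then \(v\) is a non-terminal vertex, so---because \((G;T;k)\) is reduced with respect to \autoref{rr:only-T-neighbours-split}---vertex \(v\) must be adjacent to some terminal vertex \(t\). But \(v\) lies on the independent side \(I\), so every neighbour of \(v\), and in particular \(t\), lies in \(K\); this contradicts \(T\cap K=\emptyset\). Hence \(I\subseteq T\), and together with the first part \(T=I\). The last claim then follows in the same vein: for any \(v\in K\) we have \(v\notin T\) by \(T\cap K=\emptyset\), so \(v\) is a non-terminal vertex, and reducedness with respect to \autoref{rr:only-T-neighbours-split} forces \(v\) to have a terminal neighbour \(t\); since \(T=I\) this \(t\) is the required neighbour of \(v\) in \(I\).

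I do not anticipate a genuine obstacle here---the argument is short and uses only \autoref{rr:only-T-neighbours-split} and \autoref{rr:pick-clique-terminals}; the other two rules named in the hypothesis are not logically needed for this statement (the instance merely happens to be reduced with respect to them too at this stage of the kernelization). The only point requiring a little care is to phrase ``reduced with respect to'' each of these two rules as exactly the negation of that rule's triggering condition---no terminal on the clique side, respectively every non-terminal vertex having a terminal neighbour---rather than invoking anything about the rules' safeness.
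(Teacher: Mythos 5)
Your proof is correct and follows essentially the same route as the paper: $T\subseteq I$ from reducedness with respect to the clique-terminal rule, $I\subseteq T$ by noting a non-terminal $v\in I$ would need a terminal neighbour which could only lie in $K$, and the final claim from the same two facts. The only cosmetic difference is that the paper cites part (2) of \autoref{lem:neat-split-instance} where you invoke \autoref{rr:only-T-neighbours-split} directly; your version is, if anything, slightly more self-contained, and your observation that the other two rules in the hypothesis are not logically needed is accurate.
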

\begin{proof}
  Since \((G;T;k)\) is reduced with respect to
  \autoref{rr:pick-clique-terminals} we have that no vertex on the
  clique side of \(G\) is a terminal. Thus
  \(T\subseteq{}I\). Suppose there is a non-terminal vertex
  \(v\in{}I\). From part (2) of
  \autoref{lem:neat-split-instance} we get that \(v\) must be
  adjacent to some terminal vertex. This cannot happen because
  every neighbour of \(v\) is in \(K\) and none of them is a
  terminal. So every vertex in \(I\) is a terminal.  Thus
  \(I\subseteq{}T\), and hence \(T=I\).

  Every vertex in \(K\) is a non-terminal, and every non-terminal
  is adjacent to some terminal vertex.  So every vertex in \(K\)
  must have a neighbour in \(I\).
\end{proof}

Our kernelization algorithm can be thought of having two main
parts: (i) bounding the number of vertices on the clique side by
\(\Oh(k)\), and (ii) bounding the number of independent set
vertices in the neighbourhood of each clique-side vertex by
\(k\). We now describe the second part. We need some more
notation. For a vertex \(v\in{}K\) on the clique side of graph
\(G\) we use (i) \(N_{1}(v)\) for the set of neighbours
\(N(v)\cap{}I\) of \(v\) on the independent side \(I\), and (ii)
\(N_{2}(v)\) to denote the set of all \emph{other} clique
vertices---than \(v\)---which are adjacent to some vertex in
\(N_{1}(v)\); that is,
\(N_{2}(v)=N(N_{1}(v))\setminus\{v\}\). Informally, \(N_{2}(v)\)
is the second neighbourhood of \(v\) ``going via \(I\)''. We use
\(B(v)\) to denote the bipartite graph obtained from
\(G[N_{1}(v)\cup{}N_{2}(v)]\) by deleting every edge with both its
endvertices in \(N_{2}(v)\). Equivalently: Let \(H\) be the
(bipartite) graph obtained by deleting, from \(G\), every edge
which has both its ends on the clique side of \(G\). Then
\(B(v)=H[N_{1}(v)\cup{}N_{2}(v)]\).  We call \(B(v)\) the
bipartite graph \emph{corresponding to} vertex
\(v\in{}K\). 

\paragraph*{Bounding the Independent-side Neighbourhood of  a Vertex on the Clique Side }
The first reduction rule of this part applies when there is a
vertex \(v\in{}K\) which is part of more than \(k\)
\(T\)-triangles and these \(T\)-triangles are pairwise
vertex-disjoint apart from the one common vertex \(v\). In this
case any solution of size at most \(k\) must contain \(v\), so we
delete \(v\) and reduce \(k\).


\begin{lemma}\label{lem:large-matching-forces-v}
  Let \(v\in{}K\) be a vertex on the clique side of graph \(G\)
  such that the bipartite graph \(B(v)\) contains a
  \emph{matching} of size at least \(k+1\). Then \emph{every}
  \(T\)-FVS of \(G\) of size at most \(k\) contains \(v\).
\end{lemma}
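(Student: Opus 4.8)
The claim is that if $B(v)$ contains a matching of size at least $k+1$, then every $T$-FVS of size at most $k$ must contain $v$. The plan is to argue the contrapositive: take a set $S$ with $v \notin S$ and $|S| \le k$, and exhibit a $T$-triangle in $G - S$. First I would recall the structure of $B(v)$: it is bipartite with sides $N_1(v) \subseteq I$ and $N_2(v) \subseteq K \setminus \{v\}$, and crucially every edge $xu$ of $B(v)$ with $x \in N_1(v)$, $u \in N_2(v)$ corresponds to a genuine $T$-triangle $\{v, x, u\}$ of $G$: indeed $x \in N_1(v)$ means $vx \in E(G)$, $u \in N_2(v)$ with $xu \in E(B(v))$ means $xu \in E(G)$, and $vu \in E(G)$ since both $v,u$ lie on the clique side $K$. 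Since $x \in I = T$ (using \autoref{obs:T-equals-I}, the instance being reduced so that $T = I$), this is a $T$-triangle.

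Next, let $M$ be a matching in $B(v)$ of size at least $k+1$; these are $k+1$ pairwise vertex-disjoint edges of $B(v)$, hence $k+1$ triangles $\{v, x_i, u_i\}$ ($i = 1, \dots, k+1$) which are pairwise disjoint except that they all share the vertex $v$. Now suppose for contradiction that $S$ is a $T$-FVS of $G$ with $|S| \le k$ and $v \notin S$. By \autoref{lem:triangles}, $G - S$ contains no $T$-triangle, so in particular $S$ must hit each of the $k+1$ triangles $\{v, x_i, u_i\}$. Since $v \notin S$, for each $i$ the set $S$ must contain $x_i$ or $u_i$. As the pairs $\{x_i, u_i\}$ are pairwise disjoint (the matching edges share no endvertex), this forces $k+1$ distinct vertices into $S$, contradicting $|S| \le k$. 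Hence $v \in S$, which proves the lemma.

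\textbf{Main obstacle.} There is no serious obstacle here; the only point requiring a little care is correctly unpacking the definitions of $N_1(v)$, $N_2(v)$, and $B(v)$ so that each matching edge really does certify a $T$-triangle through $v$ — in particular that the endpoint in $N_2(v)$ is adjacent to $v$ (true because both sit in the clique $K$) and that the endpoint in $N_1(v)$ is a terminal (true because the instance is reduced to $T = I$ via \autoref{obs:T-equals-I}). Once that correspondence is in hand, the disjointness of the matching edges does all the combinatorial work. It is worth stating explicitly at the start of the proof which reduction rules we assume $(G;T;k)$ is reduced with respect to, so that $T = I$ and every clique vertex has a neighbour in $I$ are available.
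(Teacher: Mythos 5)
Your proof is correct and follows essentially the same route as the paper's: each matching edge of $B(v)$ certifies a $T$-triangle through $v$ (the $N_2(v)$-endpoint is adjacent to $v$ via the clique $K$, the $N_1(v)$-endpoint is a terminal since $T=I$), and the $k+1$ pairwise disjoint pairs force $k+1$ distinct vertices into any solution avoiding $v$. No gaps.
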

\begin{proof}
  Suppose not; let \(S\subseteq{}V(G)\;;\;|S|\leq{}k\) be a
  \(T\)-FVS of \(G\) of size at most \(k\) such that
  \(v\notin{}S\). Let \(M=\{x_{1}y_{1},\dotsc,x_{k+1}y_{k+1}\}\)
  be a matching in graph \(B(v)\). Note that every edge in \(M\)
  is present in graph \(G\). Further, each edge
  \(x_{i}y_{i}\in{}M\) (i) has one end in the clique side \(K\) of
  \(G\) and the other end in the independent side \(I\), and (ii)
  forms a triangle \(\{v,x_{i},y_{i}\}\) in \(G\) together with
  vertex \(v\). Since every vertex in \(I\) is a terminal we get
  that each triangle of the form \(\{v,x_{i},y_{i}\}\) is a
  \(T\)-triangle.

  Now since \(S\) is a \(T\)-FVS of \(G\) it has a non-empty
  intersection with every triangle in the set
  \(\{\{v,x_{i},y_{i}\}\;;\;1\leq{}i\leq{}(k+1)\}\). Since
  \(v\notin{}S\) we get that \(S\) has a non-empty intersection
  with every set in the collection
  \(\{\{x_{i},y_{i}\}\;;\;1\leq{}i\leq{}(k+1)\}\). Since the
  vertex pairs \(\{x_{i},y_{i}\}\;;\;1\leq{}i\leq{}(k+1)\) are
  pairwise disjoint we get that \(S\) contains at least \(k+1\)
  distinct vertices from the set
  \(\cup_{i=1}^{k+1}\{x_{i},y_{i}\}\). This contradicts the
  assumption \(|S|\leq{}k\).
\end{proof}

\begin{reductionrule}\label{rr:max-mat}
  If there is a vertex \(v\) on the clique side \(K\) of graph
  \(G\) such that the bipartite graph \(B(v)\) has a
  \emph{matching} of size at least \(k+1\) then delete vertex
  \(v\) from \(G\) to get graph \(G'\). Set
  \(T'\gets{}T,k'\gets{}k-1\). The reduced instance is
  \((G';T';k')\).
\end{reductionrule}
\begin{lemma}\label{lem:max-mat-rr-safe}
  \autoref{rr:max-mat} is safe.
\end{lemma}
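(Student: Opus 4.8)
The plan is to prove safeness in the two usual directions, leaning entirely on the machinery already set up. First I would do the bookkeeping: whenever \autoref{rr:max-mat} applies to $(G;T;k)$, the instance is (by our convention of applying the first applicable rule) reduced with respect to \Autoref{rr:split-kernel-yes-no,rr:delete-isolates,rr:only-T-neighbours-split,rr:pick-clique-terminals}, so \autoref{obs:T-equals-I} applies and gives $T=I$ for the split partition $(K,I)$ of $G$. In particular the clique-side vertex $v$ chosen by the rule is a non-terminal, and hence $T\setminus\{v\}=T=T'$. This small observation is what makes the two helper observations line up with the rule's choice $T'=T$.

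For the forward direction, suppose $(G;T;k)$ is a \YES instance. Since $B(v)$ has a matching of size at least $k+1$, \autoref{lem:large-matching-forces-v} forces $v$ into \emph{every} $T$-FVS of $G$ of size at most $k$; in particular $(G;T;k)$ has a solution $S$ of size at most $k$ with $v\in S$. Then part~(1) of \autoref{obs:deleting-vertices-bridges-fwd-safe} yields that $(G-\{v\};\,T\setminus\{v\};\,k-1)$ is a \YES instance, which by the bookkeeping above is exactly $(G';T';k')$.

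For the reverse direction, suppose $(G';T';k')$ is a \YES instance with a $T'$-THS $S'$ of $G'$ of size at most $k'=k-1$. Since $G'=G-\{v\}$, the split graph $G$ is obtained from $G'$ by re-adding the vertex $v$ together with the edges of $G$ incident with $v$; so \autoref{obs:adding-solution-vertex-safe}, applied with $T_1=T'=T$, gives that $(G;\,T;\,k'+1)=(G;T;k)$ is a \YES instance, witnessed by $S'\cup\{v\}$ of size at most $k$. Combining the two directions shows \autoref{rr:max-mat} is safe.

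There is no deep obstacle here; the only point that needs care is the mismatch between the terminal set $T\setminus\{v\}$ produced by \autoref{obs:deleting-vertices-bridges-fwd-safe} and the terminal set $T'=T$ kept by the rule, which is resolved precisely by invoking \autoref{obs:T-equals-I} (and hence the earlier reduction rules) to conclude $v\notin T$. Everything else is a direct citation of \autoref{lem:large-matching-forces-v}, \autoref{obs:deleting-vertices-bridges-fwd-safe}, and \autoref{obs:adding-solution-vertex-safe}.
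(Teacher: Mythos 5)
Your proof is correct and follows essentially the same route as the paper's: cite \autoref{obs:T-equals-I} to see that \(v\) is a non-terminal (so the terminal set is unaffected), use \autoref{lem:large-matching-forces-v} together with part~(1) of \autoref{obs:deleting-vertices-bridges-fwd-safe} for the forward direction, and \autoref{obs:adding-solution-vertex-safe} for the reverse. The extra bookkeeping you spell out about reconciling \(T\setminus\{v\}\) with \(T'=T\) is implicit in the paper's one-line remark that \(v\) is not a terminal, so there is no substantive difference.
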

\begin{proof}
  Since---\autoref{obs:T-equals-I}---vertex \(v\) is not a
  terminal vertex in \(G\) we get that
  \(T'\subseteq{}V(G')\). Suppose \((G;T;k)\) is a \YES
  instance. Then since \(v\) is in present in \emph{every}
  solution of \((G;T;k)\) of size at most
  \(k\)---\autoref{lem:large-matching-forces-v}---we get from
  \autoref{obs:deleting-vertices-bridges-fwd-safe} that
  \((G';T';k')\) is a \YES instance.

  If \((G';T';k')\) is a \YES instance then from
  \autoref{obs:adding-solution-vertex-safe} we get that
  \((G;T;k)\) is a \YES instance as well.
\end{proof}





Let \((G;T;k)\) be an instance which is reduced with respect to
\autoref{rr:max-mat}. We show that if there is a vertex
\(v\in{}K\) on the clique side of \(G\) which has more than \(k\)
neighbours in the independent side \(I\), then we can find an
\emph{edge} of the form \(vw\;;\;w\in{}I\) which can safely be
deleted from the graph. We get this by a careful application of
the ``matching'' version (\autoref{lem:expansion-matching}) of the
Expansion Lemma together with
\autoref{lem:extra-expansion-vertex}. Let \(v\) be such a vertex
and let \(P=N_{2}(v),Q=N_{1}(v),t=1\). Then \((P,Q)\) is a
bipartition of the graph \(B(v)\) corresponding to vertex
\(v\). Let \(\ell\leq{}k\) be the size of a maximum matching of
\(B(v)\).  Note that \(|Q|\geq(k+1)>\ell{}t\) and that---by part
(1) of \autoref{lem:neat-split-instance}---there are no isolated
vertices in set \(Q\). Thus \autoref{lem:extra-expansion-vertex}
applies to graph \(B(v)\) together with \(P,Q,t=1\). Since a
\(1\)-expansion from \(X\) into \(Y\) contains a matching between
\(X\) and \(Y\) which saturates \(X\) we get

\begin{corollary}\label{cor:large-degree-no-matching-expansion}
  Let \((G;T;k)\) be an instance which is reduced with respect to
  \autoref{rr:max-mat}. Suppose there is a vertex
  \(v\in{}K\) on the clique side of \(G\) which has more than
  \(k\) neighbours in the independent side \(I\). Then we can
  find, in polynomial time, non-empty vertex sets
  \(X\subseteq{}N_{2}(v)\subseteq{}K,Y\subseteq{}N_{1}(v)\subseteq{}I\)
  and a vertex \(w\in{}Y\) such that (i) there is a matching \(M\)
  between \(X\) and \(Y\) which saturates every vertex of \(X\)
  and does \emph{not} saturate \(w\), and (ii)
  \(N_{G}(Y)=X\cup\{v\}\).
\end{corollary}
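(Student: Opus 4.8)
The plan is to derive the corollary as an essentially immediate consequence of \autoref{lem:extra-expansion-vertex}, applied to the bipartite graph $B(v)$ with the bipartition $(P,Q) = (N_2(v), N_1(v))$ and expansion parameter $t=1$, exactly as set up in the paragraph preceding the statement. So the first step is just to confirm that the hypotheses of \autoref{lem:extra-expansion-vertex} hold: since the instance is reduced with respect to \autoref{rr:max-mat}, the maximum matching size $\ell$ of $B(v)$ is at most $k$; by assumption $|Q| = |N_1(v)| > k \geq \ell = \ell t$; and every vertex of $Q = N_1(v) \subseteq I$ has degree at least two in $G$ by part~(1) of \autoref{lem:neat-split-instance}, all of its neighbours lying in $K$ with one of them being $v$, so it has at least one neighbour in $N_2(v)$ and hence at least one incident edge in $B(v)$---i.e.\ $Q$ has no isolated vertex in $B(v)$.

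Next I would invoke \autoref{lem:extra-expansion-vertex} to obtain, in polynomial time, nonempty sets $X \subseteq P = N_2(v)$ and $Y \subseteq Q = N_1(v)$, a vertex $w \in Y$, and a $1$-expansion $M$ of $X$ into $Y$ that does not saturate $w$, with the additional guarantee that no vertex of $Y$ has a $B(v)$-neighbour outside $X$. Unpacking the definition of a $1$-expansion, $M$ assigns to each vertex of $X$ exactly one incident edge and covers exactly $|X|$ vertices of $Y$; hence $M$ is a matching between $X$ and $Y$ saturating every vertex of $X$, and it leaves $w$ unsaturated by construction. This is exactly property~(i).

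The remaining step---and the one requiring a little care---is to translate the ``no $B(v)$-neighbour outside $X$'' property of the abstract expansion lemma back into the statement $N_G(Y) = X \cup \{v\}$ about $G$, keeping the two graphs $B(v)$ and $G$ straight. For the inclusion $N_G(Y) \subseteq X \cup \{v\}$: any $y \in Y \subseteq N_1(v) \subseteq I$ has all its $G$-neighbours in the clique $K$; one of them is $v$ (since $y \in N_1(v)$), and $N_G(y)\setminus\{v\} \subseteq N(N_1(v))\setminus\{v\} = N_2(v)$, which is precisely the set of $B(v)$-neighbours of $y$ (no edge incident to $y$ is deleted when passing from $G$ to $H$ to $B(v)$, since $y \notin K$), and these all lie in $X$ by the expansion guarantee. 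For the reverse inclusion, every $y \in Y$ is adjacent to $v$, and each $x \in X$ is matched by $M$ to some vertex of $Y$; hence both $v$ and all of $X$ lie in $N_G(Y)$. Combining the two inclusions gives $N_G(Y) = X \cup \{v\}$, which is property~(ii).

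I do not anticipate any genuine obstacle: the combinatorial heart of the statement is entirely absorbed by \autoref{lem:extra-expansion-vertex}, and what is left is routine hypothesis-checking together with the bookkeeping in the last step, where the only subtlety is to observe that, because the vertices of $Y$ sit on the independent side, their neighbourhoods in $G$ and in $B(v)$ differ only by the vertex $v$, so the lemma's control over $B(v)$-neighbourhoods upgrades to an \emph{equality} (not merely one inclusion) for $N_G(Y)$.
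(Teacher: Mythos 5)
Your proposal is correct and follows essentially the same route as the paper: apply \autoref{lem:extra-expansion-vertex} to $B(v)$ with $P=N_2(v)$, $Q=N_1(v)$, $t=1$, verifying the premises exactly as the paper does (maximum matching of $B(v)$ at most $k$ since the instance is reduced with respect to \autoref{rr:max-mat}, $|Q|>k\geq\ell t$, and no isolated vertices in $Q$ via part~(1) of \autoref{lem:neat-split-instance}). Your explicit translation of the $B(v)$-neighbourhood guarantee into the equality $N_G(Y)=X\cup\{v\}$ is a detail the paper leaves implicit, and it is carried out correctly.
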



\begin{lemma}\label{lem:v-or-all-of-X}
  Let \((G;T;k)\) be an instance which is reduced with respect to
  \autoref{rr:max-mat}, and let \(v\in{}K\) be a vertex
  on the clique side which has more than \(k\) neighbours in the
  independent side \(I\). Let
  \(X\subseteq{}K,w\in{}Y\subseteq{}I,M\subseteq{}E(G[X\cup{}Y])\)
  be as guaranteed to exist by
  \autoref{cor:large-degree-no-matching-expansion}. Let
  \(G'=G-\{vw\}\), and let \(S'\) be a \(T\)-THS of \(G'\) of size
  at most \(k\). If \(v\notin{}S'\) then
  \((S'\setminus{}Y)\cup{}X\) is a \(T\)-THS of \(G'\) of size at
  most \(k\).
\end{lemma}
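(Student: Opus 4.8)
The plan is to verify the two assertions of the statement — that $S'' := (S'\setminus Y)\cup X$ is a $T$-THS of $G'$, and that $|S''|\le k$ — essentially independently. I will use the full list of properties handed to us by \autoref{cor:large-degree-no-matching-expansion}, not only those recalled in the lemma statement: write the saturating matching as $M=\{x_1y_1,\dots,x_ry_r\}$ where $X=\{x_1,\dots,x_r\}$, the $y_i\in Y$ are pairwise distinct, and no $y_i$ equals $w$ (since $M$ does not saturate $w$); and keep in mind that $N_G(Y)=X\cup\{v\}$. I will also use that $T=I$ by \autoref{obs:T-equals-I}, that $v\in K$ so $v\notin X\cup Y$, and that the only change from $G$ to $G'$ is the removal of the single edge $vw$.

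For the covering property I argue by contradiction. Suppose $G'-S''$ contains a $T$-triangle $\Delta$. Since $X\subseteq S''$ and $S'\setminus Y\subseteq S''$, the triangle $\Delta$ avoids both $X$ and $S'\setminus Y$. But $S'$ is a $T$-THS of $G'$, so $\Delta$ must meet $S'$; as it avoids $S'\setminus Y$, it must meet $S'\cap Y$, which is contained in $Y\subseteq I$. Pick $y\in\Delta\cap S'\cap Y$. The two other vertices of $\Delta$ are neighbours of $y$ in $G'$, hence neighbours of $y$ in $G$, hence they lie in $N_G(Y)=X\cup\{v\}$ by part (ii) of \autoref{cor:large-degree-no-matching-expansion}. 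Since $\Delta$ avoids $X$, both of them must equal $v$ — impossible for a triangle. So no such $\Delta$ exists, and $S''$ is a $T$-THS of $G'$.

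For the size bound I first observe that, for each $i$, the vertex set $\{v,x_i,y_i\}$ spans a triangle in $G'$: the edge $vx_i$ lies in $G$ because $K$ is a clique and survives in $G'$ because $x_i\neq w$; the edge $vy_i$ lies in $G$ because $y_i\in N_1(v)$ and survives because $y_i\neq w$; and the edge $x_iy_i\in M$ lies in $G[X\cup Y]$, which does not contain $vw$ as $v\notin X\cup Y$. Moreover $y_i\in I=T$, so $\{v,x_i,y_i\}$ is a $T$-triangle of $G'$. As $S'$ hits every $T$-triangle of $G'$ while $v\notin S'$, each pair $\{x_i,y_i\}$ meets $S'$, so $x_i\notin S'$ forces $y_i\in S'$. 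Since the $y_i$ are pairwise distinct, $i\mapsto y_i$ injects $\{i:x_i\notin S'\}$ into $S'\cap Y$, giving $|X\setminus S'|\le|S'\cap Y|$. Finally, because $X\cap Y=\emptyset$ we have the disjoint decomposition $S''=(S'\setminus Y)\cup X=(S'\setminus Y)\cup(X\setminus S')$, so $|S''|=|S'\setminus Y|+|X\setminus S'|\le|S'\setminus Y|+|S'\cap Y|=|S'|\le k$.

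I expect the only genuinely delicate points to be (a) confirming that every triangle $\{v,x_i,y_i\}$ really does survive the deletion of the edge $vw$ — this is precisely where the extra property that $w$ is \emph{not} saturated by $M$ (inherited from \autoref{lem:extra-expansion-vertex}) is used, so that all the $y_i$ differ from $w$ — and (b) counting against $X\setminus S'$ rather than $X$ itself: $|X|$ may be far larger than $|S'\cap Y|$, and the bound only balances because an $x_i$ that is already in $S'$ contributes nothing new when we swap $X$ in. Everything else reduces to bookkeeping with $X\cap Y=\emptyset$ and the neighbourhood identity $N_G(Y)=X\cup\{v\}$.
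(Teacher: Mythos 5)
Your proof is correct and follows essentially the same route as the paper's: both use the matching $M$ to produce $|X|$ pairwise vertex-disjoint (apart from $v$) $T$-triangles $\{v,x_i,y_i\}$ in $G'$ to charge each newly added $x_i$ against a distinct $y_i\in S'\cap Y$, and both use $N_G(Y)\subseteq X\cup\{v\}$ to show any surviving $T$-triangle through $Y$ would need two copies of $v$. Your injective-map phrasing of the size bound is just a cleaner packaging of the paper's explicit swap procedure, so there is nothing substantive to add.
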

\begin{proof}
  Let \(M=\{x_{1}y_{1},\dotsc,x_{|X|}y_{|X|}\}\) be a matching in
  \(G\) between \(X\) and \(Y\) which saturates all of \(X\) and
  does not saturate \(w\). Since \(vw\notin{}M\) we get that
  matching \(M\) is present in graph \(G'\) as well.  Thus
  \(\{\{v,x_{1},y_{1}\},\dotsc,\{v,x_{|X|},y_{|X|}\}\}\) is a set
  of \(|X|\)-many \(T\)-triangles in \(G'\) which pairwise
  intersect exactly in \(\{v\}\). Let \(S'\) be a \(T\)-THS of
  \(G'\) of size at most \(k\) which does \emph{not} contain
  \(v\). Then \(S'\) contains at least one vertex from each of the
  sets \(\{x_{i},y_{i}\}\;;\;1\leq{}i\leq{}|X|\).  Let
  \(\hat{S}=(S'\setminus{}Y)\cup{}X\). Then we can get \(\hat{S}\)
  from \(S'\) as follows.

  
  \begin{itemize}
  \item For each edge \(x_{i}y_{i}\in{}M\),
    \begin{itemize}
    \item  if \(S'\cap{}\{x_{i},y_{i}\}=\{y_{i}\}\) then delete
     \(y_{i}\) from \(S'\) and add \(x_{i}\), and,
   \item if \(S'\cap{}\{x_{i},y_{i}\}=\{x_{i},y_{i}\}\) then
     delete \(y_{i}\) from \(S'\) (and don't add anything).
    \end{itemize}
  \item Delete all of \(Y\setminus\{y_{1},\dotsc,y_{|X|}\}\) from
    \(S'\). 
  \end{itemize}
  Thus to get \(\hat{S}\) from \(S'\) we add at most as many
  vertices as we delete, and so it is the case that
  \(|\hat{S}|\leq{}|S'|\leq{}k\).

  Consider the induced subgraphs \(H'=G'-S'\) and
  \(\hat{H}=G'-\hat{S}\) of \(G'\). Of these \(H'\) has no
  \(T\)-triangles. Every vertex of \(\hat{H}\) which is \emph{not}
  present in \(H'\) belongs to the set \(Y\). So if \(\hat{H}\)
  contains a \(T\)-triangle then each such \(T\)-triangle must
  contain a vertex from \(Y\). Now from
  \autoref{cor:large-degree-no-matching-expansion} we get
  that \(N(Y)\subseteq{}X\cup\{v\}\) and by definition we have
  that no vertex in \(X\) is present in \(\hat{H}\). Thus every
  vertex in \(Y\) has degree at most one in \(\hat{H}\). So no
  vertex in \(Y\) is in any \(T\)-triangle in \(\hat{H}\). Hence
  there are no \(T\)-triangles in \(\hat{H}\). Thus
  \(\hat{S}=(S'\setminus{}Y)\cup{}X\) is a \(T\)-THS of \(G'\) of
  size at most \(k\).

\end{proof}

\begin{reductionrule}\label{rr:degK}
  If there is a vertex \(v\) on the clique side \(K\) of graph
  \(G\) such that \(v\) has more than \(k\) neighbours in the
  independent side \(I\), then find a vertex \(w\in{}I\) as
  described by \autoref{cor:large-degree-no-matching-expansion}
  and delete the edge \(vw\) to get graph \(G'\).  Set
  \(T'\gets{}T,k'\gets{}k\). The reduced instance is
  \((G';T';k')\).
\end{reductionrule}

\begin{lemma}\label{lem:degK-rr-safe}
  \autoref{rr:degK} is safe.
\end{lemma}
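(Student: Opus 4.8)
The plan is to prove the two directions of the equivalence separately, using the fact that Reduction Rule~\ref{rr:degK} fires only on an instance already reduced with respect to all earlier rules; in particular \autoref{obs:clique-side-at-least-3} gives $|K|\geq 3$, \autoref{obs:T-equals-I} applies, and both \autoref{cor:large-degree-no-matching-expansion} and \autoref{lem:v-or-all-of-X} are available. First I would check that $(G';T';k')$ is a legitimate \SFVSS instance: the deleted edge $vw$ has exactly one end (namely $v$) on the clique side since $w\in I$, so by \autoref{lem:bridge-deletion} (using $|K|\geq 3$) the graph $G'=G-\{vw\}$ is split, and $T'=T\subseteq V(G')$.

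For the forward direction, assuming $(G;T;k)$ is a \YES instance, deleting an edge cannot create a cycle, so part~(2) of \autoref{obs:deleting-vertices-bridges-fwd-safe} — applied to the edge $vw$, which has exactly one end on the clique side — yields directly that $(G';T';k')$ is a \YES instance.

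For the reverse direction, assume $(G';T';k')$ is a \YES instance and fix (via \autoref{cor:hit-only-triangles}) a $T$-THS $S'$ of $G'$ with $|S'|\leq k$. I would split on whether $v\in S'$. If $v\in S'$, then deleting $v$ from $G$ already destroys the edge $vw$, so $G-S'=G'-S'$, whence $S'$ is a $T$-THS of $G$ of size at most $k$ and $(G;T;k)$ is a \YES instance. If $v\notin S'$, I invoke \autoref{lem:v-or-all-of-X} to obtain $\hat S=(S'\setminus Y)\cup X$, a $T$-THS of $G'$ of size at most $k$, where $X\subseteq N_{2}(v)\subseteq K$ and $Y\subseteq N_{1}(v)\subseteq I$ are the sets from \autoref{cor:large-degree-no-matching-expansion}. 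One checks $v\notin\hat S$ (as $v\notin X$ and $v\notin S'$) and $w\notin\hat S$ (as $w\in Y$ is removed from $S'$ and $w\notin X$). It then remains to see that $\hat S$ is also a $T$-THS of $G$. Since $G$ and $G'$ differ only in the edge $vw$ and $v,w\notin\hat S$, any triangle of $G-\hat S$ not already present in $G'-\hat S$ must use the edge $vw$, hence be of the form $\{v,w,z\}$ with $z$ a common neighbour of $v$ and $w$ in $G-\hat S$; but \autoref{cor:large-degree-no-matching-expansion} gives $N_{G}(w)\subseteq N_{G}(Y)=X\cup\{v\}$, forcing $z\in X\subseteq\hat S$, a contradiction. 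So $G-\hat S$ has no triangle through $vw$, while every other triangle of $G-\hat S$ lies in the $T$-triangle-free graph $G'-\hat S$; thus $G-\hat S$ has no $T$-triangle and $(G;T;k)$ is a \YES instance.

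The main obstacle I expect is the second case of the reverse direction: I must verify that both $v$ and $w$ survive in $G-\hat S$ (so that re-introducing the edge $vw$ genuinely could matter) and then exploit the expansion-lemma structure $N_{G}(Y)=X\cup\{v\}$ together with $X\subseteq\hat S$ to rule out a new $T$-triangle through $vw$. The remaining work — the membership bookkeeping in $\hat S$ (using $X\subseteq K$, $Y\subseteq I$, $v\notin N_{2}(v)$) and passing between ``$T$-THS'' and ``$T$-FVS'' via \autoref{lem:triangles} / \autoref{cor:hit-only-triangles} — is routine.
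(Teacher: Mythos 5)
Your proposal is correct and follows essentially the same route as the paper: the forward direction via \autoref{obs:deleting-vertices-bridges-fwd-safe}, and the reverse direction by splitting on whether \(v\in S'\), invoking \autoref{lem:v-or-all-of-X} when \(v\notin S'\), and then using \(N_{G}(Y)=X\cup\{v\}\) together with \(X\subseteq\hat S\) to rule out any new \(T\)-triangle through the restored edge \(vw\). The only cosmetic difference is that the paper phrases the last step as ``\(w\) has exactly one neighbour, namely \(v\), in \(G-\hat S\), so it lies in no triangle,'' whereas you argue the hypothetical triangle \(\{v,w,z\}\) forces \(z\in X\subseteq\hat S\) — these are the same observation.
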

\begin{proof}
  If \((G;T;k)\) is a \YES instance then we get from
  \autoref{obs:deleting-vertices-bridges-fwd-safe} that
  \((G'=G-\{vw\};T';k')\) is a \YES instance.

  Now suppose \((G';T';k')\) is a \YES instance. Let \(S'\) be a
  \(T\)-THS of \(G'\) of size at most \(k\). If \(v\in{}S'\) then
  we have \(G-S'=G'-S'\) and in this case \(S'\) is a \(T\)-THS of
  \(G\) as well, of size at most \(k\). If \(v\notin{}S'\) then
  from \autoref{lem:v-or-all-of-X} we get that
  \(\hat{S}=(S'\setminus{}Y)\cup{}X\) is a \(T\)-THS of \(G'\) of
  size at most \(k\). Thus the graph \(H'=G'-\hat{S}\) has no
  \(T\)-triangles. The only difference between graphs \(H'\) and
  \(H=G-\hat{S}\) is that the latter graph has the extra edge
  \(vw\). So if \(H\) contains a \(T\)-triangle then each such
  \(T\)-triangle must contain both the vertices \(\{v,w\}\).

  From \autoref{cor:large-degree-no-matching-expansion} we
  know that \(N(w\in{}Y)\subseteq{}X\cup\{v\}\). Thus vertex \(w\)
  has no neighbours in the graph \(H'=(G-\{vw\})-\hat{S}\), and
  has exactly one neighbour---namely, vertex \(v\)---in the graph
  \(H=G-\hat{S}\). So \(w\) is not part of any triangle in graph
  \(H\). Thus \(\hat{S}=(S'\setminus{}Y)\cup{}X\) is a \(T\)-THS
  of graph \(G\) of size at most \(k\).
\end{proof}


We now show how to bound the number of vertices on the clique side
\(K\) of an instance \((G;T;k)\) which is reduced with respect to
\autoref{rr:degK}. 

\paragraph*{Bounding the Size of the Clique Side}

We partition the clique side $K$ into three parts and bound the
size of each part separately.  To do this we first find a
$3$-approximate solution $\tilde{S}$ to $(G; T; k)$. For this we
initialize \(\tilde{S}\gets{}\emptyset\) and iterate as follows: If there
is a vertex \(v\) in the independent side \(I\) such that \(v\) is
part of a triangle \(\{v,x,y\}\) in the graph \(G-\tilde{S}\)---note that
in this case \(\{x,y\}\subseteq{}K\)---then we set
\(\tilde{S}\gets{}\tilde{S}\cup\{v,x,y\}\). We repeat this till there is no such
vertex \(v\in{}I\) or till \(|\tilde{S}|\) becomes larger than \(3k\),
whichever happens first.

\begin{reductionrule}\label{rr:large-approx-solution-no}
  Let \((G;T;k)\) be an instance which is reduced with respect to
  \autoref{rr:degK} and let \(\tilde{S}\) be the set
  constructed as described above. If \(|\tilde{S}|>3k\) then return
  \(I_{NO}\).
\end{reductionrule}
\begin{lemma}\label{lemma:proof-rr-large-approx-solution-no}
 \autoref{rr:large-approx-solution-no} is safe.
\end{lemma}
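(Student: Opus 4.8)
The plan is to argue safeness by showing that whenever the rule fires---that is, whenever $|\tilde{S}|>3k$---the instance $(G;T;k)$ is a \NO instance, and hence equivalent to the trivial \NO instance $I_{\NO}$ that the rule returns; when $|\tilde{S}|\le 3k$ the rule does not apply and there is nothing to prove. The first step is to pin down the structure produced by the greedy construction of $\tilde{S}$: at each iteration we enlarge $\tilde{S}$ by the vertex set $\{v,x,y\}$ of a triangle of $G-\tilde{S}$, with $v\in I$ and $x,y\in K$. Since this triangle lives in $G-\tilde{S}$, its three vertices avoid every vertex added in all earlier iterations, so the triangles picked over the course of the construction are pairwise vertex-disjoint, and each one contributes exactly three fresh vertices to $\tilde{S}$ (the three vertices of a triangle are distinct).

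The next step is to observe that each of these triangles is in fact a $T$-triangle. Because $(G;T;k)$ is reduced with respect to \autoref{rr:degK}, it is reduced with respect to all the preceding reduction rules as well, so \autoref{obs:T-equals-I} gives $T=I$; in particular the vertex $v\in I$ chosen in each iteration is a terminal, so $\{v,x,y\}$ is a $T$-triangle. Now I would count: since $|\tilde{S}|$ starts at $0$ and grows by exactly $3$ per iteration, the condition $|\tilde{S}|>3k$ means the construction ran for at least $k+1$ iterations, so $G$ contains at least $k+1$ pairwise vertex-disjoint $T$-triangles. By \autoref{cor:hit-only-triangles}, any solution of $(G;T;k)$ is a $T$-THS of $G$ and must therefore contain a vertex of each of these $k+1$ triangles; as the triangles are pairwise vertex-disjoint, any such set has size at least $k+1>k$. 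Hence $(G;T;k)$ has no solution of size at most $k$, so it is a \NO instance, equivalent to $I_{\NO}$, and the rule is safe.

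There is no real obstacle in this argument; the only points requiring a little care are the disjointness bookkeeping for the greedily chosen triangles (which is immediate because each is found in $G-\tilde{S}$, and each has three distinct vertices) and the appeal to \autoref{obs:T-equals-I} to guarantee that these triangles actually pass through $T$. Once these are in hand, the standard ``a packing of $k+1$ disjoint obstructions forces solution size $>k$'' argument closes the proof.
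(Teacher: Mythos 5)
Your proof is correct and follows essentially the same route as the paper's: the greedily chosen triangles are pairwise vertex-disjoint $T$-triangles (using $T=I$), so $|\tilde{S}|>3k$ yields a packing of more than $k$ disjoint obstructions and forces a \NO{} answer. Your citation of \autoref{obs:T-equals-I} for $T=I$ is in fact the more precise reference than the one the paper uses; otherwise the two arguments coincide.
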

\begin{proof}
  By \autoref{lem:neat-split-instance} we have that the
  terminal set \(T\) of graph \(G\) is exactly its independent
  side \(I\). Hence each triangle whose vertex set is added to
  \(\tilde{S}\) by the construction is a \(T\)-triangle, and these
  vertex sets are pairwise disjoint. If \(|\tilde{S}|>3k\) then
  graph \(G\) contains more than \(k\) pairwise vertex-disjoint
  \(T\)-triangle and therefore is a \NO instance, as is
  \(I_{NO}\).
\end{proof}

At this point we have that the cardinality of the approximate
solution $\tilde{S}$ is at most $3k$. We now partition the sets
\(K,I\) into three parts each and bound each part separately (See
\autoref{fig:bound-K1}.):

\begin{itemize}
\item $K_{\tilde{S}}$ is the set of clique-side vertices included in $\tilde{S}$:
  $K_{\tilde{S}}=K\cap{}\tilde{S}$.
\item $I_{\tilde{S}}$ is the set of independent-side vertices included in
  $\tilde{S}$: $I_{\tilde{S}}= I\cap{}\tilde{S}$.
\item $K_{0}$ is the set of clique-side vertices not in \(\tilde{S}\)
  whose neighbourhoods in the independent side $I$ are all
  contained in $I_{\tilde{S}}$:
  $K_{0}=\set{u\in(K\setminus{}K_{\tilde{S}})\;;\;N(u)\cap{}I\subseteq{}I_{\tilde{S}}}$;
\item $I_{0}$ is the set of independent-side vertices not in
  \(\tilde{S}\) whose neighbourhoods are all contained in
  $K_{\tilde{S}}$:
  $I_{0}=\set{v\in{}I\setminus{}I_{\tilde{S}}\;;\;N(v)\subseteq
    K_{\tilde{S}}}$
\item \(K_{1},I_{1}\) are the remaining vertices in each set:
  \(K_{1}=K\setminus{}(K_{\tilde{S}}\cup{}K_{0})\) and
  \(I_{1}=I\setminus{}(I_{\tilde{S}}\cup{}I_{0})\). 
\item $K_{1}$ is the set of clique-side vertices not in \(\tilde{S}\)
  which have at least one neighbour in $I$ outside of
  $I_{\tilde{S}}\cup{}I_{0}$. Equivalently, it is the set of clique-side
  vertices not in $K_{\tilde{S}}\cup{}K_{0}$:
  \(K_{1}=K\setminus{}(K_{\tilde{S}}\cup{}K_{0})\).
\item $I_{1}$ is the set of independent-side vertices which are
  not in
  $I_{\tilde{S}}{}\cup{}I_{0}$:\(I_{1}=I\setminus{}(I_{\tilde{S}}\cup{}I_{0})\).
  Since $\tilde{S}$ is a solution each vertex in $I_{1}$---being a
  terminal---can have exactly one neighbour in $K_{1}$.
\end{itemize}

We list some simple properties of this partition which we need
later in our proofs. 
\begin{observation}\label{obs:split-approx-partition-properties}
  $|K_{\tilde{S}}|\leq{}2k$ and $|I_{\tilde{S}}|\leq{}k$. Each vertex in \(K_{1}\)
  has (i) no neighbour in \(I_{0}\) and (ii) at least one
  neighbour in \(I_{1}\). Each vertex in \(I_{1}\) has exactly one
  neighbour in \(K_{1}\). The bipartite graph obtained from
  $G[K_{1}\cup{}I_{1}]$ by deleting all the edges in \(G[K_{1}]\)
  is a forest where each connected component is a star.
\end{observation}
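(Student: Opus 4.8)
The plan is to prove the four assertions separately; each is essentially a matter of unwinding the definitions of the six classes $K_{\tilde{S}},I_{\tilde{S}},K_{0},I_{0},K_{1},I_{1}$ of the partition, and the only point needing care is checking that $\tilde{S}$ is genuinely a $T$-triangle hitting set of $G$, which I establish first and then reuse throughout.

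For the cardinality bounds, I would use that the instance is reduced with respect to \autoref{rr:large-approx-solution-no}, so the construction of $\tilde{S}$ stopped because no triangle through a vertex of $I$ remained in $G-\tilde{S}$, and in particular $|\tilde{S}|\le 3k$. Each iteration adds the vertex set $\{v,x,y\}$ of such a triangle with $v\in I$ and (by the split structure, $I$ being independent) $x,y\in K$, and these three vertices are fresh since they come from $G-\tilde{S}$; hence after $j$ iterations $|\tilde{S}|=3j$ with $|K\cap\tilde{S}|=2j$ and $|I\cap\tilde{S}|=j$, and $|\tilde{S}|\le 3k$ forces $j\le k$, giving $|K_{\tilde{S}}|=2j\le 2k$ and $|I_{\tilde{S}}|=j\le k$. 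The same termination condition shows $G-\tilde{S}$ has no triangle meeting $I$; since $T=I$ by \autoref{obs:T-equals-I} and every $T$-triangle in a split graph has exactly one vertex in $I$ and two in the clique, this means $\tilde{S}$ is a $T$-triangle hitting set of $G$.

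Next I would handle the neighbourhood properties. For $u\in K_{1}=K\setminus(K_{\tilde{S}}\cup K_{0})$: if $v\in I_{0}$ then $N(v)\subseteq K_{\tilde{S}}$ by definition and $u\notin K_{\tilde{S}}$, so $u\notin N(v)$, which is (i); and $u\notin K_{0}$ means $N(u)\cap I\not\subseteq I_{\tilde{S}}$, so $u$ has a neighbour in $I\setminus I_{\tilde{S}}=I_{0}\cup I_{1}$, which by (i) must lie in $I_{1}$, proving (ii). For $v\in I_{1}=I\setminus(I_{\tilde{S}}\cup I_{0})$: since $v\notin I_{0}$, some neighbour $u$ of $v$ lies in $K\setminus K_{\tilde{S}}=K_{0}\cup K_{1}$, and $u\in K_{0}$ is impossible because it would force $v\in N(u)\cap I\subseteq I_{\tilde{S}}$; so $v$ has at least one neighbour in $K_{1}$. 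If it had two, say $x,y\in K_{1}$, then $\{v,x,y\}$ is a triangle (as $K$ is a clique), it is a $T$-triangle since $v\in I=T$, and it avoids $\tilde{S}$ since $\tilde{S}\cap K_{1}=\emptyset=\tilde{S}\cap I_{1}$ — contradicting that $\tilde{S}$ is a $T$-triangle hitting set. Hence each vertex of $I_{1}$ has exactly one neighbour in $K_{1}$.

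Finally, for the star-forest claim, let $H'$ be the stated bipartite graph: it has no edge inside $K_{1}$ (these were deleted) and none inside $I_{1}$ (since $I$ is independent), so every edge of $H'$ joins $K_{1}$ to $I_{1}$, and by the previous paragraph every $I_{1}$-vertex has degree exactly one in $H'$. Thus no path in $H'$ can run through an interior $I_{1}$-vertex, so each connected component is a single vertex of $K_{1}$ together with all of its $I_{1}$-neighbours, i.e.\ a star; hence $H'$ is a disjoint union of stars and in particular a forest. I expect the only non-mechanical ingredient to be the verification that $\tilde{S}$ is a $T$-triangle hitting set, since the arguments for the third and fourth assertions rest entirely on it; everything else is routine bookkeeping with the definitions of the partition classes.
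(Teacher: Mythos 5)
Your proof is correct and follows essentially the same route as the paper's: the cardinality bounds from the construction of $\tilde{S}$, the neighbourhood properties by unwinding the definitions of $K_0,I_0,K_1,I_1$, and the degree-one property of $I_1$-vertices via a $T$-triangle avoiding $\tilde{S}$. If anything you are slightly more complete than the paper, which takes for granted that $\tilde{S}$ is a $T$-THS and states only the ``at most one neighbour in $K_1$'' half of the third assertion explicitly; both points you verify cleanly.
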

\begin{proof}
  It follows directly from the construction that $|K_{\tilde{S}}|\leq{}2k$
  and $|I_{\tilde{S}}|\leq{}k$.

  Let \(v\) be a vertex in \(K_{1}\). Then
  \(v\notin{}K_{\tilde{S}}\) by construction.  If \(v\) has a neighbour
  \(w\in{}I_{0}\) then \(w\in{}I_{0}\) has a neighbour outside of
  \(K_{\tilde{S}}\), a contradiction.

  Since \(I\) is the set of terminals we
  get---\autoref{rr:only-T-neighbours-split}---that vertex
  \(v\) has at least one neighbour in the set \(I\). Since the
  vertices in \(I_{0}\) do not have neighbours outside the set
  \(K_{\tilde{S}}\) we get that \(v\notin{}K_{\tilde{S}}\) does not have a
  neighbour in the set \(I_{0}\).  So if \(v\) has no neighbour in
  \(I_{0}\) then its neighbourhood on the independent side is
  contained in the set \(I_{\tilde{S}}\), which implies that \(v\) is in
  \(K_{0}\) and not in \(K_{\tilde{S}}\), a contradiction.
  
  If a vertex \(v\in{}I_{1}\) has two neighbours \(x,y\) in the
  set \(K_{1}\) then---since \(v\in{}I\) is a terminal---the
  vertices \(\{v,x,y\}\) form a \(T\)-triangle which does not
  intersect the \(T\)-THS \(\tilde{S}=K_{\tilde{S}}\cup{}I_{\tilde{S}}\), a
  contradiction. So each vertex in \(I_{1}\) has exactly one
  neighbour in \(K_{1}\), which implies that the bipartite graph
  obtained from $G[K_{1}\cup{}I_{1}]$ by deleting all the edges in
  \(G[K_{1}]\) is a forest where each connected component is a
  star.
\end{proof}

\begin{figure}
\includegraphics[scale=0.3]{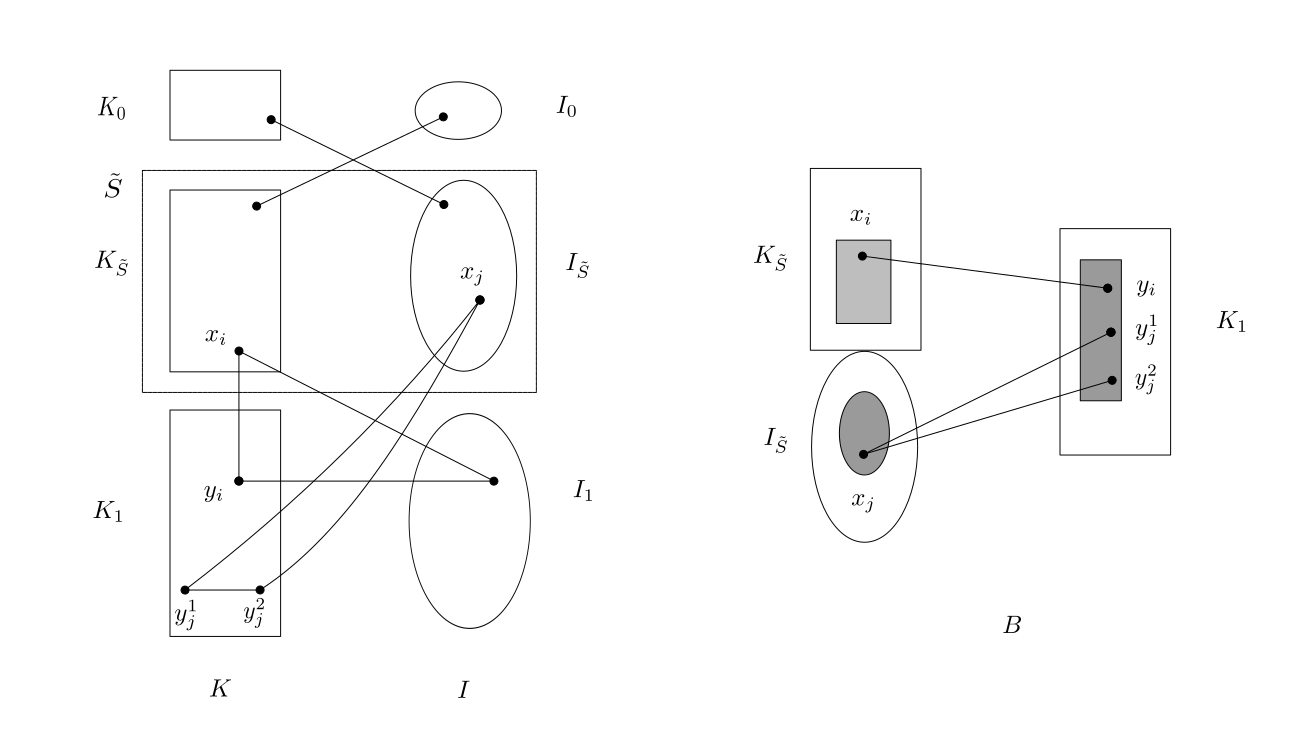}
\caption{The figure on the left shows the partition of $V(G)$ as
  described after
  \autoref{lemma:proof-rr-large-approx-solution-no}. On the right
  side, we have graph $B$ as described before
  \autoref{cor:Kone-expansion-bound}. The shaded regions in
  $\tilde{S}$ and $K_1$ represent sets $X, Y$, respectively, as
  defined in \autoref{cor:Kone-expansion-bound}.}
\label{fig:bound-K1}
\end{figure}

Let \(H\) be the bipartite graph obtained from
$G[I_{\tilde{S}}\cup K_0]$ by deleting all the edges in
\(G[K_{0}]\). Since---\autoref{obs:T-equals-I}---every
vertex in the set \(K_{0}\) has at least one neighbour in the set
\(I\) and since \((N(K_{0})\cap{}I)\subseteq{}I_{\tilde{S}}\) by
construction, we get that there are no isolated vertices in graph
\(H\). So if $|K_0| \geq{} 2|I_{\tilde{S}}|$ then
\autoref{lem:expansion-lemma} applies to graph \(H\) with
\(P\gets{}I_{\tilde{S}},Q\gets{}K_{0},t\gets{}2\) and we get

\begin{corollary}\label{cor:Kzero-expansion-bound}
  Let \((G;T;k)\) be an instance which is reduced with respect to
  \autoref{rr:large-approx-solution-no}, and let the
  sets \(K_{\tilde{S}},K_{0},K_{1},I_{\tilde{S}},I_{0},I_{1}\) be as described
  above. If \(|K_0|\geq{}2|I_S|\) then we can find, in polynomial
  time, non-empty vertex sets
  \(X\subseteq{}I_{\tilde{S}}\subseteq{}I,Y\subseteq{}K_{0}\subseteq{}K\)
  such that (i) \(X\) has a \(2\)-expansion \(M\) into \(Y\), and
  (ii) \(N_{G}(Y)=X\).
\end{corollary}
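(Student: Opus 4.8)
The plan is to realize the run-in construction just before the statement as a single application of the plain Expansion Lemma, \autoref{lem:expansion-lemma}. First I would define the auxiliary bipartite graph $H$ on vertex set $I_{\tilde{S}}\cup K_0$: start from $G[I_{\tilde{S}}\cup K_0]$ and delete every edge with both ends in $K_0$. Since $I_{\tilde{S}}$ is contained in the independent side $I$ of $G$ there are no edges inside $I_{\tilde{S}}$, so $H$ is a genuine bipartite graph with bipartition $(I_{\tilde{S}},K_0)$, and, crucially, every edge of $G$ between $K_0$ and $I_{\tilde{S}}$ is retained in $H$.

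Next I would check the two premises of \autoref{lem:expansion-lemma} for $H$ with $P\gets I_{\tilde{S}}$, $Q\gets K_0$, $t\gets 2$. The size condition $|Q|=|K_0|\geq 2|I_{\tilde{S}}|=t|P|$ is exactly the hypothesis of the corollary. For the ``no isolated vertex in $Q$'' condition: by \autoref{obs:T-equals-I} every vertex of $K$, in particular every $u\in K_0$, has a neighbour in $I$; by the defining property of $K_0$ that neighbour lies in $I_{\tilde{S}}$, and it therefore remains a neighbour of $u$ in $H$, so $u$ is not isolated in $H$. (The sets involved are non-degenerate: since the instance is reduced with respect to \autoref{rr:split-kernel-yes-no}, $G$ contains a $T$-triangle, and as $T=I$ (\autoref{obs:T-equals-I}) the first iteration of the greedy construction of $\tilde{S}$ adds a vertex of $I$ to $I_{\tilde{S}}$, whence $I_{\tilde{S}}\neq\emptyset$ and consequently $K_0\neq\emptyset$.) \autoref{lem:expansion-lemma} now returns, in polynomial time, nonempty sets $X\subseteq I_{\tilde{S}}$ and $Y\subseteq K_0$ such that $X$ has a $2$-expansion $M$ into $Y$ and no vertex of $Y$ has a neighbour in $H$ outside $X$; this is property (i) verbatim, and it remains to read the second property back in $G$.

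Finally I would translate ``no vertex of $Y$ has an $H$-neighbour outside $X$'' into a statement about $N_G$. Because $H$ retains precisely the $G$-edges from $K_0$ to $I_{\tilde{S}}$, and because every $I$-neighbour of a vertex of $Y\subseteq K_0$ already lies in $I_{\tilde{S}}$, the $H$-neighbourhood of $Y$ equals $N_G(Y)\cap I$; hence the lemma gives $N_G(Y)\cap I\subseteq X$. Conversely, a $2$-expansion of $X$ into $Y$ saturates every vertex of $X$, so each vertex of $X$ has a neighbour in $Y$ and thus $X\subseteq N_G(Y)\cap I$; together these give $N_G(Y)\cap I=X$, which is the intended content of item (ii) (one has to keep in mind that $Y$ is a subset of the clique $K$, so $N_G(Y)$ literally also contains $K\setminus Y$; it is only the independent-side neighbourhood that is controlled). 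The only point requiring any care is this last identification between $H$-neighbourhoods and independent-side $G$-neighbourhoods, together with noting that the expansion forces the reverse containment $X\subseteq N_G(Y)\cap I$; the rest is a routine check of the Expansion Lemma's hypotheses, so I do not anticipate a genuine obstacle.
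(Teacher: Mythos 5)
Your proposal is correct and follows essentially the same route as the paper: the paper likewise forms the bipartite graph $H$ by deleting the edges inside $K_0$, verifies that no vertex of $Q=K_0$ is isolated using \autoref{obs:T-equals-I} together with the defining property $N(K_0)\cap I\subseteq I_{\tilde{S}}$, and invokes \autoref{lem:expansion-lemma} with $P\gets I_{\tilde{S}}$, $Q\gets K_0$, $t\gets 2$. Your additional remarks---that item (ii) should be read as $N_G(Y)\cap I=X$ (equivalently $N_H(Y)=X$) since $Y$ lies inside the clique, and that the reverse inclusion $X\subseteq N(Y)$ follows from the $2$-expansion saturating $X$---are sensible clarifications of a point the paper leaves implicit.
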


\begin{lemma}\label{lem:X-and-none-of-Y}
  Let \((G;T;k)\) be an instance which is reduced with respect to
  \autoref{rr:large-approx-solution-no}, and let the
  sets \(K_{\tilde{S}},K_{0},K_{1},I_{\tilde{S}},I_{0},I_{1}\) be
  as described above. Suppose \(|K_0|\geq{}2|I_{\tilde{S}}|\), and
  let
  \(X\subseteq{}I_{\tilde{S}}\subseteq{}I,Y\subseteq{}K_{0}\subseteq{}K,M\subseteq{}E(G[X\cup{}Y])\)
  be as guaranteed to exist by
  \autoref{cor:Kzero-expansion-bound}. If \(S\) is a
  \(T\)-THS of graph \(G\) of size at most \(k\) then
  \((S\setminus{}Y)\cup{}X\) is also a \(T\)-THS of \(G\) of size
  at most \(k\).
\end{lemma}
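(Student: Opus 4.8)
The goal is to show that if $S$ is a $T$-THS of $G$ of size at most $k$, then $\hat{S} = (S \setminus Y) \cup X$ is also a $T$-THS of size at most $k$. This mirrors the structure of the proof of \autoref{lem:v-or-all-of-X}, but with the roles reversed: here $Y \subseteq K_0$ sits on the clique side and $X \subseteq I_{\tilde{S}}$ on the independent side, and we have a $2$-expansion rather than a plain matching.

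First I would bound the size. Write the $2$-expansion as $M$, so every vertex $x \in X$ is matched to exactly two vertices of $Y$, and $|Y| = 2|X|$ with $M$ saturating all of $Y$. For each $x \in X$, among its two $M$-partners in $Y$ at least one, say $y$, is such that we can argue $S$ hits the corresponding configuration; more carefully, the key point is that for each $x \in X$ the two triangles $\{x, y, y'\}$ where $\{y,y'\}$ are the $M$-partners of $x$ --- wait, $y$ and $y'$ are both on the clique side, so $\{x,y,y'\}$ is indeed a triangle (the clique edge $yy'$ exists), and it is a $T$-triangle since $x \in I = T$. These $|X|$ triangles are pairwise vertex-disjoint (the $M$-partnerships partition $Y$, and the $x$'s are distinct), so $S$ must contain at least one vertex from each $\{x, y, y'\}$; that is, $S$ hits each of the $|X|$ disjoint triples. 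Hence $|S \cap (X \cup Y)| \geq |X|$. Forming $\hat{S}$ from $S$ by deleting all of $Y$ (which removes $|S \cap Y|$ vertices) and adding all of $X$ (which adds $|X \setminus S|$ vertices) changes the size by $|X \setminus S| - |S \cap Y| \leq |X \setminus S| - (|X| - |S \cap X|) = |S \cap X| - |S \cap X| = 0$ after accounting carefully --- more simply, I would argue directly along the lines of the bullet-point bookkeeping in \autoref{lem:v-or-all-of-X}: process each $x \in X$ together with its pair in $Y$, and observe we add at most as many vertices as we delete, so $|\hat{S}| \leq |S| \leq k$.

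Next I would verify that $\hat{S}$ is a $T$-THS. Consider $\hat{H} = G - \hat{S}$. Every vertex present in $\hat{H}$ but absent from $H' := G - S$ must lie in $Y$ (those are the only vertices we removed from $S$). So any $T$-triangle in $\hat{H}$ that is not already in $H'$ --- and $H'$ has none since $S$ is a $T$-THS --- must contain a vertex of $Y$. But by \autoref{cor:Kzero-expansion-bound}(ii) we have $N_G(Y) = X$, and by construction $X \subseteq \hat{S}$, so no vertex of $X$ survives in $\hat{H}$. Therefore every vertex of $Y$ has all its neighbours deleted in $\hat{H}$, i.e.\ has degree zero in $\hat{H}$, and hence lies in no triangle of $\hat{H}$. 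Thus $\hat{H}$ contains no $T$-triangle, so $\hat{S}$ is a $T$-THS of $G$ of size at most $k$.

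The main obstacle I anticipate is the size-accounting step: I need the inequality $|S \cap (X \cup Y)| \geq |X|$ to come out exactly right so that swapping all of $Y$ out and all of $X$ in does not increase the size, and this hinges on the $|X|$ triangles $\{x, y, y'\}$ (one per $x \in X$, using both $M$-partners) being genuinely pairwise disjoint --- which they are, since the $2$-expansion assigns disjoint pairs of $Y$-vertices to distinct $X$-vertices. A secondary point to be careful about is that $\{x, y, y'\}$ really is a triangle: $y, y' \in Y \subseteq K_0 \subseteq K$ are adjacent because $K$ is a clique, and both are adjacent to $x$ by definition of the expansion edges in $M$. Once these are pinned down, the rest is the same clean argument as in \autoref{lem:v-or-all-of-X}.
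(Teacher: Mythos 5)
Your proposal is correct and follows essentially the same route as the paper's proof: extract the $|X|$ pairwise vertex-disjoint $T$-triangles $\{x,y_x^1,y_x^2\}$ from the $2$-expansion, use them for the size bookkeeping, and then observe that $N_G(Y)=X\subseteq\hat{S}$ forces every vertex of $Y$ to have degree zero in $G-\hat{S}$. The only nitpick is that a $2$-expansion need not saturate all of $Y$ (so $|Y|\geq 2|X|$ rather than $|Y|=2|X|$), but this is harmless since unsaturated vertices of $Y$ are simply deleted in the final bookkeeping step.
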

\begin{proof}
  Let \(X=\{x_{1},\dotsc,x_{|X|}\}\). For each vertex
  \(x_{i}\in{}X\) let \(x_{i}y_{i}^{1},x_{i}y_{i}^{2}\) be the two
  edges in \(M\) which are incident with \(x_{i}\). Then the set
  \(\{y_{1}^{1},y_{1}^{2},\dotsc,y_{|X|}^{1},y_{|X|}^{2}\}\subseteq{}Y\)
  has size exactly \(2|X|\), and the \(|X|\)-many sets
  \(\{\{x_{1},y_{1}^{1},y_{1}^{2}\},\dotsc,\{x_{|X|},y_{|X|}^{1},y_{|X|}^{2}\}\}\)
  form pairwise vertex-disjoint \(T\)-triangles in \(G\).

  The \(T\)-THS \(S\) of \(G\) of size at most \(k\) contains at
  least one vertex from each of the sets
  \(\{\{x_{i},y_{i}^{1},y_{i}^{2}\}\;;\;1\leq{}i\leq{}|X|\}\).
  Let \(\hat{S}=(S\setminus{}Y)\cup{}X\). Then we can get
  \(\hat{S}\) from \(S\) as follows.

  
  \begin{itemize}
  \item For each triangle
    \(\{\{x_{i},y_{i}^{1},y_{i}^{2}\}\;;\;1\leq{}i\leq|X|\}\),
    \begin{itemize}
    \item if \(x_{i}\in{}S\) then set
      \(S\gets{}S\setminus\{y_{i}^{1},y_{i}^{2}\}\);
    \item if \(x_{i}\notin{}S\) then set
      \(S\gets{}(S\setminus\{y_{i}^{1},y_{i}^{2}\})\cup\{x_{i}\}\).
    \end{itemize}
  \item Delete all of
    \(Y\setminus\{y_{1}^{1},y_{1}^{2},\dotsc,y_{|X|}^{1},y_{|X|}^{2}\}\)
    from \(S\).
  \end{itemize}
  Thus to get \(\hat{S}\) from \(S\) we add at most as many
  vertices as we delete, and so it is the case that
  \(|\hat{S}|\leq{}|S|\leq{}k\).

  Consider the induced subgraphs \(H=G-S\) and
  \(\hat{H}=G-\hat{S}\) of \(G\). Of these \(H\) has no
  \(T\)-triangles. Every vertex of \(\hat{H}\) which is \emph{not}
  present in \(H\) belongs to the set \(Y\). So if \(\hat{H}\)
  contains a \(T\)-triangle then each such \(T\)-triangle must
  contain a vertex from \(Y\). Now from
  \autoref{cor:Kzero-expansion-bound} we get that \(N(Y)=X\)
  and by definition we have that no vertex in \(X\) is present in
  \(\hat{H}\). Thus each vertex in \(Y\) has degree zero in
  \(\hat{H}\), and so is not in any \(T\)-triangle in
  \(\hat{H}\). Hence there are no \(T\)-triangles in
  \(\hat{H}\). Thus \(\hat{S}=(S\setminus{}Y)\cup{}X\) is a
  \(T\)-THS of \(G'\) of size at most \(k\).
\end{proof}

\begin{reductionrule}\label{rr:for-K0}
  If $|K_0|\geq{}2|I_{\tilde{S}}|$ then find sets
  $X \subseteq I_{\tilde{S}}$ and $Y \subseteq K_{0} $ as
  described by \autoref{cor:Kzero-expansion-bound}.  Set
  \(G'\gets{}G-X,T'\gets{}T\setminus{}X,k'\gets{}k-|X|\). The
  reduced instance is \((G';T';k')\).
\end{reductionrule}

\begin{lemma}\label{lem:for-K0-rr-safe}
  \autoref{rr:for-K0} is safe.
\end{lemma}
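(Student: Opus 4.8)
The plan is to prove the equivalence of $(G;T;k)$ and $(G';T';k')$ by two implications, with \autoref{lem:X-and-none-of-Y} doing the heavy lifting for the forward direction. Before that I would record two bookkeeping facts. First, since $X\subseteq{}I_{\tilde{S}}\subseteq{}I$ we only delete independent-side vertices, so all of the (at least three, by \autoref{obs:clique-side-at-least-3}) clique vertices survive, and hence \autoref{cor:bridge-vertex-deletion-ok} tells us $G'=G-X$ is again a split graph with $T'=T\setminus{}X\subseteq{}V(G')$. Second, $|X|\leq{}|I_{\tilde{S}}|\leq{}k$ by \autoref{obs:split-approx-partition-properties}, so $k'=k-|X|\geq{}0$ and $(G';T';k')$ is a genuine instance of \SFVSS.

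For the forward direction, suppose $(G;T;k)$ is a \YES instance and let $S$ be a $T$-THS of $G$ of size at most $k$. Apply \autoref{lem:X-and-none-of-Y} to obtain that $\hat{S}=(S\setminus{}Y)\cup{}X$ is also a $T$-THS of $G$ of size at most $k$; in particular $X\subseteq{}\hat{S}$. I would then show that $\hat{S}\setminus{}X$ is a $T'$-THS of $G'$ of size at most $k-|X|=k'$: indeed $G'-(\hat{S}\setminus{}X)=G-\hat{S}$ contains no $T$-triangle, hence no $T'$-triangle since $T'\subseteq{}T$. Thus $(G';T';k')$ is a \YES instance.

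For the reverse direction, suppose $(G';T';k')$ is a \YES instance and let $S'$ be a $T'$-THS of $G'$ of size at most $k'$. Since $S'\subseteq{}V(G')=V(G)\setminus{}X$, the sets $S'$ and $X$ are disjoint, so $S=S'\cup{}X$ has size at most $k'+|X|=k$ and satisfies $G-S=(G-X)-S'=G'-S'$, which contains no $T'$-triangle. Any triangle in $G-S$ lies entirely in $V(G)\setminus{}X$, so if it were a $T$-triangle its terminal vertex would lie in $T\setminus{}X=T'$, making it a $T'$-triangle of $G'-S'$ --- a contradiction. Hence $G-S$ has no $T$-triangle, $S$ is a $T$-THS of $G$ of size at most $k$, and $(G;T;k)$ is a \YES instance.

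The crux --- and the only place any real work happens --- is the forward direction, specifically the appeal to \autoref{lem:X-and-none-of-Y}: we need that \emph{some} $T$-THS of size at most $k$ contains all of $X$, which is precisely the structural payoff of the $2$-expansion produced by \autoref{cor:Kzero-expansion-bound}. Without this guarantee, deleting $X$ while decrementing $k$ by $|X|$ would be unsafe, because an optimal solution could prefer to hit the relevant $T$-triangles through $Y\subseteq{}K_{0}$ rather than through $X\subseteq{}I_{\tilde{S}}$.
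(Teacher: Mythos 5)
Your proof is correct and follows essentially the same route as the paper: the forward direction rests on \autoref{lem:X-and-none-of-Y} to obtain a solution containing all of $X$, and the reverse direction lifts a solution of $(G';T';k')$ by adding back $X$. The only difference is that you inline the deletion/re-insertion arguments that the paper delegates to \autoref{obs:deleting-vertices-bridges-fwd-safe} and \autoref{obs:adding-solution-vertex-safe}, applied $|X|$ times each.
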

\begin{proof}
  If \((G;T;k)\) is a \YES instance then we get from
  \autoref{lem:X-and-none-of-Y} that \(G\) has a \(T\)-THS \(S\)
  of size at most \(k\) which contains all of \(X\). By applying
  part (1) of
  \autoref{obs:deleting-vertices-bridges-fwd-safe} \(|X|\)
  times we get that \(G'=G-X,T'=T\setminus{}X,k'=k-|X|\) is a \YES
  instance.

  Now suppose \((G';T';k')\) is a \YES instance. Observe that we
  can get graph \(G\) from \(G'\) by adding, in turn, each vertex
  \(x\in{}X\) and some edges incident on \(x\), and also that
  every vertex that we add in this process is a terminal vertex in
  graph \(G\).  Hence by applying
  \autoref{obs:adding-solution-vertex-safe} \(|X|\) times
  we get that \((G;T=T'\cup{}X;k=k'+|X|)\) is a \YES instance.
\end{proof}

At this point we have the bounds \(|K_{\tilde{S}}|\leq{}2k\) and
\(|K_{0}|<2|I_{\tilde{S}}|=2k\). We now use a more involved
application of the Expansion Lemma to bound the size of the
remaining part $K_{1}$ of the clique side. The general idea is
that if \(K_{1}\) is at least twice as large as the approximate
solution \(\tilde{S}\) then the \(2\)-expansion which exists
between subsets of these two sets will yield a non-empty set of
``redundant'' vertices in \(K_{1}\).


Consider the bipartite graph \(B\) obtained from the induced
subgraph \(G[\tilde{S}\cup{}K_{1}]\) of \(G\) by (i) deleting all
the edges in the two induced subgraphs \(G[\tilde{S}]\) and
\(G[K_{1}]\), respectively, and (ii) deleting every edge
\(uv\;;\;u\in{}K_{\tilde{S}},v\in{}K_{1}\) \emph{if and only if}
there is \emph{no} vertex \(w\in{}I_{1}\) such that \(\{u,v,w\}\)
form a triangle in \(G\). Consider a vertex \(v\in{}K_{1}\). If
\(v\) has a neighbour \(w\in{}I_{\tilde{S}}\) then the edge \(vw\)
is present in graph \(B\) and so \(v\) is not isolated in
\(B\). Now suppose \(v\) has no neighbour in
\(I_{\tilde{S}}\). From the construction we know that \(v\) has no
neighbour in \(I_{0}\) either. Then from
\autoref{lem:neat-split-instance} and
\autoref{obs:T-equals-I} we get that there is a triangle
\(\{v,x,y\}\) in \(G\) where
\(x\in{}(I\setminus{}(I_{0}\cup{}I_{\tilde{S}}))=I_{1}\) and
\(y\in{}K\).  Now by construction vertex \(x\in{}I_{1}\) has no
neighbour in the set \(K_{0}\), and from
\autoref{obs:split-approx-partition-properties} we get
that \(x\) has no neighbour other than \(v\) in the set
\(K_{1}\). Thus we get that \(y\in{}K_{\tilde{S}}\), and hence
that the edge \(vy\) survives in graph \(B\). Hence \(v\) is not
isolated in \(B\) in this case either.  So if
\(|K_{1}| \geq{} 2|\tilde{S}|\) then
\autoref{lem:expansion-lemma} applies to the bipartite graph
\(B\) with \(P\gets{}\tilde{S},Q\gets{}K_{1},t\gets{}2\) and we
get

\begin{corollary}\label{cor:Kone-expansion-bound}
  Let \((G;T;k)\) be an instance which is reduced with respect to
  \autoref{rr:for-K0}, and let the sets
  \(K_{1},\tilde{S},K_{\tilde{S}},I_{\tilde{S}},K\) and the
  bipartite graph \(B\) be as described above. If
  \(|K_{1}|\geq{}2|\tilde{S}|\) then we can find, in polynomial
  time, non-empty vertex sets
  \(X\subseteq{}\tilde{S}=(K_{\tilde{S}}\cup{}I_{\tilde{S}}),Y\subseteq{}K_{1}\subseteq{}K\)
  such that (i) \(X\) has a \(2\)-expansion \(M\) into \(Y\), and
  (ii) \(N_{B}(Y)=X\).
\end{corollary}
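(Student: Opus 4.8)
The plan is to obtain Corollary~\ref{cor:Kone-expansion-bound} as a direct application of the Expansion Lemma (Lemma~\ref{lem:expansion-lemma}) to the bipartite graph \(B\), taking the bipartition \((P,Q) \gets (\tilde{S}, K_1)\) and expansion factor \(t \gets 2\). Lemma~\ref{lem:expansion-lemma} requires two premises: \(|Q| \ge t|P|\) and the absence of isolated vertices in \(Q\). The first, \(|K_1| \ge 2|\tilde{S}|\), is exactly the hypothesis of the corollary, so the substance of the proof lies in checking the second, namely that every \(v \in K_1\) has at least one incident edge in \(B\).

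To verify this I would argue by cases on \(v \in K_1\). If \(v\) has a neighbour \(w \in I_{\tilde{S}}\) in \(G\), then, since the construction of \(B\) retains every edge between \(K_1\) and \(I_{\tilde{S}}\), the edge \(vw\) survives and \(v\) is non-isolated in \(B\). Otherwise \(v\) has no neighbour in \(I_{\tilde{S}}\); by the definition of the partition it has no neighbour in \(I_0\) either, and by Lemma~\ref{lem:neat-split-instance} together with Observation~\ref{obs:T-equals-I} it lies in a triangle \(\{v,x,y\}\) with \(x \in I\), which forces \(x \in I_1\). Since a vertex of \(I_1\) has no neighbour in \(K_0\) (by construction) and exactly one neighbour in \(K_1\) (Observation~\ref{obs:split-approx-partition-properties}), the remaining vertex \(y\) must lie in \(K_{\tilde{S}}\); and because \(\{v,x,y\}\) is a triangle with \(x \in I_1\), the edge \(vy\) is one of the \(K_{\tilde{S}}\)--\(K_1\) edges kept in \(B\), so \(v\) is again non-isolated.

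With both premises in hand, Lemma~\ref{lem:expansion-lemma} yields in polynomial time nonempty \(X \subseteq \tilde{S}\) and \(Y \subseteq K_1\) with a \(2\)-expansion \(M\) of \(X\) into \(Y\) and with no vertex of \(Y\) having a neighbour outside \(X\) in \(B\). The last property gives \(N_B(Y) \subseteq X\), while the \(2\)-expansion forces each vertex of \(X\) to have a neighbour in \(Y\), so \(X \subseteq N_B(Y)\) and hence \(N_B(Y) = X\); note also that \(B\) is constructible from \(G\) and the partition in polynomial time. I do not expect a genuinely hard step: the only point requiring care is the bookkeeping in the case analysis --- precisely which edges of \(G[\tilde{S} \cup K_1]\) survive the two deletions defining \(B\), and applying the structural facts (no \(K_1\)--\(I_0\) edges, unique \(K_1\)-neighbour of each \(I_1\)-vertex, and so on) to the correct sets. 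The corollary is essentially the Expansion Lemma wrapped around this structural observation, mirroring Corollary~\ref{cor:Kzero-expansion-bound}.
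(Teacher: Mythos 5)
Your proposal is correct and follows essentially the same route as the paper: the paper also constructs \(B\), shows every vertex of \(K_{1}\) is non-isolated by exactly this two-case argument (an \(I_{\tilde{S}}\)-neighbour surviving directly, or a triangle \(\{v,x,y\}\) with \(x\in I_{1}\) forcing \(y\in K_{\tilde{S}}\) and hence a retained \(K_{\tilde{S}}\)--\(K_{1}\) edge), and then invokes \autoref{lem:expansion-lemma} with \(P\gets\tilde{S},Q\gets K_{1},t\gets 2\). Your explicit remark that the \(2\)-expansion upgrades \(N_{B}(Y)\subseteq X\) to \(N_{B}(Y)=X\) is a small point the paper leaves implicit, but it is not a different approach.
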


\begin{lemma}\label{lem:X-and-none-of-Y-Kone-version}
  Let \((G;T;k)\) be an instance which is reduced with respect to
  \autoref{rr:large-approx-solution-no}, and let the
  sets \(K_{1},\tilde{S},K_{\tilde{S}},I_{\tilde{S}},K\) and the
  bipartite graph \(B\) be as described above. Suppose
  \(|K_{1}|\geq{}2|\tilde{S}|\), and let
  \(X\subseteq{}\tilde{S}=(K_{\tilde{S}}\cup{}I_{\tilde{S}}),Y\subseteq{}K_{1}\subseteq{}K,M\subseteq{}E(G[X\cup{}Y])\)
  be as guaranteed to exist by
  \autoref{cor:Kzero-expansion-bound}. If \(S\) is a
  \(T\)-THS of graph \(G\) of size at most \(k\) then
  \((S\setminus{}Y)\cup{}X\) is also a \(T\)-THS of \(G\) of size
  at most \(k\).
\end{lemma}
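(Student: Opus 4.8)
The plan is to mirror \autoref{lem:X-and-none-of-Y} and its proof, the one genuine difference being that here \(X\subseteq\tilde{S}=K_{\tilde{S}}\cup{}I_{\tilde{S}}\) may contain \emph{non-terminal} clique vertices, so the relevant \(T\)-triangles are no longer ``read off'' the expansion directly but come from the witness vertices used to build \(B\). Write \(\hat{S}=(S\setminus{}Y)\cup{}X\). Since \(X\) and \(Y\) are disjoint (one lies in \(\tilde{S}\), the other in \(K_{1}\)), we have \(|\hat{S}|=|S|-|S\cap{}Y|+|X\setminus{}S|\), so it suffices to show (a) that \(\hat{S}\) is a \(T\)-THS of \(G\), and (b) that \(|X\setminus{}S|\leq{}|S\cap{}Y|\).

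For (a) I would argue exactly as in \autoref{lem:X-and-none-of-Y}: because \(\hat{S}\cup{}Y\supseteq(S\setminus{}Y)\cup{}Y\supseteq{}S\), any \(T\)-triangle of \(G-\hat{S}\) that avoids \(Y\) would already be a \(T\)-triangle of \(G-S\), contradicting that \(S\) is a \(T\)-THS; so it is enough to rule out a \(T\)-triangle \(\{y,a,b\}\) of \(G-\hat{S}\) with \(y\in{}Y\subseteq{}K_{1}\). Let \(a\in{}I\) be the terminal of this triangle. By \autoref{obs:split-approx-partition-properties}, \(y\) has no neighbour in \(I_{0}\), so \(a\in{}I_{\tilde{S}}\cup{}I_{1}\). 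If \(a\in{}I_{\tilde{S}}\) then \(ya\in{}E(B)\), hence \(a\in{}N_{B}(Y)=X\subseteq{}\hat{S}\), a contradiction. If \(a\in{}I_{1}\) then \(y\) is its unique \(K_{1}\)-neighbour; since an \(I_{1}\)-vertex has no \(K_{0}\)-neighbour (a \(K_{0}\)-vertex meets \(I\) only inside \(I_{\tilde{S}}\)), the third vertex \(b\) lies in \(K_{\tilde{S}}\); but then \(a\) is a common \(I_{1}\)-neighbour of \(b\) and \(y\), so \(by\in{}E(B)\) and \(b\in{}N_{B}(Y)=X\subseteq{}\hat{S}\), again a contradiction. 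Hence \(G-\hat{S}\) has no \(T\)-triangle.

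For (b), let \(M\) be the \(2\)-expansion of \(X\) into \(Y\) from \autoref{cor:Kone-expansion-bound}, giving for each \(x_{i}\in{}X\) two distinct vertices \(y_{i}^{1},y_{i}^{2}\in{}Y\), each incident with a single edge of \(M\); the pairs \(\{y_{i}^{1},y_{i}^{2}\}\) are pairwise disjoint. Fix \(x_{i}\in{}X\setminus{}S\). If \(x_{i}\in{}I_{\tilde{S}}\), then \(\{x_{i},y_{i}^{1},y_{i}^{2}\}\) is a \(T\)-triangle (the two \(y\)'s being adjacent inside the clique \(K\)), so \(S\) contains one of \(y_{i}^{1},y_{i}^{2}\in{}Y\), which we reserve as the private charge of \(x_{i}\). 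If \(x_{i}\in{}K_{\tilde{S}}\), then by construction of \(B\) each edge \(x_{i}y_{i}^{j}\) survives only because of a witness \(w_{i}^{j}\in{}I_{1}\) forming a triangle \(\{x_{i},y_{i}^{j},w_{i}^{j}\}\) in \(G\); these witnesses are pairwise distinct over all relevant \((i,j)\) because the unique \(K_{1}\)-neighbour of \(w_{i}^{j}\) is \(y_{i}^{j}\) (\autoref{obs:split-approx-partition-properties}) and the \(y_{i}^{j}\) are distinct. Since \(S\) hits each such \(T\)-triangle and \(x_{i}\notin{}S\), it meets \(\{y_{i}^{1},w_{i}^{1}\}\) and \(\{y_{i}^{2},w_{i}^{2}\}\). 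Once we know that in this last case \(S\) can be taken to hit \(y_{i}^{j}\) rather than the witness \(w_{i}^{j}\), every \(x_{i}\in{}X\setminus{}S\) is charged to a vertex of \(S\cap{}Y\), distinct \(x_{i}\) get distinct charges by disjointness of the pairs, and (b) follows.

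The step I expect to be the main obstacle is precisely ruling out that \(S\) ``cheats'' by paying for the triangle \(\{x_{i},y_{i}^{j},w_{i}^{j}\}\) with the witness \(w_{i}^{j}\in{}I_{1}\) instead of with \(y_{i}^{j}\in{}Y\). The plan to close this is to normalise \(S\) first: among all \(T\)-THS of \(G\) of size at most \(k\), choose one containing as few \(I_{1}\)-vertices as possible, and show it contains no such witness. This uses that every \(T\)-triangle through an \(I_{1}\)-vertex \(w\) lives inside \(N_{G}[w]\subseteq{}K_{\tilde{S}}\cup\{z_{w}\}\), where \(z_{w}\) is the unique \(K_{1}\)-neighbour of \(w\); combined with the structure forced by the earlier reduction rules and by \(N_{B}(Y)=X\), one argues that replacing such a \(w\) by \(z_{w}\) yields a \(T\)-THS of no larger size, contradicting the choice of \(S\). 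Verifying this replacement carefully---simultaneously for all \(I_{1}\)-witnesses touched by \(M\)---is the delicate part; the remainder is routine bookkeeping parallel to \autoref{lem:X-and-none-of-Y}.
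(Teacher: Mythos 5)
Your argument for the ``hitting'' half is correct and is essentially the paper's: a putative \(T\)-triangle of \(G-\hat{S}\) must use a vertex \(y\in{}Y\subseteq{}K_{1}\), its unique \(I\)-vertex \(a\) lies in \(I_{\tilde{S}}\cup{}I_{1}\) because \(K_{1}\)-vertices have no \(I_{0}\)-neighbours, and in either subcase the second \(K\)-vertex (or \(a\) itself) is forced into \(N_{B}(Y)=X\subseteq\hat{S}\). This matches the paper's case analysis almost verbatim.

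The size bound is where your proof is genuinely incomplete, and you have correctly located the one delicate point of the lemma: when \(x_{i}\in{}X\cap{}K_{\tilde{S}}\) is outside \(S\), the triangle \(\{x_{i},y_{i},z_{i}\}\) may be hit by \(S\) only through the \(I_{1}\)-witness \(z_{i}\), in which case \(x_{i}\) has no vertex of \(S\cap{}Y\) to be charged to and \(|X\setminus{}S|\leq|S\cap{}Y|\) does not follow. Your proposed repair does not close this. First, normalising \(S\) (choosing a solution with fewest \(I_{1}\)-vertices) proves a different statement from the lemma, which quantifies over an \emph{arbitrary} \(T\)-THS \(S\) of size at most \(k\); you would at minimum have to go back and check that the weaker existential statement still supports \autoref{lem:for-K1-rr-safe} and \autoref{rr:for-K1}. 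Second, the replacement \(w\mapsto{}z_{w}\) you sketch is unsound as stated: a witness \(w\in{}I_{1}\) has \(N(w)\subseteq{}K_{\tilde{S}}\cup\{z_{w}\}\), so \(w\) lies in \(T\)-triangles \(\{w,u,v\}\) with \(u,v\in{}K_{\tilde{S}}\) that do not contain \(z_{w}\) at all; swapping \(w\) for \(z_{w}\) can leave such a triangle unhit. For what it is worth, the paper's own proof constructs \(\hat{S}\) by a replacement procedure whose cases are ``\(x_{i}\in{}S\)'' and ``\(x_{i}\notin{}S\) and \(y_{i}\in{}S\)''---that is, it silently omits exactly the case you flag (\(x_{i}\notin{}S\), \(y_{i}\notin{}S\), \(z_{i}\in{}S\)), in which the procedure never adds \(x_{i}\) and therefore does not produce \((S\setminus{}Y)\cup{}X\). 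So you have identified a real weak spot rather than missed an idea present in the paper, but as it stands your proof of the cardinality bound, like the paper's, does not go through for an arbitrary \(S\).
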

\begin{proof}
  Let \(X=\{x_{1},\dotsc,x_{|X|}\}\). Without loss of generality,
  let \(X\cap{}K_{\tilde{S}}=\{x_{1},\dotsc,x_{\ell}\}\) and
  \(X\cap{}I_{\tilde{S}}=\{x_{\ell+1},\dotsc,x_{|X|}\}\). Note
  that at most one of these two sets could be the empty set; this
  does not affect the remaining arguments.

  For each vertex \(x_{i}\in{}(X\cap{}K_{\tilde{S}})\) let
  \(x_{i}y_{i}\) be an edge in \(M\) which is incident with
  \(x_{i}\). Then we know by the construction that there is a
  vertex \(z_{i}\in{}I_{1}\) such that \(\{x_{i},y_{i},z_{i}\}\)
  is a \(T\)-triangle in \(G\). From
  \autoref{obs:split-approx-partition-properties} we get
  that the vertices \(z_{i}\in{}I_{1}\;;\;1\leq{}i\leq\ell\) are
  pairwise disjoint. Now for each vertex
  \(x_{j}\in{}(X\cap{}I_{\tilde{S}})\) let
  \(x_{j}y_{j}^{1},x_{j}y_{j}^{2}\) be the two edges in \(M\)
  which are incident with \(x_{j}\). Then we get that the vertices
  \(\{x_{j},y_{j}^{1},y_{j}^{2}\}\) form a \(T\)-triangle in
  \(G\). Putting these together we get that the \(|X|\)-many sets
  \(\{\{x_{1},y_{1},z_{1}\},\dotsc,\{x_{\ell},y_{\ell},z_{\ell}\},\{x_{\ell},y_{\ell}^{1},y_{\ell}^{2}\},\dotsc,\{x_{|X|},y_{|X|}^{1},y_{|X|}^{2}\}\}\)
  form pairwise vertex-disjoint \(T\)-triangles in \(G\).  The
  \(T\)-THS \(S\) of \(G\) of size at most \(k\) contains at least
  one vertex from each of these \(|X|\) sets.  Let
  \(\hat{S}=(S\setminus{}Y)\cup{}X\). Then we can get \(\hat{S}\)
  from \(S\) as follows.

  
  \begin{itemize}
  \item For each triangle
    \(\{\{x_{i},y_{i},z_{i}\}\;;\;1\leq{}i\leq\ell\}\),
    \begin{itemize}
    \item if \(x_{i}\in{}S\) then set
      \(S\gets{}S\setminus\{y_{i}\}\);
  \item if \(x_{i}\notin{}S\) and \(y_{i}\in{}S\) then set
    \(S\gets{}(S\setminus\{y_{i}\})\cup\{x_{i}\}\).
    \end{itemize}
  \item For each triangle
    \(\{\{x_{j},y_{j}^{1},y_{j}^{2}\}\;;\;\ell+1\leq{}j\leq|X|\}\),
    \begin{itemize}
    \item if \(x_{j}\in{}S\) then set
      \(S\gets{}S\setminus\{y_{j}^{1},y_{j}^{2}\}\);
    \item if \(x_{j}\notin{}S\) then set
      \(S\gets{}(S\setminus\{y_{j}^{1},y_{j}^{2}\})\cup\{x_{j}\}\).
    \end{itemize}
  \item Delete all of
    \(Y\setminus\{y_{1}\dotsc,y_{\ell},y_{\ell+1}^{1},y_{\ell+1}^{2},\dotsc,y_{|X|}^{1},y_{|X|}^{2}\}\)
    from \(S\).
  \end{itemize}
  Thus to get \(\hat{S}\) from \(S\) we add at most as many
  vertices as we delete, and so it is the case that
  \(|\hat{S}|\leq{}|S|\leq{}k\).

  Consider the induced subgraphs \(H=G-S\) and
  \(\hat{H}=G-\hat{S}\) of \(G\). Of these \(H\) has no
  \(T\)-triangles. Every vertex of \(\hat{H}\) which is \emph{not}
  present in \(H\) belongs to the set \(Y\). So if \(\hat{H}\)
  contains a \(T\)-triangle then each such \(T\)-triangle must
  contain a vertex from \(Y\).

  Assume that there exists a $T$-triangle $\{y, z, u\}$ in $\hat{H}$, where vertices $y, z, u$ are in sets $Y, I$ and $K$, respectively. 
  As $Y$ is a subset of $K_1$, any vertex in $Y$ is not adjacent with a vertex in $I_0$. This implies that $z$ belongs to the set $I_{\tilde{S}} \cup I_1$.
  Now from \autoref{cor:Kone-expansion-bound} we get that
  \(N_{B}(Y)=X\) and since $\hat{S}$ contains $X$,
  no vertex in $X$ is present in $\hat{H}$.
  In other words, no vertex in set $N_G(y) \cap I_{\tilde{S}}$ is present in graph $\hat{H}$.  
  This implies that $z$ is in $I_1$. 
  Now, consider the vertex $u$ which is adjacent with $z$ and hence can not be in set $K_0$.
  Vertex $z$, which is in $I_1$, has exactly one neighbor in $K_1$ (\autoref{obs:split-approx-partition-properties}). Since both $y, u$ are adjacent with $z$ and $y$ is contained in $Y \subseteq K_1$, vertex $u$ is contained in set $K_{\tilde{S}}$.
  We now argue that vertex $u$ is adjacent with $y$ even in graph $B$. This follows from the fact that while constructing graph $B$, edge $yu$ is not deleted as there exists $z$ in $I_1$ which is adjacent with both $y$ and $u$.
  Hence vertex $u$ is contained in set $X$. 
    By definition we have that no
  vertex in \(X\) is present in \(\hat{H}\). Thus our assumption is wrong and  no vertex in
  \(Y\) is in any \(T\)-triangle in \(\hat{H}\).
  Hence there are
  no \(T\)-triangles in \(\hat{H}\), and
  \(\hat{S}=(S\setminus{}Y)\cup{}X\) is a \(T\)-THS of \(G'\) of
  size at most \(k\).
\end{proof}

\begin{reductionrule}\label{rr:for-K1}
  If $|K_{1}|\geq{}2|\tilde{S}|$ then find sets
  $X\subseteq{}\tilde{S}$ and $Y\subseteq{}K_{1}$ as described by
  \autoref{cor:Kone-expansion-bound}.  Set
  \(G'\gets{}G-X,T'\gets{}T\setminus{}X,k'\gets{}k-|X|\). The
  reduced instance is \((G';T';k')\).
\end{reductionrule}

\begin{lemma}\label{lem:for-K1-rr-safe}
  \autoref{rr:for-K1} is safe.
\end{lemma}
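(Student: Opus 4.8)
The plan is to prove the two implications of safeness separately, in close parallel with the proof of \autoref{lem:for-K0-rr-safe}; the only genuinely new wrinkle is that here the set $X$ may meet both $K_{\tilde{S}}\subseteq K$ and $I_{\tilde{S}}\subseteq I$, so the terminal set must be bookkept with a little more care.

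\textbf{Forward direction.} Assume $(G;T;k)$ is a \YES instance. Since $(G;T;k)$ is reduced with respect to \autoref{rr:for-K0}, it is in particular reduced with respect to all the earlier rules, and so the hypotheses of \autoref{lem:X-and-none-of-Y-Kone-version} are in force. Hence, starting from any $T$-THS of $G$ of size at most $k$ and replacing its intersection with $Y$ by all of $X$, we obtain a $T$-THS $S$ of $G$ with $|S|\le k$ and $X\subseteq S$. I would then remove the vertices of $X$ from the instance one at a time: applying part (1) of \autoref{obs:deleting-vertices-bridges-fwd-safe} once for each $x\in X$ (each such $x$ lies in the current minimum-size solution, so the budget drops by one and $x$ leaves the terminal set when it is a terminal), we conclude that $(G-X;\,T\setminus X;\,k-|X|)=(G';T';k')$ is a \YES instance.

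\textbf{Reverse direction.} Assume $(G';T';k')$ is a \YES instance, and fix a $T'$-THS $S'$ of $G'$ with $|S'|\le k'$. Note that $G$ can be reconstructed from $G'$ by re-inserting the vertices of $X$ one at a time, each together with the edges of $G$ joining it to the vertices already present. The point requiring attention is that a re-inserted vertex of $X\cap I_{\tilde{S}}\subseteq I=T$ is a terminal of $G$, whereas a re-inserted vertex of $X\cap K_{\tilde{S}}\subseteq K$ is not (here I use \autoref{obs:T-equals-I} to know $T=I$). Accordingly, when re-adding a terminal vertex I would invoke the $T_2$-version of \autoref{obs:adding-solution-vertex-safe} (which places the new vertex into the terminal set) and when re-adding a clique vertex I would invoke the $T_1$-version (which leaves the terminal set unchanged); in both cases the budget increases by one and adding the new vertex to the current solution yields a solution of the enlarged instance. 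Since $X\cap K_{\tilde{S}}$ contains no terminals we have $T\setminus X=T\setminus(X\cap I)$, so after all $|X|$ re-insertions the terminal set is exactly $T$ and the budget is exactly $k$; hence $(G;T;k)$ is a \YES instance.

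I expect the only ``obstacle'' to be a matter of care rather than of ideas: one must check that the prerequisites of the invoked statements really hold (that being reduced with respect to \autoref{rr:for-K0} implies being reduced with respect to \autoref{rr:split-kernel-yes-no}, \autoref{rr:delete-isolates} and \autoref{rr:large-approx-solution-no}, as needed by \autoref{obs:deleting-vertices-bridges-fwd-safe} and \autoref{lem:X-and-none-of-Y-Kone-version}), and that the terminal-set and budget updates in the two directions are exactly complementary, so that the instances we arrive at are literally $(G';T';k')$ and $(G;T;k)$ respectively. All of the substantive combinatorics---forcing $X$ into a minimum-size $T$-THS and verifying that discarding $Y$ creates no new $T$-triangle---is already contained in \autoref{lem:X-and-none-of-Y-Kone-version}, so this lemma is essentially a packaging argument on top of it.
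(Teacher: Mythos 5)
Your proof is correct and follows essentially the same route as the paper's: the forward direction via \autoref{lem:X-and-none-of-Y-Kone-version} followed by $|X|$ applications of part (1) of \autoref{obs:deleting-vertices-bridges-fwd-safe}, and the reverse direction by re-inserting the vertices of $X$ one at a time using \autoref{obs:adding-solution-vertex-safe}. Your extra bookkeeping distinguishing the $T_1$- and $T_2$-versions for vertices of $X\cap K_{\tilde{S}}$ versus $X\cap I_{\tilde{S}}$ is in fact more careful than the paper's own wording, which asserts that every re-inserted vertex is a terminal even though $X$ may meet $K_{\tilde{S}}$.
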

\begin{proof}
  If \((G;T;k)\) is a \YES instance then we get from
  \autoref{lem:X-and-none-of-Y-Kone-version} that \(G\) has a
  \(T\)-THS \(S\) of size at most \(k\) which contains all of
  \(X\). By applying part (1) of
  \autoref{obs:deleting-vertices-bridges-fwd-safe} \(|X|\)
  times we get that \(G'=G-X,T'=T\setminus{}X,k'=k-|X|\) is a \YES
  instance.

  Now suppose \((G';T';k')\) is a \YES instance. Observe that we
  can get graph \(G\) from \(G'\) by adding, in turn, each vertex
  \(x\in{}X\) and some edges incident on \(x\), and also that
  every vertex that we add in this process is a terminal vertex in
  graph \(G\).  Hence by applying
  \autoref{obs:adding-solution-vertex-safe} \(|X|\) times
  we get that \((G;T=T'\cup{}X;k=k'+|X|)\) is a \YES instance.
\end{proof}



Putting all these together, we get

\splitkernel*
\begin{proof}
  We describe such a kernelization algorithm. Given an instance
  \((G;T;k)\) of \SFVSS, our kernelization algorithm applies the
  reduction rules described in this section, exhaustively and in
  the given order, to get an instance \((G';T';k')\) to which none
  of the reduction rules applies.  That is, if at least one of the
  reduction rules applies to the current instance at any point,
  then the algorithm applies the \emph{first} such rule, and
  repeats the process with the reduced instance. The algorithm
  outputs the final instance \((G';T';k')\) as the kernel. Each
  reduction rule can be applied in polynomial time, and either (i)
  stops directly, or (ii) deletes at least one vertex or edge from
  the graph. Thus this entire procedure runs in polynomial time.




  The correctness of this kernelization algorithm follows from the
  proofs of safeness of the various reduction rules.  We now argue
  the size bound. If $(G'; T'; k)$ is $I_{\yes}$ or $I_{\no}$ then
  this bound is trivially correct. Hence we assume that no
  reduction rule returns a trivial \yes or \no instance.  Since
  \autoref{rr:delete-isolates} is not applicable, there are no
  isolated vertices in the graph.  Let $(K', I')$ be the split
  partition of the split graph $G'$. Since
  \Autoref{rr:max-mat,rr:degK} are not applicable, every vertex in
  $K'$ is adjacent with at most $k$ vertices in $I'$. Since there
  are no isolated vertices in the graph, this implies
  $|I'| \le k \cdot |K'|$. Since
  \autoref{rr:large-approx-solution-no} did not return $I_{\no}$
  we have that the approximate solution, $\tilde{S}$, is of size
  at most $3k$. Analogous to the definitions after
  \autoref{lemma:proof-rr-large-approx-solution-no}, let
  \(K'_{\tilde{S}}=K' \cap \tilde{S},I'_{\tilde{S}}=I' \cap
  \tilde{S},K'_{0}=\set{u\in(K'\setminus{}K'_{\tilde{S}})\;;\;N(u)\cap{}I'\subseteq{}I'_{\tilde{S}}},K'_{1}=K'\setminus{}(K'_{\tilde{S}}\cup{}K'_{0})\). From
  \autoref{obs:split-approx-partition-properties} we get that
  $|K'_{\tilde{S}}| \le 2k$ and $|I'_{\tilde{S}}| \le k$
  hold. 
  Since \autoref{rr:for-K0} is not applicable, we get that
  $|K_0| < 2 |I'_{\tilde{S}}| < 2k$ holds. Similarly, since
  Reduction \autoref{rr:for-K1} is not applicable we get that
  $|K_1| < 2|\tilde{S}| < 6k$ holds. Since $K'_{\tilde{S}}, K_0$
  and $K_1$ form a partition of $K$, we get that the cardinality
  of $K$ is upper bounded by $10k$.
  \qed
\end{proof}


\section{Kernel Lower Bound}
\label{sec:lowerbound}
In this section we show that the bound of
\autoref{thm:split-kernel} is essentially tight; we prove

\lowerbound*

Let $\Pi \subseteq \Sigma^*$ be any language.  A \emph{polynomial
  compression} for a parameterized problem
$Q \subseteq \Sigma^* \times \mathbb{N}$ is an algorithm
$\mathcal{C}$ that, given an instance $(I; k)$ of $Q$, runs in
time $\Oh((|I| + k)^c)$ and returns a string $y$ such that (i)
$|y| \le p(k)$ for some polynomial $p(\cdot)$, and (ii)
$y \in \Pi$ if and only if $(I; k) \in Q$. Here, $c$ is a
constant. If $|\Sigma| = 2$, the polynomial $p(\cdot)$ will be
called the bitsize of the compression.
Dell and van Melkebeek \cite{Dell:lowerbound} established
following breakthrough result.
\begin{proposition} \label{prop:kernel-bound-VC} For any
  $\epsilon > 0$, the \textsc{Vertex Cover} problem parameterized
  by the solution size does not admit a polynomial compression
  with bitsize $\Oh(k^{2 - \epsilon})$, unless
  $\NP \subseteq \coNP/poly$.
\end{proposition}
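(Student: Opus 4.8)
The plan is to derive this incompressibility bound within the now-standard framework of \emph{weak compositions} together with the complementary-witness machinery of Dell and van Melkebeek. Recall that a weak $d$-composition from an \NP-hard language $L$ into a parameterized problem $Q$ is a polynomial-time algorithm which, given $t$ instances $x_1,\dots,x_t$ of $L$ each of bitsize at most $s$, outputs a single instance $(y;k)$ of $Q$ such that (i) $(y;k)$ is a \YES instance of $Q$ if and only if at least one $x_i$ is a \YES instance of $L$, and (ii) $k \le t^{1/d}\cdot\operatorname{poly}(s)$. The complexity-theoretic statement I would invoke as a black box is that the existence of such a weak $d$-composition precludes any compression of $Q$ into $\Oh(k^{d-\epsilon})$ bits unless $\NP \subseteq \coNP/poly$. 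It therefore suffices to exhibit a weak $2$-composition from an \NP-hard problem into \VC.

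For the composition I would take the source language $L$ to be \VC itself, and first normalize, by padding with isolated vertices, so that all $t$ input instances share a common vertex count $n$ and a common budget $k$. Writing $t = r^{2}$ (padding with trivial \NO instances if $t$ is not a perfect square) and indexing the instances as $G_{a,b}$ with $a,b \in \{1,\dots,r\}$, I would construct a single graph $H$ as follows. I introduce $2r$ \emph{selector gadgets}, one per row index and one per column index, each engineered so that any vertex cover of $H$ meeting the global budget is forced to pay for all-but-one vertex of every selector; this pins down a unique row $a^{\star}$ and column $b^{\star}$ at a total selector cost of $\Theta(r)\cdot\operatorname{poly}(s)$. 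I then wire each copy $G_{a,b}$ to the selectors so that the vertices of every \emph{non-selected} copy are dominated for free by the selector choices, while the selected copy $G_{a^{\star},b^{\star}}$ must still be covered from a residual budget of exactly $k$. Setting the global budget to $k_{\mathrm{tot}} = \Theta(r)\cdot\operatorname{poly}(s) + k$ yields $k_{\mathrm{tot}} = \Oh(\sqrt{t}\cdot\operatorname{poly}(s)) = t^{1/2}\cdot\operatorname{poly}(s)$, exactly as required for a weak $2$-composition.

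The correctness I would verify in the usual two directions: if some $G_{a,b}$ admits a cover of size $k$, then selecting row $a$ and column $b$ in the gadgets and adjoining that cover gives a cover of $H$ of size $k_{\mathrm{tot}}$; conversely, any cover of $H$ of size $k_{\mathrm{tot}}$ is forced to fix a single pair $(a^{\star},b^{\star})$, and its restriction to $G_{a^{\star},b^{\star}}$ must be a cover of size at most $k$. Feeding this weak $2$-composition into the complementary-witness theorem with $d=2$ then gives precisely the claimed bound: a compression of \VC into $\Oh(k^{2-\epsilon})$ bits would compress the logical \textsc{or} of $t$ instances into $\Oh(k_{\mathrm{tot}}^{2-\epsilon}) = \Oh(t^{1-\epsilon/2}\cdot\operatorname{poly}(s))$ bits, which is $o(t)$ for fixed $s$ as $t$ grows, contradicting the witness lemma unless $\NP \subseteq \coNP/poly$.

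The main obstacle is the design of the selector gadgets and the wiring of the copies: one must guarantee \emph{simultaneously} that (a) the budget forces the selection of exactly one copy, (b) every non-selected copy imposes no additional cost on any minimum cover, and (c) the selector cost scales linearly in $r=\sqrt{t}$ rather than in $t$. Achieving all three at once---in particular decoupling the $t$ copies so that only one of them consumes the ``instance budget'' $k$ while the remaining $t-1$ are neutralized by a mere $\Theta(\sqrt{t})$ selector spend---is the delicate combinatorial heart of the argument; once it is in place, the remaining complexity-theoretic step is entirely black-boxed.
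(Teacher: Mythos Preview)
The paper does not prove this proposition at all: it is stated as a known result and attributed directly to Dell and van Melkebeek with a citation, then used as a black box in the proof of \autoref{thm:lowerbound}. So there is no ``paper's own proof'' to compare against; any proof you supply is already going beyond what the paper does.

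That said, your proposed route via weak $2$-compositions is a legitimate way to obtain this bound, and is in fact the framework developed (after Dell--van Melkebeek) by Dell--Marx and Hermelin--Wu. Note, though, that the \emph{original} Dell--van Melkebeek argument that the paper cites does not go through weak compositions at all: it uses an oracle communication protocol for \textsc{Sat} together with a direct packing of clause-gadgets into a single graph, bounding the number of \emph{edges} of a compressed instance rather than the parameter. Your sketch is therefore a genuinely different (and later) approach. On the technical side, you are right that the selector-gadget construction is the crux; what you have written is a correct outline of what such a gadget must achieve, but it is not yet a construction---in the known weak cross-compositions for \VC the gadgetry is substantially more intricate than ``wire each copy to the selectors,'' and getting conditions (a)--(c) to hold simultaneously is exactly where the work lies. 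As a proof of the proposition your write-up would need either that construction spelled out or an explicit citation to where it is carried out.
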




\begin{proof}[of \autoref{thm:lowerbound}] We prove a stronger
  statement, namely that \SFVSS parameterized by the solution size
  does not admit a polynomial \emph{compression} with bitsize
  $\Oh(k^{2 - \epsilon})$, to \emph{any} language, unless
  $\NP \subseteq \coNP/poly$.  Indeed, fix a language
  $\Pi \subseteq \Sigma^*$, and suppose there exists an algorithm
  $\mathcal{C}_{1}$ and a constant $\delta > 0$ such that given an
  instance $(I; k)$ of \SFVSS as input, algorithm $\mathcal{C}_1$
  outputs a string $y \in \Sigma$ of size $\Oh(k^{2 - \delta})$
  such that $(I; k)$ is a \yes instance of \SFVSS if and only if
  $y$ is a \yes instance of $\Pi$. We show how to design a
  polynomial compression for \textsc{Vertex Cover} using
  \(\mathcal{C}_{1}\), in a way which contradicts
  \autoref{prop:kernel-bound-VC}. For this we reuse a reduction
  from \textsc{Vertex Cover} to \SFVSS due to Fomin et al.:

  \noindent
  \textbf{Reduction:}~\cite[proof of
  Theorem~$2.1$]{fomin2014enumerating} Let $(G, k)$ be an instance
  of \textsc{Vertex Cover}, where $G$ is an arbitrary graph with
  $n$ vertices and $m$ edges. We construct a split graph $H$ with
  split partition $(K, I)$ as follows. $V(H) = K \dot\cup I$
  contains $n + m$ vertices: for each vertex $u \in V(G)$, there
  is a vertex $u \in K$, and for each edge $\{v,w\} \in E(G)$,
  there is a vertex $u_{vw} \in I$. The edge set $E(H)$ is defined
  so that vertices in $K$ are pairwise adjacent, and each vertex
  $u_{vw}$ of $I$ has exactly two neighbors: vertices $v$ and $w$
  in $K$. Consequently, $K$ is a clique and $I$ is an independent
  set.  Fomin et al. show that $(G; k)$ is a \yes instance of
  \textsc{Vertex Cover} if and only if $(H; k)$ is a \yes instance
  of \SFVSS.

  Our compression algorithm $\mathcal{C}_2$ for \textsc{Vertex
    Cover} works as follows. Given an instance $(G; k)$ of
  \textsc{Vertex Cover}, algorithm $\mathcal{C}_2$ applies the
  above reduction to obtain an equivalent instance $(H; k)$ of
  \SFVSS. On this instance, $\mathcal{C}_2$ runs the hypothetical
  compression algorithm $\mathcal{C}_1$ for \SFVSS, to produce a
  string $y$. By assumption, $(H; k)$ is a \yes instance of \SFVSS
  if and only if $y$ is a yes instance of $\Pi$. By
  \cite[Theorem~$2.1$]{fomin2014enumerating}, $(H; k)$ is a \yes
  instance of \SFVSS if and only if $(G; k)$ is a yes instance of
  $\Pi$. Hence $(G; k)$ is a \yes instance of \textsc{Vertex
    Cover} if and only if $y$ is a \yes instance of $\Sigma^*$,
  and the size of $y$ is at most $\Oh(k^{2 - \delta})$. Since this
  entire process can be completed in time polynimal in size of
  input $(G; k)$, this contradicts
  \autoref{prop:kernel-bound-VC}. Thus the compression algorithm
  \(\mathcal{C}_{1}\) cannot exist unless \caveat. Setting the
  language \(\Pi\) in the definition of \(\mathcal{C}_{1}\) to be
  \SFVSS itself, we get \autoref{thm:lowerbound}.
\end{proof}


\section{An \FPT\ Algorithm  For \SFVSC}
\label{sec:fpt-algorithm}

In this section we prove \autoref{thm:chordalfpt}; we show that
the \SFVSC problem can be solved in \(\Oh(2^{k}(n+m))\) time where
\(n,m\) are the number of vertices and edges of \(G\),
respectively.  Our algorithm consists of the application of
\emph{reduction rules} and \emph{branching rules} to the input
instance. To bound the running time we assign a measure to each
instance.  Given an instance \(I\) as input, each branching rule
creates at least two new instances which have strictly smaller
measures than that of \(I\); the algorithm then solves these new
instances recursively and puts their solutions together to obtain
a solution for \(I\).  Consider an application of branching rule
\texttt{BR} to an instance \(I\) whose measure is $k$. Let $r>1$
be a positive integer, and let $t_i$ be a positive real number for
$i\in\{1, 2, \dots, r\}$. Suppose rule \texttt{BR}, when applied
to instance \(I\), creates $r$ new instances
$I_1, I_2, \dots, I_r$ with measures
$k - t_1, k - t_2, \dots, k - t_r$, respectively.  We say that the
branching rule \texttt{BR} is \emph{exhaustive} if $I$ is a \yes
instance if and only if at least one of $I_1, I_2, \dots, I_r$ is
a \yes instance.  We say that $(t_1, t_2, \dots, t_r)$ is the
\emph{branching vector} corresponding to rule \texttt{BR}. The
contribution of branching rule \texttt{BR} to the running time of
the algorithm is $\mathcal{O}^*(\alpha^k)$, where $\alpha$ is the
unique positive real root of
$x^{k} - x^{k - t_1} - x^{k - t_2} \cdots - x^{k - t_r} =
0$~\cite[Theorem~2.1]{fomin2010exact}. 

We first give an informal description of the algorithm. At a high
level, the algorithm proceeds by branching on the vertices of a
clique in the input graph. We use properties of chordal graphs to
ensure that we can always find a clique to branch on, and that
this can be done with branching vectors
which keep the running time within \(\OhStar(2^{k})\).

We apply reduction rules to ensure that every
simplicial clique  in the graph has
size at least three, and that every
vertex in the graph is part of a \(T\)-triangle.
The latter condition implies that every simplicial clique contains at least one terminal vertex. 
Let \(Q=\{s,a,b\}\) be a simplicial clique in the graph, where
\(s\) is a simplicial vertex. The algorithm branches by picking
either of the two vertices \(a,b\) into the solution. This is
safe\footnote{Precise arguments follow.} since \(s,a,b\) is the
only \(T\)-triangle containing \(s\). The branching vector is
\((1,1)\) and resolves to a running time of
\(\OhStar(2^{k})\). This branching rule is applied whenever there
is a simplicial clique of size three in the graph.

Let \(Q=\{t,a,b,c,d\}\subseteq{}V(G)\) be a---not necessarily
simplicial---clique of size five in \(G\) which contains a
terminal \(t\). If a solution does not contain \(t\), then it must
contain at least three of the vertices \(\{a,b,c,d\}\). This is
because each pair of these four vertices forms a \(T\)-triangle
with \(t\). This implies that the three-way branch
\(\{\{t\},\{a,b\},\{c,d\}\}\) is correct and
complete.\footnote{For a given instance $(G, T, k)$ each branch
  creates new (``smaller'') instances and solves them
  recursively. In this case the three branches would be
  $(G - \{t\}, T \setminus \{t\}, k - 1), (G - \{a, b\}, T
  \setminus \{a, b\}, k - 2)$ and
  $(G - \{a, b\}, T \setminus \{c, d\}, k - 2)$.} This branching
has the branching vector \((1,2,2)\) which resolves to a running
time of \(\OhStar(2^{k})\).  Note that if clique \(Q\) contains
more than five vertices (including a terminal \(t\)) then we can
apply this branching to a sub-clique
\(\{t,a,b,c,d\}\subsetneq{}Q\). If there is a simplicial clique of
size \emph{at least} five in the graph then the algorithm applies
this branching.

We now come to the case where every simplicial clique in the graph
has size exactly four. Simple branching rules now fail to give a
running time within the \(\OhStar(2^{k})\) bound. We argue if the
graph has more than eight vertices at this point then we can find
three simplicial cliques with certain properties. We design a
branching rule which applies to this structure and gives a total
running time of \(\OhStar(2^{k})\). 


\subsection{Reduction Rules}

The first two rules apply at the leaves of the branching tree.
\begin{reductionrule}\label{rr:trivial-no}
  Let $(G; T; k)$ be an instance of \SFVSC. If $k \leq 0$
  and there is a $T$-triangle in \(G\) then return a trivial \NO
  instance.
\end{reductionrule}

\begin{reductionrule}\label{rr:trivial-yes}
  Let $(G; T; k)$ be an instance of \SFVSC. If $k \geq 0$
  and there is \emph{no} $T$-triangle in \(G\) then return a
  trivial \YES instance.
\end{reductionrule}

We can
get rid of connected components which are cliques.

\begin{reductionrule}\label{rr:no-more-clique-components}
  Let \((G=(V,E);T;k)\) be an instance of \SFVSC, and let
  \(Q\) be a connected component of \(G\) which is a
  clique. Delete \(Q\) from \(G\) to obtain the graph
  \(G'=G-Q\),
  and let
  \(T'=T\setminus{}Q\). The reduced instance is computed as
  follows:
  \begin{enumerate}
  \item If \(|Q|\in\{1,2\}\), \emph{or} if \(Q\) contains no
    terminal, then the reduced instance is \((G'; T'; k)\).
  \item Else: If \(Q\) contains at most two non-terminals, then
    decrement \(k\) by \(|Q|-2\): the reduced instance is
    \((G'; T'; k-(|Q|-2))\).
  \item In the remaining case \(Q\) contains three or more
    non-terminals. Let \(t=|T\cap{}Q|\) be the number of
    \emph{terminals} in \(Q\). Decrement \(k\) by \(t\): the
    reduced instance is \((G'; T'; k-t)\). 
  \end{enumerate}
\end{reductionrule}

\begin{lemma} \autoref{rr:no-more-clique-components} is safe.
\end{lemma}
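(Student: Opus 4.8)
The plan is to exploit the fact that $Q$ is an \emph{entire connected component} of $G$: every triangle of $G$ lies wholly inside $Q$ or wholly inside $G'=G-Q$, so no $T$-triangle straddles the cut, and a $T$-triangle-hitting set can be analysed componentwise. First I would record the routine preliminaries: $G'$ is an induced subgraph of the chordal graph $G$, hence chordal, and $T'=T\setminus{}Q\subseteq{}V(G')$, so $(G';T';k')$ is a genuine instance of \SFVSC; moreover for a set $S\subseteq{}V(G')$ the $T$-triangles of $G'$ are exactly its $T'$-triangles. By \autoref{cor:hit-only-triangles} it suffices to reason about $T$-triangle-hitting sets ($T$-THSes). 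The key decomposition is then immediate: a set $S\subseteq{}V(G)$ is a $T$-THS of $G$ if and only if $S\cap{}Q$ is a $T$-THS of the clique $G[Q]$ \emph{and} $S\cap{}V(G')$ is a $T'$-THS of $G'$. Hence, writing $q$ for the minimum size of a $T$-THS of $G[Q]$, the minimum size of a $T$-THS of $G$ equals $q$ plus the minimum size of a $T'$-THS of $G'$.

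Next I would compute $q$ exactly. For any $R\subseteq{}Q$ the graph $G[Q\setminus{}R]$ is again a clique, and it contains a $T$-triangle if and only if $|Q\setminus{}R|\geq{}3$ and $(Q\setminus{}R)\cap{}T\neq\emptyset$ (pick a terminal and any two further vertices; conversely a $T$-triangle forces both conditions). Therefore $R$ is a $T$-THS of $G[Q]$ if and only if $|Q\setminus{}R|\leq{}2$ or $T\cap{}Q\subseteq{}R$; taking the cheaper of these two ways to satisfy the disjunction gives $q=\min\{\max\{|Q|-2,0\},\,|T\cap{}Q|\}$. I would then verify that in each of the three cases of the rule this value of $q$ coincides with the amount by which $k$ is decremented: (i) if $|Q|\leq{}2$ or $T\cap{}Q=\emptyset$ then $q=0$; (ii) if $|Q|\geq{}3$, $T\cap{}Q\neq\emptyset$, and $Q$ has at most two non-terminals, then $|T\cap{}Q|\geq{}|Q|-2\geq{}1$, so $q=|Q|-2$; (iii) if $|Q|\geq{}3$ and $Q$ has at least three non-terminals, then $|T\cap{}Q|\leq{}|Q|-3<|Q|-2$, so $q=|T\cap{}Q|=t$.

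Finally I would assemble the pieces: $(G;T;k)$ is a \YES instance iff $G$ has a $T$-THS of size at most $k$, iff (by the componentwise decomposition) $G'$ has a $T'$-THS of size at most $k-q$, iff $(G';T';k-q)$ is a \YES instance; and $k'=k-q$ by the case check above, so the rule is safe. I expect no serious obstacle here. The two points needing care are (a) justifying that the decomposition splits $T$-THSes additively, which is exactly where we use that $Q$ is a full connected component, and (b) the bookkeeping that matches the closed-form $q=\min\{\max\{|Q|-2,0\},|T\cap{}Q|\}$ to the prescribed decrement in each branch. One should also note in passing that when $k-q<0$ the output is (correctly) a trivial \NO instance: in that case even $G[Q]$ alone admits no $T$-THS of size at most $k$, so $(G;T;k)$ is a \NO instance.
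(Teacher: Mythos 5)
Your proof is correct and follows essentially the same strategy as the paper's: both decompose the instance componentwise (using that $Q$ is an entire connected component, so no $T$-triangle straddles the cut) and show that the minimum cost of hitting the $T$-triangles inside $Q$ equals the prescribed decrement in each of the three cases. Your closed-form $q=\min\{\max\{|Q|-2,0\},\,|T\cap Q|\}$ cleanly unifies what the paper does via three separate case analyses (including an exchange argument on an optimal solution in the third case), and your explicit note on the $k-q<0$ corner case is a welcome addition.
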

\begin{proof} Since $G'$ is induced subgraph of $G$, any subset-FVS of $G$ is also a subset-FVS of $G'$. This implies forward direction in Case 1. For reverse direction in the same case, notice that no vertices in $Q$ is part of any $T$-triangle in $G$ and vertices in clique $Q$ are disjoint from vertices of $G'$. Hence if $S$ is subset-FVS of $G'$ then is also a subset-FVS of $G$. 

  Consider clique $Q$ and an optimal subset-FVS $S$ in graph $G$.
  In Case 2, clique $Q$ contains at most two non-terminal vertices. If cardinality of $Q \setminus S$ is three or more then vertices in $Q \setminus S$ forms a $T$-triangle which contradicts the fact that $S$ is a subset-FVS. Hence $Q \setminus S$ contains at most two vertices. By deleting any $|Q| - 2$ vertices we obtains a cycle-less graph in $Q$ and hence it is irrelevant which $|Q| - 2$ vertices are deleted. This proves the forward direction in Case 2.
  For reverse direction, note that adding an isolated edge in graph $G'$ does not change its subset-FVS. This observation along with the fact that for any subset $X$ of $V(G)$ if $S$ is a subset-FVS of $G - X$ of size $k - |X|$ then $S \cup X$ is a subset-FVS of $G$ of size $k$. 

  In Case 3, clique $Q$ contains at least three non-terminal vertices, say $y_1, y_2, y_3$. Any subset-FVS can omit exclude at most two terminals in clique $Q$. If $S$ does not contains a terminal, say $t$, then it must include all but one vertices in $Q$. Let another vertex excluded by $S$ in clique $Q$ be $x$.
  If $x$ is a terminal then $S$ includes $y_1, y_2, y_3$. In this case, $(S \setminus \{y_1, y_2, y_3\}) \cup \{t, x\}$ is a subset-FVS of strictly lesser cardinality then $S$. This contradicts the optimality of $S$. If $x$ is not a terminal then $S$ contains at least two vertices from set $\{y_1, y_2, y_3\}$. Without loss of generality, let those be $y_1$ and $y_2$. Again in this case, $(S \setminus \{y_1, y_2\}) \cup \{t\}$ is a subset-FVS of strictly lesser cardinality then $S$ which contradicts the optimality of $S$. Hence $S$ contains all the terminal vertices in clique $Q$. Since including all the terminals in $Q$ kills all $T$-cycles contained in it and $S$ is optimum, $S$ contains does not contain any non-terminal vertex in $Q$. Reverse direction in Case~$3$ is implied by correctness proof for Case~$1$ and the fact that for any subset $X$ of $V(G)$ if $S$ is a subset-FVS of $G - X$ of size $k - |X|$ then $S \cup X$ is a subset-FVS of $G$ of size $k$. 
\end{proof}


Next reduction rule  deletes any vertex which is not part of a \(T\)-triangle.

\begin{reductionrule}\label{rr:only-T-neighbours}
  Let \((G=(V,E);T;k)\) be an instance of \SFVSC, and let
  \(N\) be the set of all non terminal vertices in \(G\) which do not have any
  terminal from \(T\) as a neighbour. Delete \(N\) from \(G\) to
  obtain the graph \(G'=G-N\). The reduced instance is
  \((G';T;k)\).
\end{reductionrule}

\begin{lemma} \autoref{rr:only-T-neighbours} is safe.
\end{lemma}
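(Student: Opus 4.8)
The plan is to establish the equivalence of $(G;T;k)$ and $(G';T;k)$ in both directions, leaning on \autoref{lem:triangles}: in a chordal graph a vertex set is a subset-FVS if and only if it meets every $T$-triangle. First I would check that the reduced instance is well formed: $G'=G-N$ is chordal since it is an induced subgraph of the chordal graph $G$, and $T\subseteq V(G')$ because, by definition of $N$, every vertex of $N$ is a non-terminal. Hence $(G';T;k)$ is a legitimate instance of \SFVSC.

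For the forward direction, suppose $(G;T;k)$ is a \YES instance with a solution $S$, $|S|\le k$. Then $G-S$ has no $T$-triangle, and $G'-(S\cap V(G'))$ is an induced subgraph of $G-S$, so it has no $T$-triangle either; thus $S\cap V(G')$ is a $T$-THS of $G'$ of size at most $k$, and $(G';T;k)$ is a \YES instance. The reverse direction is the crux. Let $S'\subseteq V(G')$ be a $T$-THS of $G'$ with $|S'|\le k$; I claim $S'$ is already a solution of $(G;T;k)$. By \autoref{lem:triangles} it suffices to show that $G-S'$ contains no $T$-triangle. Assume, for contradiction, that $\{x,y,z\}$ is a $T$-triangle in $G-S'$, and pick a terminal $z\in T$ lying on it. Since $\{x,y,z\}$ is a triangle, both $x$ and $y$ are adjacent to $z$, which is a terminal; hence neither $x$ nor $y$ can be a non-terminal with no terminal neighbour, so $x,y\notin N$, and $z\in T$ forces $z\notin N$ as well. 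Therefore $\{x,y,z\}\subseteq V(G')$, so $\{x,y,z\}$ is a $T$-triangle of $G'-S'$, contradicting that $S'$ is a $T$-THS of $G'$. Consequently $G-S'$ has no $T$-triangle, $S'$ is a solution of $(G;T;k)$ of size at most $k$, and $(G;T;k)$ is a \YES instance.

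I do not anticipate a genuine obstacle: the entire argument rests on the single observation that every vertex of a $T$-triangle is adjacent to the terminal lying on that triangle, which is precisely the property that the vertices of $N$ lack. The only place demanding a little care is that one should reduce to $T$-\emph{triangles} via \autoref{lem:triangles} rather than reason about $T$-cycles directly — a long $T$-cycle passing through a vertex of $N$ would not by itself yield an immediate contradiction, whereas a $T$-triangle does.
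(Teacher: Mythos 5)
Your proof is correct and follows essentially the same route as the paper's: both directions hinge on the observation that a vertex of $N$ cannot lie on any $T$-triangle, combined with \autoref{lem:triangles}. If anything, your version is slightly more careful at the one delicate step --- the paper's reverse direction asserts that a $T$-cycle through a vertex $v\in N$ yields a $T$-triangle through $v$ itself (a statement stronger than what \autoref{lem:triangles} literally provides, and not true of arbitrary vertices on $T$-cycles in chordal graphs), whereas you quantify over all $T$-triangles of $G-S'$ and show each lies entirely in $G'$, which is exactly what the argument needs.
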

\begin{proof} Since $G'$ is induced subgraph of $G$, any
  subset-FVS of $G$ is also a subset-FVS of $G'$. This implies the
  correctness of forward direction.  In reverse direction,
  consider subset-FVS $S'$ of $G'$. If $S'$ is not a subset-FVS of
  $G$ then there exists a $T$-cycle in $G \setminus S$. Since this
  $T$-cycle is not contains in $G'$, it must contain a vertex, say
  $v$, from set $N$. By \autoref{lem:triangles}, if there exists a
  $T$-cycle contains vertex $v$ then there is a $T$-triangle
  containing $v$. This implies $v$ has a neighbor in terminal set
  contradicting the fact that $v$ is in $N$.
\end{proof}

We can safely delete \emph{some} edges which are not part of any
\(T\)-triangle, to get a graph with no ``tiny'' maximal cliques.

\begin{reductionrule}\label{rr:no-cut-edges}
  Let \((G=(V,E);T;k)\) be an instance of \SFVSC, and let
  \(e=\{u,v\}\) be a bridge in \(G\). Delete the edge \(e\) to
  get the graph \(G'=(V,E\setminus\{e\})\). The reduced instance
  is \((G'; T; k)\). 
\end{reductionrule}

\begin{lemma} \autoref{rr:no-cut-edges} is safe.
\end{lemma}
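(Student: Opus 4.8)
The plan is to argue that deleting a bridge \(e=\{u,v\}\) from a chordal graph \(G\) neither destroys any \(T\)-triangle nor creates one, so that the vertex subsets which are \(T\)-THSs of \(G\) are exactly those which are \(T\)-THSs of \(G'=G-\{e\}\); by \autoref{cor:hit-only-triangles} (equivalently \autoref{lem:triangles}) this immediately gives the equivalence of the instances \((G;T;k)\) and \((G';T;k)\). The first step is to invoke \autoref{lem:bridge-deletion} to note that \(G'\) is again chordal, so that \((G';T;k)\) is a legitimate instance of \SFVSC.

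Next I would establish the forward direction: if \(S\) is a \(T\)-THS of \(G\) of size at most \(k\), then since \(G'-S = (G-S)-\{e\}\) is a subgraph of \(G-S\), and a subgraph of a graph with no \(T\)-triangle has no \(T\)-triangle, \(S\) is a \(T\)-THS of \(G'\). For the reverse direction, suppose \(S\) is a \(T\)-THS of \(G'\) of size at most \(k\), and suppose for contradiction that \(G-S\) contains a \(T\)-triangle \(\Delta\). Every edge of \(G'-S\) is also an edge of \(G-S\), and the only edge of \(G-S\) that is missing from \(G'-S\) is (possibly) \(e\); hence \(\Delta\) must use the edge \(e=\{u,v\}\). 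But then \(e\) lies on the triangle \(\Delta\), which is a cycle in \(G\) — contradicting the hypothesis that \(e\) is a bridge in \(G\) (a bridge lies on no cycle). So \(G-S\) has no \(T\)-triangle, and \(S\) is a \(T\)-THS of \(G\). Applying \autoref{lem:triangles} in both directions, \((G;T;k)\) is a \YES instance if and only if \((G';T;k)\) is, which is exactly what safeness requires.

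I do not expect any real obstacle here: the argument is a direct consequence of the definition of a bridge (it lies on no cycle, hence on no triangle) together with the triangle-vs-cycle equivalence in chordal graphs already proved in \autoref{lem:triangles}. The only point that needs a moment's care is that deleting \(e\) preserves chordality, which is precisely \autoref{lem:bridge-deletion}; everything else is the observation that edge deletion cannot create a cycle and vertex deletion cannot create a cycle.

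\begin{proof}
  By \autoref{lem:bridge-deletion}, since \(G\) is chordal and \(e=\{u,v\}\) is a bridge in \(G\), the graph \(G'=G-\{e\}\) is also chordal; hence \((G';T;k)\) is a valid instance of \SFVSC. By \autoref{lem:triangles} it suffices to show that a vertex subset \(S\subseteq{}V(G)\) is a \(T\)-THS of \(G\) if and only if it is a \(T\)-THS of \(G'\).

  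Suppose \(S\) is a \(T\)-THS of \(G\). Then \(G'-S=(G-S)-\{e\}\) is a subgraph of \(G-S\), which contains no \(T\)-triangle; hence \(G'-S\) contains no \(T\)-triangle either, and \(S\) is a \(T\)-THS of \(G'\).

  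Conversely, suppose \(S\) is a \(T\)-THS of \(G'\), and assume for contradiction that \(G-S\) contains a \(T\)-triangle \(\Delta\). Every edge of \(G'-S\) is an edge of \(G-S\), and the only edge of \(G-S\) possibly absent from \(G'-S\) is \(e\). Since \(G'-S\) has no \(T\)-triangle, the triangle \(\Delta\) must contain the edge \(e=\{u,v\}\). But then \(e\) lies on the cycle \(\Delta\) in \(G\), contradicting the assumption that \(e\) is a bridge in \(G\). Hence \(G-S\) contains no \(T\)-triangle, so \(S\) is a \(T\)-THS of \(G\).

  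Thus, by \autoref{lem:triangles}, \((G;T;k)\) is a \YES instance if and only if \((G';T;k)\) is a \YES instance. Since \(e\) can be found and deleted in polynomial time, \autoref{rr:no-cut-edges} is safe.
\end{proof}
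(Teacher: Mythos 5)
Your proof is correct and takes essentially the same approach as the paper: both directions rest on the facts that $G'$ is a subgraph of $G$ (forward) and that a bridge lies on no cycle, so restoring $e$ cannot create a $T$-cycle (reverse). The only cosmetic difference is that the paper's proof of this rule phrases the reverse direction via connected components ($u$ and $v$ lie in different components of $G'-S'$), whereas you argue via $T$-triangles and \autoref{lem:triangles} --- which is in fact the same argument the paper uses for the split-graph analogue in \autoref{lem:no-bridges-rr-safe-split}.
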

\begin{proof} 
  By \autoref{lem:bridge-deletion}, $G'$ is a chordal graph.
  Since $G'$ is a subgraph of $G$, any subset-FVS of $G$ is also a subset-FVS of $G'$. This proves the safeness of forward direction of reduction rule. In graph $G'$, vertices $u, v$ are different connected components. If $S'$ is a subset-FVS of $G'$ which does not contain either of $u$ or $v$ then these two vertices are in different connected components of $G' - S'$. Hence adding an edge $e = uv$ in graph $G' - S'$ does not add more cycle. If $S'$ contains either $u$ or $v$ then graphs $G' - S'$ and $G - S'$ are identical. Hence in either case, $S'$ is a subset-FVS of $G$.  
\end{proof}

\begin{lemma}\label{lem:no-small-cliques}
  Let \((G;T;k)\) be an instance of \SFVSC which is reduced with
  respect to
  \Autoref{rr:no-more-clique-components,rr:no-cut-edges}.  Then
  every maximal clique in \(G\) is of size at least three.
\end{lemma}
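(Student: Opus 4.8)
The plan is to argue by contradiction. I will assume that $G$ contains a maximal clique $Q$ with $|Q|\le 2$ and show that this forces one of \Autoref{rr:no-more-clique-components,rr:no-cut-edges} to apply to $(G;T;k)$, contradicting the hypothesis that the instance is reduced with respect to both rules.

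First I would dispose of the case $|Q|=1$. If $Q=\{v\}$ is a maximal clique then $v$ can have no neighbour, since any neighbour $u$ of $v$ would make $\{u,v\}$ a clique strictly containing $Q$. Thus $\{v\}$ is a connected component of $G$ which is (trivially) a clique, so \autoref{rr:no-more-clique-components} applies to $(G;T;k)$ --- a contradiction.

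The main case is $|Q|=2$, say $Q=\{u,v\}$ with $uv\in E(G)$. Maximality of $Q$ means that $u$ and $v$ have no common neighbour, equivalently that $uv$ lies in no triangle of $G$. The key step is to show that $uv$ is then a \emph{bridge} of $G$. I would prove this exactly in the style of the proof of \autoref{lem:triangles}: if $uv$ were not a bridge it would lie on some cycle; pick a shortest cycle $C$ through $uv$. Since $u$ and $v$ have no common neighbour, $C$ is not a triangle, so $|C|\ge 4$, and chordality of $G$ yields a chord $xy$ of $C$. This chord splits $C$ into two cycles, each of length strictly less than $|C|$, whose vertex sets together cover $V(C)$; the edge $uv$, being distinct from the chord $xy$, lies on exactly one of these two cycles, giving a strictly shorter cycle through $uv$ --- contradicting the minimality of $C$. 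Hence $uv$ is a bridge, so \autoref{rr:no-cut-edges} applies to $(G;T;k)$ --- again a contradiction.

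Combining the two cases establishes that every maximal clique of $G$ has size at least three. I do not expect any real obstacle here: the only point needing a little care is the chord-splitting argument showing that a triangle-free edge of a chordal graph is a bridge, and this is already essentially carried out in the proofs of \autoref{lem:bridge-deletion} and \autoref{lem:triangles}, so it can simply be reused.
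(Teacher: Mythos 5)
Your proof is correct, but it takes a genuinely different route from the paper's. You reduce everything to the observation that the edge $uv$ of a size-two maximal clique lies in no triangle, and then use the standard shortest-cycle/chord-splitting argument (the same one used in the proofs of \autoref{lem:triangles} and \autoref{lem:bridge-deletion}) to conclude that a triangle-free edge of a chordal graph lies on no cycle at all, i.e.\ is a bridge; hence \autoref{rr:no-cut-edges} would still be applicable, a contradiction. The paper instead argues via vertex connectivity: it uses \autoref{rr:no-more-clique-components} to find a vertex $w$ adjacent to exactly one of $u,v$, asserts that the bridgeless graph $G$ is $2$-vertex-connected, takes a shortest $u$--$w$ path avoiding $v$, and uses chordality to force the edge $uw$, contradicting maximality of $\{u,v\}$. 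Your version is the more economical of the two: it needs only the definition of a bridge and chordality, and it sidesteps the connectivity step entirely --- a step that is in fact delicate, since ``bridgeless'' does not in general imply ``no cut vertex'' (two triangles sharing a vertex form a chordal, bridgeless graph with a cut vertex), so the paper's intermediate claim of $2$-connectivity requires more care than your argument does. Your handling of the $|Q|=1$ case (an isolated vertex is a clique component, so \autoref{rr:no-more-clique-components} fires) matches the paper's in substance, and your observation that the edge of an isolated $K_2$ is itself a bridge means your case analysis is complete without any separate appeal to connectedness.
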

\begin{proof}
  \autoref{rr:no-more-clique-components} ensures that for any
  maximal clique \(Q\) in the reduced graph \(G\), some vertex in
  \(Q\) has a neighbour which is not in \(Q\). Thus every maximal
  clique in \(Q\) has size at least two.  Let \(Q=\{u,v\}\) be a
  maximal clique of size two in \(G\). Then there is a third
  vertex \(w\) in \(G\) such that either \(\{u,w\}\) or
  \(\{v,w\}\) is an edge in \(G\). If \emph{both} these edges are
  present in \(G\) then \(\{u,v,w\}\) is a clique, contradicting
  the maximality of \(Q\).  So let \(\{u,w\}\) be a non-edge in
  \(G\).

  \autoref{rr:no-more-clique-components} ensures that \(G\) has at
  least three vertices, and hence \autoref{rr:no-cut-edges} ensures
  that \(G\) has no cut vertex\footnote{Exercise 2.3.1 in ``Graph
    Theory with Applications'' by Bondy and Murty}. \(G\) is thus
  2-vertex-connected. In particular, there is a path from \(u\) to
  \(w\) which avoids the vertex \(v\); let \(P\) be a shortest
  such path. Since \(G\) is chordal
  and since \(\{u,v,w\}\) is a path in \(G\), path
  \(P\) cannot be of length two or more. Thus \(P\) is a single
  edge \(\{u,w\}\), a contradiction.
\end{proof}

At this point, every vertex in the graph is part of at least one
\(T\)-triangle.
\begin{lemma}\label{lem:all-in-T-triangles}
  Let \((G;T;k)\) be an instance of \SFVSC which is reduced with
  respect to
  \Autoref{rr:no-more-clique-components,rr:no-cut-edges,rr:only-T-neighbours}.
  Then every vertex in \(G\) is part of a \(T\)-triangle.
\end{lemma}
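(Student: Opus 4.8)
The plan is to bootstrap directly from \autoref{lem:no-small-cliques}, which already guarantees that in any instance reduced with respect to \Autoref{rr:no-more-clique-components,rr:no-cut-edges} every maximal clique of \(G\) has size at least three. The first step is the trivial observation that every vertex \(v\in V(G)\) lies in \emph{some} maximal clique: the singleton \(\{v\}\) is a clique and extends to a maximal one. Combined with \autoref{lem:no-small-cliques}, this already puts every vertex inside a clique of size at least three, which is most of the work.

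Next I would do a two-case analysis on whether \(v\) is a terminal. If \(v\in T\), take a maximal clique \(Q\ni v\); since \(|Q|\ge 3\) there are two further vertices \(a,b\in Q\setminus\{v\}\), and \(\{v,a,b\}\) is a triangle through the terminal \(v\), hence a \(T\)-triangle containing \(v\). If \(v\notin T\), I use the hypothesis that \((G;T;k)\) is reduced with respect to \autoref{rr:only-T-neighbours} to obtain a terminal neighbour \(t\in T\) of \(v\); then the edge \(vt\) is a clique, so it extends to a maximal clique \(Q\) of size at least three, and any third vertex \(w\in Q\setminus\{v,t\}\) yields the \(T\)-triangle \(\{v,t,w\}\), which passes through \(t\) and contains \(v\).

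I do not expect a genuine obstacle here: the only substantive ingredient is \autoref{lem:no-small-cliques}, and everything else is bookkeeping. The single point that needs care is checking that reducedness with respect to \autoref{rr:only-T-neighbours} is exactly what licenses the terminal neighbour in the non-terminal case, and that this rule — together with \autoref{rr:no-more-clique-components} and \autoref{rr:no-cut-edges}, which feed \autoref{lem:no-small-cliques} — is part of the stated hypothesis of the lemma, as it is.
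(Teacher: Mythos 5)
Your proposal is correct and follows essentially the same route as the paper's own proof: extend the vertex (or the edge to a terminal neighbour) to a maximal clique and invoke \autoref{lem:no-small-cliques} to get a third vertex, yielding the desired \(T\)-triangle. If anything, your explicit case split on whether \(v\) is a terminal is slightly more careful than the paper's one-line argument, which tacitly assumes \(v\) has a terminal neighbour.
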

\begin{proof}
  Let \(v\) be a vertex in \(G\) and let \(t\) be a terminal which
  is adjacent to \(v\).
  Let \(Q\) be a maximal clique in \(G\) which contains the edge
  \(\{v,t\}\); such a clique exists, because the edge \(\{v,t\}\)
  is a clique by itself. Clique \(Q\) contains at least one other
  vertex \(u\), and \(\{u,v,t\}\) form a \(T\)-triangle which
  contains \(v\). 
\end{proof}

Consider a simplicial vertex $v$ in reduced graph $G$. Either
vertex $v$ is a terminal or by \autoref{lem:all-in-T-triangles},
it is adjacent with some terminal. In either case, simplicial
clique which containing $v$ contains a terminal. Combining this
fact with \autoref{lem:no-small-cliques}, we get following
Corollary.

\begin{corollary}\label{cor:simplicial-terminal}
  Let \((G;T;k)\) be an instance of \SFVSC which is reduced with
  respect to
  \Autoref{rr:no-more-clique-components,rr:only-T-neighbours,rr:no-cut-edges},
  and let \(Q\) be a simplicial clique in \(G\). Then \(Q\) has
  size at least three, and contains a terminal vertex.
\end{corollary}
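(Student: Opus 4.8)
The plan is to unpack the definition of a simplicial clique and then apply the two structural lemmas already proved in this section. First I would recall, via \autoref{fac:simplicial-vertex-characterization}, that a simplicial clique \(Q\) is by definition the set \(N[v]\) for some simplicial vertex \(v\), and that this set is the \emph{unique} maximal clique of \(G\) containing \(v\); in particular \(Q\) is a maximal clique of \(G\). Since the instance \((G;T;k)\) is reduced with respect to \Autoref{rr:no-more-clique-components,rr:no-cut-edges}, I can then invoke \autoref{lem:no-small-cliques} directly to conclude that \(|Q|\geq 3\).

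For the ``contains a terminal'' part, I would case on whether \(v\in T\). If \(v\) is itself a terminal then we are done immediately, since \(v\in Q\). Otherwise \(v\) is a non-terminal vertex, and because the instance is reduced with respect to \autoref{rr:only-T-neighbours} (equivalently, by \autoref{lem:all-in-T-triangles}, which needs exactly the three reduction rules named in the hypothesis of this corollary) the vertex \(v\) has a terminal neighbour \(t\). Since \(v\) is simplicial we have \(N(v)\subseteq N[v]=Q\), hence \(t\in Q\), and so \(Q\) again contains a terminal vertex.

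I do not expect a genuine obstacle here. The one point that warrants a moment's care is the implicit step that a simplicial clique is a \emph{maximal} clique, which is needed so that \autoref{lem:no-small-cliques} (stated for maximal cliques) applies; this is exactly \autoref{fac:simplicial-vertex-characterization}, and I would cite it explicitly rather than leave it to the reader.
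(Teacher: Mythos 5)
Your proposal is correct and follows essentially the same route as the paper: the paper also obtains the size bound from \autoref{lem:no-small-cliques} and the terminal from casing on whether the simplicial vertex is a terminal, using \autoref{lem:all-in-T-triangles} (equivalently \autoref{rr:only-T-neighbours}) to supply a terminal neighbour otherwise. Your explicit note that a simplicial clique is a maximal clique (via \autoref{fac:simplicial-vertex-characterization}) is a small but welcome addition that the paper leaves implicit.
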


It remains to argue that these reduction rules can be applied in polynomial time. In the following Lemma we prove a stronger statement. We prove that not only each of this reduction rules can be applied in linear time, but we can apply all these rules exhaustively in linear time.  
  
\begin{lemma}\label{lemma:running-time-rr} Let \((G;T;k)\) be an
  instance of \SFVSC. We can exhaustively apply
  \Autoref{rr:no-more-clique-components,rr:only-T-neighbours,rr:no-cut-edges}
  on this instance in time $\mathcal{O}(n + m)$. Here $n, m$ are
  number of vertices and edges in input graph $G$.
\end{lemma}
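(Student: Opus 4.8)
The plan is to show that the three reduction rules—\autoref{rr:no-more-clique-components} (removing clique components), \autoref{rr:only-T-neighbours} (removing non-terminal vertices with no terminal neighbour), and \autoref{rr:no-cut-edges} (removing bridges)—can be applied exhaustively in total linear time, rather than merely in polynomial time per application. The key structural insight is that these rules only \emph{delete} vertices and edges, never add anything, so the total work is bounded by the size of the input graph if we can charge each deletion and each ``search'' step to a vertex or edge that disappears. First I would precompute, in $\Oh(n+m)$ time, (a) the list of maximal cliques of $G$ using the linear-time clique enumeration for chordal graphs (\cite[Theorem 4.17]{golumbic2004algorithmic}), (b) the set of bridges of $G$ via a single DFS/low-link computation, and (c) for each non-terminal vertex, a boolean flag recording whether it has a terminal neighbour, computed by one pass over the adjacency lists.

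Next I would argue that the rules can be interleaved without recomputing these structures from scratch. The crucial observation is that deleting a vertex or edge can only \emph{destroy} the ``triggering condition'' of a later rule application at another location, or create a new bridge/new clique component \emph{locally} near the deleted object. So I would maintain a worklist (queue) of ``suspect'' vertices, edges, and connected components, seeded with all the objects that initially satisfy some rule's precondition. Processing one item from the worklist takes time proportional to its degree, performs the deletion, and then re-examines only the $\Oh(1)$ or $\Oh(\deg)$ neighbouring objects, pushing any newly-triggered ones onto the worklist. The standard amortization argument—each edge is examined $\Oh(1)$ times before being deleted, and each vertex is handled once—then gives the $\Oh(n+m)$ bound. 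For bridges specifically, I would either re-run the bridge computation only within the small piece that was disconnected (whose total size over all applications telescopes to $\Oh(n+m)$), or observe that after exhaustively applying \autoref{rr:no-cut-edges} the graph is the disjoint union of its 2-edge-connected components, which can be computed once.

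The main obstacle I expect is the bookkeeping around \autoref{rr:no-more-clique-components}: this rule, unlike the other two, changes the parameter $k$ and deletes an entire connected component at once, and detecting that a newly-created component (after some bridge deletions) is a clique must be done cheaply. I would handle this by noting that when a bridge $\{u,v\}$ is deleted, at most one new ``small'' piece can be split off near $u$ or near $v$, and checking whether a component of size $s$ is a clique costs $\Oh(s^2)$ only if $s$ is large, but in fact we only need to check components whose every maximal clique is the whole component—a property we can test incrementally using the precomputed clique list restricted to that component, with total cost $\Oh(n+m)$ across all applications since the components are vertex-disjoint. Assembling these pieces, together with the safeness lemmas already proved for each rule, yields an exhaustive application procedure running in $\Oh(n+m)$ time, as claimed.
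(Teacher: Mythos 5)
Your overall strategy---precompute the relevant structures once and then maintain them under deletions with an amortized worklist---is a genuinely different route from the paper's, but as written it has gaps that the paper's argument is specifically designed to avoid. The paper does not interleave or iterate the rules at all: it calls a non-terminal a \emph{strong neighbour} of $T$ if it lies in a maximal clique of size at least three that contains a terminal, deletes all non-strong-neighbours in one pass (using the linear-time maximal-clique enumeration for chordal graphs), then deletes all bridges in one global pass, then deletes all clique components in one pass, and proves that the resulting graph is already reduced with respect to all three rules. The structural fact doing the work is that a maximal clique of size at least three containing a terminal survives the last two steps (none of its edges is a bridge, and it disappears only together with its whole component), so no cascade of new rule applications can start. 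That ``one ordered pass suffices'' claim is the heart of the paper's proof, and it is exactly what your proposal lacks.

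Concretely, the amortization you invoke is asserted rather than proved, and at least two of its steps are doubtful. First, \autoref{rr:only-T-neighbours} deletes vertices, and deleting a vertex $v$ can turn edges between neighbours of $v$ into bridges; the natural local recheck is over the edges of $G[N(v)]$, and a fixed edge $xy$ can be re-examined once for every deleted common neighbour of $x$ and $y$, so the total work is bounded by the number of triangles rather than by $m$ --- the claim that ``each edge is examined $\Oh(1)$ times before being deleted'' does not follow. Second, the fallback of re-running the bridge computation ``only within the small piece that was disconnected'' does not telescope: the new bridges are created by vertex deletions, not by bridge deletions, the affected region need not be a small split-off piece, and the sum of the sizes of the components you would re-scan can be quadratic. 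Third, computing the 2-edge-connected components ``once'' only handles the first round of bridge removal; you would still have to argue that the subsequent vertex and component deletions create no further bridges, which is essentially the paper's structural lemma and is nowhere argued in your proposal. (Your criterion for detecting new clique components is also circular as stated, though that part is repairable by comparing $\binom{s}{2}$ with the component's edge count.) To close the gap you either need to carry out the amortization honestly --- which I do not believe works in $\Oh(n+m)$ for the bridge updates --- or adopt the paper's observation that a single, suitably strengthened application of each rule, in the order vertices--bridges--components, already yields a fully reduced instance.
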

\begin{proof}
  Let $\mathcal{C}$ be the set of all maximal clique of input graph $G$. The cardinality of set $\mathcal{C}$ is at most $n$ and we can compute this set in time $\Oh(n + m)$ (\cite[Theorem 4.17]{golumbic2004algorithmic}). While constructing set $\mathcal{C}$, we can mark all cliques in $C$ which contains terminal in it. We define \emph{strong neighbors} of $T$ as set of non-terminal vertices which are present in maximal clique containing a terminal and is of size at least three.
  Given a chordal graph $G$ and set $\mathcal{C}$, one can mark
  all strong neighbors of $T$ in time $\Oh(n + m)$. Moreover, all
  maximal cliques in $\mathcal{C}$ of size two are either bridges
  or isolated cliques.  Given graph $G$, algorithm computes set
  $\mathcal{C}$ and performs following steps. Step~$1:$ Mark all
  strong neighbors of $T$ and delete remaining vertices. Step~$2:$
  Find and delete all bridges in graph $G$. Step~$3:$ Delete
  isolated cliques in graph according to
  \autoref{rr:no-more-clique-components}.  One iteration of three
  steps can be performed in time $\Oh(n + m)$. Next, we argue that
  only one iteration is sufficient to get a non-reducible
  instance.

Consider a graph $G'$ obtained by above process. We first argue that every non-terminal vertex in $G'$ is strong neighbor (and hence neighbor) of $T$. Suppose not. Consider a non-terminal vertex $v$ in $G'$ which is not a strong neighbor of $T$. Since vertex $v$ was not deleted in Step~1, $v$ was a strong vertex in $G$. This implies, $v$ is part of maximal clique $Q$ of size at least $3$. Since we only delete bridge edges in Step~$2$, no edge in $Q$ has been deleted. As $v$ is not deleted in Step~3, $v$ is not a part of isolated clique. This implies $Q$ is not an isolated clique and hence no vertex in $Q$ has been deleted in Step~3. Hence $Q$ is present in graph $G'$. Since $Q$ contains a terminal, vertex $v$ and is of size at least there, this contradicts the fact that $v$ is not a strong neighborhood of $T$. All bridges in graph $G$ has been deleted in Step~$2$. Step~$3$ of the algorithm deletes an isolated clique which satisfies certain criteria. Hence if there is a bridge in $G'$ then the same bridge is present in $G$ at the end of Step~$2$. Hence $G'$ contains no bridge. Graph $G'$ also does not contain any isolated cliques as all such cliques were deleted in Step~$3$.    
\end{proof}

\subsection{Simple Branching Rules}
Our first branching rule ensures that every non-terminal vertex
has at least two terminals in its neighbourhood. So let \(v\) be a
non-terminal vertex which has exactly one terminal \(t\) in its
neighbourhood. Let \(x\) be a vertex---as guaranteed to exist by
\autoref{lem:all-in-T-triangles}---such that \(\{v,t,x\}\) is a
\(T\)-triangle containing \(v\). By our assumption \emph{every}
\(T\)-triangle which contains \(v\) also contains \(t\).  A
solution \(S\) as called as \emph{target solution} if
\(|S|\leq{}k\).  It follows that if \(S\) is a target solution
then \((S\setminus{}\{v\})\cup\{t\}\) is also a target
solution. This means that we may assume that if there exists a
target solution then there exists one which does not contain
\(v\). Hence it is safe to branch on any two neighbours of \(v\)
which form a \(T\)-triangle containing \(v\).

\begin{branchingrule}\label{br:non-terminals-two-terminal-nbrs}
  Let \((G;T;k)\) be an instance of \SFVSC, let \(v\) be a
  non-terminal vertex which has exactly one terminal neighbour
  \(t\), and let \(\{v,t,x\}\) be a \(T\)-triangle containing
  \(v\). Let \(G_{1}=G-\{t\},G_{2}=G-\{x\}\) and
  \(T_{1}=T\setminus\{t\},T_{2}=T\setminus\{x\}\). The new
  instances are: \((G_{1};T_{1};k-1),(G_{2};T_{2};k-1)\).
\end{branchingrule}

\begin{lemma} \autoref{br:non-terminals-two-terminal-nbrs} is
  exhaustive and it can be executed in time $\Oh(n + m)$.
\end{lemma}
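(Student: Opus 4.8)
The plan is to first isolate the structural fact that drives both correctness and the construction, then verify the two directions of exhaustiveness, and finally bound the running time by routine adjacency-list scanning.

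First I would record the key observation. Because $v$ is a \emph{non-terminal} vertex whose \emph{only} terminal neighbour is $t$, every $T$-triangle of $G$ containing $v$ must also contain $t$: such a triangle has the form $\{v,a,b\}$ with $a,b\in{}N(v)$, and since $v$ is not a terminal at least one of $a,b$ is a terminal, which forces it to equal $t$. With this in hand I would prove the exchange claim flagged in the text preceding the rule: if $S$ is a target solution then so is $S'=(S\setminus\{v\})\cup\{t\}$, and moreover $v\notin{}S'$ (as $v\neq{}t$). To see that $S'$ hits every $T$-triangle $\Delta$ of $G$, I would split on whether $v\in\Delta$: if $v\notin\Delta$ then the vertex of $S$ hitting $\Delta$ lies in $\Delta\setminus\{v\}\subseteq{}S'$; if $v\in\Delta$ then $t\in\Delta$ by the structural fact, and $t\in{}S'$. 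Thus whenever a target solution exists, one avoiding $v$ exists.

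For the forward direction of exhaustiveness I would start from a target solution and pass to $S'$. Since $\{v,t,x\}$ is a $T$-triangle and $v\notin{}S'$, the set $S'$ must contain $t$ or $x$. If $t\in{}S'$, then $S'\setminus\{t\}$ has size at most $k-1$ and $G_{1}-(S'\setminus\{t\})=G-S'$ has no $T$-triangle, hence no $T_{1}$-triangle, so $(G_{1};T_{1};k-1)$ is a \YES instance; the case $x\in{}S'$ is symmetric for $(G_{2};T_{2};k-1)$. For the reverse direction, given a solution $S_{1}$ of $(G_{1};T_{1};k-1)$ I would use $S_{1}\cup\{t\}$, of size at most $k$: here $G-(S_{1}\cup\{t\})=G_{1}-S_{1}$ has no $T_{1}$-triangle, and since $t$ is absent from this graph any surviving $T$-triangle would have its terminal vertex in $T\setminus\{t\}=T_{1}$, making it a $T_{1}$-triangle, a contradiction; hence $S_{1}\cup\{t\}$ solves $(G;T;k)$. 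The analogous argument with $x$ handles $(G_{2};T_{2};k-1)$.

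Finally, for the running time I would locate an eligible $v$ by scanning each non-terminal vertex and counting its terminal neighbours along its adjacency list; summed over all vertices this is $\Oh(n+m)$ and also returns the unique terminal neighbour $t$. The third vertex $x$ of a $T$-triangle through $v$ exists by \autoref{lem:all-in-T-triangles} and is a common neighbour of $v$ and $t$, which I would find in $\Oh(\deg(v)+\deg(t))$ time by marking $N(v)$ and scanning $N(t)$. Building $G_{1}=G-\{t\}$ and $G_{2}=G-\{x\}$ with their updated terminal sets is a single vertex deletion each, costing $\Oh(n+m)$. I expect the only delicate point to be the exchange argument together with the ``terminal-in-absence'' bookkeeping relating $T$ and $T_{i}$ in both directions; the time bound itself is straightforward.
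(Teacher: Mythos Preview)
Your approach is essentially the paper's: both hinge on the observation that every $T$-triangle through $v$ also contains $t$, both use the exchange $(S\setminus\{v\})\cup\{t\}$, and your reverse direction and running-time arguments match the paper's (yours are in fact spelled out more carefully).

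There is one small slip in the exchange claim as you state it. You assert unconditionally that if $S$ is a target solution then so is $S'=(S\setminus\{v\})\cup\{t\}$, but the \emph{size} bound can fail: if $v\notin S$ and $t\notin S$ (possible, since then $x\in S$ hits the triangle $\{v,t,x\}$) you get $|S'|=|S|+1$, which may be $k+1$. Consequently your forward-direction line ``$S'\setminus\{t\}$ has size at most $k-1$'' is not justified in that case. The paper's proof avoids this by doing a three-way case split on which of $\{v,t,x\}$ lies in $S$ and invoking the exchange only when $v\in S$; the fix for your write-up is the same: if $v\notin S$ then $S$ itself already contains $t$ or $x$ and you branch directly on $S$, and only when $v\in S$ do you pass to $S'$, where $|S'|\le|S|\le k$ holds. (The paper's informal discussion preceding the rule states the exchange just as loosely as you do; its formal proof, however, is careful about this.)
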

\begin{proof} Consider an instance $(G; T; k)$ of \SFVSC and let
  $(G_{1};T_{1};k-1)$ and $(G_{2};T_{2};k-1)$ be two instances
  produced by \autoref{br:non-terminals-two-terminal-nbrs} when
  applied on $(G; T; k)$.  We argue that $(G;T;k)$ is a \yes
  instance if and only if either $(G_{1};T_{1};k-1)$ or
  $(G_{2};T_{2};k-1)$ is a \yes instance. Let $v, t, x$ be
  vertices in $G$ as specified in the statement of branching rule.

  $(\Rightarrow)$ Since $\{v, t, x\}$ is a $T$-triangle in $G$, any target solution of $(G; T; k)$ contains at least one vertex among them. Consider a target solution $S$ which contains $t$. Set $S \setminus \{t\}$ is a solution of $(G - \{t\}; T_1; k - 1)$. By applying similar argument in the case when $S$ contains $x$ implies that $(G - \{x\}; T_2; k - 1)$ is an \yes instance. We now consider the case when $S$ contains $v$. Note that any $T$-cycle passing through $v$ contains terminal $t$. Hence if $S$ is a target solution contains $v$ then $(S \setminus \{v\}) \cup \{t\}$ is also a target solution. This implies if $(G; T; k)$ is a \yes instance then either $(G_{1};T_{1};k-1)$ or $(G_{2};T_{2};k-1)$ is also a \yes instance.

  $(\Leftarrow)$  The proof of reverse direction follows from the fact that for any $U \subseteq V(G)$ if $S'$ is a target solution of $(G - U; T \setminus U; k - |U|)$ then $S' \cup U$ is a target solution of $(G; T; k)$.

  To apply this reduction rule, we need to find a vertex which is adjacent with exactly one terminal. If such vertex exists then it can be found in time $\Oh(n + m)$.
\end{proof}

The rest of our branching rules apply to simplicial cliques. As
noted above, this is more or less straightforward when we have a
simplicial clique of size either three, or at least five. Let
\(v\) be a simplicial vertex in graph \(G\) and \(\{v,a,b\}\) its
simplicial clique. By \autoref{cor:simplicial-terminal} at least
one of \(\{v,a,b\}\) is a terminal, and so every solution must
contain at least one of \(\{v,a,b\}\). Vertex \(v\) is not part of
any triangle in the graphs \(G-\{a\}\)
and \(G-\{b\}\).  It follows that if a solution \(S\) contains
\(v\) then both \((S\setminus\{v\})\cup\{a\}\) and
\((S\setminus\{v\})\cup\{b\}\) are solutions of the same size or
smaller. Thus we get that there is an optimal solution which does
not contain \(v\), and contains at least one of \(\{a,b\}\). This
gives us a two-way branching rule for such cliques.

\begin{branchingrule}\label{br:branch-on-three-cliques}
  Let \((G;T;k)\) be an instance of \SFVSC, and let \(\{v,a,b\}\)
  be a simplicial clique with \(v\) being the simplicial
  vertex. Let \(G_{1}=G-\{v,a\},G_{2}=G-\{v,b\}\) and
  \(T_{1}=T\setminus\{v,a\},T_{2}=T\setminus\{v,b\}\). The new
  instances are: \((G_{1};T_{1};k-1),(G_{2};T_{2};k-1)\).
\end{branchingrule}

\begin{lemma} \autoref{br:branch-on-three-cliques} is exhaustive
  and it can be executed in time $\Oh(n + m)$.
  \end{lemma}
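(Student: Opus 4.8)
The statement to prove is that \autoref{br:branch-on-three-cliques} is exhaustive and can be executed in time $\Oh(n+m)$. I will prove the two parts separately. For exhaustiveness, fix an instance $(G;T;k)$ together with a simplicial clique $\{v,a,b\}$ where $v$ is the simplicial vertex. By \autoref{cor:simplicial-terminal} this clique has size at least three (so it is exactly $\{v,a,b\}$ as stated) and contains a terminal, hence every $T$-triangle through $v$ lies inside $\{v,a,b\}$, and indeed $\{v,a,b\}$ itself is a $T$-triangle. The forward direction is the routine ``$U$-deletion'' argument already used in the previous proofs: for any $U\subseteq V(G)$, if $S'$ is a target solution of $(G-U;\,T\setminus U;\,k-|U|)$ then $S'\cup U$ is a target solution of $(G;T;k)$; applying this with $U=\{v,a\}$ and $U=\{v,b\}$ respectively shows that if either new instance is a \yes instance then so is $(G;T;k)$.

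For the harder (forward) direction, suppose $(G;T;k)$ is a \yes instance and let $S$ be a target solution (i.e.\ $|S|\le k$). Since $\{v,a,b\}$ is a $T$-triangle, $S$ contains at least one of $v,a,b$. If $S\cap\{a,b\}\ne\emptyset$ we are essentially done: say $a\in S$; then $S$ restricted to $G_1=G-\{v,a\}$ — more precisely $S\setminus\{v,a\}$ — is a subset-THS of $G_1$ of size at most $k-1$ only if $v\in S$ as well, so the clean way to phrase it is: we show there is a target solution of $(G;T;k)$ that contains both $v$ and $a$ (or both $v$ and $b$). To get this, start from $S$ and argue that we may assume $v\in S$: if $v\notin S$, then since $S$ already hits the triangle $\{v,a,b\}$ it contains $a$ or $b$, say $a$; but then $v$ is simplicial, $N[v]$ is a clique, and $v$ is not in any triangle of $G-\{a\}$ (every triangle through $v$ uses both $a$ and $b$), so $(S\setminus\{v\})\cup\{v\}$ — this needs rephrasing — the point is $S$ is already a target solution, and I want one containing $v$. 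The correct move: if $S$ is a target solution with $v\notin S$, pick any $x\in S\cap\{a,b\}$ and observe $S' := (S\setminus\{x\})\cup\{v\}$ need not work. Instead the genuinely safe operation is the one described in the text preceding the rule: if $S$ contains $v$, replace to get a solution containing $a$ (or $b$); combined with the triangle-hitting requirement this yields a target solution containing at least one of $\{a,b\}$ and we may additionally assume it contains $v$ is \emph{not} forced. So the clean final argument is: any target solution $S$ contains some $w\in\{v,a,b\}$; if $w\in\{a,b\}$, WLOG $w=a$, then $S\setminus\{v\}$ is a target solution of $(G_1;T_1;k-1)$ provided we also have $v\in S$ — and if $v\notin S$, then $S\setminus\{a\}$ already has size $\le k-1$ and $S$ contains no vertex of $\{v,b\}$ that we need; but $G_1 = G-\{v,a\}$, and $S\setminus(\{v,a\})=S\setminus\{a\}$ has size $\le k-1$ and is a subset-THS of $G_1$ since $G_1$ is an induced subgraph of $G-a$ hence of $G$ and deleting vertices destroys no solution. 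Thus $(G_1;T_1;k-1)$ is a \yes instance. If instead $S\cap\{a,b\}=\emptyset$, then $v\in S$, and by the displayed claim preceding the rule (if $S$ contains $v$ then $(S\setminus\{v\})\cup\{a\}$ is a solution of size $\le|S|$), we obtain a target solution $S''$ with $a\in S''$, reducing to the previous case. This disposes of exhaustiveness.

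For the running-time claim, I will invoke \autoref{fac:clique-tree-in-polytime} / the standard fact that a simplicial vertex of a chordal graph can be found in $\Oh(n+m)$ time (quoted in the Preliminaries via \cite{golumbic2004algorithmic}), so the clique $\{v,a,b\}=N[v]$ is identified in linear time; forming $G_1$ and $G_2$ by deleting two vertices each, and updating the terminal sets, is clearly $\Oh(n+m)$. Hence the branching rule runs in $\Oh(n+m)$ time. I expect the main obstacle to be stating the case analysis in the forward direction cleanly — in particular making precise, without circularity, the passage from ``some target solution hits the triangle $\{v,a,b\}$'' to ``some target solution contains $v$ together with one of $a,b$'' — and the resolution is exactly the replacement argument ``if $S\ni v$ then $(S\setminus\{v\})\cup\{a\}$ is again a target solution,'' which is valid because $v$ is simplicial and therefore lies in no triangle of $G-\{a\}$.
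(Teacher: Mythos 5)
Your argument for the direction ``$(G;T;k)$ is a \yes instance $\Rightarrow$ some branch is a \yes instance'' does, despite its many false starts, land on the same reasoning as the paper: any target solution $S$ hits the $T$-triangle $\{v,a,b\}$, and if it hits it only in $v$ then $(S\setminus\{v\})\cup\{a\}$ is again a target solution, because $N(v)=\{a,b\}$ forces every triangle through $v$ to be $\{v,a,b\}$ itself; once $S$ may be assumed to contain $a$ (or $b$), the set $S\setminus\{v,a\}$ is a subset-THS of $G-\{v,a\}$ of size at most $k-1$. The running-time claim is also essentially right, though the linear-time simplicial-vertex routine you invoke returns \emph{some} simplicial vertex, not necessarily one of degree two; the paper instead tests each degree-two vertex for simpliciality, which is the sharpening your appeal needs.

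The genuine gap is in the other direction. You dispose of it with the ``routine $U$-deletion'' fact --- if $S'$ is a target solution of $(G-U;\,T\setminus U;\,k-|U|)$ then $S'\cup U$ is a target solution of $(G;T;k)$ --- applied with $U=\{v,a\}$. But the branch instance is $(G-\{v,a\};\,T\setminus\{v,a\};\,k-1)$, not $(G-\{v,a\};\,T\setminus\{v,a\};\,k-2)$: the parameter drops by one while \emph{two} vertices are deleted, so $S'\cup\{v,a\}$ may have size $k+1$ and the generic fact does not apply. The paper's proof adds back only $a$: the set $S'\cup\{a\}$ has size at most $k$ and hits all $T$-triangles of $G-\{v\}$; then, since the only neighbours of $v$ are $a$ and $b$ and $a$ has been deleted, $v$ has degree at most one in $G-(S'\cup\{a\})$ and therefore lies in no triangle of that graph, so $S'\cup\{a\}$ is a target solution of $(G;T;k)$ itself. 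This second use of the simpliciality of $v$, on the reverse direction, is exactly the step your write-up is missing.
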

  \begin{proof} Consider an instance $(G; T; k)$ of \SFVSC and let $(G_{1};T_{1};k-1)$ and $(G_{2};T_{2};k-1)$ be two instance produced by \autoref{br:branch-on-three-cliques} when applied on $(G; T; k)$.
    Let $v$ be a simplicial vertex of degree two and $a, b$ be its neighbors.

    $(\Rightarrow)$ By \autoref{cor:simplicial-terminal} at least
    one of \(\{v,a,b\}\) is a terminal.  Since $\{v, a, b\}$ is a
    $T$-triangle in $G$, any target solution of $(G; T; k)$
    contains at least one vertex among them. Consider a target
    solution $S$ which contains $a$. Set $S \setminus \{a\}$ is a
    solution of $(G - \{a\}; T_1; k - 1)$. In graph $G - \{a\}$,
    vertex $v$ is adjacent with only
    $b$. \Autoref{rr:no-cut-edges,rr:no-more-clique-components}
    applied to $G - a$ will delete the edge incident of $v$ and
    then vertex $v$. Hence $(G - \{a\}; T_1; k - 1)$ is a \yes
    instance if and only if $(G - \{a, v\}; T_1; k - 1)$ is a \yes
    instance. By applying similar argument in the case when $S$
    contains $b$ implies that if a target solution $S$ contains
    $b$ then $(G - \{b, v\}; T_1; k - 1)$ is a \yes instance. We
    now consider the case when $S$ contains $v$. We claim that we
    can construct another solution which excludes $v$ and is of
    cardinality at most $S$. Since $v$ is part of exactly one
    $T$-triangles, namely, $\{v, a, b\}$, every $T$-triangle
    passing through $v$ can be hit by picking either $a$ or
    $b$. Hence, if a target solution $S$ contains $v$ then
    $(S \setminus \{v\}) \cup \{a\}$ is also a target
    solution. This implies if $(G; T; k)$ is a \yes instance then
    either $(G_{1};T_{1};k-1)$ or $(G_{2};T_{2};k-1)$ is also a
    \yes instance.

    $(\Leftarrow)$ If $S'$ is a target solution of $(G \setminus \{v,a\}; T \setminus \{v, a\}; k - 1)$ then $S'\cup \{a\}$ is a solution of $G - v$ of size at most $k$. Since vertex $v$ is adjacent with only $b$ in graph $G - (S' \cup \{a\})$, set $S' \cup \{a\}$ is a solution of $G$ as well. This implies that $(G; T; k)$ is a \yes instance. By applying similar argument in case of $(G \setminus \{v,b\}; T \setminus \{v, b\}; k - 1)$, we complete the proof of the reverse direction.

    To apply this reduction rule, algorithm needs to find simplicial vertex of degree two if one exists. It is easy to check all vertices of degree two whether or not their neighborhood is a clique. Hence one can find a simplicial clique to branch on or conclude that no such clique exits in time $\Oh(n + m)$.
  \end{proof}

Now let \(Q\) be a clique in \(G\) of size at least
five, and let \(v\) be its simplicial vertex. By
\autoref{cor:simplicial-terminal} \(Q\) contains a terminal,
say \(t\). We can thus---as argued at the beginning of this
section---safely branch in the following manner.

\begin{branchingrule}\label{br:branch-on-large-cliques}
  Let \((G;T;k)\) be an instance of \SFVSC, \(Q\) be a clique of size at least five in \(G\), and \(t\) a terminal in
  \(Q\).  Let \(\{a,b,c,d\}\) be four other vertices of \(Q\). Let
  \(G_{1}=G-\{t\},G_{2}=G-\{a,b\},G_{3}=G-\{c,d\} \) and
  \(T_{1}=T\setminus\{t\},T_{3}=T\setminus\{a,b\},T_{3}=T\setminus\{c,d\}\). The
  new instances are:
  \((G_{1};T_{1};k-1),(G_{2};T_{2};k-2),(G_{3};T_{3};k-2)\).
\end{branchingrule}

\begin{lemma} \autoref{br:branch-on-large-cliques} is exhaustive
  and it can be executed in time $\Oh(n + m)$.
  \end{lemma}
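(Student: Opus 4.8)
The plan is to establish the two implications of exhaustiveness and then the linear time bound, reusing the pattern from the proof of \autoref{br:branch-on-three-cliques}. The one new ingredient is that, since $Q$ is a clique and $t$ is a terminal, each of the six sets $\{t,x,y\}$ with $x\neq y$ in $\{a,b,c,d\}$ induces a $T$-triangle of $G$.

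For the forward direction I would take a target solution $S$ of $(G;T;k)$ and inspect $S\cap\{t,a,b,c,d\}$. If $t\in S$, then $S\setminus\{t\}$ is a $T_1$-THS of $G_1$ of size at most $k-1$: indeed $G_1-(S\setminus\{t\})=G-S$ has no $T$-triangle, and $T_1\subseteq T$, so it has no $T_1$-triangle. If $t\notin S$, then $S$ meets each of the six pairs $\{x,y\}\subseteq\{a,b,c,d\}$, so $S\cap\{a,b,c,d\}$ is a vertex cover of $K_4$ and hence $|\{a,b,c,d\}\setminus S|\le 1$. A short case check on which vertex of $\{a,b,c,d\}$ is possibly missing from $S$ shows that $S$ then contains the entire pair $\{a,b\}$ or the entire pair $\{c,d\}$; in the former case $S\setminus\{a,b\}$ is a $T_2$-THS of $G_2$ of size at most $k-2$ (again because $G_2-(S\setminus\{a,b\})=G-S$ and $T_2\subseteq T$), and symmetrically in the latter case. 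Hence if $(G;T;k)$ is a \yes instance then at least one of the three output instances is.

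The reverse direction is the general composition fact already used for \Autoref{br:non-terminals-two-terminal-nbrs,br:branch-on-three-cliques}: for any $U\subseteq V(G)$, if $S'$ is a target solution of $(G-U;T\setminus U;k-|U|)$ then $S'\cup U$ is a target solution of $(G;T;k)$, since $G-(S'\cup U)=(G-U)-S'$ has no $(T\setminus U)$-triangle and no vertex of $U$ survives in it, so it has no $T$-triangle either; and $|S'\cup U|\le k$ is immediate. Applying this with $U=\{t\}$, $U=\{a,b\}$, and $U=\{c,d\}$ completes the equivalence.

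For the running time it suffices to locate a simplicial clique of size at least five (which, in an instance reduced with respect to the earlier rules, contains a terminal by \autoref{cor:simplicial-terminal}) or to certify that none exists, and then to build the three induced subgraphs. Enumerating all maximal cliques of the chordal graph $G$ takes $\Oh(n+m)$ time by \cite[Theorem~4.17]{golumbic2004algorithmic}; while doing so we count, for each vertex, how many maximal cliques contain it, and by \autoref{fac:simplicial-vertex-characterization} the vertices lying in exactly one maximal clique are precisely the simplicial vertices, so we obtain all simplicial cliques with their sizes within the same bound. Choosing $t,a,b,c,d$ and forming $G_1,G_2,G_3$ is clearly linear. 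The only step needing any care is the short case analysis showing that $|S\cap\{a,b,c,d\}|\ge 3$ forces one designated pair entirely into $S$; the rest is bookkeeping.
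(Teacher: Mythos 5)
Your proof is correct and follows essentially the same route as the paper's: the forward direction is the paper's case analysis on whether $t\in S$ and, if not, which pair of $\{a,b,c,d\}$ must lie entirely in $S$ (your vertex-cover-of-$K_4$ phrasing is just a compact repackaging of the paper's ``$S$ excludes $t$ and $b$, hence contains $c$ and $d$'' argument), the reverse direction is the same composition fact, and the running time comes from the same linear-time enumeration of maximal cliques. One small caution on the execution step: the rule, as used later in the algorithm, must fire on \emph{any} maximal clique of size at least five containing a terminal (not only simplicial ones), so you should select $Q$ by scanning all enumerated maximal cliques for size and a terminal rather than restricting to simplicial cliques; your enumeration already supports this, so it is a one-line adjustment.
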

  \begin{proof}
    Consider an instance $(G; T; k)$ of \SFVSC and let
    $(G_{1};T_{1};k-1)$, $(G_{2};T_{2};k-2)$, and
    $(G_{2};T_{2};k-2)$ be three instances produced by
    \autoref{br:branch-on-large-cliques} when applied on
    $(G; T; k)$.  Let $t$ be a terminal in maximal clique of size
    at least five and vertices $a, b, c, d$ be any of its four
    neighbors.

    $(\Rightarrow)$ Consider a target solution $S$ for $(G, T, k)$. We consider following cases: $S$ includes the terminal $t$; $S$ excludes $t$ but contains \emph{both} $a, b$ and third case is when $S$ excludes $t$ and dose not contains \emph{both} $a, b$. Since $\{t, a, b\}$ is a $T$-triangle in $G$, solution $S$ can not exclude all of $a, b, t$. Hence in third case, $S$ includes at least one of $a, b$. If target solution $S$ contains $t$ then set $S \setminus \{t\}$ is a solution of $(G - \{t\}, T_1, k - 1)$. Similarly, if target solution $S$ contains $a, b$ then set $S \setminus \{a, b\}$ is a solution of $(G - \{a, b\}, T_2, k - 2)$. Consider third case mentioned above. Without loss of generality, assume that $S$ excludes $b$. This implies $S$ excludes both $t$ and $b$. Since both $\{t, b, c\}$ and $\{t, b, d\}$ are $T$-triangles, $S$ must include both $c, d$. By applying similar argument as in previous case, set $S \setminus \{c, d\}$ is a solution of $(G - \{c, d\}, T_2, k - 2)$. This implies if $(G, T, k)$ is a \yes instance then either $(G_{1};T_{1};k-1)$ or $(G_{2};T_{2};k-2)$ or $(G_{3};T_{3};k-2)$ is also a \yes instance.

    $(\Leftarrow)$ The proof of reverse direction follows from the fact that for any $U \subseteq V(G)$ if $S'$ is a target solution of $(G - U; T \setminus U; k - |U|)$ then $S' \cup U$ is a target solution of $(G; T; k)$.

    To apply the branching rule, we need to find a maximal clique of size at least five which contains a terminal if one exists. One can enumerate all maximal cliques in given chordal graph in time $\Oh(n + m)$ and hence can find desired clique or conclude that no such clique exists.
  \end{proof}

\begin{observation}\label{obs:simplicial-sizes-exactly-four}
  Let \((G;T;k)\) be an instance which is reduced with respect to
  all the reduction rules in this section, and to which neither of
  the above branching rules applies. Then every simplicial clique
  in \(G\) is of size exactly four.
\end{observation}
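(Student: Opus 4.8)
The plan is to derive the statement directly from \autoref{cor:simplicial-terminal} together with the applicability conditions of \Autoref{br:branch-on-three-cliques,br:branch-on-large-cliques}. First I would note that since \((G;T;k)\) is reduced with respect to all the reduction rules of this section---in particular \Autoref{rr:no-more-clique-components,rr:only-T-neighbours,rr:no-cut-edges}---\autoref{cor:simplicial-terminal} tells us that every simplicial clique \(Q\) of \(G\) satisfies \(|Q|\geq 3\) and contains a terminal vertex. So it only remains to rule out the cases \(|Q|=3\) and \(|Q|\geq 5\), which I would do by observing that in each of these cases one of the branching rules would be applicable, contradicting the hypothesis.

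For \(|Q|=3\): write \(Q=\{v,a,b\}\) with \(v\) the simplicial vertex, so that \(N[v]=Q\) and hence \(v\) has degree exactly two; then \(\{v,a,b\}\) is precisely the kind of simplicial clique to which \autoref{br:branch-on-three-cliques} applies, contradicting the assumption that no branching rule applies to \((G;T;k)\). For \(|Q|\geq 5\): the terminal supplied by \autoref{cor:simplicial-terminal} makes \(Q\) a clique of size at least five containing a terminal, so \autoref{br:branch-on-large-cliques} is applicable, again a contradiction. Combining these two exclusions with the lower bound \(|Q|\geq 3\) forces \(|Q|=4\) for every simplicial clique \(Q\) of \(G\).

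I do not expect any real obstacle here: the argument is a short case analysis on \(|Q|\). The only point needing a moment's care is matching the exact phrasing of the branching rules---namely that a simplicial clique of size three is the same thing as a simplicial vertex of degree two, which is the trigger for \autoref{br:branch-on-three-cliques}, and that a simplicial clique of size at least five already carries, via \autoref{cor:simplicial-terminal}, the terminal vertex required to invoke \autoref{br:branch-on-large-cliques}.
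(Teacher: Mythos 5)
Your proof is correct and is exactly the argument the paper intends: the observation is stated without an explicit proof precisely because it follows immediately from \autoref{cor:simplicial-terminal} (size at least three, contains a terminal) together with the inapplicability of \autoref{br:branch-on-three-cliques} (ruling out size three) and \autoref{br:branch-on-large-cliques} (ruling out size five or more). Nothing is missing.
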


We now show that we can ensure, within the \(\OhStar(2^{k})\) time
bound, that every simplicial vertex in \(G\) is a terminal as
well.\footnote{The converse may not hold.} For this we need a
structural result. 

\begin{figure}[t]
\centering
\subfloat[Refer to \autoref{lem:unique-triangle-witness}]{
\label{fig:lemma-sub1}
  \includegraphics[scale=0.2]{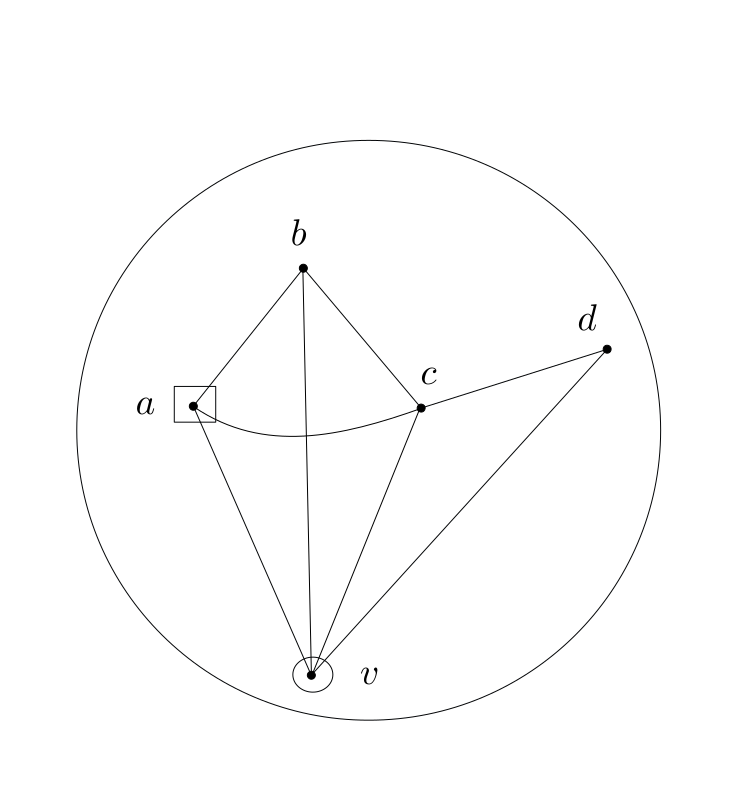}}
\subfloat[Refer to \autoref{lem:exclude-non-terminal}]{
  \label{fig:lemma-sub2}
  \includegraphics[scale=0.2]{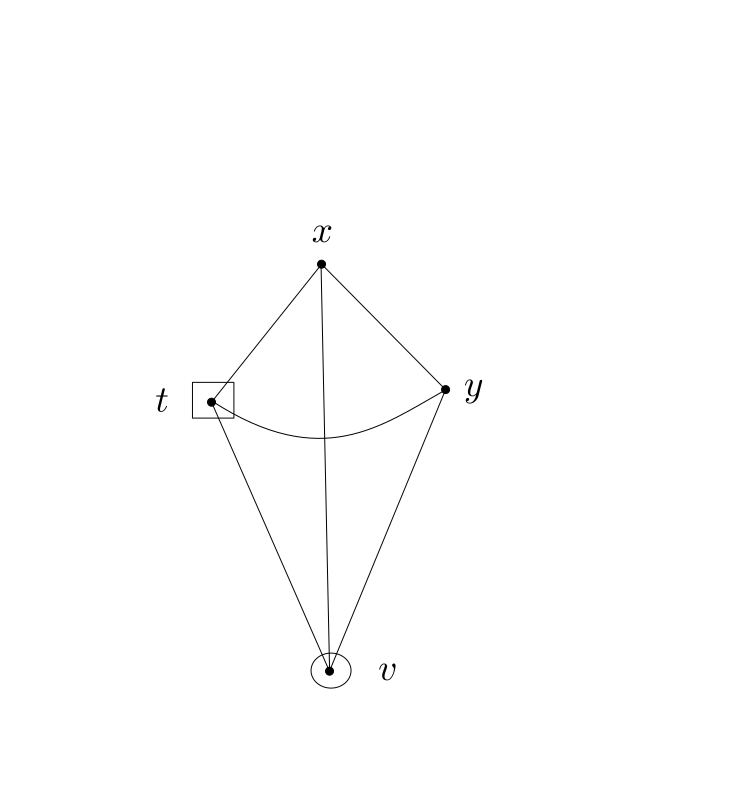}}
\caption{In both figures, square around the vertex denotes that it
  is a terminal while circle denotes that the vertex is included
  in solution $S$. In \autoref{fig:lemma-sub1}, outer circle
  represents the maximal clique $Q$. Not all the vertices and
  edges in $Q$ are shown in the figure.}
\label{fig:lemma}
\end{figure}

\begin{lemma}\label{lem:unique-triangle-witness}
  Let \((G;T;k)\) be an instance of \SFVSC, and let \(S\) be a
  \emph{minimal} solution of \((G;T;k)\).  Let \(v\) be a
  \emph{non-terminal} vertex in \(S\) for which there exists a
  clique \(Q\) in \(G\) such that every \(T\)-triangle which
  contains \(v\) is also contained in \(Q\). Then there is a
  \emph{unique} \(T\)-triangle \(\triangle_{v}\) such that \(v\)
  is the only vertex in \(S\) which is present in
  \(\triangle_{v}\).
\end{lemma}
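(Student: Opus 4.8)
The plan is to get the \emph{existence} of such a triangle for free from the minimality of \(S\), and then spend the argument on \emph{uniqueness}, which is where the hypothesis about the clique \(Q\) is actually used. For existence: since \(S\) is inclusion‑minimal, \(S\setminus\{v\}\) is not a solution, so \(G-(S\setminus\{v\})\) contains some \(T\)-triangle \(\triangle_{v}\). Because \(G-S\) contains no \(T\)-triangle, \(\triangle_{v}\) must use \(v\), and since \(\triangle_{v}\) lies in \(G-(S\setminus\{v\})\) it avoids every other vertex of \(S\); hence \(\triangle_{v}\cap S=\{v\}\). So at least one triangle of the claimed form exists.

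For uniqueness I would argue by contradiction. Suppose there are two distinct such \(T\)-triangles, say \(\triangle_{1}=\{v,a_{1},b_{1}\}\) and \(\triangle_{2}=\{v,a_{2},b_{2}\}\), each meeting \(S\) only in \(v\). Both are \(T\)-triangles containing \(v\), so the hypothesis on \(v\) forces \(\triangle_{1}\subseteq Q\) and \(\triangle_{2}\subseteq Q\); consequently \(W:=\{a_{1},b_{1},a_{2},b_{2}\}\subseteq Q\), and since \(Q\) is a clique the set \(W\) induces a clique in \(G\). As \(\triangle_{1}\neq\triangle_{2}\) while both contain \(v\notin W\), we get \(|W|\geq 3\). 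From \(\triangle_{1}\cap S=\triangle_{2}\cap S=\{v\}\) we get \(W\cap S=\emptyset\). Finally, \(v\) is non‑terminal and \(\triangle_{1}\) is a \(T\)-triangle, so at least one of \(a_{1},b_{1}\) lies in \(T\); fix such a terminal \(t\in W\cap T\).

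Now, since \(W\) induces a clique of size at least three and contains \(t\), we may choose \(w_{1},w_{2}\in W\setminus\{t\}\) such that \(\{t,w_{1},w_{2}\}\) is a triangle of \(G\). This triangle contains the terminal \(t\), so it is a \(T\)-triangle; it does not contain \(v\) (as \(v\notin W\)); and all three of its vertices lie in \(W\), hence outside \(S\). Thus \(\{t,w_{1},w_{2}\}\) is a \(T\)-triangle present in \(G-S\), contradicting the assumption that \(S\) is a solution of \((G;T;k)\). Therefore two distinct triangles of the stated form cannot coexist, so \(\triangle_{v}\) is unique.

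I expect the only delicate point to be the routine verification that the auxiliary triangle \(\{t,w_{1},w_{2}\}\) is simultaneously a \(T\)-triangle, disjoint from \(\{v\}\), and disjoint from \(S\); the single idea that makes everything go through is that, because \emph{all} the \(T\)-triangles through \(v\) live inside one clique \(Q\), the vertices of \(\triangle_{1}\cup\triangle_{2}\) other than \(v\) already form a clique large enough (at least three vertices, and containing a terminal) to host a \(T\)-triangle of their own that the solution \(S\) fails to hit.
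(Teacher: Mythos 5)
Your proposal is correct and follows essentially the same route as the paper: existence from minimality of \(S\), and uniqueness by observing that the vertices of the two triangles other than \(v\) all lie in the clique \(Q\), form a clique of size at least three containing a terminal, and hence yield a \(T\)-triangle in \(G-S\), a contradiction. The only difference is cosmetic bookkeeping in how the three witnesses of that final triangle are selected.
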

\begin{proof}
  If every \(T\)-triangle in which \(v\) is present also contains
  another vertex from \(S\) then \(S\setminus\{v\}\) is a solution
  as well, contradicting the minimality of \(S\). So there is at
  least one \(T\)-triangle \(\triangle_{v}\) such that \(v\)
  is the only vertex in \(S\) which is present in
  \(\triangle_{v}\).  

  Now suppose there are two \(T\)-triangles
  \(\triangle_{1}=\{v,a,b\},\triangle_{1}=\{v,c,d\}\) such that
  \(v\) is the only vertex in \(S\) which is present in each of
  \(\triangle_{1},\triangle_{2}\) (See
  \autoref{fig:lemma-sub1}). Now triangles
  \(\triangle_{1},\triangle_{2}\) are distinct, each contains a
  terminal vertex, and \(v\) is not a terminal vertex. So the
  collection \((v,a,b,c,d)\) contains at least four distinct
  vertices, say \(\{v,a,b,c\}\), of which at least one vertex, say
  \(a\), is a terminal vertex. Since both these triangles are part
  of the clique \(Q\), the four vertices \(\{v,a,b,c\}\) form a
  clique. Deleting \(v\) from this clique leaves the triangle
  \(\{a,b,c\}\) which (i) contains the terminal \(a\), and (ii)
  contains no vertex from the set \(S\). Thus the purported
  solution \(S\) does not intersect the \(T\)-triangle
  \(\{a,b,c\}\), a contradiction.
\end{proof}

Since every neighbour of a simplicial vertex is contained in its
simplicial clique, we get

\begin{corollary}\label{cor:simplicial-nonterminal-unique-triangle-witness}
  Let \((G;T;k)\) be an instance of \SFVSC, and let \(S\) be a
  \emph{minimal} solution of \((G;T;k)\).  Let \(v\) be a
  \emph{non-terminal, simplicial} vertex in \(S\). Then there is a
  \emph{unique} \(T\)-triangle \(\triangle_{v}\) such that \(v\)
  is the only vertex in \(S\) which is present in
  \(\triangle_{v}\).
\end{corollary}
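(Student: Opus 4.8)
The plan is to obtain this corollary as a direct instantiation of \autoref{lem:unique-triangle-witness}. The only thing that needs to be checked is that a non-terminal \emph{simplicial} vertex automatically satisfies the structural hypothesis of that lemma, namely that there is a clique \(Q\) in \(G\) which contains every \(T\)-triangle through \(v\).

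First I would invoke the definition of a simplicial vertex: since \(v\) is simplicial in \(G\), the closed neighbourhood \(N_{G}[v]\) is a clique; set \(Q:=N_{G}[v]\). Next I would observe that any triangle of \(G\) which contains \(v\) is of the form \(\{v,x,y\}\) with \(x,y\in{}N_{G}(v)\), so all three of its vertices lie in \(Q\); in particular every \(T\)-triangle containing \(v\) is contained in \(Q\). Thus \(v\) is a non-terminal vertex of the minimal solution \(S\) for which such a clique \(Q\) exists, so \autoref{lem:unique-triangle-witness} applies verbatim and yields the unique \(T\)-triangle \(\triangle_{v}\) in which \(v\) is the only vertex of \(S\). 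There is no real obstacle here—the statement is a specialisation of the preceding lemma, and the only (trivial) point to spell out is the reduction of "simplicial" to "contained in a common clique", which is immediate from the definition of simpliciality.
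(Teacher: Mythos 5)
Your proposal is correct and matches the paper's argument exactly: the paper also derives the corollary from \autoref{lem:unique-triangle-witness} via the single observation that every neighbour of a simplicial vertex lies in its simplicial clique, so \(Q=N_G[v]\) contains every \(T\)-triangle through \(v\).
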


This implies that we can safely get rid of \emph{non-terminal,
  simplicial} vertices.

\begin{lemma}\label{lem:exclude-non-terminal}
  Let \((G;T;k)\) be an instance of \SFVSC which is reduced with
  respect to the reduction and branching rules described so far. Let \(v\) be non-terminal vertex which is simplicial in
  \(G\). If \((G;T;k)\) has a target solution then it has a
  target solution which does not contain \(v\).
\end{lemma}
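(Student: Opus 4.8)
The plan is to take an arbitrary target solution $S$ of $(G;T;k)$ and massage it into a target solution of the same size (or smaller) that avoids the simplicial non-terminal vertex $v$. First I would dispose of the easy case: if $v\notin S$ there is nothing to do. So assume $v\in S$. Next I would pass to a \emph{minimal} target solution $S'\subseteq S$ with $v\in S'$ — if $v\notin S'$ then $S'$ itself is the desired solution and we are done; so assume $v\in S'$. Now the hypotheses of \autoref{cor:simplicial-nonterminal-unique-triangle-witness} are met: $v$ is a non-terminal simplicial vertex in the minimal solution $S'$, so there is a unique $T$-triangle $\triangle_v$ such that $v$ is the only vertex of $S'$ in $\triangle_v$.

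The key step is then to swap $v$ out for a suitable terminal. Write $\triangle_v=\{v,p,q\}$; since $v$ is not a terminal, at least one of $p,q$ is a terminal — say $p\in T$. I claim $S''=(S'\setminus\{v\})\cup\{p\}$ is again a target solution. Clearly $|S''|\leq|S'|\leq k$. To see that $G-S''$ has no $T$-triangle, suppose for contradiction that $\triangle$ is a $T$-triangle in $G-S''$. Since $S'$ was a solution, $\triangle$ must contain $v$ (it is the only vertex of $S'$ not in $S''$). But $v$ is simplicial, so every triangle through $v$ lies inside the simplicial clique $N[v]$; in particular, every $T$-triangle through $v$ is contained in the clique $Q:=N[v]$. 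Applying \autoref{lem:unique-triangle-witness} with this $Q$ (or directly the uniqueness from \autoref{cor:simplicial-nonterminal-unique-triangle-witness}), the \emph{only} $T$-triangle through $v$ avoiding all of $S'\setminus\{v\}$ is $\triangle_v$ itself; hence $\triangle$, being a $T$-triangle through $v$ that avoids $S'\setminus\{v\}$, must equal $\triangle_v$. But $\triangle_v$ contains $p\in S''$, contradicting that $\triangle$ lives in $G-S''$. Therefore $G-S''$ has no $T$-triangle, and by \autoref{lem:triangles} $S''$ is a target solution not containing $v$, as required.

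The main obstacle — and the point that needs the preceding structural lemma rather than a one-line argument — is ruling out that swapping $v$ for $p$ \emph{creates} a new $T$-triangle. A priori, removing $v$ from the deletion set could expose many triangles through $v$; the content of \autoref{lem:unique-triangle-witness}/\autoref{cor:simplicial-nonterminal-unique-triangle-witness} is precisely that, for a minimal solution, there is only \emph{one} such newly-exposed $T$-triangle, and that triangle contains a terminal which we can put into the solution instead. Everything else (passing to a minimal subsolution, the size bookkeeping, invoking \autoref{lem:triangles} to talk about $T$-triangles rather than $T$-cycles) is routine, so the proof reduces to citing the corollary and performing this one swap.
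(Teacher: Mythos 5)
Your proposal is correct and follows essentially the same route as the paper: both invoke \autoref{cor:simplicial-nonterminal-unique-triangle-witness} to obtain the unique $T$-triangle $\triangle_{v}$ in which $v$ is the only solution vertex, and then replace $v$ by the terminal of that triangle, using uniqueness to rule out any newly exposed $T$-triangle. Your explicit step of first passing to a minimal subsolution before citing the corollary is in fact slightly more careful than the paper's own proof, which applies the corollary (stated for \emph{minimal} solutions) to an arbitrary target solution containing $v$.
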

\begin{proof}
  By \autoref{obs:simplicial-sizes-exactly-four}, every
  simplicial clique in \(G\) has size exactly four.  Let \(S\) be
  a target solution of \((G;T;k)\) which contains \(v\), and let
  \(Q=\{v,t,x,y\}\) be the simplicial clique of \(v\) where \(t\)
  is a terminal vertex.

  From
  \autoref{cor:simplicial-nonterminal-unique-triangle-witness} we
  get that there is a unique \(T\)-triangle \(\triangle_{v}\) such
  that \(v\) is the only vertex from \(S\) which is in
  \(\triangle_{v}\). Without loss of generality let
  \(\triangle_{v}\) be \(\{v,t,x\}\) (See
  \autoref{fig:lemma-sub2}). Then neither \(t\) nor \(x\) is in
  \(S\). Now \(\{t,x,y\}\) is a \(T\)-triangle as well, so we get
  that vertex \(y\) is in \(S\). If we delete the vertex set
  \(S\setminus\{v\}\) from \(G\), then every remaining
  \(T\)-triangle must intersect \(v\). But the only such triangle
  is \(\triangle_{v}=\{v, t, x\}\), and so we get that
  \((S\setminus{}\{v\})\cup\{t\}\) is a solution as well.
\end{proof}

Let \(v\) be a non-terminal vertex which is simplicial, and let
\(Q=\{v,t,x,y\}\) be the simplicial clique of \(v\) where \(t\) is
a terminal vertex.  If \((G;T;k)\) is a \yes instance then we get
from \autoref{lem:exclude-non-terminal} that it has a target
solution which (i) does not contain \(v\), and (ii) contains
either \(t\) or both of \(x,y\). This implies that we can safely
delete \(v\) and branch on \(t,\{x,y\}\).

\begin{branchingrule}\label{br:branch-on-non-terminal-simplicials}
  Let \((G;T;k)\) be an instance of \SFVSC, and let
  \(\{v,t,x,y\}\) be a simplicial clique whose simplicial vertex
  \(v\) is a \emph{non-terminal}, and \(t\) is a terminal.  Let
  \(G_{1}=G-\{t\}, G_{2}=G-\{v,x,y\}\) and
  \(T_{1}=T\setminus\{t\},T_{2}=T\setminus\{x,y\}\). The new
  instances are: \((G_{1};T_{1};k-1),(G_{2};T_{2};k-2)\).
\end{branchingrule}

\begin{lemma} \autoref{br:branch-on-non-terminal-simplicials} is
  exhaustive and it can be executed in time $\Oh(n + m)$.
  \end{lemma}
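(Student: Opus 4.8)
The plan is to prove exhaustiveness by establishing both implications separately, and to settle the running-time claim using the linear-time simplicial-vertex / clique-tree machinery for chordal graphs. First I would record the structural facts I will lean on: since the branching rule is applied only to instances reduced with respect to all earlier reduction and branching rules, \autoref{obs:simplicial-sizes-exactly-four} and \autoref{fac:simplicial-vertex-characterization} tell us the simplicial clique of \(v\) is exactly \(Q=\{v,t,x,y\}\), so \(N_G(v)=\{t,x,y\}\), and both \(\{v,t,x\}\) and \(\{v,t,y\}\) are \(T\)-triangles (each contains the terminal \(t\)).

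For the forward direction I would start from a \yes instance and invoke \autoref{lem:exclude-non-terminal} to obtain a target solution \(S\) with \(v\notin S\). Since \(S\) hits \(\{v,t,x\}\) and \(\{v,t,y\}\) without using \(v\), this forces \(t\in S\), or else \(x\in S\) and \(y\in S\). In the first case I would check that \(S\setminus\{t\}\) witnesses \((G_1;T_1;k-1)\), using \(G_1-(S\setminus\{t\})=G-S\) and \(T_1\subseteq T\); in the second that \(S\setminus\{x,y\}\) witnesses \((G_2;T_2;k-2)\), using \(G_2-(S\setminus\{x,y\})=G-(S\cup\{v\})\), which is still \(T\)-triangle-free since deleting a vertex creates no triangle.

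For the backward direction, the first branch is immediate from the standard observation that if \(S'\) is a target solution of \((G-U;T\setminus U;k-|U|)\) then \(S'\cup U\) is a target solution of \((G;T;k)\), taking \(U=\{t\}\). The second branch is the one point that needs care, because \(G_2=G-\{v,x,y\}\) removes three vertices while the budget drops only by two: I would instead show that \(S'\cup\{x,y\}\) (not \(S'\cup\{v,x,y\}\)) is a target solution of \((G;T;k)\). The key is that once \(x\) and \(y\) are removed, \(v\) retains at most the single neighbour \(t\), hence lies in no triangle; so every \(T\)-triangle of \(G-(S'\cup\{x,y\})\) avoids \(v\), lives inside \((G-(S'\cup\{x,y\}))-v=G_2-S'\), and its terminal belongs to \(T\setminus\{x,y\}=T_2\), making it a \(T_2\)-triangle and contradicting that \(S'\) solves \((G_2;T_2;k-2)\). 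This ``irrelevant vertex'' step is the main obstacle; everything else is routine bookkeeping about how deletions affect triangles and terminal sets.

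Finally, for the \(\Oh(n+m)\) bound I would note that by \autoref{obs:simplicial-sizes-exactly-four} and \autoref{cor:simplicial-terminal} every simplicial clique of a reduced instance has size exactly four and contains a terminal, so applying the rule reduces to finding a non-terminal vertex lying in a unique maximal clique (\autoref{fac:simplicial-vertex-characterization}). A perfect elimination ordering, or a clique tree, of the chordal graph can be computed in linear time; from it we can decide in \(\Oh(n+m)\) time whether such a vertex exists and, if so, output one together with its simplicial clique \(\{v,t,x,y\}\), after which building \(G_1,G_2,T_1,T_2\) is clearly linear.
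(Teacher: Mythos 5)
Your proposal is correct and follows essentially the same route as the paper: both rely on \autoref{lem:exclude-non-terminal} to restrict attention to solutions avoiding \(v\), split into the cases \(t\in S\) versus \(\{x,y\}\subseteq S\), and resolve the ``three vertices deleted, budget drops by two'' asymmetry in the backward direction by observing that after removing \(x\) and \(y\) the vertex \(v\) has degree at most one and hence lies in no triangle. Your explicit computation \(G_2-(S\setminus\{x,y\})=G-(S\cup\{v\})\) and the use of \(S'\cup\{x,y\}\) as the witness are just slightly more direct phrasings of the paper's appeal to the bridge/clique-component reduction rules and its remark that adding a degree-one vertex does not change the subset-FVS.
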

  \begin{proof}
    Consider an instance $(G; T; k)$ of \SFVSC and let
    $(G_{1};T_{1};k-1)$ and $(G_{2};T_{2};k-2)$ be two instances
    produced by \autoref{br:branch-on-non-terminal-simplicials}
    when applied on $(G; T; k)$.  Let $v$ be a non-terminal
    simplicial vertex with neighbors $t, x, y$, where $t$ is a
    terminal.

    $(\Rightarrow)$ Since $\{v, t, x, y\}$ is a clique in $G$, any
    target solution of $(G, T, k)$ contains at least one vertex
    among them.  By \autoref{lem:exclude-non-terminal}, there
    exists a target solution which does not contain \(v\) and
    contains either \(t\) or both of \(x,y\). If target solution
    $S$ contains $t$ then set $S \setminus \{t\}$ is a solution of
    $(G - \{t\}; T_1; k - 1)$. If solution $S$ excludes $t$ then
    it includes both $x, y$. In this case, $S \setminus \{x, y\}$
    is a solution of $(G - \{x, y\}, T_2, k - 2)$. Notice that in
    graph $G - \{x, y\}$, vertex $v$ is adjacent with exactly one
    vertex. \Autoref{rr:no-cut-edges,rr:no-more-clique-components}
    applied to $G - \{x, y\}$ will delete the edge incident on $v$
    and then vertex $v$. Hence $(G - \{v, x, y \}; T_1; k - 1)$ is
    a \yes instance if and only if $(G_2; T_2; k - 2)$ is a \yes
    instance.  This implies if $(G, T, k)$ is a \yes instance then
    either $(G_{1};T_{1};k-1)$ or $(G_{2};T_{2};k-2)$ is also a
    \yes instance.

    $(\Leftarrow)$ We use the fact that for all $U \subseteq V(G)$ if $S'$ is a target solution of $(G - U; T \setminus U; k - |U|)$ then $S' \cup U$ is a target solution of $(G; T; k)$. In case $(G_{1};T_{1};k-1)$ is a \yes instance, reverse direction follows immediately. If $(G_{2};T_{2};k-2)$ is a \yes instance then we first note that adding a new vertex of degree one in $G_2$ does not change its subset-FVS. This completes the proof of the reverse direction of the lemma.

    To apply this reduction rule, algorithm needs to find simplicial vertex of degree four if one exists. It is easy to check all vertices of degree four whether or not their neighborhood is a clique. Hence one can find a simplicial clique to branch on or conclude that no such clique exits in time $\Oh(n + m)$.
  \end{proof}

At this point we have that every simplicial vertex is a
terminal. We now ensure that there is exactly one terminal in any
simplicial clique. We show first that if \(t\) is a simplicial
terminal vertex with exactly three neighbours \(x,y,z\), then we
can safely assume that a target solution which contains \(t\) does
not contain any of \(\{x,y,z\}\), and vice versa.

\begin{lemma}\label{lem:terminal-private-nbrhood}
  Let \((G;T;k)\) be an instance of \SFVSC. Let \(t\) be a
  terminal vertex which is simplicial, and let \(C=\{t,x,y,z\}\)
  be its simplicial clique. If \((G;T;k)\) has a target solution
  which contains two vertices from the clique \(C\), then it has
  a target solution which does not contain \(t\).
\end{lemma}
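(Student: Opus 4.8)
The plan is to argue by a single local exchange inside the clique $C$. First I would record the structural fact that makes this work: since $t$ is simplicial we have $N[t]=C=\{t,x,y,z\}$ (by \autoref{fac:simplicial-vertex-characterization}, $C$ is the unique maximal clique containing $t$ and equals $N[t]$), so every triangle of $G$ that passes through $t$ has all three of its vertices in $C$; moreover, since $t\in T$, every such triangle is a $T$-triangle. Hence the $T$-triangles of $G$ through $t$ are exactly the sets $\{t,a,b\}$ with $\{a,b\}$ a two-element subset of $\{x,y,z\}$. By \autoref{lem:triangles} (equivalently \autoref{cor:hit-only-triangles}) it suffices throughout to reason about hitting $T$-triangles rather than $T$-cycles.

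Next I would set up the exchange. Let $S$ be the given target solution with at least two vertices in $C$. If $t\notin S$ there is nothing to prove, so assume $t\in S$; then $S$ contains at least one of $\{x,y,z\}$, say $x\in S$. Pick any vertex $w\in\{y,z\}$ and put $S'=(S\setminus\{t\})\cup\{w\}$. Immediately $|S'|\le|S|\le k$ and $t\notin S'$, so the only thing left is to verify that $S'$ is a solution, i.e.\ a $T$-THS of $G$.

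For the verification I would split the $T$-triangles of $G$ into those through $t$ and those avoiding $t$. A $T$-triangle $\triangle$ not containing $t$ is hit by $S$ via some vertex distinct from $t$; since $t$ is the only vertex removed in passing from $S$ to $S'$, that vertex still lies in $S'$, so $S'$ hits $\triangle$. A $T$-triangle containing $t$ is, by the first paragraph, of the form $\{t,a,b\}$ with $\{a,b\}$ a two-element subset of the three-element set $\{x,y,z\}$; since $S'$ contains the two distinct vertices $x$ and $w$ of $\{x,y,z\}$, the two two-element subsets $\{a,b\}$ and $\{x,w\}$ of the three-element set $\{x,y,z\}$ must share a vertex, and that common vertex lies in $S'\cap\triangle$. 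Thus $S'$ hits every $T$-triangle and is the desired target solution avoiding $t$.

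I do not expect a genuine obstacle here, since the whole argument is a one-step replacement; the point that needs care is the claim of the first paragraph that \emph{all} $T$-triangles through $t$ are confined to $C$ and the counting step in the last paragraph. The former uses only simpliciality of $t$, and the latter uses the hypothesis $|C|=4$, which guarantees that $\{x,y,z\}$ has exactly three elements so that any two of its two-element subsets intersect. (In the reduced instance this size bound is exactly \autoref{obs:simplicial-sizes-exactly-four}; if $C$ were larger the intersection property could fail, but then \autoref{br:branch-on-large-cliques} would have applied instead.)
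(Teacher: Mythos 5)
Your proposal is correct and follows essentially the same exchange argument as the paper: replace $t$ in $S$ by one of the remaining clique vertices and use simpliciality of $t$ to confine all $T$-triangles through $t$ to $C$. The only cosmetic difference is that the paper finishes by observing that $t$ has degree at most one in $G-S'$ (hence lies in no triangle), whereas you finish by noting that any two $2$-element subsets of the $3$-element set $\{x,y,z\}$ intersect; these are the same observation.
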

\begin{proof}
  Let \(S\) be a target solution which contains two vertices
  from \(C\). If \(S\) does not contain \(t\) then there is
  nothing more to prove. So let \(t\in{}S\).  We assume, without
  loss of generality, that \(x\) is another vertex from \(C\)
  which is in \(S\), so that \(\{t,x\}\subseteq{}S\). Let \(H\) be
  the graph obtained from \(G\) by deleting all of \(S\). Then
  \(H\) contains no \(T\)-triangle.

  Now consider the set \(S'=(S\setminus\{t\})\cup\{y\}\), which is
  not larger than \(S\). Let \(H'\) be the graph obtained by
  deleting \(S'\) from \(G\). Thus \(H'\) is the graph obtained
  from the graph \(H\) by (i) adding back \(t\) and all the edges
  \(\{\{t,s\}\in{}E(G)\;;\;s\in{}V(H)\}\), and (ii) deleting \(y\)
  from the resulting graph. Since \(t\) is the only vertex
  \emph{added} to \(H\) in this process, we get that every
  \(T\)-triangle in \(H'\) must contain vertex \(t\).

  Now since \(S'\) contains both of \(\{x,y\}\), we get that
  vertex \(t\) has degree at most one in graph \(H'\). Thus \(t\)
  it is not part of \emph{any} triangle in \(H'\), which implies
  that \(H'\) contains no \(T\)-triangle. Thus \(S'\) is a
  target solution which does not contain \(t\).
\end{proof}

\begin{corollary}\label{lem:simplicial-terminal-vertex-selection}
  Let \((G;T;k)\) be an instance of \SFVSC. Let \(t\) be a
  terminal vertex which is simplicial, and let
  \(C=\{t,x,y,z\}\) be its simplicial clique. Then the
  following are all equivalent:
  \begin{enumerate}
  \item \((G;T;k)\) has a target solution.
  \item \((G;T;k)\) has a target solution which
  \begin{itemize}
    \item does not contain \(t\), OR
    \item contains \(t\), and excludes all of \(\{x,y,z\}\).
  \end{itemize}
\item \((G;T;k)\) has a target solution which, for each
  \(w\in\{x, y, z\}\),
  \begin{itemize}
    \item does not contain \(w\), OR
    \item contains \(w\), and excludes \(t\).
  \end{itemize}
\end{enumerate}
\end{corollary}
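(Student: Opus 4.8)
The plan is to prove \autoref{lem:simplicial-terminal-vertex-selection} by establishing the chain of implications $(1)\Rightarrow(2)\Rightarrow(3)\Rightarrow(1)$, leaning heavily on \autoref{lem:terminal-private-nbrhood}. The implications $(2)\Rightarrow(1)$ and $(3)\Rightarrow(1)$ are immediate since any target solution of the stated special form is in particular a target solution. For $(3)\Rightarrow(2)$, observe that if $(G;T;k)$ has a target solution $S$ of the form described in item (3), we either have $t\notin S$ (so the first option of item (2) holds directly), or $t\in S$; in the latter case, the condition of item (3) applied to each $w\in\{x,y,z\}$ forces $w\notin S$, since ``$w\in S$ and $t$ excluded'' cannot hold when $t\in S$. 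Hence $S$ excludes all of $\{x,y,z\}$, which is exactly the second option of item (2). So $(3)\Rightarrow(2)$, and it remains to show $(1)\Rightarrow(3)$ (equivalently $(1)\Rightarrow(2)$, as $(2)$ trivially implies $(3)$ here — but I will do $(1)\Rightarrow(2)$ and then note $(2)\Rightarrow(3)$).

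For $(1)\Rightarrow(2)$: suppose $(G;T;k)$ has a target solution. Since $C=\{t,x,y,z\}$ is a clique containing the terminal $t$, it is in particular a $T$-triangle-containing clique, and \emph{every} target solution must intersect $C$ (indeed $\{t,x,y\}$ is a $T$-triangle). Let $S$ be a target solution. If $|S\cap C|\geq 2$, then by \autoref{lem:terminal-private-nbrhood} there is a target solution which does not contain $t$, giving the first option of item (2). If $|S\cap C|=1$, there are two sub-cases. If the single vertex of $S\cap C$ is not $t$, then $t\notin S$ and the first option of item (2) holds. If the single vertex of $S\cap C$ is $t$, then $S$ contains $t$ and excludes all of $\{x,y,z\}$ — this is precisely the second option of item (2). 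In every case, a target solution of the form required by item (2) exists.

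Finally, to get item (3) from item (2): if the target solution does not contain $t$, then in particular for each $w\in\{x,y,z\}$ the solution either does not contain $w$, or contains $w$ while excluding $t$ (the latter trivially since $t$ is excluded) — so item (3) holds. If the target solution contains $t$ and excludes all of $\{x,y,z\}$, then for each $w\in\{x,y,z\}$ it does not contain $w$, so again item (3) holds. This closes the cycle of equivalences. The only non-routine ingredient is \autoref{lem:terminal-private-nbrhood}, which we invoke as a black box; everything else is an elementary case analysis on $|S\cap C|$ and on which vertex of $C$ lies in $S$, so I do not anticipate a real obstacle beyond bookkeeping the cases carefully.
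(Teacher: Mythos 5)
Your proof is correct and follows essentially the same route as the paper: the only substantive ingredient in both is \autoref{lem:terminal-private-nbrhood}, applied to a target solution that meets \(C\) in at least two vertices (which forces one avoiding \(t\)), with the remaining directions being trivial. The minor differences---your direct case analysis on \(|S\cap C|\) versus the paper's argument by contradiction, and your routing of item (3) through item (2) rather than proving \((1)\Leftrightarrow(3)\) directly---are cosmetic.
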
 
\begin{proof}
  We show that (\textbf{1}) and (\textbf{2}) are equivalent, and
  that (\textbf{1}) and (\textbf{3}) are equivalent. Observe that
  the directions \((\mathbf{2})\implies(\mathbf{1})\) and
  \((\mathbf{3})\implies(\mathbf{1})\) are trivially true; we now
  argue that the forward directions hold in each case.
  
  \begin{description}
  \item [\((\mathbf{1}) \implies (\mathbf{2})\)]. If \((G;T;k)\)
    has a target solution which does not contain \(t\) then there
    is nothing more to prove. So let it be the case that
    \emph{every} such solution contains \(t\). If there is such a
    solution which excludes all of \(\{x,y,z\}\) then there is
    nothing more to prove. So let it be the case that every such
    solution contains at least one of \(\{x,y,z\}\), as well. Thus
    every target solution contains two vertices from clique \(C\).
    Now using (\textbf{1}) we get that \((G;T;k)\) has a target
    solution which contains two vertices from clique
    \(C\). \autoref{lem:simplicial-terminal-vertex-selection}
    implies that \((G;T;k)\) has a target solution which does not
    contain \(t\), which contradicts our assumption.

  \item [\((\mathbf{1}) \implies (\mathbf{3})\)]. We present the
    case for \(w=x\); the other two cases follow by symmetric
    arguments. If \((G;T;k)\) has a target solution which does not
    contain \(x\) then there is nothing more to prove. So let it
    be the case that \emph{every} such solution contains \(x\). If
    there is such a solution which excludes \(t\) then there is
    nothing more to prove. So let it be the case that every such
    solution contains \(t\). Thus every target solution of
    \((G;T;k)\) contains both of \(\{t,x\}\). Now using
    (\textbf{1}) we get that \((G;T;k)\) has a target solution
    which contains two vertices---\(t\) and \(x\)---from clique
    \(C\). \autoref{lem:simplicial-terminal-vertex-selection}
    implies that \((G;T;k)\) has a target solution which does not
    contain \(t\), which contradicts our assumption.
  \end{description}
\end{proof}

From this we get
\begin{corollary}\label{cor:t-and-no-nbrs}
  If \(C=\{t,x,y,z\}\) is a simplicial clique of the given type
  then it is safe to assume the following:
  \begin{itemize}
  \item if there is a target solution which contains \(t\), then
    there is such a solution which contains none of the vertices
    \(\{x,y,z\}\).
  \item if there is a target solution which contains at least one
    of \(\{x,y,z\}\), then there is such a solution which does not
    contain \(t\).
  \end{itemize}
\end{corollary}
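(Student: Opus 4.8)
The plan is to obtain \autoref{cor:t-and-no-nbrs} as a one-line consequence of \autoref{lem:simplicial-terminal-vertex-selection}, which already states the two desired facts as items (2) and (3) of an equivalence with ``$(G;T;k)$ has a target solution''. Since the branching algorithm invokes this corollary only while deciding a single instance $(G;T;k)$, the relevant case is the one where a target solution exists, i.e.\ item (1) holds; then items (2) and (3) hold as well. The first bullet is then just the second horn of item (2): a target solution exists which either avoids $t$ entirely, or contains $t$ and avoids all of $\{x,y,z\}$, so in the branch that commits to putting $t$ into the solution we may, without losing any \yes instance, simultaneously commit to keeping $x,y,z$ out. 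The second bullet is item (3) read once for each $w\in\{x,y,z\}$: a target solution exists which either avoids $w$, or contains $w$ and avoids $t$, so in the branch that commits to putting some $w\in\{x,y,z\}$ into the solution we may simultaneously commit to keeping $t$ out.

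The one place where I would be careful is the phrase ``it is safe to assume''. I would make explicit that neither bullet claims that an arbitrary target solution containing $t$ must avoid $\{x,y,z\}$ --- it need not; the claim is about the search space explored by the branching procedure, namely that restricting attention to target solutions of the shape promised by \autoref{lem:simplicial-terminal-vertex-selection} never converts a \yes instance into one on which the subsequent branching rule fails to recurse into a \yes sub-instance. Under that reading each bullet is a verbatim transcription of an item of \autoref{lem:simplicial-terminal-vertex-selection}, so the proof reduces to citing it.

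I do not expect a real obstacle here: all the substantive work is already behind us, in the exchange argument of \autoref{lem:terminal-private-nbrhood} (swapping $t$ out of a target solution that uses two vertices of the simplicial clique) and in the case analysis establishing the equivalence in \autoref{lem:simplicial-terminal-vertex-selection}. The only thing I would double-check while writing is the orientation of the two bullets --- that ``contains $t$'' is paired with ``excludes $x,y,z$'' and that ``contains some $w\in\{x,y,z\}$'' is paired with ``excludes $t$'', matching items (2) and (3) respectively --- since getting this backwards would silently break the branching rule that consumes the corollary.
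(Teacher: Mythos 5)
Your proposal matches the paper exactly: the paper offers no separate proof of \autoref{cor:t-and-no-nbrs}, introducing it with ``From this we get'' as an immediate consequence of items (2) and (3) of \autoref{lem:simplicial-terminal-vertex-selection}, which is precisely the one-line derivation you describe. Your added care about the meaning of ``safe to assume'' and the orientation of the two bullets is sensible but does not change the argument.
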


\begin{definition}\label{def:selective-solution}
  Let \((G;T;k)\) be an instance of \SFVSC to which none of the
  previous reduction or branching rules applies. Let
  \(C=\{t,x,y,z\}\) be a simplicial clique in \(G\) with \(t\)
  being its unique terminal and simplicial vertex. A subset
  \(S\subseteq{}V(G)\) of vertices of \(G\) is a \emph{selective
    solution} of the pair \(((G;T;k), C)\) if the following hold:

  \begin{itemize}
  \item \(G-S\) has no \(T\)-triangles,
  \item \(|S|\leq{}k\),
  \item If \(S\) contains \(t\) then \(S\) contains none of
    \(\{x,y,z\}\), and,
  \item If \(S\) contains one of \(\{x,y,z\}\) then \(S\) does not
    contain \(t\).
  \end{itemize}
\end{definition}

A selective solution is thus a target solution which satisfies the
conditions of \autoref{cor:t-and-no-nbrs}. From
\autoref{cor:t-and-no-nbrs} we get that there exists a target
solution if and only if there exists a selective solution. From
now on our algorithm will search \emph{only} for selective
solutions.

Now let \(\{t,x,y,z\}\) be a simplicial clique with two terminals,
say \(t,x\), where \(t\) is a simplicial vertex. If vertex \(t\)
is present in a selective solution \(S\) then \(S\) does not
contain any of \(\{x,y,z\}\). But then \(\{x,y,z\}\) forms a
\(T\)-triangle, which contradicts the fact that \(S\) is a
solution. Thus vertex \(t\) is not present in \emph{any} selective
solution.  Now since \(t\) is a terminal vertex, every selective
solution \(S\) must pick at least two vertices from \(\{x,y,z\}\):
if \(S\) does not contain \(y,z\), for instance, then
\(\{t,y,z\}\) forms a \(T\)-triangle in the graph obtained after
deleting \(S\), a contradiction. These considerations lead to the
next branching rule.

\begin{branchingrule}\label{br:simplicial-two-terminals-case}
  Let \((G;T;k)\) be an instance of \SFVSC, and let
  \(\{t,x,y,z\}\) be a simplicial clique where \(t\) and \(x\) are
  terminal vertices, and \(t\) is a simplicial vertex. Let
  \(G_{1}=G-\{x,y\},G_{2}=G-\{y,z\},G_{3}=G-\{x,z\}\) and
  \(T_{1}=T\setminus\{x,y\},T_{2}=T\setminus\{y,z\},T_{3}=T\setminus\{x,z\}\). The
  new instances are:
  \((G_{1};T_{1};k-2),(G_{2};T_{2};k-2),(G_{3};T_{3};k-2)\).
\end{branchingrule}

\begin{lemma} \autoref{br:simplicial-two-terminals-case} is
  exhaustive and it can be executed in time $\Oh(n + m)$.
  \end{lemma}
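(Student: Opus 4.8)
The plan is to establish exhaustiveness through the two standard directions, using the structural facts proved immediately before the rule, and then to check the linear running time. First I would recall that, by \autoref{cor:t-and-no-nbrs} together with the discussion following \autoref{def:selective-solution}, the instance $(G;T;k)$ has a target solution if and only if it has a \emph{selective} solution for the pair $((G;T;k),C)$ with $C=\{t,x,y,z\}$; hence it suffices to reason about selective solutions. The crucial observation is that no selective solution $S$ contains $t$: if $t\in S$ then, by \autoref{def:selective-solution}, $S$ contains none of $\{x,y,z\}$, but $\{x,y,z\}$ is a triangle containing the terminal $x$, so it is a $T$-triangle surviving in $G-S$, contradicting that $S$ is a solution. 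Thus $t\notin S$. Since $t$ is simplicial with $N[t]=C$, the only triangles through $t$ are $\{t,x,y\},\{t,x,z\},\{t,y,z\}$, and each is a $T$-triangle (it contains the terminal $t$). For $S$ to hit all three without using $t$ it must be that $|S\cap\{x,y,z\}|\ge 2$, i.e.\ $S$ contains at least one of the pairs $\{x,y\},\{y,z\},\{x,z\}$.

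For the forward direction ($\Rightarrow$), if $(G;T;k)$ is a \yes instance I would take a selective solution $S$ and, by the observation above, fix a pair contained in $S$, say $\{x,y\}\subseteq S$ (the other two cases being symmetric). Then $S\setminus\{x,y\}$ has size at most $k-2$ and $G_{1}-(S\setminus\{x,y\})=G-S$ has no $T$-triangle, hence none of its triangles is a $T_{1}$-triangle; so $S\setminus\{x,y\}$ is a target solution of $(G_{1};T_{1};k-2)$ and that instance is a \yes instance. For the reverse direction ($\Leftarrow$), I would invoke the routine fact used in the earlier branching lemmas: for any $U\subseteq V(G)$, if $S'$ is a target solution of $(G-U;T\setminus U;k-|U|)$ then $S'\cup U$ is a target solution of $(G;T;k)$; applying this with $U\in\{\{x,y\},\{y,z\},\{x,z\}\}$ shows that if any one of the three new instances is a \yes instance then so is $(G;T;k)$.

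For the running-time claim I would argue that, after exhaustively applying the earlier reduction rules in $\Oh(n+m)$ time (\autoref{lemma:running-time-rr}) and checking that none of the earlier branching rules applies, the required structure can be located in linear time: enumerate the maximal cliques of the chordal graph $G$ in $\Oh(n+m)$ time, and for each simplicial clique of size four --- by \autoref{obs:simplicial-sizes-exactly-four} these are the only simplicial cliques present at this stage --- check in constant time whether its simplicial vertex $t$ is a terminal and whether a second vertex among $\{x,y,z\}$ is also a terminal. If such a clique is found the algorithm performs the branch; otherwise it reports that no such clique exists.

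I do not expect a real obstacle here; the entire combinatorial content is the short argument ``a selective solution never contains $t$, hence it contains two of $\{x,y,z\}$''. The one point deserving care is to keep the distinction between \emph{selective} and arbitrary target solutions straight, so that the forward direction may legitimately assume $t\notin S$ --- this is precisely where I would cite \autoref{cor:t-and-no-nbrs} explicitly rather than reasoning about an arbitrary target solution.
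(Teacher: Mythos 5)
Your proposal is correct and follows essentially the same route as the paper: both arguments use \autoref{cor:t-and-no-nbrs} to pass to a selective solution, observe that such a solution cannot contain \(t\) (since otherwise the \(T\)-triangle \(\{x,y,z\}\) would survive), conclude that it must contain at least two of \(\{x,y,z\}\) because \(t\) is a terminal, and finish with the standard add-back fact for the reverse direction. The linear-time search differs only cosmetically (you enumerate maximal cliques where the paper scans degree-four vertices), and both are fine.
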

  \begin{proof}

    Consider an instance $(G; T; k)$ of \SFVSC and let
    $(G_{1};T_{1};k-2), (G_{2};T_{2};k-2)$, and
    $(G_{3};T_{3};k-2)$ be three instances produced by
    \autoref{br:simplicial-two-terminals-case} when applied on
    $(G; T; k)$.  Let $t$ be a terminal simplicial vertex with
    neighbors $x, y,$ and $z$, where $x$ is a terminal.

    $(\Rightarrow)$ From \autoref{cor:t-and-no-nbrs}, there exists
    a selective solution, say $S$. By definition of selective
    solution, if $S$ contains $t$ then $S$ contains none of
    $x, y, z$. Since $x$ is a terminal, $x, y, z$ forms a $T$
    triangle which is not intersected by $S$ which is a
    contradiction. Hence $S$ must contain at least one of
    $x, y, z$ which implies $S$ does not contain $t$. But $t$ is a
    terminal vertex and hence $S$ must contain at least two
    vertices from $x, y, z$. If $S$ contains $\{x, y\}$ then set
    $S \setminus \{x, y\}$ is a solution for
    $(G - \{x, y\}; T \setminus \{x, y\}; k - 2)$. By applying
    similar argument when $S$ contains $\{x, z\}$ and $\{y, z\}$,
    derive that if $(G; T; k)$ is a \yes instance then at least
    one of $(G_{1};T_{1};k-2), (G_{2};T_{2};k-2)$ or
    $(G_{3};T_{3};k-2)$ is a \yes instance.
    
    $(\Leftarrow)$ We use the fact that for any $U \subseteq V(G)$ if  $S'$ is a target solution of $(G - U; T \setminus U; k - |U|)$ then $S' \cup U$ is a target solution of $(G; T; k)$ to prove the reverse direction.

    To apply this reduction rule, algorithm needs to find terminal simplicial vertex of degree four which is adjacent with a terminal if one exists. It is easy to check all vertices of degree four whether or not their neighborhood is a clique and contains a terminal. Hence one can find a simplicial clique to branch on or conclude that no such clique exits in time $\Oh(n + m)$.
  \end{proof}

  Once \autoref{br:simplicial-two-terminals-case} has been applied
  exhaustively, no simplicial clique in the graph contains two or
  more terminal vertices.  Let \(\{t,x,y,z\}\) be a simplicial
  clique with \(t\) as its unique terminal (and simplicial)
  vertex, and let \(t'\neq{}t\) be a terminal which shares two
  common neighbours---say, \(x\) and \(y\)---with \(t\). Then we
  can branch on one of these common neighbours, say \(x\), as
  follows.  If \(x\) is in a selective solution \(S\) then \(t\)
  is not in \(S\).  At least one of \(\{y,z\}\) must be in \(S\),
  or else the \(T\)-triangle \(\{t,y,z\}\) will remain after
  deleting \(S\). If \(x\) is not picked in \(S\) then we branch
  on the vertex \(t\): if \(t\) is picked in \(S\) then \(x,y,z\)
  are not in \(S\). This forces the terminal \(t'\) to be in
  \(S\), since otherwise the \(T\)-triangle \(\{t',x,y\}\) will
  remain after deleting \(S\). In the remaining case neither of
  \(x,t\) is picked in \(S\), and this forces both of \(y,z\) into
  \(S\). Thus we get
   
\begin{branchingrule}\label{br:two-terminals-two-common-nbrs}
  Let \((G;T;k)\) be an instance of \SFVSC to which none of the
  previous reduction or branching rules applies, let
  \(\{t,x,y,z\}\) be a simplicial clique of \(G\) with terminal
  \(t\), and let \(t'\neq{}t\) be a terminal which has \(x,y\) as
  neighbours.  Let
  \(G_{1}=G-\{x,y\},G_{2}=G-\{y,z\},G_{3}=G-\{x,z\},G_{4}=G-\{t,t'\}\)
  and \(T_{1}=T_{2}=T_{3}=T,T_{4}=T\setminus{}\{t,t'\}\). The new
  instances are:
  \((G_{1};T_{1};k-2),(G_{2};T_{2};k-2),(G_{3};T_{3};k-2),(G_{4};T_{4};k-2)\).
\end{branchingrule}

\begin{lemma} \autoref{br:two-terminals-two-common-nbrs} is
  exhaustive and it can be executed in time $\Oh(n + m)$ on
  instance which is not reducible by any of previously mentioned
  reduction or branching rule.
  \end{lemma}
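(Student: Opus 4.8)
The plan is to establish both directions of exhaustiveness and then the linear running-time bound, following the template of the proofs of \autoref{br:branch-on-non-terminal-simplicials} and \autoref{br:simplicial-two-terminals-case}.

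For the reverse direction I would invoke the recurring observation that for any $U\subseteq V(G)$, if $S'$ is a target solution of $(G-U;T\setminus U;k-|U|)$ then $S'\cup U$ is a target solution of $(G;T;k)$. Each of the four new instances has exactly this form with $|U|=2$: since \autoref{br:simplicial-two-terminals-case} has been applied exhaustively, the simplicial clique $\{t,x,y,z\}$ contains no terminal besides $t$, so $x,y,z\notin T$; hence for the first three branches $T_i=T=T\setminus U_i$, where $U_i$ is the deleted pair, and $T_4=T\setminus\{t,t'\}$ is already of the required shape. Thus a \yes answer on any $(G_i;T_i;k-2)$ yields a \yes answer on $(G;T;k)$.

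For the forward direction the key idea is to argue from a \emph{selective} solution rather than an arbitrary one. At this stage every simplicial clique of $G$ has size exactly four (\autoref{obs:simplicial-sizes-exactly-four}), its simplicial vertex is a terminal (ensured before \autoref{br:branch-on-non-terminal-simplicials}), and it contains no second terminal (\autoref{br:simplicial-two-terminals-case}); hence $C=\{t,x,y,z\}$ is a simplicial clique with $t$ as its unique terminal and simplicial vertex, so by \autoref{cor:t-and-no-nbrs} a \yes instance $(G;T;k)$ admits a selective solution $S$ with respect to $((G;T;k),C)$. I would then split on $|S\cap\{x,y,z\}|$. If $|S\cap\{x,y,z\}|\ge 2$, some pair $U\in\{\{x,y\},\{y,z\},\{x,z\}\}$ lies in $S$, and the identity $(G-U)-(S\setminus U)=G-S$ shows that $S\setminus U$ is a target solution of the matching instance among $(G_1;T_1;k-2),(G_2;T_2;k-2),(G_3;T_3;k-2)$. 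If $|S\cap\{x,y,z\}|=1$, say $x\in S$ and $y,z\notin S$, then selectivity forces $t\notin S$; but $\{t,y,z\}$ is a $T$-triangle (a triangle since $y,z\in N[t]$ and $N[t]$ is a clique, and it contains the terminal $t$) that $S$ fails to hit, a contradiction, so this case cannot arise. Finally, if $S\cap\{x,y,z\}=\emptyset$, then the $T$-triangle $\{t,x,y\}$ forces $t\in S$, and then $\{t',x,y\}$ --- a triangle since $t'x,t'y\in E(G)$ by hypothesis and $xy\in E(G)$ as above, and a $T$-triangle since $t'\in T$ --- forces $t'\in S$; since $t\ne t'$ we get $\{t,t'\}\subseteq S$, and $(G-\{t,t'\})-(S\setminus\{t,t'\})=G-S$ shows that $S\setminus\{t,t'\}$ is a target solution of $(G_4;T_4;k-2)$. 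In every possible case one new instance is a \yes instance, which completes exhaustiveness.

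For the running time I would follow the same route as the earlier branching rules: enumerate all maximal cliques of $G$ in $\Oh(n+m)$ time (\cite[Theorem~4.17]{golumbic2004algorithmic}), read off the simplicial cliques as those $N[v]$ with $v$ lying in a unique maximal clique (\autoref{fac:simplicial-vertex-characterization}) --- each of size four, so every simplicial vertex has degree three --- and for each such clique $\{t,x,y,z\}$ test, by scanning the marked neighbourhoods of the endpoints of the three edges $xy,yz,xz$, whether some terminal $t'\ne t$ is adjacent to two of $\{x,y,z\}$, stopping at the first clique for which this succeeds; the bookkeeping here is routine but should be organised so that the scans total $\Oh(n+m)$. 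The step I expect to require the most care is the forward direction: completeness of the case split hinges on first passing to a selective solution via \autoref{cor:t-and-no-nbrs}, and the ``exactly one of $\{x,y,z\}$ in $S$'' sub-case is excluded only by combining that selectivity with the $T$-triangle $\{t,y,z\}$; the reverse direction and the running-time accounting are straightforward by comparison.
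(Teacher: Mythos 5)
Your proposal is correct and follows essentially the same route as the paper: both directions of exhaustiveness rest on first passing to a selective solution via \autoref{cor:t-and-no-nbrs} (your case split on $|S\cap\{x,y,z\}|$ is just a reorganisation of the paper's split on whether $t\in S$, with the same two uses of selectivity and of the $T$-triangles $\{t,y,z\}$ and $\{t',x,y\}$), and the reverse direction is the standard ``$S'\cup U$'' argument. The one step you explicitly leave open --- organising the search for a terminal $t'\neq t$ adjacent to two of $\{x,y,z\}$ so that the scans total $\Oh(n+m)$ --- is resolved in the paper by precomputing, for every edge, up to two terminals forming a triangle with it (this is linear because, after \autoref{br:branch-on-large-cliques}, every maximal clique containing a terminal has constant size), after which each candidate simplicial clique is tested in constant time.
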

  \begin{proof} Consider an instance $(G; T; k)$ of \SFVSC and let
    $(G_{1};T_{1};k-2), (G_{2};T_{2};k-2)$, $(G_{3};T_{3};k-2),$
    and $(G_{4};T_{4};k-2)$ be four instances produced by
    \autoref{br:two-terminals-two-common-nbrs} when applied on
    $(G; T; k)$.  Let $t$ be a terminal simplicial vertex with
    neighbors $x, y,$ and $z$. Let $t'$ be another terminal which
    is adjacent with both $x$ and $y$.

    $(\Rightarrow)$ From \autoref{cor:t-and-no-nbrs}, there exists
    a selective solution, say $S$. Since $\{t, x, y, z\}$ is a
    clique containing terminal, $S$ contains at least one vertex
    of this clique. By definition of selective solution, if $S$
    contains $t$ then $S$ contains none of $x, y, z$. Since
    $\{t', x, y\}$ is a $T$-cycle, if $S$ excludes both $\{x, y\}$
    then it must include $t'$ to hit this cycle. This implies any
    selective solution which contains terminal $t$ also contains
    terminal $t'$. Consider the case when selective solution does
    not contain $t$. Since $t$ is a terminal, $S$ must contain at
    least two vertices from $\{x, y, z\}$. If $S$ contains
    $\{x, y\}$ then set $S \setminus \{x, y\}$ is a solution for
    $(G - \{x, y\}, T \setminus \{x, y\}, k - 2)$. By applying
    similar argument when $S$ contains $(x, z)$, $(y, z)$, and
    $(t, t')$, we derive that if $(G; T; k)$ is a \yes instance
    then at least one of
    $(G_{1};T_{1};k-2), (G_{2};T_{2};k-2), (G_{3};T_{3};k-2)$, or
    $(G_{4};T_{4};k-2)$ is a \yes instance.
    
    $(\Leftarrow)$ We use the fact that for any $U \subseteq V(G)$ if $S'$ is a target solution of $(G - U; T \setminus U; k - |U|)$ then $S' \cup U$ is a target solution of $(G; T; k)$ to prove the reverse direction.

    To apply this reduction rule, we run following pre-processing:
    For a given graph $G$, construct an array of size $m$. Each
    entry in the array corresponds to an edge and can store two
    terminals. Given a chordal graph $G$, compute set of all
    maximal cliques in graph. For every maximal clique $Q$, if it
    contains terminal $t'$ then for every edge in $Q$ which is not
    incident on $t'$, add terminal $t'$ to the entry in an array
    corresponding to that edge.  If there are already two
    terminals in an array corresponding to some particular edge
    then we do not add new terminals. Since the input instance
    $(G; T; k)$ is not reducible by
    \autoref{br:branch-on-large-cliques}, every maximal clique
    which contains a terminal is of size at most four. Hence
    algorithm spend constant amount of time at each maximal clique
    which contains a terminal.  Since there are at most $n$ all
    maximal cliques all of which can be computed in $\Oh(n + m)$
    time, the overall time required for this pre-processing is
    $\Oh(n + m)$. Notice that if terminal $t'$ is adjacent with
    $x, y$ and $xy$ is an edge then $t', x, y$ are contained in a
    maximal clique containing terminal $t'$. Hence for every edge
    whose end-points are adjacent with at least two terminals, two
    terminals will be stored in the array.  Once this pre-process
    is complete, algorithm check for every terminal of degree four
    whether or not it is simplicial. Suppose terminal $t$ is
    terminal and is adjacent with $x, y, z$. For every edge in set
    $\{xy, yz, zx\}$, algorithm checks whether there exists
    another terminal $t'$ which is adjacent with end-point of the
    edge using the array constructed in the pre-processing
    step. This step can be completed in constant time. At most one
    of terminals stored for $xy$ can be $t$ and hence at least one
    another terminal, if exists, will be stored in the entry
    corresponding to edge $xy$.  Hence one can find a simplicial
    clique to branch on or conclude that no such clique exits in
    time $\Oh(n + m)$.
  \end{proof}

At this point we have

\begin{observation}\label{obs:graph-after-simple-stuff}
  Let \((G;T;k)\) be an instance which is reduced with respect to
  \Autoref{rr:trivial-no,rr:trivial-yes,rr:no-more-clique-components,rr:only-T-neighbours,rr:no-cut-edges},
  and
  \Autoref{br:non-terminals-two-terminal-nbrs,br:branch-on-three-cliques,br:branch-on-large-cliques,br:branch-on-non-terminal-simplicials,br:simplicial-two-terminals-case,br:two-terminals-two-common-nbrs}. Then
  \(G\) has the following properties:
  \begin{enumerate}
  \item\label{obs:maximal-cliques-at-least-three} Every maximal
    clique is of size at least three.
  \item\label{obs:at-least-two-terminal-nbrs} Every non-terminal
    vertex has at least two terminals as neighbours.
  \item\label{obs:simplicials-are-terminals} Every simplicial vertex is a terminal.
  \item\label{obs:simplicials-size-four-unique-vertex} Every
    simplicial clique \(C\) has size exactly four and
    contains exactly one simplicial vertex, which is also the only
    terminal vertex in \(C\).
  \item\label{obs:non-terminal-pairs-unique-common-terminal-nbr}
    Each pair of non-terminals in a simplicial clique \(C\) has
    exactly one common neighbour (namely, the simplicial vertex in
    \(C\)) from among the terminal vertices.
  \end{enumerate}
\end{observation}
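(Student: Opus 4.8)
The plan is to verify the five properties one at a time; in each case the property is precisely the negation of the applicability condition of some reduction or branching rule that has been applied exhaustively, so the work is mostly a matter of matching each assertion to the right rule and to the structural lemmas already established. Property~(\ref{obs:maximal-cliques-at-least-three}) is immediate: since $(G;T;k)$ is reduced with respect to \autoref{rr:no-more-clique-components} and \autoref{rr:no-cut-edges}, \autoref{lem:no-small-cliques} applies directly. For Property~(\ref{obs:at-least-two-terminal-nbrs}) I would argue by elimination: a non-terminal with no terminal neighbour would be removed by \autoref{rr:only-T-neighbours}, while a non-terminal $v$ with exactly one terminal neighbour $t$ lies, by \autoref{lem:all-in-T-triangles}, in a $T$-triangle $\{v,t,x\}$ and hence triggers \autoref{br:non-terminals-two-terminal-nbrs}; as neither applies, every non-terminal has at least two terminal neighbours.

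For Property~(\ref{obs:simplicials-are-terminals}) I would first record that every simplicial clique has size exactly four: by \autoref{cor:simplicial-terminal} such a clique has size at least three, \autoref{br:branch-on-three-cliques} rules out size three and \autoref{br:branch-on-large-cliques} rules out size at least five, which together are exactly the content of \autoref{obs:simplicial-sizes-exactly-four}. If some simplicial vertex $v$ were a non-terminal, then its simplicial clique would be a four-vertex clique containing a terminal (again \autoref{cor:simplicial-terminal}), so \autoref{br:branch-on-non-terminal-simplicials} would apply; hence every simplicial vertex is a terminal. Property~(\ref{obs:simplicials-size-four-unique-vertex}) then assembles ``size exactly four'' and ``contains a terminal'' from the above, and adds two short arguments: for ``at most one terminal'', a simplicial clique with two terminals would, since its simplicial vertex is a terminal by Property~(\ref{obs:simplicials-are-terminals}), satisfy the hypotheses of \autoref{br:simplicial-two-terminals-case} (take $t$ to be the simplicial vertex and $x$ another terminal of the clique); and for ``exactly one simplicial vertex'', \autoref{fac:simplicial-vertex-characterization} shows that a simplicial clique is a maximal clique, so any simplicial vertex $u$ lying in it has $N[u]$ equal to that same maximal clique, whence $u$ is also a terminal by Property~(\ref{obs:simplicials-are-terminals}), and ``at most one terminal'' forces $u$ to coincide with the original simplicial vertex.

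Finally, for Property~(\ref{obs:non-terminal-pairs-unique-common-terminal-nbr}) I would write a simplicial clique as $\{t,x,y,z\}$, where by Properties~(\ref{obs:simplicials-are-terminals}) and~(\ref{obs:simplicials-size-four-unique-vertex}) the vertex $t$ is its unique simplicial vertex and unique terminal, so $x,y,z$ are the non-terminals. For any pair among $x,y,z$, say $x$ and $y$, the vertex $t$ is a common terminal neighbour; if there were a second common terminal neighbour $t'\neq t$ then the hypotheses of \autoref{br:two-terminals-two-common-nbrs} would be met, a contradiction. Hence $t$ is the unique common terminal neighbour of each of the three non-terminal pairs.

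I do not expect a genuine obstacle here: the argument is bookkeeping, and the one care point is to keep straight which rules are in force when invoking \autoref{obs:simplicial-sizes-exactly-four}, which already bundles the size-four fact. The only step needing more than a citation is the uniqueness of the simplicial vertex in Property~(\ref{obs:simplicials-size-four-unique-vertex}), where \autoref{fac:simplicial-vertex-characterization} must be used to force two simplicial vertices of a common simplicial clique to share the same closed neighbourhood and hence to both be terminals.
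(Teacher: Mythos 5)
Your proof is correct and follows exactly the route the paper intends: the paper states this observation without an explicit proof, relying on the fact that each property is the negation of the applicability condition of one of the exhaustively applied rules together with \autoref{lem:no-small-cliques}, \autoref{lem:all-in-T-triangles}, \autoref{cor:simplicial-terminal}, and \autoref{obs:simplicial-sizes-exactly-four}, which is precisely the bookkeeping you carry out. Your extra argument for the uniqueness of the simplicial vertex via \autoref{fac:simplicial-vertex-characterization} is a sound way to close the one step the paper leaves entirely implicit.
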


\subsection{Dealing With Simplicial Cliques of Size Four}
\label{sec:size-four-branching}
We now derive some structural properties of a ``reduced'' instance
\((G;T;k)\) of \SFVSC as described in
\autoref{obs:graph-after-simple-stuff}, and use these to
handle simplicial cliques of size exactly four within the required
time bound of \(\OhStar(2^{k})\).  Let \(\mathcal{T}_{G}\) be a
clique tree of graph \(G\).  Then from
\autoref{fac:clique-tree-properties} we get that every leaf of
\(\mathcal{T}_{G}\) is a simplicial clique of size exactly
four. If \(\mathcal{T}_{G}\) contains at most two nodes then both
of them are leaves, and so they are simplicial cliques of size
four each. In this case graph \(G\) has at most eight vertices and
we can solve this instance in constant time. So we assume, without
loss of generality, that \(\mathcal{T}_{G}\) has at least three
nodes. Then \(\mathcal{T}_{G}\) has at least one node which is not
a leaf; let \(C_{r}\) be such a node. We root the tree
\(\mathcal{T}_{G}\) at node \(C_{r}\).

From now on we assume that \(C_{\ell}=\{t, x, y, z\}\) is a leaf
node of \(\mathcal{T}_{G}\) which is at the maximum distance from
the root \(C_{r}\) with \(t\) being the unique terminal in
\(C_{\ell}\), and that \(C_{p}\) is the unique neighbour (``parent'')
of \(C_{\ell}\) in \(\mathcal{T}_{G}\).

\begin{claimnr}\label{claim:x-y-z-in-Cp}
  \(\{x,y,z\}\subseteq{}C_{p}\), and \(C_{p}\) does not contain a terminal vertex.
\end{claimnr}
\begin{proof}
  From \autoref{obs:graph-after-simple-stuff} we know that \(t\)
  is the only simplicial vertex in \(C_{\ell}\).  For the sake of
  contradiction, assume that \(x\) is not in \(C_{p}\). Then from
  the intersection property of clique trees we get that \(x\) is
  not present in any maximal clique apart from $C_{\ell}$. Thus
  (\autoref{fac:simplicial-vertex-characterization}) \(x\) is a
  simplicial vertex, a contradiction. Repeating this argument, we
  get \(\{x,y,z\}\subseteq{}C_{p}\).

  If \(C_{p}\) contains a terminal vertex \(t'\) then \(t'\) has
  all of \(\{x,y,z\}\) as neighbours, and since \(t\) is
  simplicial we get that \(t'\neq{}t\). This contradicts
  part~\ref{obs:non-terminal-pairs-unique-common-terminal-nbr} of
  \autoref{obs:graph-after-simple-stuff}.
\end{proof}

Parts~\ref{obs:at-least-two-terminal-nbrs} and
\ref{obs:non-terminal-pairs-unique-common-terminal-nbr} of
\autoref{obs:graph-after-simple-stuff} together imply that
each of \(x,y,z\) must have at least one terminal other than \(t\)
as a neighbour, and that these terminals must be pairwise
distinct: each such terminal is adjacent to exactly one of
\(x,y,z\). Let \(t_{x},t_{y},t_{z}\) be three such terminal
vertices which are adjacent to \(x,y,z\), respectively. From
\autoref{claim:x-y-z-in-Cp} we know that none of
\(t_{x},t_{y},t_{z}\) is in \(C_{p}\), and it is trivially the
case that none of these vertices is in \(C_{\ell}\) either.

Let \(C_{x}',C_{y}',C_{z}'\) be three maximal cliques that contain
the edges \(\{x,t_{x}\},\{y,t_{y}\},\{z,t_{z}\}\),
respectively. Then we have that clique \(C_{x}'\) does not contain
vertex \(y\), and similarly for the other pairs.  Thus the maximal
cliques \(C_{x}',C_{y}',C_{z}'\) intersect with \(C_{\ell}\)
exactly at \(\{x\},\{y\},\{z\}\), respectively. Now let
\(C_{x},C_{y},C_{z}\) be maximal cliques \emph{closest to
  \(C_{p}\) in the clique tree \(\mathcal{T}_{G}\)} such that
\(t_{i}\in{}C_{i}\) and \(C_{i}\cap{}C_{\ell}=\{i\}\) for
\(i\in\{x,y,z\}\).

\begin{figure}[t]
\centering
\subfloat[]{
  \label{fig:sub1}
  \includegraphics[scale=0.3]{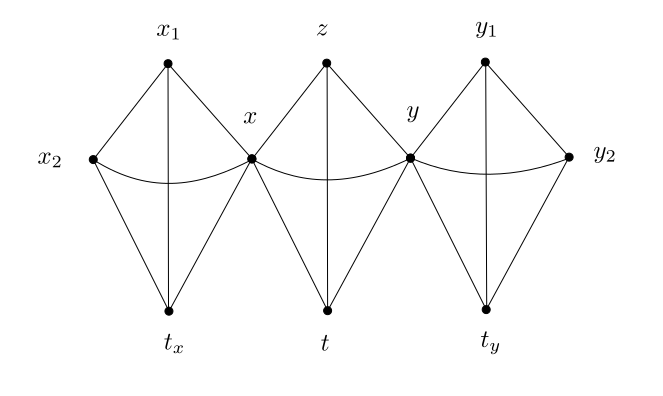}}
\subfloat[]{
  \label{fig:sub2}
  \includegraphics[scale=0.3]{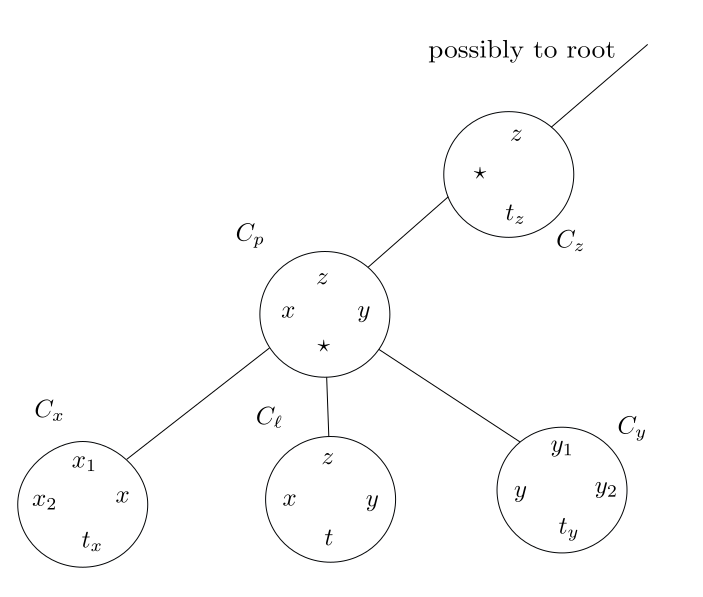}}
\caption{Left side figure shows vertices and edges in the input graph. Right side shows the clique tree of given graph. Every circle is a maximal clique in input graph. $(\star)$ denotes that all vertices in maximal clique are not shown in the figure.}
\label{fig:clique-tree}
\end{figure}

\begin{claimnr}\label{claim:three-nbrs}
  At least two among the cliques \(\{C_{x},C_{y},C_{z}\}\) are
  leaf nodes of the clique tree \(\mathcal{T}_{G}\). 
\end{claimnr}
\begin{proof}
  From item~\ref{fac:subtree} of
  \autoref{fac:clique-tree-properties} and from our assumption
  about distances, we get that all of \(C_{x},C_{y},C_{z}\) must
  be adjacent to \(C_{p}\) in the clique tree
  \(\mathcal{T}_{G}\). Since \(C_{p}\) can have at most one parent
  node in \(\mathcal{T}_{G}\), we get that at least two among
  \(C_{x},C_{y},C_{z}\), say \(C_{x}\) and \(C_{y}\), are child
  nodes of \(C_{p}\).  The child node \(C_{\ell}\) of \(C_{p}\)
  is, by assumption, a leaf which is farthest from the root node
  \(C_{r}\). It follows that neither \(C_{x}\) nor \(C_{y}\) can
  be internal nodes in \(\mathcal{T}_{G}\), or else there would be
  a leaf which is farther from \(C_{r}\) than is
  \(C_{\ell}\). Thus both \(C_{x}\) and \(C_{y}\) are leaf nodes
  of \(\mathcal{T}_{G}\).
\end{proof}

From now on we assume that \(C_{x}\) and \(C_{y}\) are leaf nodes
of \(\mathcal{T}_{G}\). By
\autoref{obs:leaves-are-simplicial} we have that \(C_{x}\)
and \(C_{y}\) are simplicial, and by
part~\ref{obs:simplicials-size-four-unique-vertex} of
\autoref{obs:graph-after-simple-stuff} we get that these
two cliques contain exactly four vertices each, and that
\(t_{x},t_{y}\) are the only terminals in \(C_{x},C_{y}\),
respectively. Recall that \(C_{\ell}=\{t,x,y,z\}\). Let
\(C_{x}=\{t_{x},x,x_{1},x_{2}\}\) and
\(C_{y}=\{t_{y},y,y_{1},y_{2}\}\).
\begin{table}[]
\centering
\begin{tabular}{lllll}
  \toprule
       Branch     & Vertices picked & New graph \(G_{i}\) & New terminal set \(T_{i}\)  & New parameter \(k_{i}\) \\ \midrule
  \(B_{1}\) &  \(\{t,t_{x},t_{y}\}\) & \(G-\{t,t_{x},t_{y}\}\)  & \(T\setminus\{t,t_{x},t_{y}\}\) & \(k-3\) \\
  \(B_{2}\) &  \(\{t,t_{x},y_{1},y_{2}\}\) & \(G-\{t,t_{x},y_{1},y_{2}\}\) & \(T\setminus\{t,t_{x}\}\) & \(k-4\)  \\
  \(B_{3}\) &  \(\{t,x_{1},x_{2},t_{y}\}\) & \(G-\{t,,x_{1},x_{2},t_{y}\}\) & \(T\setminus\{t,t_{y}\}\) & \(k-4\)  \\ \cline{1-5}
  \multicolumn{1}{|l}{\(B_{4}(i)\)} &  \(\{t,x_{1},x_{2},y_{1},y_{2}\}\) & \(G-\{t,x_{1},x_{2},y_{1},y_{2}\}\) & \(T\setminus\{t\}\) & \multicolumn{1}{l|}{\((k-5)\)}  \\
  \multicolumn{1}{|l}{\(B_{4}(ii)\)} &  \(\{t,x_{1},x_{2},y_{1}\}\) & \(G-\{t,x_{1},x_{2},y_{1}\}\) & \(T\setminus\{t\}\) & \multicolumn{1}{l|}{\((k-4)\)}  \\ \cline{1-5}
  \(B_{5}\) &  \(\{x\}\) & \(G-\{x\}\) & \(T\) & \(k-1\)  \\
  \(B_{6}\) &  \(\{y,z,t_{x}\}\) & \(G-\{y,z,t_{x}\}\) & \(T\setminus\{t_{x}\}\) & \(k-3\)  \\
  \(B_{7}\) &  \(\{y,z,x_{1}, x_{2}\}\) & \(G-\{y,z,x_{1},x_{2}\}\) & \(T\) & \(k-4\)  \\
 \bottomrule
\end{tabular}
\caption{Overview of \autoref{br:mega-branching-rule}
  showing the vertices picked in the solution and the resulting
  graph, terminal set, and parameter in each branch. Exactly
  \emph{one} of the two versions of rule \(B_{4}\) are applicable
  to any one instance, depending on whether the sets \(C_{x}\) and
  \(C_{y}\) share a vertex. See also the branching tree in
  \autoref{fig:branching-tree}.}
\label{tab:branching-tree}
\end{table}

\begin{figure}
  \includegraphics[scale=0.5]{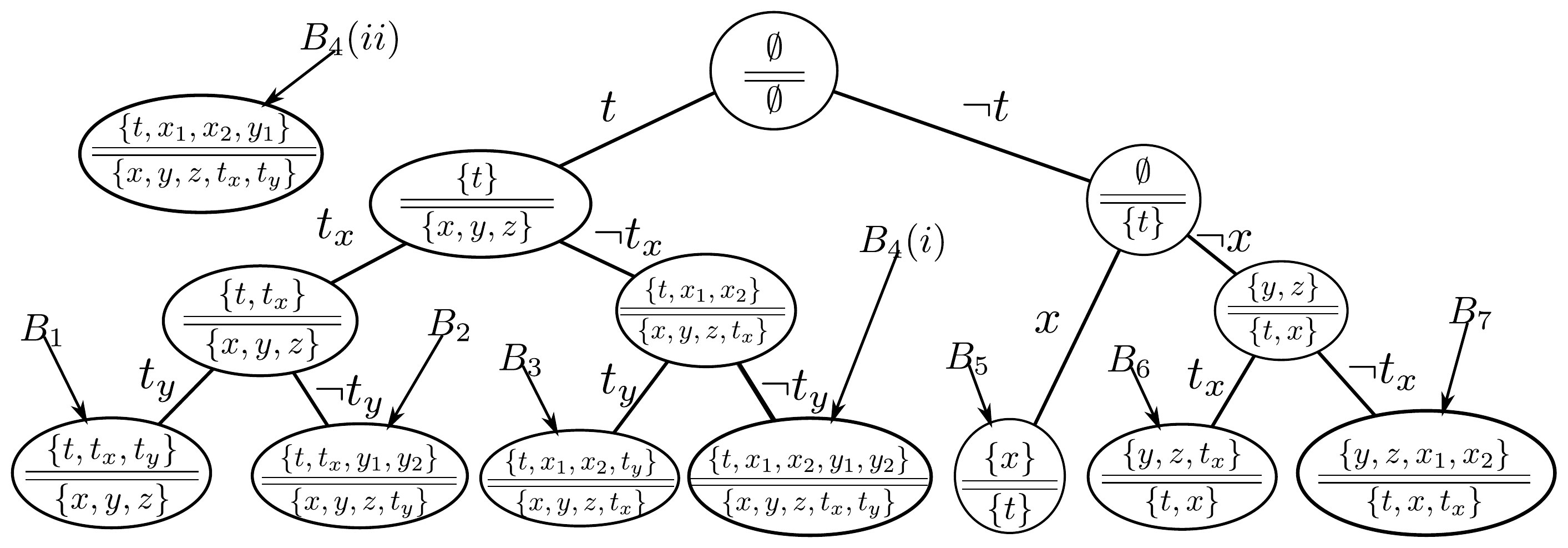}
  \caption{The seven branches of
    \autoref{br:mega-branching-rule}. An edge label denotes the
    choice we make of picking (e.g: ``\(t\)'') or excluding (e.g:
    ``\(\neg{}t\)'') a vertex in/from the solution.  The set of
    vertices that the rule has picked in the solution thus far
    appears above the double line ``\({=\joinrel=}\)'' in each
    node, and the set \emph{excluded} from the solution appears
    below \({=\joinrel=}\). The seven leaf nodes correspond to the
    seven branches \(B_{1},\cdots,B_{7}\) of the branching
    rule. The branch \(B_{4}\) takes \emph{one} of two forms
    depending on the instance. If \(C_{x}\cap{}C_{y}=\emptyset\)
    then option \(B_{4}(i)\) applies, and if
    \(C_{x}\cap{}C_{y}\neq\emptyset\) then option \(B_{4}(ii)\)
    applies.}
  \label{fig:branching-tree}
\end{figure}

We now branch on vertex \(t\) (See \autoref{fig:branching-tree}
and \autoref{tab:branching-tree}). If \(t\) is in the solution
then none of \(\{x,y,z\}\) is in the solution. Since \(x\) is not
in the solution, we have that either \(t_{x}\) is in the solution,
or both of \(\{x_{1},x_{2}\}\) are in the solution. Similarly we
get that either \(t_{y}\) is in the solution, or both of
\(\{y_{1},y_{2}\}\) are in the solution. If \(t\) is \emph{not} in
the solution then either (i) \(x\) is in the solution or (ii)
\(x\) is not in the solution and both of \(\{y,z\}\) are in the
solution. In the second case, since \(x\) is not in the solution,
we get that either \(t_{x}\) is in the solution, or both of
\(\{x_{1},x_{2}\}\) are in the solution. We summarize these seven
branches in \autoref{tab:branching-tree}. Note that \(C_{x}\)
and \(C_{y}\) could share---at most---one vertex, in which case
the set \(\{x_{1},x_{2}, y_{1},y_{2}\}\) will contain three
vertices, not four.  Rule \(B_{4}(i)\) applies when
\(|C_{x}\cap{}C_{y}|=0\) and rule \(B_{4}(ii)\) applies when
\(|C_{x}\cap{}C_{y}|=1\). In stating rule \(B_{4}(ii)\) we have
assumed that vertex \(x_{1}\) is common to both cliques;
\(C_{x}=\{t_{x},x,x_{1},x_{2}\}\) and
\(C_{y}=\{t_{y},y,x_{1},y_{1}\}\).  On every instance we apply
\emph{one} of the two variants \(B_{4}(i)\) and \(B_{4}(ii)\) of
rule \(B_{4}\), as appropriate.

\begin{branchingrule}\label{br:mega-branching-rule}
  Let \((G;T;k)\) be an instance of \SFVSC and let
  \(C_{\ell}=\{t,x,y,z\},C_{x}=\{x,t_{x},x_{1},x_{2}\},C_{y}=\{y,t_{y},y_{1},y_{2}\}\)
  be leaf nodes of the clique tree \(\mathcal{T}_{G}\) of
  \(G\). For each \(i\;;\;1\leq{}i\leq{}7\), let graph \(G_{i}\),
  terminal set \(T_{i}\), and parameter \(k_{i}\) be as described
  in \autoref{tab:branching-tree}, with the proviso that rule
  \(B_{4}(i)\) applies if and only if \(|C_{x}\cap{}C_{y}|=0\) and
  rule \(B_{4}(ii)\) applies if and only if
  \(|C_{x}\cap{}C_{y}|=1\).  The new instances are:
  \(\{(G_{i};T_{i};k_{i})\}\;;\;1\leq{}i\leq{}7\).
\end{branchingrule}

\begin{lemma} \autoref{br:mega-branching-rule} is exhaustive and
  it can be executed in time $\Oh(n + m)$ on instance which is not
  reducible by any of previously mentioned reduction or branching
  rule.
  \end{lemma}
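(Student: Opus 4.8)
The plan is to prove the two assertions separately, relying on the structural picture established in \autoref{claim:x-y-z-in-Cp}, \autoref{claim:three-nbrs}, \autoref{obs:leaves-are-simplicial} and \autoref{obs:graph-after-simple-stuff}: namely that $C_\ell=\{t,x,y,z\}$, $C_x=\{t_x,x,x_1,x_2\}$ and $C_y=\{t_y,y,y_1,y_2\}$ are simplicial leaf cliques of the clique tree, each of size exactly four, with $t,t_x,t_y$ their respective unique terminals, pairwise distinct. For the \emph{reverse} direction of exhaustiveness I would use the standard observation used throughout the paper: if $U_i$ is the set of vertices picked in branch $B_i$ (read off from \autoref{tab:branching-tree}), then $|U_i|=k-k_i$, and for any target solution $S'$ of $(G_i;T_i;k_i)=(G-U_i;\,T\setminus U_i;\,k-|U_i|)$ the set $S'\cup U_i$ is a target solution of $(G;T;k)$, because $G-(S'\cup U_i)=G_i-S'$ has no $(T\setminus U_i)$-triangle and every $T$-triangle of $G$ that fails to be a $(T\setminus U_i)$-triangle uses a deleted vertex of $U_i$.

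For the \emph{forward} direction, assume $(G;T;k)$ is a \YES instance; by \autoref{cor:t-and-no-nbrs} it has a \emph{selective solution} $S$ with respect to $C_\ell$ (in the sense of \autoref{def:selective-solution}), and I would case-split on whether $t\in S$. If $t\in S$, then by the definition of a selective solution $S$ contains none of $x,y,z$. As $x\notin S$ while $\{x,t_x,x_1\}$ and $\{x,t_x,x_2\}$ are $T$-triangles, $S$ must contain $t_x$ or both of $x_1,x_2$; symmetrically $S$ contains $t_y$ or both of $y_1,y_2$. The four resulting possibilities are exactly the branches $B_1,B_2,B_3,B_4$ of \autoref{tab:branching-tree}, with $B_4$ read as $B_4(i)$ when $C_x\cap C_y=\emptyset$ and as $B_4(ii)$ (with the obvious relabelling) when $C_x$ and $C_y$ share a vertex, which merges two of the forced vertices and explains why $B_4(ii)$ decreases the parameter by $4$ rather than $5$. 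In each case the picked set $U_i$ satisfies $U_i\subseteq S$, so $G_i-(S\setminus U_i)=G-S$ has no $T$-triangle and $|S\setminus U_i|\le k-|U_i|=k_i$; hence $(G_i;T_i;k_i)$ is a \YES instance.

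If $t\notin S$, then since $t$ is a terminal and $\{t,x,y\},\{t,y,z\},\{t,x,z\}$ are all $T$-triangles, $S$ contains at least two of $x,y,z$. If $x\in S$ we are in branch $B_5$ with $U_5=\{x\}$. If $x\notin S$, then $\{t,x,y\}$ and $\{t,x,z\}$ force $y,z\in S$, and — again because $x\notin S$ — the $T$-triangles $\{x,t_x,x_1\},\{x,t_x,x_2\}$ force $t_x\in S$ (branch $B_6$, $U_6=\{y,z,t_x\}$) or both $x_1,x_2\in S$ (branch $B_7$, $U_7=\{y,z,x_1,x_2\}$). In every sub-case $U_i\subseteq S$, so as before $(G_i;T_i;k_i)$ is a \YES instance. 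Since $t\in S$ and $t\notin S$ are exhaustive and split further into exactly the seven listed branches, the rule is exhaustive.

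For the running time, I would first compute a clique tree $\mathcal{T}_G$ in $\Oh(n+m)$ time (\autoref{fac:clique-tree-properties}); if it has at most two nodes then $G$ has at most eight vertices and the instance is solved directly, so assume $\mathcal{T}_G$ has an internal node, root $\mathcal{T}_G$ there, and by one traversal find a deepest leaf $C_\ell$ and its parent $C_p$ in $\Oh(n)$ time. Reading $t,x,y,z$ off $C_\ell$ is $\Oh(1)$; terminal neighbours $t_x,t_y,t_z$ of $x,y,z$ other than $t$ are found by scanning $N(x),N(y),N(z)$, costing $\Oh(m)$; the cliques $C_x,C_y,C_z$ are the neighbours of $C_p$ in $\mathcal{T}_G$ containing $t_x,t_y,t_z$ respectively (the subtree/distance argument behind \autoref{claim:three-nbrs} guarantees such neighbours exist and that two of them, say $C_x,C_y$, are leaves), found by scanning the vertex lists of $C_p$'s neighbours, which costs $\Oh(n+m)$ since all maximal cliques together have total size $\Oh(n+m)$; checking which are leaves is $\Oh(1)$ per node, deciding $|C_x\cap C_y|\in\{0,1\}$ is $\Oh(1)$, and building the seven instances $G_i=G-U_i$ with $|U_i|=\Oh(1)$ takes $\Oh(n+m)$ in total. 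The main obstacle is the forward direction of exhaustiveness: one must verify that the seven branches, with the two variants of $B_4$, cover \emph{every} selective solution, which rests on correctly using the $T$-triangles inside the leaf cliques $C_x,C_y$ to force the stated vertices, together with the structural facts pinning down the shapes of $C_\ell,C_x,C_y$ and the pairwise distinctness of $t,t_x,t_y$.
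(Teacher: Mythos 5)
Your proposal is correct and follows essentially the same route as the paper: the reverse direction via the standard ``$S'\cup U_i$ is a solution of $(G;T;k)$'' observation, the forward direction by taking a selective solution and case-splitting on $t\in S$, then on $x\in S$, using the $T$-triangles inside the simplicial leaf cliques $C_x,C_y$ to force $t_x$ or $\{x_1,x_2\}$ (and symmetrically for $y$), which yields exactly the seven branches with the two variants of $B_4$; the running-time argument via the clique tree, a deepest leaf, and \autoref{claim:x-y-z-in-Cp} and \autoref{claim:three-nbrs} is likewise the paper's. Your write-up is in fact slightly more explicit than the paper's in naming the specific $T$-triangles that force each inclusion.
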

  \begin{proof}
    Consider an instance $(G; T; k)$ of \SFVSC and cliques
    $C_{\ell}, C_x, C_y$ as described in the statement of
    branching rule. Let $(G_i; T_i; k_i)$ for each
    \(i\;;\;1\leq{}i\leq{}7\), be seven instances produced by
    \autoref{br:mega-branching-rule} when applied on
    $(G; T; k)$ as described in \autoref{tab:branching-tree}.

    $(\Rightarrow)$ From \autoref{cor:t-and-no-nbrs}, there exists
    a selective solution, say $S$. Since $\{t, x, y, z\}$ is a
    clique containing terminal, $S$ contains at least one vertex
    of this clique. By definition of selective solution, if $S$
    contains $t$ then $S$ contains none of $\{x, y, z\}$. We first
    see the implication of the fact that $S$ does not contain $x$
    in clique $C_x$.  Since $x$ is not a part of solution,
    $S \cap C_x$ either contains $t_x$ or both $x_1, x_2$. Hence
    if $S$ contains $t$ then it either contains $t_x$ or both
    $x_1,x_2$. We apply similar argument with respect to $y$. This
    implies $S$ either contains $t_y$ or both $y_1, y_2$. Hence we
    can conclude that if $S$ contains $t$ then at least one of
    sets
    $\{t_x, t_y\}, \{t_x, y_1, y_2\}, \{x_1, x_2, t_y\}, \{x_1,
    x_2, y_1, y_2 \}$ is contained in $S$. Note that set
    $\{x_1, x_2, y_1, y_2 \}$ may not contain four distinct
    elements.

    Consider the case when selective solution does not contain
    $t$. Since $t$ is a terminal, $S$ must contain at least two
    vertices from $x, y, z$. Instead of analyzing three cases, viz
    whether $\{x, y\}, \{y, z\}$ or $\{z, x\}$, we analyze two
    cases depending on whether or not $x$ is contained in $S$. We
    do this to minimize the complexity in stating this branching
    rule.  If $x$ is not contained in $S$ then it contains
    $\{y, z\}$. Moreover, $S$ either contains $t_x$ or both
    $x_1, x_2$. This implies if $S$ does not contain $t$ then $S$
    contains at least one of sets
    $\{x\}, \{y, z, t_x\}, \{y, z, t_x\}$. We have established
    that if $G$ is a \yes instance then by
    \autoref{cor:t-and-no-nbrs}, there exits a selective
    solution. This selective solution contains at least one of the
    seven sets mentioned above. Hence if $(G; T; k)$ is a \yes
    instance then at least one of the seven instances specified in
    \autoref{tab:branching-tree} is a \yes instance.
    
    $(\Leftarrow)$ We use the fact that for every $U \subseteq$ if $S'$ is a target solution of $(G - U; T \setminus U; k - |U|)$ then $S' \cup U$ is a target solution of $(G; T; k)$ to prove reverse direction.

    To execute branching rule, we need an algorithm which finds
    cliques $C_{\ell}, C_x, C_y$ which satisfies the mentioned
    property or conclude that no such cliques exists. We assume
    graph $G$ is a connected graph. If $G$ is not connected, we
    can process each of its connected component separately.  We
    first compute clique tree for of graph $G$ in time
    $\Oh(n + m)$ (\autoref{fac:clique-tree-properties}).  Once
    clique tree is computed, we can arbitrary root it at any
    internal node.  If clique tree does not have an internal node
    than it has at most two maximal clique each of which contains
    at most four vertices. In this case, problem can be solved in
    constant time.  Fix any node which is at farthest distance
    from the root as $C_{\ell}$. Let $C_{\ell} = \{t, x, y,
    z\}$. We now argue that there exists cliques $C_x, C_y$ as
    desired by branching rule assuming the input instance is not
    reducible by any reduction and branching rules mentioned
    before this branching rule.

    Let $C_p$ be the parent of $C_{\ell}$ in this rooted tree. By
    \autoref{claim:x-y-z-in-Cp}, $\{x, y, z\}$ is present in
    $C_p$. Parts~\ref{obs:at-least-two-terminal-nbrs} and
    \ref{obs:non-terminal-pairs-unique-common-terminal-nbr} of
    \autoref{obs:graph-after-simple-stuff} together imply
    that each of $x, y, z$ must have at least one terminal other
    than \(t\) as a neighbour, and that these terminals must be
    pairwise distinct. This implies that $C_p$ has at least three
    neighbors apart from $C_{\ell}$ in the clique tree. Let
    $C_x, C_y$ be any two neighbors of $C_p$ which are not in the
    path from $C_p$ to the root. This implies that distance
    between root and $C_{\ell}$ is same as distance between the
    root and $C_x$ or $C_y$. Since $C_{\ell}$ is the leaf which is
    farthest from the root, $C_x, C_y$ both are leaves in the
    clique tree. By
    part~\ref{obs:simplicials-size-four-unique-vertex} of
    \autoref{obs:graph-after-simple-stuff} we get that all
    simplicial cliques contain exactly four vertices including one
    terminal each. This proves that any leaf which is farthest
    from the root can be used as simplicial clique $C_{\ell}$ in
    branching process and concludes the proof of lemma.
  \end{proof}

We have mentioned all branching rules and now are in a position to conclude main result of this section.

\begin{theorem} There exists an algorithm which given an instance $(G, T, k)$ of \SubsetFVSChordal\ runs in time $\mathcal{O}(2^k (n + m))$ and decides whether input is \yes\ of \no\ instance. Here $n, m$ are number of vertices and edges in input graph $G$. 
\end{theorem}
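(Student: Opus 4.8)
The plan is to assemble the reduction and branching rules of this section into a single recursive procedure and then verify its correctness and its running time. On input $(G;T;k)$ the algorithm first returns \no if $k<0$, and applies the terminating checks of \autoref{rr:trivial-no} and \autoref{rr:trivial-yes}. Otherwise it applies \Autoref{rr:no-more-clique-components,rr:only-T-neighbours,rr:no-cut-edges} exhaustively; by \autoref{lemma:running-time-rr} this costs $\Oh(n+m)$ and yields an equivalent instance in which (by \autoref{lem:no-small-cliques} and \autoref{lem:all-in-T-triangles}) every maximal clique has size at least three and every vertex lies on a $T$-triangle. It then considers \Autoref{br:non-terminals-two-terminal-nbrs,br:branch-on-three-cliques,br:branch-on-large-cliques,br:branch-on-non-terminal-simplicials,br:simplicial-two-terminals-case,br:two-terminals-two-common-nbrs,br:mega-branching-rule} in this order, applies the first applicable one, solves each child instance recursively, and answers \yes iff at least one child is a \yes instance. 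If no branching rule applies, the instance is (as argued below) of constant size and is solved by brute force.

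For correctness I would use induction on $k$. The base cases --- $k<0$ (return \no), a $T$-triangle with $k\le 0$ (return \no), no $T$-triangle with $k\ge 0$ (return \yes), and graphs on at most eight vertices (solved directly) --- are correct by inspection. For the inductive step: each reduction rule is safe, by the corresponding safeness lemma in this section, so after the reduction phase we have an equivalent instance with the same parameter; each branching rule is exhaustive, again by its corresponding lemma, so the reduced instance is a \yes instance iff at least one child is; and since every branch strictly decreases $k$, the recursive calls are correct by the induction hypothesis. It remains to note that some rule always applies to a reduced instance that is not of constant size. If one of \Autoref{br:non-terminals-two-terminal-nbrs,br:branch-on-three-cliques,br:branch-on-large-cliques,br:branch-on-non-terminal-simplicials,br:simplicial-two-terminals-case,br:two-terminals-two-common-nbrs} applies we are done; otherwise \autoref{obs:graph-after-simple-stuff} holds and, handling each connected component separately, we take a clique tree of the component: if it has at most two nodes the component has at most eight vertices, and if it has at least three nodes then \autoref{claim:x-y-z-in-Cp} and \autoref{claim:three-nbrs} furnish leaf cliques $C_{\ell},C_{x},C_{y}$ of the form required by \autoref{br:mega-branching-rule}, so that rule applies.

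For the running time I would bound the recursion tree via the branching vectors. \Autoref{br:non-terminals-two-terminal-nbrs,br:branch-on-three-cliques} have vector $(1,1)$, \autoref{br:branch-on-large-cliques} has $(1,2,2)$, \autoref{br:branch-on-non-terminal-simplicials} has $(1,2)$, \autoref{br:simplicial-two-terminals-case} has $(2,2,2)$, \autoref{br:two-terminals-two-common-nbrs} has $(2,2,2,2)$, and \autoref{br:mega-branching-rule} has vector $(3,4,4,5,1,3,4)$ if $C_{x}\cap C_{y}=\emptyset$ and $(3,4,4,4,1,3,4)$ if $|C_{x}\cap C_{y}|=1$, as recorded in \autoref{tab:branching-tree}. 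A direct computation shows that the characteristic root of each vector is at most $2$: it equals $2$ for $(1,1)$, $(1,2,2)$, $(2,2,2,2)$ and --- tightly --- for $(3,4,4,4,1,3,4)$, since $2^{4}=2^{3}+2\cdot 2^{1}+4\cdot 2^{0}$; and it is strictly below $2$ for $(1,2)$ (the golden ratio), $(2,2,2)$ ($\sqrt{3}$), and $(3,4,4,5,1,3,4)$ (since $2^{5}>2^{4}+2\cdot 2^{2}+3\cdot 2^{1}+2^{0}$). Hence the recursion tree has $\Oh(2^{k})$ nodes. At every node the algorithm spends $\Oh(n+m)$ time: the $k<0$ test and the combination of children's answers are $\Oh(1)$; the checks of \Autoref{rr:trivial-no,rr:trivial-yes} and the exhaustive reduction phase are $\Oh(n+m)$ by \autoref{lemma:running-time-rr}; and locating and applying the applicable branching rule is $\Oh(n+m)$, including the linear-time preprocessing used by \autoref{br:two-terminals-two-common-nbrs} and the clique-tree computation used by \autoref{br:mega-branching-rule}. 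Multiplying the two bounds gives total running time $\Oh(2^{k}(n+m))$.

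The step I expect to be the main obstacle is the branching-vector accounting for \autoref{br:mega-branching-rule}: its branch $B_{4}$ comes in two flavours according to whether the leaf cliques $C_{x}$ and $C_{y}$ share a vertex, and in the overlapping case the vector $(3,4,4,4,1,3,4)$ is \emph{exactly} tight --- its characteristic root is precisely $2$ --- so the analysis has no slack and all seven branches must be enumerated and verified with care. A related subtlety is justifying that, once all the simpler branching rules are exhausted, the structural properties collected in \autoref{obs:graph-after-simple-stuff} together with the clique-tree properties genuinely force \autoref{br:mega-branching-rule} to be applicable (unless the instance is already of constant size), so that the algorithm never stalls.
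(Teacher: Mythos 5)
Your proposal is correct and follows essentially the same route as the paper: apply the reduction rules exhaustively, branch with the first applicable rule, argue via the structural observations and the clique tree that some rule always applies to any non-constant-size reduced instance, and bound the recursion tree by the branching vectors. Your branching-vector accounting is in fact slightly more careful than the paper's (you correctly list seven entries for \autoref{br:mega-branching-rule}, matching \autoref{tab:branching-tree}, and verify that the root of $(3,4,4,4,1,3,4)$ is exactly $2$), but the overall argument is the same.
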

\begin{proof} 
  The algorithm applies
  \Autoref{rr:no-more-clique-components,rr:only-T-neighbours,rr:no-cut-edges}
  exhaustively (i) to the input graph, and (ii) after every
  application of a branching rule. By
  \autoref{lemma:running-time-rr}, this can be done in time
  $\Oh(n + m)$. We assume that graph \(G\) is reduced with respect
  to these rules at the start of any branching rule.

  The algorithm applies least indexed applicable Branching Rule
  mentioned in this section. The correctness of branching steps
  and running time to execute branching rule follows from the
  arguments given for each case. We now claim that we have taken
  care of all possible cases.  Application of reduction rules
  implies that every maximal clique is of size at least tree and
  every non-terminal vertex is adjacent with some terminal.  If
  there exists a non-terminal vertex which is adjacent with
  exactly one terminal vertex then
  \autoref{br:non-terminals-two-terminal-nbrs} is applicable.  If
  there exists a maximal clique of size of five or more which
  contains a terminal then \autoref{br:branch-on-large-cliques} is
  applicable. Hence we can safely consider chordal graphs in which
  every maximal clique which contains a terminal is of size at
  most four. Every chordal graph has a simplicial vertex and hence
  simplicial clique. If simplicial clique is of size three then
  \autoref{br:branch-on-three-cliques} is applicable. We are now
  in case that every simplicial clique which contains a terminal
  is of size four. Notice that since
  \autoref{rr:only-T-neighbours} is not applicable, every
  simplicial clique contains a terminal. If this terminal is not a
  simplicial vertex then
  \autoref{br:branch-on-non-terminal-simplicials} is
  applicable. We are left with the case when every simplicial
  vertex is terminal. If there exists another terminal in
  simplicial clique then
  \autoref{br:simplicial-two-terminals-case} is applicable. If not
  then every simplicial clique contains exactly one terminal which
  is also a simplicial vertex. In a simplicial clique if there
  exists a pair of non-terminals which has more than one terminals
  as neighbors then \autoref{br:two-terminals-two-common-nbrs} is
  applicable. Now consider a graph which is not reducible by any
  of the branching rule. This graph satisfies all the properties
  mentioned in \autoref{obs:graph-after-simple-stuff}. Hence
  \autoref{br:mega-branching-rule} is applicable on this
  instance. This completes all the possible cases.

  We analyze the branching factor for each branching
  rule. Branching vectors for
  \Autoref{br:non-terminals-two-terminal-nbrs,br:branch-on-three-cliques,br:branch-on-large-cliques,br:branch-on-non-terminal-simplicials,br:simplicial-two-terminals-case,br:two-terminals-two-common-nbrs}
  are $(1, 1); (1, 1); (1, 2, 2); (1, 2); (2, 2, 2);$ and
  $(2, 2, 2, 2)$, respectively. Branching vector for
  \autoref{br:mega-branching-rule} is $(3, 4, 4, 5, 4, 1, 3, 4)$
  or $(3, 4, 4, 4, 4, 1, 3, 4)$ depending on whether branch
  $B_4(i)$ or $B_4(ii)$ is being used. For all these branching
  vectors, the branching factor is at most $2$. Hence the entire
  branching algorithm can be executed in time $\Oh(2^k(n + m))$.
\end{proof}


\section{Conclusion}
\label{sec:conclusion}

In this article we studied \SFVSS and presented a kernel of size
$\mathcal{O}(k^2)$ with $\mathcal{O}(k)$ and $\mathcal{O}(k^2)$
vertices on the clique and independent set sides
respectively. Though this bound on the total size of the kernel is
optimal under standard complexity-theoretic assumptions, it is an
interesting open question if we can bound the number of vertices
on the independent side by $\mathcal{O}(k^{2 - \epsilon})$ for
some positive constant $\epsilon$. Another natural question is to
obtain a quadratic-size kernel for the more general \SFVSC
problem. We presented an \FPT\ algorithm running in time
$\mathcal{O}^*(2^k)$ which solves \textsc{Subset FVS} when input
graph is chordal. Under ETH, sub-exponential \FPT\ algorithms for
this problem are ruled out. Is it possible to obtain an algorithm
with a smaller base in the running time?  It would also be
interesting to take up other implicit hitting set problems from
graph theory and obtain better kernel and \FPT\ results than the
ones guranteed by \textsc{Hitting Set}.



\bibliographystyle{plain}
\bibliography{references}

\begin{thebibliography}{10}

\bibitem{DBLP:journals/jcss/Abu-Khzam10}
Faisal~N. Abu{-}Khzam.
\newblock A kernelization algorithm for d-hitting set.
\newblock {\em J. Comput. Syst. Sci.}, 76(7):524--531, 2010.

\bibitem{blair1993introduction}
Jean~RS Blair and Barry Peyton.
\newblock An introduction to chordal graphs and clique trees.
\newblock In {\em Graph theory and sparse matrix computation}, pages 1--29.
  Springer, 1993.

\bibitem{chitnis2013faster}
Rajesh Chitnis, Fedor~V Fomin, Daniel Lokshtanov, Pranabendu Misra,
  MS~Ramanujan, and Saket Saurabh.
\newblock Faster exact algorithms for some terminal set problems.
\newblock In {\em International Symposium on Parameterized and Exact
  Computation}, pages 150--162. Springer, 2013.

\bibitem{DBLP:journals/talg/ChitnisCHM15}
Rajesh~Hemant Chitnis, Marek Cygan, Mohammad~Taghi Hajiaghayi, and D{\'{a}}niel
  Marx.
\newblock Directed subset feedback vertex set is fixed-parameter tractable.
\newblock {\em {ACM} Trans. Algorithms}, 11(4):28:1--28:28, 2015.

\bibitem{CFKLMPPS15}
Marek Cygan, Fedor~V. Fomin, Lukasz Kowalik, Daniel Lokshtanov, D{\'{a}}niel
  Marx, Marcin Pilipczuk, Michal Pilipczuk, and Saket Saurabh.
\newblock {\em {Parameterized Algorithms}}.
\newblock Springer, 2015.

\bibitem{cygan2013subset}
Marek Cygan, Marcin Pilipczuk, Micha{\l} Pilipczuk, and Jakub~Onufry
  Wojtaszczyk.
\newblock Subset feedback vertex set is fixed-parameter tractable.
\newblock {\em SIAM Journal on Discrete Mathematics}, 27(1):290--309, 2013.

\bibitem{Dell:lowerbound}
Holger Dell and Dieter Van~Melkebeek.
\newblock Satisfiability allows no nontrivial sparsification unless the
  polynomial-time hierarchy collapses.
\newblock {\em J. ACM}, 61(4):23:1--23:27, July 2014.

\bibitem{diestel2017graph}
Reinhard Diestel.
\newblock {\em Graph Theory, 5th Edition}.
\newblock Springer, 2016.

\bibitem{even20008}
Guy Even, Joseph Naor, and Leonid Zosin.
\newblock An 8-approximation algorithm for the subset feedback vertex set
  problem.
\newblock {\em SIAM Journal on Computing}, 30(4):1231--1252, 2000.

\bibitem{fomin2014enumerating}
Fedor~V Fomin, Pinar Heggernes, Dieter Kratsch, Charis Papadopoulos, and Yngve
  Villanger.
\newblock Enumerating minimal subset feedback vertex sets.
\newblock {\em Algorithmica}, 69(1):216--231, 2014.

\bibitem{fomin2016hitting}
Fedor~V Fomin, Daniel Lokshtanov, Neeldhara Misra, Geevarghese Philip, and
  Saket Saurabh.
\newblock Hitting forbidden minors: Approximation and kernelization.
\newblock {\em SIAM Journal on Discrete Mathematics}, 30(1):383--410, 2016.

\bibitem{fomin2010exact}
F.V. Fomin and D.~Kratsch.
\newblock {\em Exact Exponential Algorithms}.
\newblock Texts in Theoretical Computer Science. An EATCS Series. Springer
  Berlin Heidelberg, 2010.

\bibitem{galinier1995chordal}
Philippe Galinier, Michel Habib, and Christophe Paul.
\newblock Chordal graphs and their clique graphs.
\newblock In {\em International Workshop on Graph-Theoretic Concepts in
  Computer Science}, pages 358--371. Springer, 1995.

\bibitem{golovach2014subset}
Petr~A Golovach, Pinar Heggernes, Dieter Kratsch, and Reza Saei.
\newblock Subset feedback vertex sets in chordal graphs.
\newblock {\em Journal of Discrete Algorithms}, 26:7--15, 2014.

\bibitem{golumbic2004algorithmic}
Martin~Charles Golumbic.
\newblock {\em Algorithmic graph theory and perfect graphs}, volume~57.
\newblock Elsevier, 2004.

\bibitem{hammer1977split}
Peter~L Hammer and St{\'e}phane F{\"o}ldes.
\newblock Split graphs.
\newblock {\em Congressus Numerantium}, 19:311--315, 1977.

\bibitem{HammerSimeone1981}
Peter~L Hammer and Bruno Simeone.
\newblock The splittance of a graph.
\newblock {\em Combinatorica}, 1(3):275--284, 1981.

\bibitem{hols2018randomized}
Eva-Maria~C Hols and Stefan Kratsch.
\newblock A randomized polynomial kernel for subset feedback vertex set.
\newblock {\em Theory of Computing Systems}, 62(1):63--92, 2018.

\bibitem{kawarabayashi2012fixed}
Ken-ichi Kawarabayashi and Yusuke Kobayashi.
\newblock Fixed-parameter tractability for the subset feedback set problem and
  the s-cycle packing problem.
\newblock {\em Journal of Combinatorial Theory, Series B}, 102(4):1020--1034,
  2012.

\bibitem{le2018subquadratic}
Tien-Nam Le, Daniel Lokshtanov, Saket Saurabh, St{\'e}phan Thomass{\'e}, and
  Meirav Zehavi.
\newblock Subquadratic kernels for implicit 3-hitting set and 3-set packing
  problems.
\newblock In {\em Proceedings of the Twenty-Ninth Annual ACM-SIAM Symposium on
  Discrete Algorithms}, pages 331--342. SIAM, 2018.

\bibitem{lokshtanov2018linear}
Daniel Lokshtanov, MS~Ramanujan, and Saket Saurabh.
\newblock Linear time parameterized algorithms for subset feedback vertex set.
\newblock {\em ACM Transactions on Algorithms (TALG)}, 14(1):7, 2018.

\bibitem{DBLP:journals/talg/Thomasse10}
St{\'{e}}phan Thomass{\'{e}}.
\newblock A 4\emph{k}\({}^{\mbox{2}}\) kernel for feedback vertex set.
\newblock {\em {ACM} Trans. Algorithms}, 6(2):32:1--32:8, 2010.

\bibitem{wahlstrom2007algorithms}
Magnus Wahlstr{\"o}m.
\newblock {\em Algorithms, measures and upper bounds for satisfiability and
  related problems}.
\newblock PhD thesis, Department of Computer and Information Science,
  Link{\"o}pings universitet, 2007.

\bibitem{wahlstrom2014half}
Magnus Wahlstr{\"o}m.
\newblock Half-integrality, {LP}-branching and {FPT} algorithms.
\newblock In {\em Proceedings of the twenty-fifth annual ACM-SIAM symposium on
  Discrete algorithms}, pages 1762--1781. SIAM, 2014.

\end{thebibliography}
\end{document}